\documentclass[letterpaper,11pt]{article}
\usepackage{tikz}
\usepackage[T1]{fontenc}
\usepackage[utf8]{inputenc}
\usepackage[margin=1in]{geometry}

\usepackage{amsmath,amssymb,amsthm}
\usepackage{thmtools}
\usepackage{mathrsfs}
\usepackage{xcolor}
\usepackage{forest}
\usepackage{subcaption}
\usepackage{graphicx} 
\usepackage{authblk}
\usetikzlibrary{arrows.meta, calc, decorations.pathreplacing}

\declaretheorem[numberwithin=section]{theorem}

\declaretheorem[sibling=theorem]{corollary}
\declaretheorem[sibling=theorem]{lemma}
\declaretheorem[sibling=theorem]{definition}

\declaretheorem[sibling=theorem]{observation}

\declaretheorem[sibling=theorem]{claim}

\usepackage[hypertexnames=false]{hyperref}
\hypersetup{
    colorlinks=false,
    linkcolor=blue,
    urlcolor=cyan,
}

\usepackage[capitalize]{cleveref}

\crefname{observation}{observation}{observations}
\Crefname{observation}{Observation}{Observations}
\crefname{claim}{claim}{claims}
\Crefname{claim}{Claim}{Claims}

\newcommand{\Lefts}{\mathbb{L}}
\newcommand{\LZend}{\mathcal{Z}}

\newcommand{\nil}{\mathsf{nil}}

\newcommand{\pre}{\mathsf{pre}}
\newcommand{\delpre}{\mathsf{\delta}\text{-}\pre{}}

\DeclareMathOperator{\polylog}{polylog}
\DeclareMathOperator{\poly}{poly}
\newcommand{\val}{\mathsf{val}}
\newcommand{\K}{\mathbb{K}}

\newcommand{\Split}{\mathsf{Split}}
\newcommand{\Merge}{\mathsf{Merge}}
\newcommand{\Shift}{\mathsf{Shift}}
\newcommand{\Parent}[3]{
  \mathsf{\pi}^{(#2)}_{#3}(#1)
}

\newcommand{\floor}[1]{\left\lfloor #1 \right\rfloor}
\newcommand{\ceil}[1]{\left\lceil #1 \right\rceil}
\newcommand{\Bad}{\mathsf{bad}}

\title{Random Access to LZ-End: Faster and Deterministic}

\author[1]{Itai Boneh}
\author[1]{Paweł Gawrychowski}

\affil[1]{University of Wrocław}

\date{}

\begin{document}

\maketitle
\thispagestyle{empty}

\begin{abstract}
The LZ-End parsing of a length-$n$ string is a variation of Lempel-Ziv compression introduced by Kreft and Navarro [DCC 2010],
motivated by the lack of a linear-size structure with $O(\log n)$ access time for the classical variant.
While the original paper was only able to provide efficient extraction from the
phrase boundaries, recently Kempa and Saha [SODA 2022] established that, for a string $S$ whose LZ-End parsing consists of $z$ phrases, there exists a random access data structure that uses $O(z)$ space and guarantees $O(\log^{4}n \cdot \log\log n)$ query time.
However, their proof does not yield an efficient construction algorithm, and their data structure is inherently randomized.

We resolve both limitations by providing a deterministic, $O(z)$-space data structure that supports random access queries in polylogarithmic time and can be constructed in $O(z\log^{2}(n/z))$ time directly from the LZ-End parsing.
In addition to eliminating randomness and providing an efficient construction algorithm, the query time of our data structure is $O(\log^{2}(n/z))$, significantly improving upon the query time of Kempa and Saha.

We also show that our techniques can be used to support the more general substring-extraction.
Namely, we present a data structure with the same space and the same construction time that given two indices $i$ and $j$, outputs $S[i..j]$ in $O(j-i+\log^2\frac{n}{z})$ time.
\end{abstract}

\newpage

\section{Introduction}
The natural approach to compressing data is to identify its repeating fragments. Then,
for every subsequent occurrence of such a fragment, instead of spelling it out explicitly again, we
can write down a pointer to the previous occurrence. This is the underlying high-level idea of the
well-known Lempel-Ziv compression scheme~\cite{DBLP:journals/tit/ZivL77} used in popular
compressors such as \texttt{ZIP}, \texttt{gzip}, \texttt{RAR}, \texttt{PKZIP}, and \texttt{PNG}.

While being able to compress the input to obtain its hopefully significantly smaller compressed
representation that can be still decompressed to obtain the original input is already interesting,
in most applications we would actually prefer to avoid the decompression and instead directly
operate on the compressed representation. Ideally, we would like to augment the compressed
representation with a data structure of roughly the same size that allows querying the original input.
We refer to Navarro's survey~\cite{DBLP:journals/csur/Navarro21} for an overview of such results
for different compression methods. While the ultimate goal is designing a \emph{compressed index},
which is a data structure that allows for efficient pattern matching queries~
\cite{DBLP:conf/dcc/BelazzouguiGGKO15,DBLP:journals/tcs/BilleEGV18,DBLP:journals/talg/ChristiansenEKN21,DBLP:journals/tcs/KreftN13,DBLP:journals/dam/NishimotoIIBT20,DBLP:conf/latin/GagieGKNP14,DBLP:conf/spire/TakabatakeTS15,DBLP:journals/jcss/ClaudeNP21},
the first step (and in fact an important building block in many of the compressed indexes) is
a compressed random access structure.

A compressed random access structure for a string $S[1..n]$ should use space roughly proportional
to the size of the compressed representation of the input and allow for extracting any character $S[i]$ efficiently.
Ideally, we would like such a query to take logarithmic time in the length of the string.
Whether this is known to be possible depends on the chosen compression method.

A Lempel-Ziv (LZ77) representation of a string $S[1..n]$ is a factorization
of the string into $z$ blocks called phrases: $S[1..n] = P_{1} P_{2} \dots P_{z}$. Each phrase $P_{i}$
is either a single character, that is, $|P_{i}|=1$, or has an earlier occurrence in $S$. This allows
us to encode each phrase is constant space by either specifying its only character, or the position
and the length of the previous occurrence. It is known \cite{DBLP:journals/tit/LempelZ76} that
greedily choosing each phrase to be as long as possible results in the smallest number of phrases.
A related (provably weaker, but easier to operate on) compression method is a straight-line
program (SLP) representation, which is simply a context-free grammar with exactly one production
for each non-terminal that derives exactly one string $S$. It can be seen that an SLP consisting
of $g$ productions can be converted into a LZ77 parse of size $g$. In the other direction,
Charikar et al.~\cite{DBLP:journals/tit/CharikarLLPPSS05} and Rytter~\cite{DBLP:journals/tcs/Rytter03} independently
established that a LZ77 parse of size $z$ can be converted into a SLP consisting of
$O(z\log(n/z))$ productions. In fact, the obtained SLP is \emph{balanced}, meaning that the
depth of the derivation tree is $O(\log n)$. This has the advantage of automatically providing
a compressed random access structure with query time $O(\log n)$ and size $O(z\log(n/z))$,
which simply operates by descending down in the grammar. Verbin and Yu~\cite{DBLP:conf/cpm/VerbinY13}
proved that, roughly speaking, this is very close to optimal. Namely, any structure of size $O(z\polylog n)$
must have $\Omega(\log n/\log\log n)$ query time, for the case where $n=z^{1+\epsilon}$.
This however does not exclude the possibility of providing a compressed random access structure with size roughly proportional
to the size of the compressed representation and, say, logarithmic query time.

This goal has been in fact achieved for grammar compression. Bille et al.~\cite{DBLP:journals/siamcomp/BilleLRSSW15}
showed that, assuming the Word RAM model, given a SLP representation of size $g$ representing a string $S[1..n]$,
we can build a data structure of size $O(g)$ that supports random access queries in $O(\log n)$ time.
Later, Ganardi, Jeż, and Lohrey~\cite{DBLP:journals/jacm/GanardiJL21} showed a stronger version of this statement:
any SLP of size $g$ representing a string $S[1..n]$ can be converted into a balanced SLP of size $O(g)$ representing the same string.
This can be seen as the ultimate random access structure for SLPs, although the lower bound of Verbin and Yu~\cite{DBLP:conf/cpm/VerbinY13}
still leaves the possibility of designing a structure of size $O(g)$ and allowing random access in optimal $O(\log n/\log\log n)$ time.
Somewhat surprisingly, the best known bounds for a LZ77 compressed random access structure
follow by converting the LZ77 parse to a grammar, and achieve $O(z\log(n/z))$ space with $O(\log \frac n z)$ query time
(see~\cite{DBLP:journals/jcss/BelazzouguiCGGK21} for an alternative solution achieving the same time-space bounds).
The logarithmic space increase is a major drawback of such a solution: after all, the compression ratio
might actually be of a similar magnitude (as is the case for a random input), so by building such a structure
we are not benefiting at all from using a compressed representation. Fully understanding
the complexity of random access for LZ77 compressed strings remains to be a major open problem in the area.

Motivated by the lack of progress on understanding the complexity of random access for LZ77 compressed strings,
Kreft and Navarro~\cite{DBLP:journals/tcs/KreftN13} introduced a restricted variant of LZ77, called LZ-End,
in which the previous occurrence of each phrase needs to end at a phrase boundary. With this restriction,
the greedy parsing is not necessarily optimal.
However, finding the optimal LZ-End parsing of a string was shown to be NP-hard~\cite{BFKNSU23}, when the greedy parse can be found efficiently~\cite{DBLP:conf/esa/KempaK17},
and still behaves very well in practice~\cite{DBLP:conf/dcc/KempaK17}.
While already the original paper established that extracting substrings ending at a boundary
can be done efficiently with only $O(z)$ space for LZ-End parse consisting of $z$ phrases (in fact, this can
be seen as the reason for restricting the parsing in such a way), it was not known whether LZ-End
admits better compressed random access structures than that of a LZ77 until 10 years later Kempa and Saha~\cite{KS22}
showed that, in fact, LZ-End does admit a data structure of size $O(z)$ that supports random access queries
in polylogarithmic time. However, their construction has two major drawbacks.
\begin{enumerate}
\item The original proof is purely existential: that is, they only showed that, for any string $S[1..n]$, there
exists a data structure of size $O(z)$ with polylogarithmic query time. The construction crucially uses probabilistic method,
so we can either verify whether the obtained data structure is correct after construction, or we can compromise on
a data structure for which the bound on the query time holds with high probability.
\item The guarantee on the query time in the original proof, while polylogarithmic, is actually quite high:
$O(\log^{4}n\cdot \log\log n)$. This should be compared with the only known lower bound, which is
$\Omega(\log n/\log\log n)$ for structures of size $O(z\polylog n)$. Further, the time-space product of the original structure
(a natural measure for data structures) is as high as $O(z\log^{4}n\cdot\log\log n)$, while for the LZ77 compression
it is a much more reasonable $O(z\log^{2}(n/z))$.
\end{enumerate}

\paragraph{Our results.}
Our contribution is twofolds. First, we adapt the high-level idea used by Kempa and Saha to obtain,
for a string of length $n$ described by a LZ-End parse consisting of $z$ phrases, a random access structure
of size $O(z)$ and polylogarithmic query time that can be constructed deterministically in $\tilde O(z)$ time.
As explicitly mentioned by Kempa and Saha, their data structure can be either constructed 
in expected $O(\poly(n))$ time with worst-case queries or in worst-case $O(\poly(n))$ time
with expected-time queries. Thus, we not only completely remove the randomization, but further
bring down the construction time to depend on the size of the compressed representation of the text
instead of its length. Second, with further combinatorial insight we are able to significantly improve the query time,
which was as high as $O(\log^{4}n\cdot \log\log n)$ in the original construction. More specifically,
our query time is only $O(\log^{2}(\frac{n}{z}))$. 

\begin{theorem}\label{thm:main}
    Given an LZ-End factorization $\LZend$ consisting of $z$ phrases of a string $S[1..n]$, we can compute in time
    $O(z \log^2 (\frac{n}{z}))$ a data structure of size $O(z)$ supporting a random access query in $O( \log^2 (\frac{n}{z}))$ time.
\end{theorem}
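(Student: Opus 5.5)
The plan is to follow the high-level strategy of Kempa and Saha, which is recursive pointer-chasing along LZ-End sources, and to replace their random choices with a deterministic hierarchy of ``checkpoints'' whose density decreases geometrically with depth.

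Recall the basic LZ-End mechanism. Write $S = P_1 P_2\cdots P_z$, where each phrase $P_k = T_k c_k$ and $T_k$ is a suffix of $P_1\cdots P_{j_k}$ for some earlier phrase boundary $j_k$. A query at position $i$ in $P_k$ either returns $c_k$, or is mapped to the position at the same distance from the end of phrase $j_k$. Iterating this map follows a chain of ``distance-from-boundary'' queries. The naive chain can have length $\Theta(n)$. The goal is to shortcut it so that the total number of jumps is $O(\log^2(n/z))$.

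\textbf{Step 1 (level structure).} For each phrase boundary $b$ and each scale $\ell = 2^t$, with $t \le \log(n/z) + O(1)$, consider the suffix of $S[1..b]$ of length $\ell$. Query positions within distance $\ell$ of $b$ are mapped back through phrase sources. Define a ``jump'' at level $t$ from $b$ to the earliest boundary $b'$ such that the length-$2^{t}$ suffix ending at $b$ equals the one ending at $b'$, and is obtained by composing source pointers. Storing such jumps naively costs $z\log(n/z)$ words. To reach $O(z)$ space, I will store a level-$t$ jump only at a set $B_t$ of ``marked'' boundaries with $|B_t| = O(z/2^{t/2})$, or with a similar geometric decay, so that $\sum_t |B_t| = O(z)$. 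The selection is deterministic: among consecutive boundaries, mark one per block of the appropriate length, for instance every boundary whose index is divisible by a suitable power, or via a local-minimum or deterministic-coin-tossing rule. This replaces Kempa--Saha's random sampling.

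\textbf{Step 2 (query).} Given $i$, find its phrase by predecessor search; this costs $O(\log(n/z))$ after bucketing, or is simply absorbed into the time bound. Maintain the current distance $d$ to the next boundary $b$. At each step, either $d$ is small relative to the phrase length, so a source pointer reduces $d$ or exposes an explicit character, or we walk at most $O(\log(n/z))$ unmarked boundaries, each walk being $O(1)$, until reaching a marked boundary at the current level. From there we jump, and either $d$ halves or the level decreases. There are $O(\log(n/z))$ levels, each costing $O(\log(n/z))$ steps, for $O(\log^2(n/z))$ total.

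\textbf{Step 3 (combinatorial lemma, the main obstacle).} The crux is to prove that from any boundary, the chain of LZ-End predecessors reaches a marked boundary of level $t$ within $O(\log(n/z))$ steps without the distance $d$ growing past $2^{t+1}$. Equivalently, I need a deterministic sampling rule that is ``closed'' under the LZ-End reference structure. My first attempt is to mark boundaries according to a recompression- or locally-consistent parsing of the LZ-End phrase sequence at each level, using the Kempa--Saha insight that LZ-End sources of sources again end at boundaries. Locally consistent parsing gives, for matching contexts, identical marking decisions, and this is exactly what replaces the probabilistic guarantee. If that fails, I will fall back to heavy-path style marking on the forest of source pointers, where sizes are weighted by phrase lengths.

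\textbf{Step 4 (construction).} Computing the jumps requires, for each marked boundary and level, locating the target boundary. I will do this level by level using the previous level's jumps. A doubling argument lets level $t$ be derived from level $t-1$ with $O(\log(n/z))$ work per marked boundary, via the query procedure itself. Summing over levels gives $O(z\log^2(n/z))$ time. Building a deterministic local-consistency parsing on the phrase sequence takes only $O(z\log^*)$ time per level, which is well within the budget.
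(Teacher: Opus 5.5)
Your plan has two genuine gaps: the covering lemma in Step 3 is unproved and your candidate rules cannot supply it, and your jumps do not give the progress claimed in Step 2.

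\textbf{The covering lemma.} Your Step 3 is the whole difficulty, and neither candidate rule resolves it. The chain $i,J(i),J^2(i),\dots$ moves non-locally through the phrase order. So marking by index divisibility, coin tossing, or a locally consistent parsing of the phrase sequence gives no reason for the chain to meet a mark within polylogarithmically many steps. The heavy-path fallback also fails. The phrase containing $J(i)$ is in general not the source phrase of $P$, but whichever phrase overlaps the copy of $P$ ending at the source boundary, and which one depends on $r_i$; so chains are not paths in the source forest. In any case, heavy-path decompositions do not give a sparse set hitting all long upward paths.

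The idea you are missing is the paper's structural lemma. Fix a level $k$. If $i$ lies in phrase $P$, $r_i\in[k..1.5k)$ and $r_{J(i)}\ge k$, then the phrase containing $J(i)$ is determined by $P$ and $k$ alone. The reason: two such indices of $P$ are less than $0.5k$ apart, $J$ preserves that distance, and a boundary between their images would force $r<k$. This gives a forest $T_k$ in which every non-progressing chain is an upward path. Marking one depth class modulo $\log^2\frac{n}{z}$, chosen by pigeonhole to have at most $z/\log^2\frac{n}{z}$ nodes, then hits every such chain. Over all levels this is $O(z/\log\frac{n}{z})$ marked phrases, so each can afford $O(\log\frac{n}{z})$ words. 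Getting $O(\log\frac{n}{z})$ rather than $O(\log^2\frac{n}{z})$ time per level further needs an $O(1)$-time $k$-bad-parent oracle and shortcuts across runs of $\log\frac{n}{z}$ bad steps (the $(x,k)$-bad intervals).

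\textbf{The progress claim.} Your claim that at a mark ``either $d$ halves or the level decreases'' does not hold for the jumps you define. Chasing sources of the length-$2^t$ suffix ending at a boundary only guarantees that the landing copy contains some boundary. If the phrase ending at the landing boundary has length $1$, a position at distance $d$ can land at $r$-value $d-1$ inside a long phrase. That gives no constant-factor progress and no bound on $\ell$.

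The paper gets $\min(r,\ell)$ halved by storing one $\pre$ value per interval of the halved canonical partition of a marked phrase. Even then the result may be ``$\ell\le 1.5k$ while $r$ is still in $[k..1.5k)$.'' Resolving that case in $O(\log\frac{n}{z})$ time is a second component with no counterpart in your plan: the stable part.
\begin{itemize}
\item $sJ$ shifts by the $\pre$ offset of the prefix $L(P)$ of length $\lfloor\frac{2}{3}|P|\rfloor$, and never increases $\ell$ unless $r$ drops below $k$.
\item Its non-progressing steps form a single tree $T_{\Bad}$ with constant per-phrase decrements $\delta_\ell,\delta_r$.
\item The exit step is found by a weighted-ancestor search.
\end{itemize}

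\textbf{Smaller gaps.} First, your restriction to $\log\frac{n}{z}+O(1)$ levels presumes all $r$-values are $O(\frac{n}{z})$. That holds only after splitting long phrases into $O(z)$ phrases of length at most $\frac{n}{z}$. Second, the $\pre$ values that actually give progress are for intervals not ending at boundaries: canonical-partition pieces and the prefixes $L(P)$. Computing them in $O(z\log^2\frac{n}{z})$ total time needs a batched procedure, such as the paper's right-to-left sweep over shift/split/merge search trees. Your level-by-level doubling sketch does not provide this.
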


\noindent We note that the time-space tradeoff of our structure is $O(z\log^{2}(\frac{n}{z}))$, which matches the best known tradeoffs
for the LZ77 random access structure.

Our techniques can be applied and enhanced to additionally support the stronger substring extraction query.

\begin{theorem}\label{thm:extraction}
     Given an LZ-End factorization $\LZend$ consisting of $z$ phrases of a string $S[1..n]$, we can compute in time
    $O(z \log^2 (\frac{n}{z}))$ a data structure of size $O(z)$ that given two indices $i,j\in [n]$ outputs $S[i..j]$ in $O(j-i + \log^2 (\frac{n}{z}))$ time.
\end{theorem}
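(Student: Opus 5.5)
We derive \cref{thm:extraction} from the random access structure of \cref{thm:main}, adding an extraction mechanism whose cost is linear in the output plus one ``entry'' cost. The starting observation, from Kreft and Navarro, is that in LZ-End a suffix of a phrase-aligned prefix can be extracted cheaply. Suppose we must output $S[a..b]$ where $b$ is the last position of a phrase $P_k$. If $|P_k|\ge b-a+1$, the whole fragment is the end of $P_k$. Then it equals the last $b-a+1$ characters of its source, which ends at a phrase boundary, so we recurse on that boundary (with a final explicit character if $P_k$ has one). Otherwise we first extract $S[a..b-|P_k|]$ recursively, which again ends at a phrase boundary, and then extract all of $P_k$ by following its source. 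With the standard charging argument, each recursive call either outputs a character or is followed by output, and the chains of ``suffix of source'' jumps can be shortcut. The intended outcome is that extracting a fragment ending at a phrase boundary costs $O(\text{length}+1)$, or at least $O(\text{length}+\log(n/z))$ amortized.

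The plan for a query $(i,j)$ is therefore as follows. Let $P_k$ be the phrase containing $j$, and let $e$ be its last position. If $e-j$ is small, we extract $S[i..e]$ with the boundary mechanism and discard the unnecessary tail. In general, however, $e-j$ can be huge. To handle this, we reuse the machinery behind \cref{thm:main}. That query presumably descends through a sequence of $O(\log(n/z))$ ``levels'', each mapping the current position into an earlier copy, with $O(\log(n/z))$ work per level. We would augment it so that it returns not just $S[j]$ but a position $j'$ with $S[i..j]=S[i'..j']$ and $i'=j'-(j-i)$, where $j'$ lies within distance $O(j-i)$ of a phrase boundary to its right. In other words, we follow the same copy chain as the random access query, but stop as soon as the whole window $[i..j]$ is contained in a single phrase-to-source mapping. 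We continue only while the window fits inside one phrase, since the window is copied verbatim and can be translated. When the window first straddles a phrase boundary $b$, we split it into $S[i'..b]$ and $S[b+1..j']$. The first part ends at a boundary and is handled by the extraction mechanism above. The second part is a prefix of a phrase. We handle it either by symmetric structures, if we store the reverse direction, or by continuing to translate it, since it is shorter and now begins at a phrase start.

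The main obstacle is the prefix-of-phrase part, because LZ-End is asymmetric and only provides cheap suffix extraction. Repeatedly peeling prefixes could cost $O(\log^2(n/z))$ per piece, which is too much. As a first attempt, we would show that the translation of the prefix $S[b+1..j']$ keeps the start aligned to phrase starts in the sources. This alignment holds because a prefix of a phrase maps to a prefix of its source, which ends at a boundary, but whose start is arbitrary. So this does not directly work.

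As a fallback, we would introduce blocking. Store the random access structure, and answer $S[i..j]$ by cutting it into chunks of length $\Theta(\log^2(n/z))$. For each chunk $[x..y]$, we locate, via the augmented query, a copy $[x'..y']$ and the next phrase boundary $e'$ after $y'$. We then extract $S[x'..e']$ using the boundary mechanism, while ensuring that the random access structure of \cref{thm:main} can guarantee $e'-y'=O(\log^2(n/z))$. This guarantee is what must be proven, presumably by choosing the stopping level of the descent so that the phrase density near the mapped position is high enough. The total cost is then $O(j-i+\log^2(n/z))$. The construction time stays $O(z\log^2(n/z))$, since only $O(z)$ extra pointers are computed, each costing one query.
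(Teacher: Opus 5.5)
\textbf{There is a genuine gap.} The one nontrivial ingredient, the ``augmented query'', is never constructed, and the property your fallback relies on is false as stated. You require, for each chunk $[x..y]$, a single copy $[x'..y']$ of the \emph{whole} chunk whose next phrase end $e'$ satisfies $e'-y'=O(\log^2\frac{n}{z})$. Here is a counterexample:
\begin{itemize}
\item Let $S[x]$ be a letter occurring only once in $S$, so it is a terminal phrase.
\item Let $S[x+1..y]$ be a prefix of a phrase of length $\Theta(\frac{n}{z})$. For instance, take one slightly shorter than $\frac{n}{z}$; the transformation of \cref{lem:no-big-phrases} only splits phrases of length at least $\frac{n}{z}$.
\end{itemize}
Then $[x..y]$ is the only occurrence of the chunk, and the next phrase end is $\Theta(\frac{n}{z})$ away, so no choice of ``stopping level'' can help. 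You are forced to output a prefix (here $S[x]$, which ends at a phrase boundary) and relocate only the remaining suffix. That is exactly the prefix-of-a-phrase part you identified as the main obstacle and then set aside. A second problem: $M$ and $sJ$ copy only a canonical interval or $L(P)$, not a whole phrase. So ``the window fits inside one phrase'' does not by itself let you carry the window along the random-access chain.

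\textbf{What the paper does instead.} It uses epochs (\cref{lem:extraction-epoch}). Each epoch outputs a prefix $S[i..i')$ and returns an occurrence $i^*$ of the remaining $S[i'..j]$ with $r_{i^*}\in[j-i'..k)$, in $O(i'-i+\log\frac{n}{z})$ time.

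The idea you are missing is \cref{lem:extraction-stable}. When the copied window first straddles a phrase boundary, the phrase containing its right end $j'$ has length less than $1.5k$; for $M$ and $sJ$ steps this is \cref{cor:phraseboundary-M,cor:phraseboundary-pre}, via \cref{lem:pre-phraeboundary-in-front}. Since also $\ell_{j'}\le j-i$, either that phrase is short enough to extract outright, or $sJ$ is defined at $j'$ and can be applied to the \emph{right} endpoint.

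Iterating $sJ$ on that endpoint along $T_{\Bad}$ has three properties:
\begin{itemize}
\item it keeps a copy of the not-yet-output part of the window;
\item it strictly decreases $\ell$;
\item it never increases the phrase length (\cref{lem:stable-propreties-extraction}).
\end{itemize}
Each step exposes a nonempty piece ending at a phrase boundary, which suffix extraction outputs in time linear in its length. So peeling costs $O(1)$ per step, charged to the output, rather than $O(\log^2\frac{n}{z})$ per piece. The walk ends at a position with $r<k$, which is the progress the next epoch needs.

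This charging also requires suffix extraction in $O(\text{length})$ with no additive term, since it may be invoked $\Theta(j-i)$ times. The paper gets this from $\pre$ of whole phrases together with the forest $T$ and ancestor queries (\cref{lem:suffix-extraction}); your ``chains can be shortcut'' would have to be made that precise. Your reduction to chunks of length $\log^2\frac{n}{z}$, and the final suffix extraction once $r_i\le 10(j-i)$, do match the paper.
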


\section{Preliminaries}
\paragraph{Integer intervals.}
We use standard notation to denote consecutive sets of integers.
For integers $i,j$, we denote $[i..j] = \{i,i+1,\ldots, j \}$ (if $j < i$ then $[i..j] = \emptyset$).
We also denote $[i] = [1..i]$, and $(i..j)=(i..j-1]=[i+1..j)=[i+1..j-1]$.

\paragraph{Strings.}
A string $S$ with length $n$ over alphabet $\Sigma$ is a sequence of symbols $S[1]S[2] \ldots S[n]$ where $S[i] \in \Sigma$ for every $i\in [n]$.
For integers $i,j \in [n]$ we denote $S[i..j]=S[i]S[i+1]..S[j]$.
We call $S[i..j]$ a substring of $S$.

\paragraph{LZ-End factorization.}
An LZ-End factorization of a string $S[1..n]$ is a partition of $S$ into substrings $P_1,P_2,\ldots P_z$ called phrases.
For each $i$, we have $P_i=S[a_i..b_i]$, with $a_1 = 1$, $b_z=n$, and $a_i = b_{i-1} + 1$ for every $i \in (1..z]$. 
Each phrase satisfies one of the following conditions.
Either $P_i$ is a phrase of length one, i.e. $P_i = S[a_i]$, we call such a phrase a terminal phrase.
Or, $P_i = S[a_i..b_i]$, and there is some reference phrase $P_j$ with $j<i$ such that $S(b_j - |P_i|..b_j] = P_i$.
We call such a phrase a reference phrase, and say that $P_j$ is the source of $P_i$.
When we are given an LZ-End factorization, we assume that every reference phrase $P_i$ is given alongside the index $j$ of its source phrase.
We stress that, in contrast to prior work~\cite{KS22},
we are not requiring that the given parsing is greedy, that is, it is not necessarily the case that every
$P_{i}$ is the longest phrase that occurs earlier ending at a phrase boundary.

In \cref{apx:no-big-phrases}, we show that any LZ-End factorization with $z$ phrases of a string with length $n$ can be efficiently transformed into a similar size LZ-End factorization of the same string, with all the phrases having size at most $\frac{n}{z}$.
\begin{restatable}{lemma}{nobigphrases}
\label{lem:no-big-phrases}
There is an algorithm that given $\LZend$, an LZ-End partition of a string $S$, returns $\LZend'$,  an LZ-End partition of $S$ with size at most $4z$ such that every phrase $P'\in \LZend'$ has $|P'| \le \frac{n}{z}$.
Here, $n=|S|$ and $z=|\LZend|$.
The algorithm runs in time $O(z)$.    
\end{restatable}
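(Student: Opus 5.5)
The plan is to process the phrases of $\LZend$ from left to right and build $\LZend'$ incrementally, keeping the following invariant. After handling $P_1,\ldots,P_i$, the current factorization of $S[1..b_i]$ is a valid LZ-End factorization in which every phrase has length at most $L:=\frac{n}{z}$. Moreover, every old boundary $b_k$ with $k\le i$ is still a boundary. Terminal phrases, and reference phrases with $|P_i|\le L$, are copied unchanged. Their source $P_j$ still ends at an (old, hence new) boundary $b_j$, so they remain valid. We only need to treat a long reference phrase $P_i$ with source ending at $b_j<b_i$.

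\emph{Splitting a long phrase.} Let $d=b_i-b_j$ and let $R=S(b_j-|P_i|..b_j]$ be the occurrence of $P_i$ ending at $b_j$. Since $b_j\le b_{i-1}$, the region $R$ lies inside the already processed prefix. Hence by the invariant its new boundaries $c_1<c_2<\cdots<c_m=b_j$ are spaced at most $L$ apart, and the first one is at distance at most $L$ from the start of $R$. For each $c_t$, the mirrored position $c_t+d$ inside $P_i$ is a legal cut. Indeed, the piece of $P_i$ ending at $c_t+d$ and starting right after the previous cut (or at $a_i$) equals the substring of $R$ ending at $c_t$. That substring occurs earlier and ends at a boundary of the current factorization, so the piece is a valid reference phrase whose source is the phrase ending at $c_t$. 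I will choose the cuts greedily among the mirrored positions. Starting from $a_i-1$, I repeatedly jump to the farthest mirrored position at distance at most $L$ from the current cut, and I always finish at $b_i$. Since consecutive candidates are at most $L$ apart, each step advances by at most $L$. Whenever a step is not the last one, two consecutive steps together advance by more than $L$, since otherwise the farther candidate would have been chosen. Hence $P_i$ is cut into at most $2|P_i|/L+1$ pieces, all of length at most $L$. The old boundary $b_i$ is kept, so the invariant is preserved.

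\emph{Counting.} Summing over phrases, $|\LZend'|\le z+\sum_i 2|P_i|/L\le z+2n/L=3z\le 4z$. The slack absorbs a possible off-by-one when $L$ is not an integer; in that case I use $\lfloor L\rfloor$ and recheck the estimate.

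\emph{Running time (expected main obstacle).} Correctness and size are straightforward; achieving $O(z)$ total time is the delicate step. Walking over all new boundaries $c_t$ inside $R$ could cost far more than the number of pieces produced, because $R$ may contain many short phrases. My first approach is to maintain, as the factorization grows, an array indexed by buckets $\lfloor p/(L/4)\rfloor$ of size $O(n/L)=O(z)$. Each entry stores the rightmost new boundary seen so far in that bucket, and it is updated in $O(1)$ per new boundary since boundaries are appended in increasing order. To take a greedy step from cut $x$, I look up the few buckets covering $(x+L/2-d,\,x+L-d]$. If they contain a boundary, I take the rightmost one, which gives a piece of length in $(L/2,L]$. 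If they are all empty, I fall back to the rightmost boundary in the bucket just before. Because consecutive boundaries are at most $L$ apart, such a boundary lies within $O(1)$ buckets and yields a piece of length at least $L/4$. The counting then becomes $|\LZend'|\le z+4n/L$ in the worst case. To stay within $4z$, I would tighten it by using the $(L/2,L]$ guarantee in the common case and charging the short fallback pieces to the adjacent long pieces. Every step is then $O(1)$, and the total time is $O(z+\text{number of pieces})=O(z)$.
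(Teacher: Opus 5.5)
Your correctness and size argument is sound. Cutting $P_i$ only at mirrors of boundaries inside the already processed region $R$ keeps every piece a valid LZ-End phrase, and the old boundaries survive. The farthest-candidate rule guarantees that any two consecutive steps advance by more than $L=\frac{n}{z}$, which gives at most $2|P_i|/L+1$ pieces per long phrase and $3z$ phrases overall.

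\textbf{The gap is the $O(z)$ running time}, which is part of the statement. The bucket scheme you sketch does not implement a valid greedy step.
\begin{itemize}
\item \emph{Straddling bucket.} A bucket of width $L/4$ that stores only its rightmost boundary cannot report the rightmost boundary $\le y:=x+L-d$ when the bucket $[s..s+L/4)$ containing $y$ straddles $y$. The stored value may exceed $y$, and taking it yields a piece longer than $L$. Meanwhile a boundary in $[s..y]$ is invisible.
\item \emph{This can stall.} That invisible boundary can be the only candidate: take boundaries at $x-d$, at $s$ with $3L/4<s-(x-d)\le L$, and one in $(y..s+L/4)$. Falling back to earlier buckets then returns a boundary $\le x-d$, so there is no progress.
\item \emph{The $L/4$ claim is false for every procedure.} If $x-d+1$ is the only boundary in $(x-d..x-d+L]$ (the spacing condition allows this), the next piece is forced to have length $1$. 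This is exactly why the greedy count must argue about pairs of steps.
\item \emph{The count misses $4z$.} Even granting the claim, you get $z+4n/L=5z>4z$. The charging that should lower this is not carried out, and it cannot reuse the greedy pair argument, because your fallback is not the exact greedy choice.
\end{itemize}

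\textbf{The missing idea is to walk right to left from the source end, which avoids predecessor searches entirely.} This is what the paper does.
\begin{itemize}
\item For each phrase $P_x$ appended to $\LZend'$, it stores $cut(x)$, the index of the phrase containing $b_x-\frac{n}{z}$. Since $b_x-\frac{n}{z}$ increases with $x$, one monotone pointer maintains all these values in $O(z)$ total time.
\item Suppose the unsplit prefix of $P_i$ mirrors a substring ending at $b_j$, and let $j'=cut(j)$. One $O(1)$ step cuts off the mirrors of $(b_{j'}..b_j]$ and of $P_{j'}$.
\item Each of these two pieces has length at most $\frac{n}{z}$. Together they have length at least $\frac{n}{z}$, because $a_{j'}\le b_j-\frac{n}{z}\le b_{j'}$.
\item The remainder mirrors a substring ending at $b_{j'-1}$, whose $cut$ value is already known.
\end{itemize}
This yields the bound of $\frac{2z|P_i|}{n}+1$ pieces with $O(1)$ work per piece.

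Your bucket approach can also be repaired. Store the leftmost boundary of each bucket as well. If the leftmost boundary of the bucket containing $y$ is $\le y$, take it; this gives a piece longer than $3L/4$. Otherwise take the rightmost boundary of the nearest nonempty preceding bucket; this is then exactly the greedy choice and lies within $O(1)$ buckets. Any two consecutive steps then advance by more than $3L/4$, which gives at most $\frac{11}{3}z$ phrases.
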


Throughout the paper, we often partition integer from $[0..\frac{n}{z}]$ into exponential levels.
We fix the notation $\K =\{ 1.5^i \mid i\in [0..\log_{1.5}\frac{n}{z}]\}$ for the set of integer powers of $1.5$ smaller than $\frac{n}{z}$.

\paragraph{Computational model.} We describe our algorithm in the standard Word RAM model, see e.g.~\cite{DBLP:conf/stacs/Hagerup98}.
We assume that $n$ fits in a single machine word, and basic arithmetical operations on numbers that fit in a constant
number of machine words take constant time. In particular, constant-time indirect addressing is available, allowing us
to implement arrays with constant lookup time. In this model, we have the following result (called deterministic perfect hashing).

\begin{lemma}[\cite{DBLP:conf/icalp/Ruzic08}]
\label{lem:hashing}
Given a set $S$ consisting of $n$ integers, we can build in deterministic $O(n(\log\log n)^2)$ time a data structure
of size $O(n)$ that allows checking if $x\in S$ and accessing its associated information in $O(1)$ time.
\end{lemma}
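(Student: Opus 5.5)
The statement is the deterministic dictionary of Ružić, which the paper only imports, so the plan is to reconstruct the standard two-phase argument: first reduce the universe, then build a static table on a polynomial-size universe.

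\textbf{Phase 1: universe reduction.} Keys are $w$-bit words. I would find, deterministically, a function $h$ from a multiplicative (Dietzfelbinger-style) family that is injective on $S$ and maps into $[n^{c}]$ for a constant $c$. The injectivity condition is the same as requiring that no pair $x\neq y$ in $S$ collides. A random member of the family has $O(n^2/n^{c})$ expected colliding pairs, which is below one.

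To derandomize, I would fix the multiplier a few bits at a time by the method of conditional expectations. Done naively, this needs the collision count of all $\binom{n}{2}$ pairs. The idea is to only look at pairs adjacent in sorted order, after sorting by the partially fixed hash values and tracking an appropriate potential. Two further tricks keep each round cheap:
\begin{itemize}
\item fix $\Theta(\log n)$ bits per round, so that only $O(\log w/\log n)$, and ultimately $O(\log\log n)$, rounds are needed;
\item implement each round with deterministic integer sorting in $O(n\log\log n)$ time (Han's algorithm).
\end{itemize}
This gives $O(n(\log\log n)^2)$ total time for this phase.

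\textbf{Phase 2: dictionary on universe $[n^{c}]$.} I would split each reduced key into $c$ digits of $\log n$ bits and apply Tarjan--Yao row displacement level by level. For two digits, view the keys as entries of an $n\times n$ sparse matrix. Sort the rows by decreasing number of entries (linear time by radix sort). Then assign each row, first-fit, a displacement $d_r$ into a one-dimensional array of size $O(n)$. This requires the standard harmonic-decay condition on row densities, which can itself be enforced by first applying a column displacement.

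A query is then: compute $h(x)$, read $d_r$ for its high digit, probe position $d_r+{}$low digit, and compare the stored key. This is $O(1)$ time for constant $c$, and the associated information is stored alongside the key. The first-fit placement can be implemented in $O(n)$ time by scanning with union-find-free pointer jumps, since the total number of entries is $n$. Hence this phase uses $O(n)$ space and $O(n\log\log n)$ time.

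\textbf{Main obstacle.} I expect the crux to be Phase 1: obtaining an injective reduction into a polynomial universe deterministically in near-linear rather than $O(n\log n)$ or worse time. Phase 2 is classical. For Phase 1 I would first try Ružić's approach of choosing the multiplier blockwise while evaluating the conditional collision potential only on neighbours in sorted order.
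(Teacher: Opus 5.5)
The paper does not prove this lemma. It is quoted from Ru\v{z}i\'c (ICALP 2008) and used as a black box, so citing it is the whole argument there. As a self-contained reconstruction, however, your proposal has a genuine gap in Phase 1, which you yourself flag as the crux.

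\textbf{The round count is wrong.} For $x\mapsto\lfloor (ax\bmod 2^w)/2^{w-r}\rfloor$ the multiplier has $w$ bits, and in general all of them influence the output. Fixing $\Theta(\log n)$ bits per round therefore takes $\Theta(w/\log n)$ rounds, not $O(\log w/\log n)$, and $w/\log n$ is not bounded by any function of $n$. Phase 1 is only needed when $w\gg\log n$, which is exactly when this blows up, whereas the lemma's bound is independent of $w$. You also do not explain how to choose among the $n^{\Theta(1)}$ settings of a $\Theta(\log n)$-bit block within $O(n\log\log n)$ time.

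\textbf{The derandomization is not specified.} The method of conditional expectations needs a pessimistic estimator with two properties: it bounds the conditional probability that \emph{some} pair collides at the end, and it does not increase in expectation as bits are fixed. A sum over pairs that are currently adjacent in sorted order is not such an estimator. Fixing more bits splits and reorders a group, creating new adjacent pairs whose collision probability was never counted; in a group $x,u,y$, the pair $x,y$ may be the only one that ends up colliding. Summing over all pairs inside a group of size $g$ is valid but costs $\Theta(g^2)$, because a pair's conditional collision probability depends on $x-y$. So ``an appropriate potential'' is the entire missing idea, and it is the nontrivial content of the cited result.

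\textbf{The Phase 2 time bound is also unsupported.} Tarjan and Yao's harmonic-decay argument bounds the \emph{values} of the first-fit displacements, and hence the $O(n)$ space. It does not bound the work needed to find them. For a row with $m$ entries, first-fit must reject every $d$ for which some $d+c_i$ is already occupied, and with $P$ entries placed, up to $mP$ displacements can be blocked. A next-free-cell pointer only skips displacements blocked at one column of the row. Nothing in your argument prevents $\Theta(P)$ work per row, and harmonic decay permits $\Theta(n)$ rows with two entries. Neither your $O(n)$ claim nor your $O(n\log\log n)$ claim for this phase follows.

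What you have is a plausible account of where the two $\log\log n$ factors come from: $O(\log\log n)$ rounds, each a deterministic sort. It is not a proof. Two ingredients are assumed rather than established: a word-length-independent injective reduction to a polynomial universe, and a polynomial-universe dictionary built within sorting time.
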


\section{Framework and Overview}

Let us start by establishing notation and terminology to be used for the rest of this section.
We are given as input an LZ-end factorization $\LZend=P_1,P_2,\ldots, P_z$ of string $S$ with length $n$.
Each phrase is given with its endpoints $a_i$ and $b_i$ such that $P_i = S[a_i..b_i]$.
Additionally, each phrase $P_i$ with length at least $2$ is given with the index $j<i$ such that $P_j$ is the source of $P_i$.
For every index $i\in[1..n]$ in the text, we denote as $r_i$ and $\ell_i$ the distance to the closest phrase boundary to the right of $i$ and to the left of $i$, respectively.
Formally, if the phrase containing index $i$ is $P=S[a..b]$, then $r_i = b-i$ and $\ell_i = i-a$.
We call $r_i$ and $\ell_i$ the $r$-value and the $\ell$-value of $i$, respectively. 

\subsection{Intuition and High-level Approach}
Let us first describe the approach of Kempa and Saha \cite{KS22}.
We are storing an array $B$ with $B[i] = S[b_i]$ for every $i\in [z]$.
For an index $i\in [n]$ with $r_i = 0$, we have that $i = b_j$ for some $j\in [z]$ and we can output $S[i] = B[j]$.
Our goal is to reduce an arbitrary index $i\in [n]$ given at query time to an index $i' \in [n]$ with $r_{i'} = 0$ and $S[i] = S[i']$.
Kempa and Saha \cite{KS22} achieve this goal by defining a 'step' rule $N(i)$ for every index $i\in [n]$.
This step rule maps $i$ with $r_i > 0$ to another index $N(i)<i$ such that $S[i] = S[N(i)]$.
The rule defined by Saha and Kempa is both \textit{compactly computable} and \textit{terminates quickly}.
By compactly computable, we mean that there is a data structure with size $O(z)$ that allows computing $N(i)$ in polylogarithmic time.
By terminates quickly, we mean that for every $i\in [n]$, repeatedly applying $N(i),N^2(i),\ldots$ will result in some index $i'$ with $r_{i'}=0$ after a polylogarithmic number of steps.
It should be clear that given such a step rule, one can support random access in $O(z)$ space and polylogarithmic time.

We provide a high level description of the step rule of \cite{KS22}.
The step function $N$ is a composition of three different functions: $J$, $sJ$, and $M$ (\textit{"Jump"}, \textit{"Stable Jump"}, and \textit{"Marked"}).
The function $J$ is defined for every index $i\in [n]$ with $r_i > 0$, while $sJ$ and $M$ are only defined for a strict subset of those indices.
For an index $i$ with $r_i > 0$, we will prioritize using $M(i)$ as the step from $i$, i,e, we have $N(i)=M(i)$ if $M(i)$ is defined.
Otherwise, we set $N(i) = sJ(i)$, and only if both $M(i)$ and $sJ(i)$ are undefined, we have $N(i)=J(i)$.

Consider a sequence $i,N(i),N^2(i),\ldots$ of $N$ steps.
Each of $J$, $sJ$, and $M$ have a particular role in guaranteeing that this sequence quickly converges to an index $i'$ with $r_{i'}=0$.
This goal is achieved in $O(\log n)$ step sequences called \textit{epochs}.
An epoch is a subsequence of $N$ steps such that at the end of the epoch, the $r$-value is in a smaller exponential level than the one seen at the start of the epoch.
Since initially $r_i\le n$, after applying $O(\log n)$ epochs we reach an index with $r$-value $0$ as required.
Let us formally define the goal of the epoch.
We fix the set $\K_n = \{1.5^k \mid k \in [0..\log_{1.5}n] \}$ of integer powers of $1.5$ smaller than $n$.
An epoch starts at an index $i$ that has $r_i \in [k..1.5k)$ for some $k\in \K_n$, and ends upon reaching an index $j$ with $S[i] =S[j]$ and $r_{j}<k$.

Let us describe the role of each function in an epoch.
First of all, all the functions share the property of being $r$-non increasing.
That is, we will have $r_{F(i)}\le r_i$ for every $F\in \{J,sJ,M\}$ for which $F(i)$ is defined.

The function $J$ will act as a naive 'searching' step. 
It transforms $i$ into an index to the left of $i$, without increasing the $r$-value and while maintaining $S[i] = S[J(i)]$.
We are not guaranteed to have any progress towards lowering the value of $r$ when using $J$, but we 'hope' that by repeatedly applying $J$, we will eventually reach an index for which $M$ is defined.
We call the part of the epoch before $M$ is activated for the first time \textit{the naive jumping} part.

The function $M$ has the role of a \textit{trigger} function and a \textit{progress} function, and it has a symbiotic relationship with $sJ$.
By trigger, we mean that after $M(i)$ applies, the naive jumping part of the epoch ends, and it enters a different state called the \textit{stable part}.
The epoch remains in the stable part until the end of the epoch.
During the stable part, all $N$ steps are guaranteed to be either $M$ or $sJ$ until we will eventually reach some index $i'$ with $r_{i'} < \frac{2}{3}r_i$, finishing the epoch.

When we say that $M$ is a progress function, we mean that $M(i)$ is guaranteed to have a significantly reduced $\ell$-value or a significantly reduced $r$-value.
Namely, it holds that $\min(r_{M(i)},\ell_{M(i)})\le \min(r_i/2,\ell_i /2)$.
Following this notion, we call an index $i$ for which $M(i)$ is defined a \textit{progress index}.

The key property of $sJ$ is that it is $\ell$-non-decreasing, in the following sense.
For every index $i$ for which $j=sJ(i)$ is defined, either $\ell_j\le \ell_i$ or $r_j < \frac{2}{3}r_{i}$.
Notice that in the latter case, the epoch is finished.
In words, $sJ$ guarantees that in each step we either do not increase $\ell$, or we achieve the end of the epoch.
We note that the function $M$ also has this property.

We have that throughout the stable part, both $r$ and $\ell$ values are non-increasing.
Furthermore, every time we reach a progress index, we have that $r$ or $\ell$ are reduced by a constant factor.
Since $\ell_i \le n$, it can only be reduced by a constant factor $O(\log n)$ times throughout the epoch. 
It follows that after visiting $O(\log n)$ progress indices, we will finally reach some $i'$ with $r_{i'}<\frac{2}{3}r_i$, finishing the epoch. 

The last important property of $M$ is the property of being common.
That is, for every index $i$, it is guaranteed that withing a polylogarithmic number of $N$ steps, we reach a progress index.

Therefore, we are guaranteed to reach a progress index after a polylogarithmic number of steps, triggering the start of the stable part.
The epoch ends after we visit $O(\log n)$ progress indices, and we reach a new progress index every polylogarithmic number of steps.
It follows that the epoch ends within a polylogarithmic number of steps.

\begin{figure}[h]
    \centering
    \tikzset{every picture/.style={line width=0.75pt}} 

\tikzset{
    every node/.style={
        text height=1.5ex,
        text depth=0.25ex
    }
}
\usetikzlibrary{arrows.meta}
\usetikzlibrary{decorations.pathreplacing}
\begin{tikzpicture}[x=1pt,y=1pt,yscale=-1,xscale=1]
\pgfmathsetmacro{\height}{40}
\pgfmathsetmacro{\width}{75}
\pgfmathsetmacro{\epochshift}{20}
\pgfmathsetmacro{\x}{0} 
\pgfmathsetmacro{\njxoffset}{5} 
\pgfmathsetmacro{\stablexoffset}{40}
\pgfmathsetmacro{\syoffset}{5}

\pgfmathsetmacro{\swidth}{30}
\pgfmathsetmacro{\sheight}{30}

\node at (-20,5) {$i$};
\draw[->] (-20,10) to[out=90, in=180] (0,\height/2);

\foreach \i in {1,...,2} {
    \node[draw, minimum width=\width, minimum height=\height, anchor=north west] at (\x,0) {};
    
    \node[draw, minimum width =\swidth,minimum height=\sheight, anchor=north west] (nj) at (\x+\njxoffset, \syoffset ){};
    \node[draw, minimum width =\swidth,minimum height=\sheight, anchor=north west] (stable) at (\x+\stablexoffset, \syoffset ){};
    \node at (stable.center) {S};
    \node at (nj.center) {NJ};
    \draw[{Circle[length=4pt]}->] (\x+\width,\height/2) -- (\x+\width+\epochshift,\height/2);
     \draw[{Circle[length=2pt]}->] (\x+\njxoffset+\swidth-5,\height/2) -- (\x+\stablexoffset+5,\height/2);
    \pgfmathparse{\x+\width+\epochshift}
    \global\let\x=\pgfmathresult 
  }
\node[anchor=north west] at (\x+10,\height/2) {\Huge $\dots$};
   
    \pgfmathparse{\x+\width}
    \global\let\x=\pgfmathresult 

     \draw[{Circle[length=4pt]}->] (\x-\epochshift,\height/2) -- (\x,\height/2);

\node[draw, minimum width=\width, minimum height=\height, anchor=north west] at (\x,0) {};
 
    \node[draw, minimum width =\swidth,minimum height=\sheight, anchor=north west] (nj) at (\x+\njxoffset, \syoffset ){};
    \node[draw, minimum width =\swidth,minimum height=\sheight, anchor=north west] (stable) at (\x+\stablexoffset, \syoffset ){};
    \node at (stable.center) {S};
    \node at (nj.center) {NJ};
    \draw[{Circle[length=4pt]}->] (\x+\width,\height/2) -- (\x+\width+\epochshift,\height/2);
     \draw[{Circle[length=2pt]}->] (\x+\njxoffset+\swidth-5,\height/2) -- (\x+\stablexoffset+5,\height/2);

     \node at (\x+\width+\epochshift+7,\height/2) {$b_j$};

      \draw[decorate,decoration={brace,amplitude=10pt}]
    (0,-1) -- (\x+\width,-1)
    node[midway,above=10pt]{$O(\log n)$};

\end{tikzpicture}
    \caption{The high level structure of the query algorithm of Kempa and Saha.
    Starting with the input index $i$, we apply a sequence of $O(\log n)$ epochs (depicted as larger rectangles).
    Every epoch consists of two parts: the naive jumping part (depicted as an 'NJ' labeled square) and the stable part ('S' labeled square). 
    Each epoch outputs an index with $r$-value in a smaller exponential level than that of the input index of the epoch.
    After the last epoch, we conclude in some right boundary of a phrase $b_j$, as the $r$-value is $0$.
    In our implementation of this approach, the native jumping part corresponds to \cref{lem:naive} and the stable part corresponds to \cref{lem:stable}.
    Each is implemented in $O(\log \frac n z)$ time, and the number epochs is reduced to $O(\log \frac n z)$ due to \cref{lem:no-big-phrases}.
    }
\end{figure}

\paragraph{Implementing $J$, $sJ$, and $M$.}
The implementation of $M$ follows from a well-known approach from random access on Lempel-Ziv compressed strings.
A phrase $P = S[a..b]\in \LZend$ can be processed such that every index $i\in [a..b]$ is mapped to an index $j$ with $S[j]=S[i]$, and either $\min(r_j,\ell_j) \le \min(r_i/2,\ell_i/2)$ (and in either case, $r_j \le r_i$).
However, in order to support an $O(1)$ computation of this mapping, one needs to use $O(\log n)$ space.

The $O(\log n)$ in the space complexity for supporting $M$ arises from a partition of $P$ into $\Theta(\log |P|)$ intervals.
For each interval, the data structure stores $O(1)$ space to support computing $M$ for indices in this interval.
Kempa and Saha choose one random interval from each phrase, thus supporting the computation of $M$ only with probability $1/\log n$ for an index in $P$.
By making this compromise, they reduce the space consumption per phrase to $O(1)$.
It should be easy to see that if we design $N$ to always move between phrases (i.e, $N(i)$ is always in a phrase to the left of the phrase containing $i$), the common property of $M$ is satisfied with high probability (with an appropriate choice of a polylogarithmic function).

$J$ is the simplest function to implement: for every index $i$ contained in a phrase $P=S[a..b]$ with source $S[a'..b']$, we define $J(i) = i-b+b'$.
Notice that this maps $i$ to the index aligned with $i$ in $S[b'- |P|+1..b'] = P$, so we have both $S[i] = S[J(i)]$ and $r_{J(i)}\le r_i$.

In order to define the function $sJ$, Kempa and Saha introduce the $\pre$ function. 
The function $\pre(b,e)$ receives two parameters $b,e$ representing an interval of indices within $S$.
The output of the function corresponds to result of the following process:
check if $b$ and $e$ are contained in the same phrase of $\LZend$.
If they are not then return $b$.
Otherwise, replace $(b,e)$ with $(J(b),J(e))$ and repeat.
Intuitively, one can think of $\pre(b,e)$ as a pointer to an occurrence $S[b'..e']$ of $S[b..e]$ that is achieved by following the source links of $\LZend$ (potentially many times), and contains a phrase boundary.
In order to support efficient computation of $sJ(i)$, the data structure of Kempa and Saha stores $\pre(b,e)$ values for $O(z)$ pairs $(b,e)$.

\begin{figure}
\centering
\tikzset{every picture/.style={line width=0.75pt}} 
\usetikzlibrary{patterns}
\tikzset{
    every node/.style={
        text height=1.5ex,
        text depth=0.25ex
    }
}

\begin{tikzpicture}[x=0.75pt,y=0.75pt,yscale=-1,xscale=1]
\pgfmathsetmacro{\height}{20}
\pgfmathsetmacro{\width}{10}
\pgfmathsetmacro{\x}{0} 

\draw (0,0) rectangle (20,\height);
\draw (20,0) rectangle (60,\height);
\draw (60,0) rectangle (80,\height);
\draw (80,0) rectangle (120,\height);
\draw (120,0) rectangle (160,\height);
\draw (160,0) rectangle (240,\height);
\draw (240,0) rectangle (300,\height);
\draw (300,0) rectangle (340,\height);
\draw (340,0) rectangle (440,\height);

\node[anchor=center] at(360,-7) {$b$};
\node[anchor=center] at(400,-7) {$e$};
\draw (362,5)[pattern=north east lines]  rectangle (398, \height-5);

\draw[->,dotted] (440,0 ) to[out=270, in=270] (340,0 );

\draw[->] (380,\height - 5) to[out=90, in=90] (280,\height - 5);

\draw (262,5)[pattern=north east lines]  rectangle (298, \height-5);

\draw[->,dotted] (300,0 ) to[out=270, in=270] (160,0 );
\draw[->] (275,\height - 5) to[out=90, in=90] (140,\height - 5);
\draw (122,5)[pattern=north east lines]  rectangle (158, \height-5);

\draw[->,dotted] (160,0 ) to[out=270, in=270] (80,0 );
\draw[->] (135,\height - 5) to[out=90, in=90] (60,\height -5);
\draw (42,5)[pattern=north east lines]  rectangle (78, \height-5);

\node[anchor=center,font=\small] at(38,-12) {$\pre(b,e)$};
\draw[->](42,-5) to (42, 5);
\end{tikzpicture}
\caption{A demonstration of the function $\pre(b,e)$.
The dotted arrows point from a right-end point of a phrase to the right endpoint of its source phrase.
The bottom arrows demonstrate the process of finding another copy of $S[b..e]$ to the left repeatedly, as long as the leftmost copy found so far is contained in a phrase. }\label{fig:pre}
\end{figure}

\paragraph{The challenges of constructing \cite{KS22}.}
We wish to provide an efficient, deterministic construction based on the ideas of Kempa and Saha.
There are two primary obstacles.

Firstly, the construction of Kempa and Saha requires computing $O(z)$ values of $\pre(b,e)$.
A straightforward computation of $\pre(b,e)$ can potentially take $O(z)$ time, for a total of $O(z^2)$ running time across all required $\pre(b,e)$ values.
We overcome this challenge by providing a batched algorithm for computing multiple queries $(b,e)$ simultaneously,
reminiscent of the winding phase of Farach and Thorup's algorithm for pattern matching in LZ-compressed text~\cite{DBLP:journals/algorithmica/FarachT98}.
This is achieved by processing a batch $I=(b_1,e_1),(b_2,e_2),\ldots$ of queries via a single right-to-left scan of $\LZend$.
When a phrase $P=S[a..b]$ is processed, we can 'shift' all queries contained within $P$ to the source of $P$ as a single block in polylogarithmic time.
Then, we can find every $(b_i,e_i)$ such that $b_i$ and $e_i$ are separated by $a$ in polylogarithmic time per such query.
This leads to a total running time of $O((z+|I|)\polylog(\frac{n}{z}))$.

The second challenge is deterministically implement $M$.
We take a different approach than the one of Kempa and Saha for reducing the total space complexity for $M$.
While they partially computed the function $M$ for every phrase, we pick a subset $\LZend_M\subseteq \LZend$ of size $O(z/\log \frac{n}{z})$, and compute $M$ fully for every phrase in $\LZend_M$.

Intuitively, one would like to pick a set $\LZend_M$ that acts as a cover in the following sense.
Consider for every $i\in [n]$ the sequence $P(i)$ of $\polylog n$ phrases visited by iteratively applying $J$ exactly $\polylog n$ times (for some fixed $\polylog n$) starting from $i$.
A good choice of $\LZend_M$ corresponds to a set of phrases that is both compact, and covers all $P(i)$ sequences.
Since we aim for a preprocessing time of $O(z \polylog(\frac{n}{z}))$, we can not even afford to inspect all $P(i)$ within our preprocessing time.

We sidestep this issue by substituting the demand that $\LZend_M$ covers all such sequences with a relaxed notion of covering that is satisfactory for our purpose.
First, let us simplify our analysis by assuming that prior to $M$, we only make naive steps using $J$ (and never use $sJ$).
Now, for every $i\in [n]$, we say that $\LZend_M$ \textit{light-covers} $i$ if there is some $x\in [0..\log \frac{n}{z}]$ such that $J^x(i)$ is either contained in some phrase of $\LZend_M$ or $J^x(i)$ has $r$-value less than $\frac{2}{3}r_i$.
In words, we do not require $\LZend_M$ to intersect the first $\log\frac{n}{z}$ steps following $i$ in the event when we get 'lucky', and the repeated applications of $J$ happen to incur a significant reduction in $r$-value without invoking $M$. 

With this notion of covering, our task is reduced to covering all sequences $i$,$J(i)$,$J^2(i)$,$\ldots$ ,$J^{\log(\frac{n}{z})}(i)$ such that the starting and ending $r$-values are within the same exponential level.
As it turns out, sequences with this property have a very strict structure.
Namely, we observe that given $i$ in a phrase $P$ with $r_i \in [k..1.5k)$ for some $k\in \K_n$, if $J(i)$ has $r$-value that is also in $[k..1.5k)$, there is only one phrase $P'$ (independent of $i$) that may contain $J(i)$.
In other words: all 'bad steps' from phrase $P$ are mapped to the same phrase $P'$.
This induces a tree structure over the phrases, where the parent of each phrase $P$ is the unique phrase $P'$ that may be its successor in a sequence of bad steps. 
Given such a structure, we can apply the standard approach of selecting a sparse tree layer as a compact path-covering set.

\paragraph{Improving the query time.}
While our implementation is inspired by the construction of Kempa and Saha, our approach is different in the following sense.
Kempa and Saha defined a 'step' function $N$, and prove that it has certain properties.
Then, the query algorithm iteratively applies the function, and it is guaranteed to converge quickly to a boundary index due to the properties of the function.

We use similar function definitions to the ones of Kempa and Saha, but our query algorithm does not iteratively apply these functions.
Instead, our goal is to find the critical value $x$ such that $N^x(i)$ has a sufficiently small $r$-value, and compute $N^x(i)$. 
By further studying the properties of $N$ (and slightly modifying its definition), we are able to design algorithms that find $x$ and $N^x(i)$
without explicitly iterating over $x=0,1,2,\ldots$.

For instance, the component of our algorithm equivalent to the stable part of Kempa and Saha does not use $M$.
We only use $M$ to trigger the stable part, but not to make progress within it.
Instead, we simply consider a sequence of applications of $sJ$ throughout the stable part of the epoch.
Since we gave up the progress obtained by occasionally reaching $M$, the sequence of $sJ$ applications before some progress in $r$ is achieved may be arbitrarily large.
However, by dropping $M$ we obtain a cleaner structure to the stable part.
That structure allows us to find the first $x$ such that $sJ^x(i)$ has a significantly reduced $r$-value in $O(\log \frac{n}{z})$ time.

Finally, since we wish to have $O(\log\frac{n}{z})$ epochs rather than $O(\log n)$, we introduce a method for reducing any LZ-end partition with $z$ phrases into a partition with $O(z)$ phrases such that every phrase has length $O(\frac{n}{z})$ (\cref{lem:no-big-phrases}).
This allows us to assume that the initial $r$-value is at most $\frac{n}{z}$.

Formally, we break the task of previn \cref{thm:main} into the following two lemmas.
\begin{lemma}[Naive Jumping Part]\label{lem:naive}
Given $\LZend$ an LZ-end factorization of $S$, we can construct a data structure supporting the following query.
Given an index $i \in [n]$ with $r_i \in [k..1.5k)$ for some $k\in \K$, compute an index $j$ satisfying $r_j \le r_i$, $S[i] = S[j]$, and either $r_j < k$ or $\ell_j \le 1.5k$.    

The data structure can be constructed in $O(z \log^2 \frac{n}{z})$ time, consumes $O(z)$ space, and supports queries in $O(\log \frac{n}{z})$ time.
\end{lemma}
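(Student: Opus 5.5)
The plan is to follow the \emph{bad step tree} idea from the overview. For each level $k\in\K$ we define a parent function on phrases and show that an unsuccessful run of $J$-steps is exactly a walk up a root path of this forest. The first key step is a combinatorial observation. Let $i$ lie in $P=S[a..b]$ with source ending at $b'$, and let $r_i\in[k..1.5k)$. Then $J(i)=i-b+b'$ lies in $[b'-1.5k..b'-k]$. If $r_{J(i)}\ge k$, the phrase containing $J(i)$ contains the whole window $[J(i)..J(i)+k]$, so it contains position $b'-k$. We therefore set $\Parent{P}{k}{}$ to be the phrase containing $b'-k$, which is independent of $i$. Along any sequence of $J$-steps that does not yet satisfy the goal ($r_j<k$ or $\ell_j\le 1.5k$), the phrases visited form a root path in this level-$k$ forest. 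Moreover, $J$ adds a fixed offset per edge, so $J^t(i)=i+D_t$, where $D_t$ depends only on the path. The condition ``$J^t(i)$ still fails the goal'' translates to $i\in[a_t+1.5k-D_t+1..\,b_t-k-D_t]$, an interval in $i$-coordinates attached to the $t$-th node. Only phrases of length $>2.5k$ can host a failing index, so only those phrases are relevant nodes at level $k$.

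The query is then: starting from the node of $i$, find the first ancestor whose interval excludes $i$, and output $i+D_t$ there. The answer $j=i+D_t$ automatically satisfies $S[j]=S[i]$ and $r_j\le r_i$, since $J$ is $r$-non-increasing. Storing the forests explicitly would cost $\sum_{k\in\K}\min(z,n/k)=O(z\log\frac nz)$ nodes, which is too much. I will instead use the standard sparse-layer cover. In each level-$k$ forest, I mark the nodes whose depth is a multiple of $\lceil\log\frac nz\rceil$ and that have at least $\log\frac nz$ descendants below them on some path. This gives $O(z/\log\frac nz)$ marked nodes per level and $O(z)$ overall. Every node reaches a marked ancestor, or a root, within $\log\frac nz$ parent steps.

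Unmarked nodes need no storage, because $J$ itself is level-independent and costs $O(1)$. The query simply applies $J$ up to $\log\frac nz$ times, testing the goal after each step. If it stops earlier, we are done. Otherwise it lands in a marked node $Q$ of level $k$, where $k$ is determined from $r_i$. For the marked nodes I store a compressed ancestor structure: each marked node keeps its offset and interval relative to the next marked ancestor. The interval is the intersection of the per-node intervals on the segment between them, shifted to $Q$-coordinates. On top of this I store $O(\log\frac nz)$ binary-lifting pointers over marked ancestors. This fits in $O(z)$ total space only because there are $O(z/\log\frac nz)$ marked nodes per level, so the lifting is affordable. Binary lifting then locates the first marked segment whose intersection excludes the current index in $O(\log\frac nz)$ time. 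A final naive run of at most $\log\frac nz$ $J$-steps inside that segment finds the exact stopping index.

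For construction, we compute $\Parent{P}{k}{}$ for all relevant $(P,k)$. This is a predecessor query ``which phrase contains $b'-k$'', answerable via the sorted boundaries, over $O(z\log\frac nz)$ pairs. The offsets and segment intersections are computed bottom-up along each forest in time linear in its size, and the lifting tables over the marked nodes are then filled in. This totals $O(z\log^2\frac nz)$, where the second logarithm comes from the lifting tables and the predecessor searches.

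The step I expect to be the main obstacle is the binary lifting over marked segments. The stopping condition is not monotone per node, since a later interval might contain $i$ again. It is, however, monotone for the prefix intersection, which is exactly what the search needs: the first failure is the first segment whose cumulative intersection with the current point fails. The lifting pointers must therefore store cumulative intersections, not per-segment ones, and I would verify carefully that these compose correctly under the offset shifts.
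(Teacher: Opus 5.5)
Your structural observation is correct: as long as $r\ge k$, the next phrase is the one containing $b'-k$ (this is \cref{lem:bad-jumps-same-phrase} and part~1 of \cref{lem:bad-k-parent-properties}). Folding the $\ell$-condition into per-node intervals is also sound. The gap is that you drop the progress mapping $M$, and then nothing bounds how far up the level-$k$ forest the first stopping index lies. Take $S=w^{z'}$ with $|w|=m$, where every copy of $w$ after the first is a single phrase whose source is the preceding copy. Choose $i$ in the last copy with $r_i\in[k..1.5k)$ and $\ell_i>1.5k$ (e.g.\ $k\approx m/4$). Each $J$-step then moves to the same offset in the previous copy, so neither $r$ nor $\ell$ changes for $\Theta(z)$ steps. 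With $m=\log n$ we have $\log\frac{n}{z}=\Theta(\log\log n)$, yet the stop is $\Theta(z/\log\log n)$ marked nodes away. Hence $O(\log\frac nz)$ lifting pointers cannot reach it, and a full lifting table costs $\Theta(\log n)$ query time. Your space count is also off: $O(z/\log\frac nz)$ marked nodes \emph{per level} times $O(\log\frac nz)$ pointers is $O(z)$ per level. Over the $\Theta(\log\frac nz)$ levels that is $O(z\log\frac nz)$, and all levels can have $\Theta(z)$ relevant phrases, as in this example.

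The missing idea is the paper's marked set with good mappings. It marks one sparse layer of $T_k$ (depths $\equiv c \pmod{\log^2\frac nz}$), giving $O(z/\log\frac nz)$ marked phrases over all $k$. Each marked phrase gets the mapping $M(i)=i-\delpre(x,y)$ over its halved canonical partition: $O(\log\frac nz)$ precomputed $\pre$-values per phrase, obtained by the batched algorithm of \cref{lem:computepre}. A single application of $M$ lands at an index with $r<k$ or $\ell<0.75k$ (\cref{clm:m-reduce-l-or-r}), even when the equivalent $J$-walk would be arbitrarily long. So every walk is cut after $O(\log^2\frac nz)$ steps. Only then do $O(1)$-word shortcuts of length $\log\frac nz$, stored on a second sparse layer $M'_k$, give $O(\log\frac nz)$ time in $O(z)$ space.

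Separately, your claim that a naive $J$-step costs $O(1)$ overlooks that after computing $J(j)$ you must identify the phrase containing it, both to test the goal and to read the next $\delta$. You store nothing for unmarked nodes, so this is a predecessor search costing $O(\log\frac nz)$ (\cref{lem:find-con-phrase}), which makes each of your naive runs $O(\log^2\frac nz)$. The paper builds the dedicated $O(z)$-space, $O(1)$-time $k$-bad-parent structure of \cref{lem:kbad-parent-ds} (critical bad values plus hashing of the multiples $h(P)$) exactly for this.
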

\cref{lem:naive} corresponds to applying $J$ repeatedly until the $r$-value is either reduced significantly, or an index $i$ for which $M(i)$ is defined is met.
We choose a set $\LZend_M$ for the definition of $M$ such that this is guaranteed to happen within $O(\log^2\frac{n}{z})$ applications of $J$.
The $O(\log \frac{n}{z})$ query time is achieved by introducing shortcuts, which allows us to compute $J^{\log {\frac{n}{z}}}$ in constant time, under some conditions.

\begin{lemma}[Stable Part]\label{lem:stable}
Given $\LZend$ an LZ-end factorization of $S$, we can construct a data structure supporting the following query.
Given an index $i \in [n]$ with $r_i \in [k..1.5k)$ for some $k\in \K$ and $\ell_i \le 1.5k$, compute an index $j$ satisfying $r_j < k $ and $S[i] = S[j]$.

The data structure can be constructed in $O(z \log^2 \frac{n}{z})$ time, consumes $O(z)$ space, and supports queries in $O(\log \frac{n}{z})$ time.
\end{lemma}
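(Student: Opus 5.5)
Given $i$ with $r_i\in[k..1.5k)$ and $\ell_i\le 1.5k$, we want a chain of $sJ$-like steps that keeps $\ell$ from growing and ends once the $r$-value drops below $k$. The key observation: the window $[i-\ell_i..i+r_i]$ has length at most $3k$ and contains no phrase boundary strictly inside it, except the left end $a$ of the phrase containing $i$. So instead of tracking the single index $i$, I track the whole fragment $S[b..e]$ with $b=i-1.5k$ (roughly) and $e=i+k-1$, and apply $\pre$ to it. Following $J$ on the whole fragment for as long as it lies inside one phrase preserves every character, in particular $S[i]$. The process stops exactly when the current copy $S[b'..e']$ contains a phrase boundary.

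Because the copy has length about $2.5k$, the boundary has one of two positions. If it falls to the right of the position aligned with $i$ but before the aligned position of $e$, then the image $j$ of $i$ has $r_j<k$ and we are done. Otherwise the boundary lies to the left of $j$, at distance at most $1.5k$, so $\ell_j\le 1.5k$. In addition, $r_j\le r_i$, because every copy ends at or before a phrase end and the alignment of $J$ is non-increasing in $r$.

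The difficulty is that the second outcome does not by itself make progress. To handle this I plan the following.

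\emph{Canonical windows.} For each phrase $P=S[a..b]$ and each $k\in\K$ with $k\le|P|$, store the answer for a canonical window anchored at the right end of $P$. I would take the pair $(b-1.5k,\,b)$ (suitably clipped) or, alternatively, a window anchored at $a$ on the left. There are $O(z\log\frac nz)$ pairs in total, which is too many for $O(z)$ space. So I would restrict to $k$ in the top $O(1)$ levels relative to $|P|$ and route the remaining levels through the structure of \cref{lem:naive}, or else store only a sparse subset as in $\LZend_M$.

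\emph{Batched computation.} I would compute these $\pre$ values with the batched right-to-left scan described in the overview. Shifting a whole block of queries costs polylogarithmic time, giving $O(z\log^2\frac nz)$ preprocessing.

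\emph{Query.} Once $\ell_i\le1.5k$, the index $i$ lies in a window determined by its phrase's left end $a$ and the level $k$. One $\pre$ lookup then jumps to a copy containing a boundary. If the boundary lies left of $j$, the new $\ell_j$ is measured from a boundary that is a phrase end $b''$, so $j$ sits near the start of the next phrase. I then recurse on $j$ with the same $k$.

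\emph{Termination.} This is the step I expect to be the main obstacle. The recursion must make monotone progress so that it runs only $O(\log\frac nz)$ rounds per level. I would argue that each recursive round either reduces $r$ below $k$ or strictly halves $\min(\ell,\cdot)$. The argument would go as follows: when the boundary lands left of $j$, the fragment left of $j$ shrinks to a proper suffix. Choosing the window as a dyadic interval of length $2^t$ around the boundary then halves it. Each lookup is $O(1)$ via \cref{lem:hashing}, which gives $O(\log\frac nz)$ query time. If halving fails, I would fall back on a doubling or binary search over precomputed $\pre$ jumps along the tree of ``bad steps'' (the parent structure from the overview), using level-ancestor shortcuts to locate the first ancestor whose window yields $r<k$.
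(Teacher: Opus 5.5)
\textbf{There is a genuine gap, and it is the termination step you flagged.} You never obtain a progress measure that fits in $O(z)$ space and gives $O(\log\frac nz)$ query time.

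Your single-step analysis is essentially right once the window is anchored inside the phrase. As written, $[i-1.5k..i+k-1]$ starts at or left of the phrase start $a$ whenever $\ell_i\le 1.5k$, so $\pre$ returns immediately. With the window anchored at $a$, a step either drops $r$ below $k$ or gives $\ell_j\le\ell_i$, and nothing more. In particular $\ell$ may shrink by a single position per round, so your recursion can run for $\Theta(k)$ rounds.

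The halving claim fails for such windows. Halving is the mechanism of $M$, which needs a $\pre$ value for every interval of the halved canonical partition, i.e.\ $\Theta(\log|P|)$ values per phrase. The paper affords this only on $O(z/\log\frac nz)$ marked phrases, and nothing in the stable part forces the chain to meet one. Routing the remaining levels through \cref{lem:naive} is circular: its output guarantee is exactly this lemma's precondition. Each round would also need the phrase containing the new index, which costs $O(\log\frac nz)$ unless a parent pointer supplies it.

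The level restriction itself is unnecessary. $r_i<1.5k$ and $\ell_i\le 1.5k$ force $|P|\le 3k$, and this persists along the chain. A single $k$-independent window per phrase then suffices: the paper's $L(P)$, the first $\floor{\frac23|P|}$ positions. Its right end has $r$-value $\ceil{|P|/3}$, so whenever the boundary lies right of the image you get $r_j\le\frac23 r_i<k$.

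\textbf{Your fallback points the right way but is missing the ingredients that make it correct.} The paper establishes the following.
\begin{enumerate}
\item All $i\in L(P)$ whose $sJ$-image keeps $r\ge k$ lie in one suffix of $L(P)$ (the bad suffix) and land in one phrase. This gives a single tree $T_{\Bad}$, independent of $k$. The overview's trees $T_k$ are per-$k$ and built from plain $J$-steps, which do not control $\ell$.
\item On a bad suffix, $\ell$ and $r$ drop by phrase constants $\delta_\ell(P),\delta_r(P)\ge 0$.
\item Consequently there are two exit conditions along the ancestor path. Remaining in bad suffixes is governed by accumulated $\delta_\ell$ against $\ell_{i_0}$. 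Falling below $k$ inside them is governed by accumulated $\delta_r$ against $r_{i_0}-k$. The exit step is $D=\max(0,\min(D_L,D_R+1))$.
\item $D_L$ and $D_R$ are weighted-ancestor queries. They are answered in $O(\log\frac nz)$ by binary search over level ancestors, after contracting zero-weight edges and using that the thresholds are at most $\frac nz$. Then $sJ^{D+1}(i_0)$ is read off root-path sums of the stored $\pre$ shifts.
\end{enumerate}
Your ``first ancestor whose window yields $r<k$'' is not a property of the ancestor alone. It depends on the starting offsets $\ell_{i_0},r_{i_0}$, and there are two distinct exit modes. Without the second and third points above there is no monotone quantity to binary-search on.
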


\cref{lem:stable} corresponds to receiving the first index of the stable part, and efficiently finding the last index of the epoch.
As we discussed before, it is achieved by analyzing the structure of the stable part to obtain a clean characterization of the first index with $r$-value below the required threshold. 

By combining \cref{lem:naive} and \cref{lem:stable}, we obtain our main theorem.

\begin{proof}[Proof of \cref{thm:main}]
Given $\LZend$ an LZ-End parsing of $S$, we store the array $B$ with $B_i=S[b_i]$ for every $i \in [z]$.
We then construct the data structures of both \cref{lem:naive} and \cref{lem:stable} in $O(z \log^2 \frac{n}{z})$ time.

Given a query index $i$ with $r_i \in [k..1.5k)$ for some $k\in \K$, we apply the data structure of \cref{lem:naive} to obtain $i'$ with $S[i'] =S[i]$ and either $r_{i'}<k$ or $\ell_{i'}\le 1.5k$.
If the latter occurs, we apply the data structure of \cref{lem:stable} with input $i'$ to obtain $j'$ with $r_{j'} < k$ and $S[j']=S[i']=S[i]$.
In both cases, we spend $O(\log \frac{n}{z})$ time and obtained some index $t\in \{ i',j'\}$ with $S[t]=S[i]$ and $r_{t}<k$.
We call the above routine an epoch.
We repeatedly apply epochs, each receiving as input the output of the previous one, until we obtain an index with $r$-value zero.
Since each epoch results in an index $i$ with $r_i$ bounded by a smaller element of $\K$, we will reach an index with $r_{j'}=0$ and $S[j']=S[i]$ within $O(|\K|) = O(\log \frac{n}{z})$ epochs.

In particular, $j' =b_j$ for some phrase $P_j$.
We can find the phrase $P_j$ (and the index $j$) containing $j'$ in $O(\log \frac{n}{z})$ time (see \cref{lem:find-con-phrase}) and then report $S[i]= S[j'] = S[b_j] = B[j]$.

The running time of each epoch is $O(\log \frac{n}{z})$, and there are $O(\log \frac{n}{z})$ epochs, so the total running time of a query is $O(\log^2 \frac{n}{z})$ time.

The space used by our data structure consists of sorting an array of length $z$ storing all $S[b_i]$, and of storing the data structures of \cref{lem:naive} and of \cref{lem:stable}, which require $O(z)$ space each.
The total space complexity is $O(z)$, as required.
\end{proof}

The rest of the paper is dedicated to proving \cref{lem:naive} and \cref{lem:stable}.
We prove \cref{lem:naive} in \cref{sec:naive} and \cref{lem:stable} in \cref{sec:stable}
In \cref{apx:no-big-phrases}, we prove \cref{lem:no-big-phrases} , which allows us to assume without loss of generality that there are no long phrases in our input factorization $\LZend$.
In \cref{sec:kbad-parent-ds,sec:shift-merge,sec:computepre} we separately provide proofs for various auxiliary data structures used throughout the paper. 
Finally, in \cref{sec:extraction} we show how to modify and enhance our techniques to provide a compressed data structure for substring extraction. 

\section{Naive Jumping}\label{sec:naive}
In this section, we prove \cref{lem:naive}. 
We start by presenting a high-level description of the algorithm and of the concepts used by the algorithm, and then proceed to present the full version with all the required definitions and proofs.

\subsection{High-Level Overview}
The objective of this overview is to equip the reader with a conceptual understanding of the machinery behind \cref{lem:naive}. The framework developed in this section provides an intuitive abstraction of our method, meaning that the formal definitions and lemmas stated here will not be explicitly invoked in the full description of the data structure. Finally, we express all running times using $\log n$ instead of $\log \frac n z$ to ensure a smoother presentation.

For every phrase $P=S[a..b]$ with source $P' = S[a'..b']$, denote $\delta_P=b-b'$.
Clearly, for every index $i \in [a..b]$, the index $j=i-\delta_P$ satisfies $S[i]=S[j]$ and $r_j \le r_i$.
We denote by $J$ the function that maps any $i$ in a phrase $P$ to $J(i) = i-\delta_P$.

Recall that our goal is to efficiently support the following query.
Given some index $i$ with $r_i \in [k..1.5k)$, return $j$ with $r_j \le r_i$, $S[j]=S[i]$, and either $r_j < k$ or $\ell_j < 1.5k$. 

Consider the following simple algorithm for achieving this goal:
Given an input index $i$, let $P$ be the phrase containing $i$ in $\LZend$.
Notice that $|P|\neq 1$, as $|P|=1$ would imply $r_i = 0$, which contradicts $r_i \ge k$. 
Therefore, $|P|>1$ and $P$ has a source.
Consider $j=J(i)$.
If $r_j < k$, return $j$.
Otherwise, repeat this process recursively on $j$.

This algorithm clearly terminates, as every recursive call is for an index $j$ strictly smaller than $i$.
However, it may be very inefficient.
Our key insight is that the pathological case where this simple algorithm fails is highly structured. 
This structure can be described as follows.
\begin{lemma}
For every $k\in \K$ there is a forest $T_k$ with node set $\LZend$ such that every index $i\in [1..n]$ contained in phrase $P$ with $r_i \in [k..1.5k)$ and $j=J(i)$, one of the following holds.
\begin{enumerate}
    \item $r_j<k$.
    \item $j$ is in the parent of $P$ in $T_k$.
\end{enumerate}  
The tree $T_k$ can be computed in $O(z \log n)$ time given $\LZend$.
\end{lemma}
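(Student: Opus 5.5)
The plan is to define the parent of each phrase explicitly and then check the two alternatives directly.

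Fix $k\in\K$ and a reference phrase $P=S[a..b]$ with source $P_s=S[a_s..b_s]$. For $i\in P$ we have $J(i)=i-\delta_P$ and $r_{J(i)}\ge r_i$ is impossible beyond the following. The index $J(i)$ lies in the suffix $S(b_s-|P|..b_s]$, which ends at the phrase boundary $b_s$. Its $r$-value is therefore the distance from $J(i)$ to the first phrase boundary at or after $J(i)$ inside $(b_s-|P|..b_s]$. Equivalently, $r_{J(i)}=b'-J(i)$, where $b'$ is the right endpoint of the phrase containing $J(i)$, and we have $b'\le b_s$. Moreover $b_s-J(i)=r_i$.

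Now suppose $r_{J(i)}\ge k$. Since $r_i<1.5k$, the right endpoint $b'$ of the phrase $Q$ containing $J(i)$ satisfies
\[ b_s-b' = r_i-r_{J(i)} < 1.5k-k = 0.5k, \]
and also $b'-J(i)\ge k$. The key observation is that there is at most one phrase $Q$ that can arise this way, independent of $i$. Indeed, let $Q^*$ be the phrase containing position $b_s-\lceil 0.5k\rceil+1$, or more robustly the phrase containing position $b_s-k/2$ suitably rounded. Any candidate $Q$ has its right end within distance $<0.5k$ of $b_s$, and contains a position at distance $\ge k$ to the left of its right end. So $Q$ spans the whole window $[b_s-0.5k..\,b'-k]\ni$ position $b_s-\lceil k/2\rceil$, roughly. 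I would pick a single fixed position $p=b_s-\lfloor k/2\rfloor$. The argument then goes as follows:
\begin{enumerate}
\item $b'>b_s-0.5k$, so $b'\ge p$ (up to rounding).
\item $Q$ starts at or before $J(i)\le b'-k< b_s-k/2 \le p$.
\end{enumerate}
Hence $Q$ contains $p$. Defining the parent of $P$ in $T_k$ as the phrase containing $p$ makes the dichotomy hold. Terminal phrases, and phrases whose chosen $Q$ would have length $<k$, are made roots. The rounding at the half-integer is the only delicate point, and I would fix it by choosing $p=b_s-\lceil k/2\rceil+1$ and rechecking both inequalities with integers.

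Acyclicity is immediate: the parent contains $p\le b_s<a\le$ every position of $P$, so parents are strictly earlier phrases. Thus $T_k$ is a forest on node set $\LZend$.

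For construction time, I would sort the phrase boundaries $b_1<\dots<b_z$. For each $k\in\K$ (there are $O(\log\frac nz)$ values) and each phrase $P$, I would locate the phrase containing $p$. Done naively by binary search this costs $O(\log z)$, for a total of $O(z\log n\log z)$, which is too much. The main obstacle is thus achieving $O(z\log n)$ total. I would handle it by noting that for fixed $P$ the targets $p_k$ as $k$ increases move monotonically leftward from $b_s$. So, walking backward over phrases from the source phrase $P_s$ while increasing $k$, each pair (phrase, $k$) costs amortized $O(1)$, plus the number of phrases skipped. That walk can be charged as follows: a skipped phrase has length $<$ the gap, and phrases have length at least $1$. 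To avoid long walks over many tiny phrases, I would instead use the $O(\log\frac nz)$-time lookup structure of \cref{lem:find-con-phrase} only for the first $k$. For the subsequent $k$ I would use an exponential (galloping) search leftward from the previous answer, bounding the total by a telescoping argument. Alternatively, I would accept $O(z\log^2\frac nz)$, which is still within the budget of \cref{lem:naive}.
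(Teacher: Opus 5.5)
Your proof is correct, but it is organized differently from the paper's. The paper first proves uniqueness pairwise (\cref{lem:bad-jumps-same-phrase}: two $k$-bad jumps of $P$ lie less than $0.5k$ apart, so a boundary between their images would force $r<k$). It then defines the $k$-bad parent only when such a jump exists. Only later (\cref{lem:bad-k-parent-properties}) does it locate the parent at $J(b-k)=b_s-k$, computing it by one predecessor search followed by the $O(1)$ test of \cref{cor:test-bad-parent}.

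You go directly to a witness position. From $b_s-b'=r_i-r_{J(i)}<0.5k$ and $J(i)\le b'-k$, every possible landing phrase contains the whole window $[b_s-\lceil k\rceil\,..\,b_s-\lceil k/2\rceil+1]$. This window contains both of your candidates for $p$, as well as the paper's witness $b_s-k$ at its left end, so the rounding you worried about is not delicate. Your route fuses existence, uniqueness and location into one step and needs no verification.

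The price is a forest with extra edges, e.g.\ parents assigned to phrases that have no $k$-bad jump at all. These edges are harmless for the lemma itself and for the depth-modulo-$\log^2\frac nz$ covering argument, since bad-jump sequences are still upward paths. However, the paper's certified forest is what its later $O(1)$ parent-query structure relies on. \cref{lem:length-cricial-parent} says the parent at a critical value has length in $[k..1.5k)$, which is what enables the hash lookup by multiples of $k'$. That does not hold for unconditional parents; with your $p\approx b_s-k/2$ the parent at a change point is typically shorter than $k$. To reuse that machinery you would first prune your edges with the test of \cref{cor:test-bad-parent}.

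Two small remarks. If $p<1$, no phrase contains $p$; make $P$ a root. This is fine, since the landing phrase of any $k$-bad jump would have to contain $p$, so no such jump exists. Also, the time bound in the statement is per tree, not a total over all $k\in\K$. Your ``naive'' binary search already costs $O(z\log z)\subseteq O(z\log n)$ for a single $T_k$, or $O(z\log\frac nz)$ via \cref{lem:find-con-phrase}. So the galloping detour is unnecessary, and the telescoping bound you sketch for it is not actually established.
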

In words, a sequence of applications of $J$ where the $r$-value remains large corresponds to an upwards path in $T_k$ (see \cref{fig:tree-comparison}).

To take advantage of the tree structure of the bad steps, we use the concept of a good mapping.
\begin{definition}
A good mapping $M$ for a phrase $P=S[a..b]$ is a data structure that given an index $i\in [a..b]$ returns in constant time an index $j=M(i)$ with $r_j \le r_i$ and $S[j]=S[i]$.
Further, $\min (r_j,\ell_j) \le \min(r_i/2,\ell_i/2)$.    
\end{definition}

Notice that if $r_i <1.5k$, a good mapping $M$ maps $i$ to $j=M(i)$ with $\min(r_j,r_i)<k$.
Therefore, if $i$ is in a phrase for which we have a good mapping, we can immediately return a valid output.

We prove that given a set $M\subseteq \LZend$ of phrases, it is possible to construct good mappings for every phrase in $M$ with the following complexities.
\begin{lemma}\label{lem:batched-good-maps}
    Given $\LZend$ and a set $M\subseteq \LZend$ of phrases, we can compute a good mapping for each phrase in $M$ in $O(z \log^2 n)$ time.
    The total space consumed by the good mappings is $O(z+|M|\log n)$.
\end{lemma}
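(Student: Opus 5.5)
The plan is to build, for each phrase $P=S[a..b]\in M$, the classical Lempel-Ziv style good mapping, which partitions $P$ into $O(\log|P|)$ exponentially growing intervals. For $t\ge 0$, let $I^{L}_t=\{i\in[a..b] : \ell_i\in[2^t..2^{t+1})\}$ and $I^{R}_t=\{i\in[a..b] : r_i\in[2^t..2^{t+1})\}$. Consider an index $i$ with $\ell_i\le r_i$ and $\ell_i\in[2^t..2^{t+1})$. Let $u=a+2^{t}-1$, so that $u$ lies in $I^L_t$ and has $\ell_u=2^t-1$. If we shift the window $S[a..a+2^{t+1})$ to an occurrence that contains a phrase boundary near its middle, then every $i$ in it maps to an index $j$ that is within distance $<2^{t}\le \ell_i$ of that boundary, so $\min(r_j,\ell_j)<2^t$ roughly halves. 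The right-side intervals are handled symmetrically.

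Concretely, the window in question is $W_t=S[a..\min(b,a+2^{t+1}-1)]$, and I would take $\pre(a,\min(b,a+2^{t+1}-1))$ as its target. By definition of $\pre$, the resulting copy $S[a'..e']$ of $W_t$ either straddles a phrase boundary (obtained by following $J$ links, so $r$-values only decrease) or has $a'=a$ itself. The latter can happen only when $W_t$ starts at a boundary, and then $\ell$ is already measured from $a$. Each interval therefore stores one offset $a-a'$. A query computes $t=\lfloor\log\min(\ell_i,r_i)\rfloor$ and which side is smaller, both in $O(1)$ on Word RAM, and returns $i-(a-a')$.

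Two points need checking. The first is $r_j\le r_i$, which holds because each single $J$ step is $r$-non-increasing and $\pre$ composes them, so $r$ relative to the end of the copy can only shrink. The second is the halving: the boundary inside the copy lies at distance at most $|W_t|$ from $j$, and a careful choice of window endpoints (placing the window so that $i$ is at distance at least $2^t$ from both of its ends, for instance taking $[i-2^t..i+2^t]$-style dyadic windows aligned to $a$ or $b$) gives $\min(r_j,\ell_j)\le\min(r_i,\ell_i)/2$. I expect this constant-factor bookkeeping, rather than the algorithmics, to be the fiddly part. If halving is not obtained with a single window per level, I would use a constant number of windows per level, which does not change the asymptotics.

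For complexity, the mappings use $O(\log|P|)=O(\log n)$ words per phrase in $M$ (with $|P|\le n/z$ after \cref{lem:no-big-phrases}, this is $O(\log\frac nz)$), for total space $O(z+|M|\log n)$. All $O(|M|\log n)=O(z\log n)$ pairs are computed at once with the batched $\pre$ algorithm sketched in the overview (single right-to-left scan, shifting blocks of queries via the Shift/Merge structures), which costs $O((z+|I|)\polylog)$. The main obstacle is making this batched $\pre$ computation run within $O(z\log^2 n)$ for $|I|=O(z\log n)$, i.e.\ $O(\log n)$ amortized per query and per phrase. I would first try to get it from a balanced search structure over the pending queries that supports splitting off the queries lying in the current phrase, shifting them by $\delta_P$ lazily, and merging them into the source location. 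Each query is extracted once, at the moment it straddles a boundary, and the split/merge operations cost $O(\log n)$ each.
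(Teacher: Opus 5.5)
There is a genuine gap in the halving property, which is the entire content of a good mapping. $\pre$ only guarantees that the final copy of the window is not contained in a single phrase. The boundary may sit at the very edge of that copy, so the only usable bound is that $j$ lies within distance $|W|-1$ of a boundary. Your window $W_t=S[a..\min(b,a+2^{t+1}-1)]$ starts at $a$ and contains $i$, so its length is at least $\ell_i+1$. That bound therefore gives no progress at all.

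The construction actually fails. Suppose the source phrase of $P=S[a..b]$ has length exactly $|P|-1$. Then $J(a)$ is a phrase end, while $J(e)$ lies in the source phrase for every $e\in(a..b]$. Hence $\pre(a,e)=J(a)$, and your map is just $j=J(i)$, with $\ell_j=\ell_i-1$ and $r_j=r_i$. For $3\le\ell_i\le r_i$ this violates $\min(r_j,\ell_j)\le\min(r_i,\ell_i)/2$. Your heuristic of a boundary ``near the middle'' is not something $\pre$ provides. Even if it held, it would give distance below $2^t\le\ell_i$, not $\ell_i/2$.

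The remedy you sketch goes in the wrong direction. Keeping $i$ at distance at least $2^t$ from both ends of its window only makes the window longer. Windows such as $[i-2^t..i+2^t]$ also depend on $i$, so they cannot be precomputed with $O(\log n)$ words per phrase.

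The missing idea is that the fixed window containing $i$ must be \emph{short}, of length at most about $\min(\ell_i,r_i)/2$. This is what the paper's halved canonical partition does. Each dyadic block $[a+2^t-1..a+2^{t+1}-2]$, and its mirror image anchored at $b$, is cut into two halves of length $2^{t-1}$, and $\pre$ is computed for each half separately. Then $j$ lies in a copy of length $2^{t-1}$ that crosses a boundary, so $\min(r_j,\ell_j)\le 2^{t-1}-1\le\min(r_i,\ell_i)/2$ (\cref{clm:m-reduce-l-or-r}). Your ``constant number of windows per level'' fallback has the right shape, but only with this specific choice. The remaining ingredients match the paper: legality by composing $J$ steps, $O(\log n)$ stored offsets per marked phrase, and one batched $\pre$ computation.

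A smaller point concerns the batched step. Split and merge do not cost $O(\log n)$ each: merging a shifted block back into the tree is an interleaving merge, whose cost is bounded only in the amortized sense via the Farach--Thorup potential (\cref{lem:split-merge-shift}). The paper obtains $O((z+|M|\log n)\log^2 n)$ (cf.\ \cref{lem:ds-mi}). This is $O(z\log^2 n)$ in the regime $|M|=O(z/\log n)$ where the lemma is used, so you do not need the $O(\log n)$-per-query bound you were aiming for.
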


Notice that we cannot afford to apply \cref{lem:batched-good-maps} with $M=\LZend$, as this would result in a data structure with $O(z\log n )$ space. 
Instead, we apply \cref{lem:batched-good-maps} with a sparse set that covers all paths of length $\log^2 n$ in $T_k$.
Such a set can be obtained by partitioning the vertices of $T_k$ into $\log^2 n$ sets based on their distance from the root modulo $\log^2 n$.
From the pigeonhole principle, we have that one of these sets has size at most $z/ \log^2 n$ - we pick $M_k$ to be this set.
Notice that $M_k$ intersects any upwards path of length $\log^2 n$ in $T_k$.

The total size of $M_k$ sets over all values of $k$ is $O(z/\log^2 n \cdot|\K|) = O(z/\log n)$.
Therefore, by applying \cref{lem:batched-good-maps} to the union of all $M_k$'s, we obtain a good mapping for every phrase within some $M_k$ in total space $O(z+ z/\log n \cdot \log n)= O(z)$.

We are almost ready to present an $O(z)$ space data structure with a slightly slower $O(\log^2 n)$ running time for \cref{lem:naive}.
To this end, we only require the following two technical subroutines.
First, we need a data structure for efficiently finding the phrase $P$ containing an index $i$. 
This can be trivially obtained in $O(\log n)$ time via binary search.
Second, we require a data structure that given a phrase $P$ and $k\in \K$, finds the parent of $P$ in $T_k$ in $O(1)$ time.
Notice that we cannot afford to store all trees $T_k$ within our $O(z)$ space limitation.
We show in \cref{lem:find-kbad-constant} that such a data structure can be constructed in $O(z)$ space and $O(z \log n)$ construction time.

Now, given an index $i$, we first find the phrase $P$ containing $i$ in $O(\log n)$ time.
If $P\in M_k$, we simply return $M(i)$.
Otherwise, we compute $j= J(i)$ and $P'$ the $k$-bad parent of $P$.
If $j$ is not contained in $P'$, we are guaranteed to have $r_j<k$.
We can therefore return $j$.
If $j$ is contained in $P'$, we check if $r_j<k$ and if it is we return $j$.
If $r_j \ge k$, we recurse on $j$.
Because $j$ is guaranteed to be in $P'$, we additionally send $P'$ to the recursive call.
Thus, successive recursive calls do not need to dedicate $O(\log n)$ time to compute the phrase containing $j$.

Every recursive call except for the first one is implemented in $O(1)$ time.
From the path covering property of $M_k$, we have that if more than $\log^2 n$ calls are made, it is guaranteed to reach an index $i$ contained in a phrase of $M_k$.
The running time is therefore $O(\log^2 n)$.

In order to obtain further speedup, we apply a shortcut strategy.
To this end, we analyze a more general notion of 'lack-of-progress' in the simple algorithm.
In our former analysis, we focused on the case where a single application of $J$ leads to an index with a large $r$-value.
Now, we wish to analyze the case where $x$ consecutive applications of $J$ still lead to a large $r$-value.

We present the notion of an $x$-bad index.
Consider a phrase $P$ and the sequence 
$$P=P^{(0)},P^{(1)},P^{(2)},\ldots, P^{(x)}$$ 
of ancestors of $P$ in $T_k$ (with $P^{(x)}$ denoting the $x$-th parent of $P$ in $T_k$).
We say that an index $i$ in $P$ is $x$-bad if for every $y\in[0..x]$, it holds that $J^y(i)$ is in $P^{(y)}$.
It can be shown by induction that the set of $x$-bad indices within $P$ forms a consecutive interval in $P$.
We call this the $x$-bad interval of $P$.

Furthermore, we can use the fact that the sequence of visited phrases is shared by all $x$-bad indices in $P$ to efficiently compute $J^{(x)}(i)$ for an $x$-bad index $i$.
Namely, it holds that for an $x$-bad index $i$, we have $J^x(i)=i-\Delta_P$ where $\Delta_P=\delta_P+\delta_{P^{(1)}} + \ldots + \delta_{P^{(x-1)}}$.

It follows from the above discussion that given the $x$-bad interval of $P$ and $\Delta_P$, we can decide for an index $i$ in $P$ if it is $x$-bad, and if it is, find $J^x(i)$, all in constant time.
Fix $x$ to be $\log n$ from now on.

Since the total size of all $T_k$ is $O(z\log n)$, we cannot afford to store this information for every node in every $T_k$.
Instead, as in the previous algorithm, we chose a sparse set $M'_k$ of size $O(z/\log n)$ for each $T_k$.
We only store this shortcut information for the vertices in $M'_k$.
We also store for each phrase in $M'_k$ its nearest ancestor in $T_k$ that is marked (i.e, in $M_k$).

We are ready to describe our faster query (see \cref{fig:naive-alg}).
First, apply at most $\log n$ steps of $J$.
If at any point we reach an index $j$ with $r_j<k$ or in $M_k$, we halt and return as in the previous algorithm.

Otherwise, after $\log n$ steps the algorithm must reach some phrase in $M'_k$.
At this point, we start using shortcuts instead.
In every shortcut step, we check if the current $i$ is $(\log n)$-bad using the $(\log n)$-bad interval.
Additionally, we check the distance $d$ from the phrase $P$ containing $i$ to its nearest ancestor in $T_k$ that is also in $M_k$.

If $i$ is not $(\log n)$-bad, or if $d < \log n$, apply $J$ at most $\log n$ more times instead of using a shortcut -- we claim that the algorithm terminates with a valid output within those $\log n$ steps.
If none of these steps reach an index $j$ with $r_j<k$, it must hold that $i$ is $(\log n)$-bad.
Therefore, this case must have been trigger due to $d<\log n$.
It follows that one of the $\log n$ phrases visited after $i$ is in $M_k$, and we can use the good mapping associated with it to return a valid output.

Otherwise, we have that $i$ is $x$-bad, and we use $\Delta_P$ to obtain $J^{\log n}(i)$ in constant time.
Notice that since $M'_k$ is selected as a set of all phrases with the same depth modulo $\log n$ in $T_k$, it is guaranteed that $J^{\log n}$ is in a phrase of $M'_k$.

Let us analyze the running time.
We start using shortcuts after applying $J$ at most $\log n$ times.
Then, every shortcut reduces our distance from a phrase in $M_k$ by $\log n$.
This distance is initially at most $\log^2 n$.
Therefore, we will use a shortcut at most $O(\log n)$ times.
Once we stop using shortcut for any reason, we are guaranteed to terminate with a valid output within $O(\log n)$ applications of $J$.
The total running time is therefore $O(\log n)$, as required.

In order to implement the above in $O(\log n)$ time, we also require a mechanism for testing in constant time if a phrase $P$ is in $M'_k$.
This can be obtained by storing a machine word $M_P$ for every phrase in $\LZend$, where the $i$'th bit indicates whether or not $P$ is in $M'_k$ for $k = 1.5^i$.

\subsection{Data Structure for Naive Jumping}
To prove \cref{lem:naive}, we study the process of following the source phrase of the phrase $P$ containing $i$, 'jumping' to the index aligned with $i$ within the occurrence of $P$ ending with the source.
To formalize this, we define a \textit{jumping rule} $J(i)$.
Our ultimate goal is to find an index $j$ with $r_j<k$ and $S[i]=S[j]$.
Intuitively, we make progress towards this goal when we jump to an index with a lower $r$-value, and lose progress if we jump to an index with a higher $r$-value.
This motivates us to define jumping rules that do not lose progress, in the following sense.

\begin{definition}[Legal Jumping Rule]
    For an index $i\in [1..n]$, we say that $j\le i$ is a legal jump for $i$ if $S[i] = S[j]$ and $r_j \le r_i$.
\end{definition}

For a phrase $P_i=S[a_i..b_i]$ with source $P_j=S[a_j..b_j]$, we denote $\delta_{P_i} = b_i-b_j$.
We define the jumping rule $J$ based on $\delta_{P_i}$.

\begin{definition}[Jumping Rule $J$]
    For an index $i$ contained in the phrase $P$, we define $J(i) = i-\delta_P$.
    If $P$ does not have a source, $J(i)$ is undefined.
\end{definition} 

The following lemma follows directly from the definition of LZ-end.
\begin{lemma}\label{lem:j-legal}
    $J$ is a legal jumping rule.
\end{lemma}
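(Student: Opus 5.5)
The plan is to unfold the definitions of $J$ and of the LZ-End source condition, and to check the three requirements of a legal jump: $j\le i$, $S[i]=S[j]$, and $r_j\le r_i$. Let $i$ lie in a reference phrase $P=P_t=S[a_t..b_t]$ whose source is $P_s=S[a_s..b_s]$ with $s<t$. Put $\delta=\delta_P=b_t-b_s$ and $j=J(i)=i-\delta$. If $P$ is a terminal phrase, $J(i)$ is undefined and there is nothing to prove.

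First, since $s<t$ we have $b_s<b_t$, so $\delta>0$ and $j<i$. Second, the LZ-End condition says $S(b_s-|P|..b_s]=P=S[a_t..b_t]$. Shifting by $\delta$ maps $[a_t..b_t]$ bijectively onto $(b_s-|P|..b_s]$, because $b_t-\delta=b_s$ and $a_t-\delta=b_t-|P|+1-\delta=b_s-|P|+1$. Hence $j\in(b_s-|P|..b_s]$, and $S[j]=S[i]$ follows from the equality of these two occurrences position by position. The indices here are valid: $b_s-|P|\ge 0$ is implicit in the condition being well formed, since the occurrence lies inside $S$.

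Third, for the $r$-value, note that $j\le b_s$ and $b_s$ is a phrase boundary (the right end of $P_s$). Therefore the phrase containing $j$ ends at some position $b\le b_s$, since the phrase containing $j$ cannot extend past the end of $P_s$ when $j\le b_s$. This gives $r_j=b-j\le b_s-j=b_t-i=r_i$.

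The only mildly delicate point is this last step. The copy of $P$ ending at $b_s$ may span several phrases, so $j$ need not lie in $P_s$ itself. The argument therefore uses only that $b_s$ is a boundary at or to the right of $j$; it does not need to identify the phrase containing $j$.
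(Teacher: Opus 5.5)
Your proof is correct and follows the same route as the paper: you align $i$ with $J(i)$ inside the copy $S(b_s-|P|..b_s]$ of $P$ to get $S[J(i)]=S[i]$, and you use that $b_s$ is a phrase boundary at or to the right of $J(i)$ to get $r_{J(i)}\le b_s-J(i)=r_i$. You also explicitly check $J(i)<i$, which the definition of a legal jump requires and which the paper leaves implicit.
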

\begin{proof}
Let $i$ be an index contained in reference phrase $P= S[a..b]$ with source $P' =S[a'..b']$.
Since $S[a..b] = S(b' -|P|..b']$, we have that $S[i]=S[b - r_i]= S[b'-r_i] =S[J(i)]$.
Also, since $b'$ is a phrase boundary, we have $r_{J(i)} \le   b'-J(i)=  (b-\delta_P)- (i-\delta_P) = b- i=  r_i$, as required.
\end{proof}

Since $J(i)$ is defined based on the phrase containing $i$, our algorithms are often required to find the phrase containing an index $i$.
This can be supported in $O(\log n)$ time and $O(z)$ space by storing $\{a_i \mid i\in [z] \}$ in a balanced search tree.
We would like to support this query slightly faster.
To this end, we provide the following data structure.
\begin{lemma}\label{lem:find-con-phrase}
    $\LZend$ can be processed in $O(z \log \frac{n}{z})$ time to construct a data structure that given $i$ returns $j$ such that $i\in [a_j..b_j]$ in $O(\log \frac{n}{z})$ time.
    The data structure uses $O(z)$ space.
\end{lemma}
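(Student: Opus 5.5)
We would prove \cref{lem:find-con-phrase} with a bucketing scheme followed by a small binary search. Let $m = \lceil n/z \rceil$ and cut $[1..n]$ into $\lceil n/m\rceil \le z+1$ blocks $B_t = [(t-1)m+1..tm]$. For every block $B_t$ we store $f_t$, the index of the phrase that contains the first position $(t-1)m+1$ of $B_t$. The phrase starting positions $a_1<a_2<\dots<a_z$ are stored in a sorted array $A$. Both arrays take $O(z)$ space.

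The array $f$ is filled by a single merge-like sweep. We move a pointer $j$ forward through the phrases, and for $t=1,2,\dots$ we advance $j$ while $b_j < (t-1)m+1$, then set $f_t=j$. This uses $O(z)$ time, which lies within the allowed $O(z\log\frac nz)$.

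To answer a query $i$, we first compute $t=\lceil i/m\rceil$ in $O(1)$ time. The phrase containing $i$ has index between $f_t$ and $f_{t+1}$, where we take $f_{t+1}=z$ if $B_t$ is the last block. This holds because phrases are consecutive and $B_t$ starts inside phrase $f_t$ and ends no later than inside phrase $f_{t+1}$. We then binary search in $A[f_t..f_{t+1}]$ for the largest $a_j\le i$ and return that $j$.

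The main obstacle is that $f_{t+1}-f_t$ can be large, because one block may contain many short phrases. The search therefore costs $O(\log(f_{t+1}-f_t+1))$, which in the worst case is only bounded by $O(\log z)$ and not by $O(\log \frac nz)$. The first remedy we would try is to choose the block length adaptively, or equivalently to use a two-level scheme. Within a block holding $c_t$ phrase starts we would build a fusion-tree or y-fast-style predecessor structure over the $c_t$ offsets, each of which lies in $[0..m)$. Predecessor search over a universe of size $m$ takes $O(\log\log m)$ time with a y-fast trie, using $O(c_t)$ space, and the deterministic perfect hashing of \cref{lem:hashing} makes this deterministic. Summed over all blocks, the space is $O(z)$ and the construction time is $O(z(\log\log z)^2)\subseteq O(z\log\frac nz)$ up to the usual adjustments. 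The query time is then $O(\log\log\frac nz)\subseteq O(\log\frac nz)$.

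A simpler alternative that avoids hashing also works. We would give each block its own binary search tree over its offsets, but with the tree weighted by how many integer positions each phrase covers, i.e.\ a biased search tree. A query for a position lying in a phrase of length $L$ inside a block of size $m$ then costs $O(\log(m/L)+1)=O(\log\frac nz)$, and building all the biased trees takes $O(z)$ time. We expect either variant to go through. The only delicate point is making sure the per-block structures are charged to the phrases they contain, so that the total space stays $O(z)$.
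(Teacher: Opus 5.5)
Your first scheme is already a complete proof and is essentially the paper's. The paper also cuts $[1..n]$ into $z$ buckets of length $\frac{n}{z}$. It stores for each bucket the last phrase start at or before the bucket's beginning, which plays the role of your $f_t$, and it searches only among the phrase starts inside the bucket.

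The ``main obstacle'' you identify does not exist. Phrase starting positions are distinct, so a block of $m$ positions contains at most $m$ of them. Concretely, phrase $f_t$ contains $(t-1)m+1$ and phrase $f_{t+1}$ contains $tm+1$. Hence phrases $f_t+1,\dots,f_{t+1}$ all start in $[(t-1)m+2..tm+1]$, which gives $f_{t+1}-f_t\le m$. Your binary search over $A[f_t..f_{t+1}]$ therefore costs $O(\log m)=O(\log\frac{n}{z})$, never $\Theta(\log z)$ when $z>m$. This counting bound is exactly what the paper uses, namely $|A_x|\le|U_x|=\frac{n}{z}$. Your version, with one sorted array and an $O(z)$-time sweep for $f$, is if anything slightly leaner than the paper's per-bucket balanced trees.

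Both remedies are therefore unnecessary, and one of them is flawed as written.
\begin{itemize}
\item The biased-tree variant is correct but gains nothing. Every weight is at least $1$ and $c_t\le m$, so an ordinary balanced tree over the block already has depth $O(\log m)$.
\item The y-fast variant's stated construction bound is false: $O(z(\log\log z)^2)$ is not contained in $O(z\log\frac{n}{z})$, for example when $\frac{n}{z}=O(1)$ and $z\to\infty$. It can only be repaired by using $c_t\le m$ to get $\sum_t c_t(\log\log c_t)^2=O(z(\log\log m)^2)\subseteq O(z\log\frac{n}{z})$. That is the same observation that makes the remedy superfluous.
\end{itemize}
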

\begin{proof}
    We partition the universe $[1..n]$ into $z$ uniform pieces $U_0,U_1,\ldots U_{z-1}$ such that $U_x = (x \frac{n}{z}..(x+1)\frac{n}{z}]$.
    We scan the elements of $A_{\LZend}=\{ a_i \mid i \in [z]\}$ and partition them in $O(z)$ time to obtain $A_x = U_x\cap A_{\LZend}$ for every $x\in [0..z-1]$.
    We construct a balanced search tree over each of $A_x$ in $O(|A_x| \log |A_x|) = O(|A_x| \log \frac{n}{z})$ time and $O(|A_x|)$ space.
    The total construction time is $O(z \log \frac{n}{z})$ and the total space is $O(z)$, as required.
    We also store, for every $x\in [0..z-1]$, the maximal index $a^*_x$ in $A_{\LZend}\cap [1..x \frac{n}{z}]$ and the corresponding phrase $P^*_x$ starting in $a^*_x$.
    These values can be computed straightforwardly in $O(z)$ time.

    Given a query index $i$, we first find in $O(1)$ time the value $x$ such that $i\in U_x$.
    It can be easily verified that $a_j$ such that $P_j$ contains $i$ is either in $U_x$ or is $a^*_x$.
    We use the balanced search tree of $U_x$ to find the predecessor of $i$ in $U_x$ and check which of the two candidates contains $i$.
    The query time is $O(\log \frac{n}{z})$, and then verify which one of the candidates contains $i$ in $O(1)$ time.
\end{proof}

Given an index $i\in [n]$, we can use \cref{lem:find-con-phrase} to find the phrase $P$ containing $i$ in $O(\log \frac{n}{z})$ time. 
Since every phrase $P_i$ is stored alongside the index $j$ indicating its source phrase $P_j$, we can compute $\delta_{P_i}$ in $O(1)$ time given $P_i$, by accessing the array storing the phrases alongside their reference data. 

We proceed to define the function $\pre$, which is used to define the second jumping rule used in the naive jumping part.
\begin{definition}[The function $\pre$, see \cref{fig:pre}]
    For a string $S$ and two integers $b\le e \in [1..n]$ we define $\pre(b,e)$ as follows.
    If $b$ and $e$ are contained in the same phrase $P\in \LZend$, and $P$ has a source, then $\pre(b,e) = \pre(J(b),J(e))$.
    Otherwise, $\pre(b,e) = b$.

    Additionally, we define $\delpre(b,e) = b-\pre(b,e)$
\end{definition}
In words, $\pre(b,e)$ is obtained by finding previous occurrences of $S[b..e]$ by following the source links of $\LZend$, until it is no longer possible.
Halting this process can occur either because $b$ and $e$ are separated by a phrase boundary, or because the phrase $P$ containing $[b..e]$ has no source (which can only occur if $P=S[b]=S[e]$).
The left endpoint of the occurrence on which we stop in this process is $\pre(b,e)$.
Since $\pre(b,e)$ is obtained by applying $J$ repeatedly, the following is a consequence of \cref{lem:j-legal}.

\begin{observation}\label{obs:pre-is-legal}
    Let $[b..e] \subseteq[1..n]$ be an interval.
    For every $i\in [b..e]$, it holds that $i-\delpre(b,e)$ is a legal jump for $i$.
\end{observation}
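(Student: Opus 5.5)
The plan is to unfold the recursive definition of $\pre$ into a finite sequence of applications of $J$, and then apply \cref{lem:j-legal} once per step.

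First, set $(b_0,e_0)=(b,e)$. As long as $b_t$ and $e_t$ lie in a common phrase $P_t$ that has a source, put $(b_{t+1},e_{t+1})=(J(b_t),J(e_t))=(b_t-\delta_{P_t},\,e_t-\delta_{P_t})$. Since $\delta_{P_t}>0$, the sequence $b_t$ strictly decreases, so the process stops after some number $m$ of steps. By definition, $\pre(b,e)=b_m$. Every step shifts both endpoints by the same amount, so $e_t-b_t=e-b$ for all $t$, and $\delpre(b,e)=\sum_{t<m}\delta_{P_t}$.

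Next, fix $i\in[b..e]$ and set $i_t=i-\sum_{s<t}\delta_{P_s}$. By induction on $t$, $i_t\in[b_t..e_t]$, because all three quantities are shifted by the same $\delta_{P_t}$. Since $[b_t..e_t]\subseteq P_t$ for $t<m$, the index $i_t$ lies in $P_t$, and therefore $J(i_t)=i_t-\delta_{P_t}=i_{t+1}$. This gives $i-\delpre(b,e)=i_m=J^m(i)$.

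Finally, \cref{lem:j-legal} says each step satisfies $i_{t+1}\le i_t$, $S[i_{t+1}]=S[i_t]$ and $r_{i_{t+1}}\le r_{i_t}$. Chaining these over $t=0,\ldots,m-1$ gives $i_m\le i$, $S[i_m]=S[i]$ and $r_{i_m}\le r_i$, so $i_m$ is a legal jump for $i$; the case $m=0$ is trivial since a legal jump allows $j=i$. The only point needing care is the induction that keeps $i_t$ inside $P_t$, so that $J$ applied to $i_t$ uses the same shift $\delta_{P_t}$ as applied to the endpoints. This follows directly from $[b_t..e_t]\subseteq P_t$.
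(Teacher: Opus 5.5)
Your proof is correct and follows the paper's reasoning. The paper simply notes that $\pre(b,e)$ is obtained by repeatedly applying $J$ and invokes \cref{lem:j-legal}. Your argument makes explicit the point the paper leaves implicit: $i_t$ stays inside $[b_t..e_t]\subseteq P_t$, so each shift of the endpoints is exactly an application of $J$ to $i_t$, and legality then chains step by step.
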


We will define a jumping rule $M$ that relies on $\delpre$.
Unlike $\delta_P$, that is essentially given alongside $P$ in $\LZend$, computing $\delpre(b,e)$ from $\LZend$ require some non-trivial work.
A naive approach would be to check if $[b..e]$ contains a phrase boundary, and if it does not - proceed according to the recursive definition.
This may result in $O(z)$ time for computing a single $\pre(b,e)$ value.
In \cref{sec:computepre}, we present the following algorithm, which will be instrumental for efficiently computing $\delpre$-dependent jumping rules.
\begin{lemma}\label{lem:computepre}
    Given $\LZend$ an LZ-End factorization of a string $S$ and a set of pairs $I \subseteq [1..n]^2$ such that $e\in [b..b+\frac{n}{z}]$ for every $(b,e) \in I$, we can compute $\pre(b,e)$ for every $[b..e]\in I$ in $O(( z+|I|) \log^2(n/z))$ time.
    Here, $z = |\LZend|$ and $n = |S|$.
\end{lemma}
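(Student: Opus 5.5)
We compute all $\pre(b,e)$ values for $(b,e)\in I$ in one right-to-left sweep over $\LZend$, in the spirit of the winding phase of Farach and Thorup. The sweep keeps a collection of \emph{active} queries, each stored as a current interval $[b'..e']$ (an occurrence of the original $S[b..e]$ reached by repeated applications of $J$) together with the total shift accumulated so far. Initially every query is active at its own position. We process phrases $P_z, P_{z-1},\ldots,P_1$. When $P=S[a..b_P]$ is processed, every active query with interval inside $[a..b_P]$ has both endpoints in $P$. If $P$ has a source, the definition of $\pre$ requires shifting these queries by $-\delta_P$, all together. Queries straddling a boundary of $P$, and all queries if $P$ has no source, are finished: their current left endpoint is the answer. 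Since $J$ always moves strictly left, and a query inside $P$ moves into the region ending at $b_{P'}<a$, each active query is always to the left of the scan position when its covering phrase is processed. So a single sweep suffices.

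The data structure is a collection of queries keyed by current left endpoint. It must support: $\Split$, which extracts the queries with left endpoint in $[a..b_P]$; a lazy $\Shift$ of a whole extracted block by $-\delta_P$; and $\Merge$, which reinserts the shifted block into the global collection. For the finishing test, among the queries with left endpoint in $[a..b_P]$ we must find those with $e'>b_P$. The bound $e-b\le n/z$ means only queries with left endpoint in $(b_P-n/z..b_P]$ can straddle, and lengths are invariant under shifting. So we can keep a secondary key $e'$ (equivalently, store the length and compare against $b_P-b'$) and repeatedly extract the maximum-$e'$ query while it exceeds $b_P$. Each such extraction finishes a query, so it is paid for by $|I|$. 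Queries with $b'<a\le e'$ are handled symmetrically, or simply when the phrase containing $b'$ is processed, since that phrase then does not contain $e'$.

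The main obstacle is the cost of merging a shifted block back into the global collection. It may interleave arbitrarily with existing queries, and the same query can be moved $\Theta(z)$ times. So we cannot pay per moved query; we must pay $\polylog$ per phrase. I would represent the collection as a balanced search tree whose nodes carry lazy additive offsets, so that $\Split$ and $\Shift$ cost $O(\log)$. For $\Merge$ I would use a finger-tree or segment-based merge whose cost is logarithmic in the number of interleaving segments. The key step is to bound the number of such segments. Since all active queries lie within a window, and the positions involved are bounded by $n/z$-local structure, I would try to argue an amortized $O(\log^2(n/z))$ per phrase. One route is to note that each processed phrase creates $O(1)$ new segment boundaries, and merging charges the segments it destroys, giving $O((z+|I|)\log^2(n/z))$ overall. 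Replacing $\log n$ by $\log(n/z)$ would use the same bucketing trick as \cref{lem:find-con-phrase}, keeping per-bucket trees of size $\poly(n/z)$ in expectation of coordinates. This bucketing step is where I expect the most care to be needed, and it is the part I am least sure of.

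Correctness follows by induction on the sweep. Each query's stored interval always equals the interval obtained from $(b,e)$ by the applications of $J$ prescribed by the recursive definition of $\pre$. It is finished exactly when the recursion stops, so the output left endpoint is $\pre(b,e)$.
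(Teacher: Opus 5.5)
Your sweep is the paper's algorithm: split off the queries whose left endpoint lies in $P_t$, shift them by $-\delta_{P_t}$, merge them back, and retire every query whose right endpoint crosses a phrase boundary. Your correctness argument is fine. The gap is the running time, which is where all the difficulty of the lemma lies.

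\textbf{Merge cost.} With finger or segment-based merging, combining two trees whose keys alternate $s$ times takes $s$ split/join steps. So the cost is linear, not logarithmic, in the number of interleaving segments, and one $\Merge$ can cost $\Theta(|T|\log N)$. Your amortization (each phrase creates $O(1)$ segment boundaries, and a merge pays with the segments it destroys) cannot work. A merge with $s$ alternations destroys no segments; it creates up to $2s$ new boundaries. Moreover, the total number of alternations is genuinely superlinear. For example, take a unary string whose phrases form a complete binary tree with $|I|$ leaves, laid out in breadth-first order, all of length $|I|$, each with its parent as source. Put one single-character query in each leaf phrase at bit-reversed distance from the phrase end. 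Then every level of the tree performs perfectly interleaving merges with about $|I|$ alternations, giving $\Theta((z+|I|)\log\frac{n}{z})$ alternations overall.

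The missing idea is the Farach--Thorup potential behind \cref{lem:split-merge-shift}. For a tree with distinct keys $x_1<x_2<\dots$, set $\Phi=\sum_i\bigl(1+\log(x_{i+1}-x_i)\bigr)$. Its behaviour under the operations is as follows. Shifts do not change $\Phi$ and splits do not increase it. Insertions, and the one extra gap a merge adds, increase it by $O(\log U)$, where $U$ bounds the gaps. Each merge step that inserts a run of one tree between consecutive keys $y_j<y_{j+1}$ of the other leaves a new gap next to $y_j$ or next to $y_{j+1}$ of length at most $\frac{1}{2}(y_{j+1}-y_j)$. Hence at least half of the merge steps decrease $\Phi$ by at least $1$ and pay for themselves, so every operation costs $O(\log U\cdot\log N)$ amortized.

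\textbf{From $\log n$ to $\log\frac{n}{z}$.} This step is also missing its key ingredient. Bucketing the keys into $z$ ranges of width $\frac{n}{z}$ gives $U\le\frac{n}{z}$. It does not bound the size of a bucket's tree, however: many queries can be shifted into one bucket, and then each tree operation costs $\log|I|$, which can be $\Theta(\log n)$ when $\frac{n}{z}$ is small.

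The paper uses the invariant that $e-b$ never changes during the sweep, so a bucket holds at most $(\frac{n}{z})^2$ distinct pairs $(b,e)$. Whenever a bucket's tree exceeds $2(\frac{n}{z})^2$ nodes, it is scanned and nodes with equal pairs are coalesced, concatenating their lists of original query indices. This costs $O(|I|)$ in total, because such a scan removes at least half of the scanned nodes. It keeps $N=O((\frac{n}{z})^2)$. Since $|P_t|\le\frac{n}{z}$ by \cref{lem:no-big-phrases}, each phrase causes $O(1)$ operations touching at most two buckets before and two after the shift. Together these give $O((z+|I|)\log^2\frac{n}{z})$.
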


In order to define $M$, we need to introduce the halved canonical partition of an interval.
This partition appears under a variety of names in compressed random access data structures (in particular in \cite{KS22}).
\begin{definition}[Canonical Partition, Halved Canonical Partition]
    For an interval $[a..b]$ of integers, we define the canonical partition of $[a..b]$ as the following intervals.
    \begin{enumerate}
        \item For every $i\in [0..\log(\frac{b-a}{2})]$, the interval $[a+2^i-1..\min(a+2^{i+1}-2,\frac{b-a}{2})]$
        \item For every $i\in [0..\log(\frac{b-a}{2})]$, the interval $[\max(b-2^{i+1}+2,\frac{b-a}{2})..b-2^i+1]$
    \end{enumerate}

We define the Halved Canonical Partition based on the Canonical Partition.
The halved canonical partition of $[a..b]$ is obtained by taking the canonical partition and splitting every interval larger than 1 into two equal length interval.
Formally, for every interval $I=[x..y]$ in the canonical partition of $[a..b]$ such that $|I| >1$, the halved canonical partition of $[a..b]$ contains $[x..x+ |I|/2)$ and $[x+|I|/2..y]$.
\end{definition}

\begin{figure}[h]
    \centering
    \tikzset{every picture/.style={line width=0.75pt}} 
\tikzset{
    every node/.style={
        text height=1.5ex,
        text depth=0.25ex
    }
}
\begin{tikzpicture}[x=0.75pt,y=0.75pt,yscale=-1,xscale=1]
\pgfmathsetmacro{\height}{20}
\pgfmathsetmacro{\width}{3}
\pgfmathsetmacro{\x}{0} 

 \draw[->] (\x,-10) -- (\x,0);
\node at (\x,-18) {$a$};
  \foreach \i in {1,...,6} {
    \draw (\x,0) rectangle (\x+\width,\height);

    \pgfmathparse{\x+\width}
    \global\let\x=\pgfmathresult 
    \pgfmathparse{2 * \width}
    \global\let\width=\pgfmathresult 
  }
    \draw(\x,0) rectangle (\x+35,\height);
    \pgfmathparse{\x+35}
    \global\let\x=\pgfmathresult 
    \draw(\x,0) rectangle (\x+35,\height);

    \draw[dashed](\x,-12) -- (\x,\height+12);

    \pgfmathparse{ \width/2}
    \global\let\width=\pgfmathresult 
     \pgfmathparse{\x+35}
    \global\let\x=\pgfmathresult 

  \foreach \i in {1,...,6} {
    \draw (\x,0) rectangle (\x+\width,\height);

    \pgfmathparse{\x+\width}
    \global\let\x=\pgfmathresult 
    \pgfmathparse{\width/2}
    \global\let\width=\pgfmathresult 
  }
   \draw[->] (\x,-10) -- (\x,0);
    \node at (\x,-18) {$b$};
\end{tikzpicture}
    \caption{A demonstration of a canonical partition.}
\end{figure}

The key property of the canonical partition is that for an index $i\in [a..b]$, the interval $I$ that contains $i$ in the canonical partition has length at most $m = \min(r_i,\ell_i)+1$.
It immediately follows that in the halved canonical partition, the length of the interval containing $i$ is bounded by  $\max(m/2,1)$.
We are now ready to define the jumping rule $M$.

\begin{definition}[Jumping Rule $M$]
    Let $i$ be an index contained in a marked phrase $P=S[a..b]$.
    Let $[x..y]$ be the interval in the halved canonical partition of $[a..b]$ that contains $i$.
    We define $M(i) = i - \delpre(x,y)$.
\end{definition}
It follows from \cref{obs:pre-is-legal} that $M$ is a legal jumping rules.
\begin{corollary}\label{cor:M-sj-legal}
    $M$ is a legal jumping rule.
\end{corollary}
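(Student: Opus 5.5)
The statement to prove is \cref{cor:M-sj-legal}: $M$ is a legal jumping rule. I will reduce it directly to \cref{obs:pre-is-legal}.

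Fix an index $i$ in a marked phrase $P=S[a..b]$. Let $[x..y]$ be the interval of the halved canonical partition of $[a..b]$ that contains $i$. By definition, $M(i)=i-\delpre(x,y)$. Since $i\in[x..y]$ and $x\le y$, \cref{obs:pre-is-legal} applied to the interval $[x..y]$ gives that $i-\delpre(x,y)$ is a legal jump for $i$. Hence $M(i)$ is a legal jump for $i$.

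For completeness, I would also re-derive the observation, since it underlies everything. Unfold the recursion defining $\pre(x,y)$ as a finite sequence of pairs $(x_0,y_0)=(x,y)$, $(x_{t+1},y_{t+1})=(J(x_t),J(y_t))$. The sequence continues while $x_t$ and $y_t$ lie in a common phrase $P_t$ that has a source, and it stops at the first $T$ where this fails. It is finite because each reference phrase has $\delta_{P_t}\ge|P_t|\ge 1$, so the indices strictly decrease.

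I then prove by induction on $t$ that $i_t:=i-(x-x_t)$ satisfies two properties: $i_t\in[x_t..y_t]$, and $i_t$ is reachable from $i$ by a chain of applications of $J$. The key point is that, while $[x_t..y_t]\subseteq P_t$, all three indices $x_t$, $i_t$, $y_t$ are shifted by the same $\delta_{P_t}$. Therefore $i_{t+1}=J(i_t)$, and the containment $i_{t+1}\in[x_{t+1}..y_{t+1}]$ is preserved.

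By \cref{lem:j-legal}, each application of $J$ preserves the character and does not increase the $r$-value. Legality is transitive: if $j$ is legal for $i$ and $k$ is legal for $j$, then $k\le j\le i$, $S[k]=S[j]=S[i]$, and $r_k\le r_j\le r_i$. So $i_T=i-\delpre(x,y)$ is a legal jump for $i$.

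The only mild obstacle is in this induction: one must check that every $J$-step is applied to an index lying in the same phrase as $x_t$ and $y_t$, so that the same shift $\delta_{P_t}$ applies to $i_t$. This follows because $i_t$ lies between $x_t$ and $y_t$, and phrases are intervals.
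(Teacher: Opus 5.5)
Your proof is correct and takes the same route as the paper, which also obtains the corollary directly from \cref{obs:pre-is-legal} applied to the halved-canonical interval $[x..y]$ containing $i$. Your re-derivation of the observation, by unfolding $\pre$ into a chain of $J$-steps that are each legal by \cref{lem:j-legal} and composing them by transitivity, simply spells out the paper's one-line remark that $\pre$ is obtained by repeatedly applying $J$.
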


Notice that $M(i) = i - \delpre(b,e)$ transforms $i\in [b..e]$ into an interval of size $|[b..e]|$ that contains a phrase boundary.
 Therefore $M(i)$ has a phrase boundary that is quite close to it either from the left or from the right, inducing a significant decrease either of $r$-value or of $\ell$-value compared to $i$.  
Formally, we prove the following useful property of $M$.
\begin{claim}\label{clm:m-reduce-l-or-r}
    For an index $i$ that is contained in a phrase $P\in M$, denote $j=M(i)$.
    It holds that $r_j \le m$ or $\ell_j \le m$ where $m=\min(r_i/2,\ell_i/2)$.
\end{claim}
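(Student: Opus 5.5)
Let $[x..y]$ be the interval of the halved canonical partition of $P=S[a..b]$ that contains $i$, and write $d=\delpre(x,y)$ and $j=i-d$. By definition of $\pre$, $\pre(x,y)=x-d$ is the left endpoint of an occurrence $S[x-d..y-d]$ of $S[x..y]$. This occurrence is obtained by repeatedly applying $J$ to the pair of endpoints, and the process halts for one of two reasons. Either the final interval $[x-d..y-d]$ is not contained in a single phrase, or it lies in a phrase without a source, i.e., a terminal phrase of length $1$. The plan is to show that in both cases the interval $[x-d..y-d]$, which contains $j$, is short and contains or touches a phrase boundary. This forces $\min(r_j,\ell_j)$ to be at most the interval length, up to $\pm1$.

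\textbf{Step 1: bound the interval length.} We have $|[x..y]|\le\max(m'/2,1)$, where $m'=\min(r_i,\ell_i)+1$; this is the key property of the (halved) canonical partition stated just before the definition of $M$. In the degenerate case $|[x..y]|=1$, either the interval was already length one because $m'$ is tiny, or we handle it directly.

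\textbf{Step 2: case analysis on the stopping reason.} Suppose first that $[x-d..y-d]$ is not inside one phrase. Then there is a boundary $c$ with $b_c\in[x-d..y-d)$. If $j\le b_c$, then $r_j\le b_c-j\le y-x$. If $j>b_c$, then $j$ lies in the phrase starting at $b_c+1\ge x-d+1$, so $\ell_j\le j-(x-d)-1\le y-x-1$. Otherwise the final phrase is terminal, so $x-d=y-d=j$ and $r_j=\ell_j=0$. In all cases $\min(r_j,\ell_j)\le y-x=|[x..y]|-1$.

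\textbf{Step 3: combine.} It remains to check that $|[x..y]|-1\le\min(r_i,\ell_i)/2$. With the halving we have $|[x..y]|\le\lceil m'/2\rceil$, so $|[x..y]|-1\le m'/2-1/2=\min(r_i,\ell_i)/2$, which is what we want. If $|[x..y]|=1$, the bound $0\le m$ is trivial. Hence $r_j\le m$ or $\ell_j\le m$ with $m=\min(r_i/2,\ell_i/2)$.

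The main obstacle is Step 3: getting the exact length bound of the halved canonical interval right. This is fiddly given the paper's slightly loose indexing of the canonical partition, in particular the midpoint terms written as $\frac{b-a}{2}$ and the rounding when an interval is halved. I would first verify carefully that an interval at distance $\min(r_i,\ell_i)$ from the nearer end of $P$ has length at most $\min(r_i,\ell_i)+1$, since canonical block $t$ has length $2^t$ and starts at offset $2^t-1$. Halving then gives length at most $\lceil(\min(r_i,\ell_i)+1)/2\rceil$, and the off-by-one from Step 2, which bounds $\min(r_j,\ell_j)$ by length minus one, absorbs the rounding.
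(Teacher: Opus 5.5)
Your proof is correct and follows the paper's own argument: bound the length of the halved canonical interval $[x..y]$, observe that the $\pre$-occurrence $[x-d..y-d]$ containing $j$ either contains a phrase boundary or is a terminal length-one phrase, and split on which side of the boundary $j$ lies. Your tighter accounting, $\min(r_j,\ell_j)\le |[x..y]|-1$ together with $|[x..y]|\le\lceil(\min(r_i,\ell_i)+1)/2\rceil$, is slightly cleaner than the paper's version. The paper instead bounds $\min(r_j,\ell_j)$ by $|[x..y]|$ and asserts $|[x..y]|\le m$, which overshoots by $1/2$ at the first index of a canonical block.
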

\begin{proof}
    Let $[x..y]$ be the interval in the halved canonical partition of $P$ containing $i$.
    Notice that $|[x..y]| \le \max(m,1)$.
    If $|[x..y]| = 1$ it must be the case that $S[j]$ is a phrase of length 1.
    Therefore $r_j=\ell_j = 0 \le m$, as required.
    
    Otherwise, we have $|[x..y]| \le m$.
    Let $x' = \pre(x,y)$ and $y' = x' + |[x..y]|$.
    The index $j= M(i)$ is in $[x'..y']$, and from the definition of $\pre$, there is some phrase boundary $b'\in [x'..y']$.
    If $j\le b'$, then $r_j\le b'-j \le |[x'..y']| = |[x..y]|\le m$.
    Otherwise, $j > b'$ and we have $\ell_j \le j-b' \le |[x'..y']| \le m$.
\end{proof}

\cref{clm:m-reduce-l-or-r} implies that if our sequence of applications of $J$ reaches upon some index $j$ in a marked phrase, then $M(j)$ is a valid output for \cref{lem:naive}.
Our first goal will be to define a 'good' set $M$, such that every sufficiently long sequence of $J$ applications eventually reaches $M$ (Roughly, see \cref{lem:compute-marked} for the precise definition).
In order to exploit this property, our data structure has to support efficient computation of $M(i)$ for an index $i$ in a marked phrase.
We present the following data structure for efficient $M$ computation.
\begin{lemma}\label{lem:ds-mi}
Given $\LZend$, we can construct a data structure of size $O(|M| \log \frac{n}{z})$ that given $i\in [n]$, and the phrase $P$ containing $i$, computes $M(i)$ in $O(1)$ time.
The data structure can be constructed in $O((M\log \frac{n}{z} + z) \log^2 \frac{n}{z})$ time.
\end{lemma}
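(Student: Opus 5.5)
The plan is to precompute, for every marked phrase $P=S[a..b]\in M$ and every interval $[x..y]$ of the halved canonical partition of $[a..b]$, the value $\delpre(x,y)$, and to store these values so that the interval containing a query index $i$ can be found in $O(1)$ time. By \cref{lem:no-big-phrases} we may assume $|P|\le\frac{n}{z}$ for every phrase. The canonical partition of $[a..b]$ then has $O(\log|P|)=O(\log\frac{n}{z})$ intervals, and halving at most doubles this count. Hence the set $I=\{(x,y)\}$ of all halved-canonical intervals of all marked phrases has size $O(|M|\log\frac{n}{z})$. Every pair in $I$ satisfies $y-x\le|P|\le\frac{n}{z}$, so \cref{lem:computepre} applies directly. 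It computes all $\pre(x,y)$, and hence all $\delpre(x,y)=x-\pre(x,y)$, in $O((z+|M|\log\frac{n}{z})\log^2\frac{n}{z})$ time, which is the claimed construction time.

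For storage, each marked phrase $P$ gets an array $A_P$ of length $O(\log\frac{n}{z})$. The array is laid out in a fixed canonical order: first the left-side intervals by increasing $t$, each split into its two halves, then the right-side intervals in the symmetric order. Each entry holds the interval endpoints and the corresponding $\delpre$ value. The phrase array stores a pointer from $P$ to $A_P$, or $\nil$ if $P\notin M$. The total space is $O(|M|\log\frac{n}{z})$ plus $O(z)$ for the pointers, which matches the statement given that the pointer array is shared with the rest of the structure.

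At query time we are given $i$ and $P=S[a..b]$. We compute $\ell_i=i-a$ and $r_i=b-i$ and decide which side of the midpoint $i$ lies on. If $i$ is on the left side, the canonical interval $[a+2^t-1..a+2^{t+1}-2]$ containing it has $t=\lfloor\log_2(\ell_i+1)\rfloor$. This is a most-significant-bit computation, which takes constant time in the Word RAM model, either by a standard trick or by a lookup table for numbers up to $n$. The right side is symmetric, using $r_i+1$. Comparing $i$ with the split point $x+|I|/2$ selects which half of $I$ contains $i$, and this gives an index into $A_P$. We return $i-\delpre(x,y)$ from the stored value.

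The only delicate point is the middle boundary. There the canonical intervals are truncated at the midpoint, and the index formula must match the ordering used when $A_P$ was built. This is a matter of careful bookkeeping, handled by storing endpoints in $A_P$ and checking the at most one or two candidate neighbours in $O(1)$ time. The main substance, the batched computation of $\pre$, is already provided by \cref{lem:computepre}.
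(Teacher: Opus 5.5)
Your proposal is correct and takes essentially the same route as the paper. Both make one batched call to \cref{lem:computepre} on all $O(|M|\log\frac{n}{z})$ halved-canonical intervals of marked phrases, which is legal because each interval lies in a phrase of length at most $\frac{n}{z}$. Both store the resulting $\delpre$ values in a per-phrase array $A_P$ linked from $P$, and both find the interval containing $i$ in constant time by a most-significant-bit computation on $i-a$ or $b-i$. One caveat: your alternative of a lookup table for all numbers up to $n$ would cost $\Theta(n)$ space and break the $O(z)$ budget, so rely on the standard constant-time word-RAM MSB computation instead.
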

\begin{proof}
We apply \cref{lem:computepre} to compute for every $P=S[a..b]\in M$ and every $I=[x..y]$ in the halved canonical partition of $P$ the value $P_I = \pre(x,y)$ in $O((|M| \log \frac{n}{z} + z) \log^2 n)$ total time (Due to \cref{lem:no-big-phrases}, the halved canonical partition of every $P\in \LZend$ has $O(\log \frac{n}{z})$ intervals).
Notice every query interval is contained in some phrase $P_i$, and $|P_i| \le \frac{n}{z}$ according to \cref{lem:no-big-phrases}. 
Therefore, every $(b,e)$ query has $e\in [b..b+\frac{n}{z}]$, and this is a legal input for \cref{lem:computepre}.

We then compute for each interval $I=[x..y]$ in the halved canonical partition of $P$ the value $\delpre(I)=x-\delpre(x,y)$, and store those values in an array $A_P$ specifically constructed for $P$.
We store in each phrase $P$ a link to its corresponding array $A_P$.

Given an index $i$ and its containing phrase $P$, we can retrieve the interval $I$ containing $i$ in the halved canonical partition of $P$ in $O(1)$ time using $O(1)$ bitwise operations on $a,b,i$ (We need the most significant bit set to $1$ in $i-a$ if $i$ is in the first half of $P$, or the most significant bit set to $1$ in $b-i$ otherwise).
We can then use $A_P$ to find $\delpre(I)$ and return $M(i)= i - \pre(I)$.
If $P$ is not linked to an array $A_P$, we reach upon the conclusion that $P\notin M$, and therefore $M(i)$ is undefined.

The construction time is dominated by $O((|M| \log\frac{n}{z}  +z )\log^2 n)$, the query time is $O(1)$ for applying a predecessor query, and the size is $O(M \log \frac{n}{z})$ for storing all $\delpre$ values.
\end{proof}

\subsection{Finding a Good Set of Marked Phrases}
Our first goal is to find a good set $M$ of marked phrases.
Before formally defining a good set (see \cref{lem:compute-marked}), let us provide some intuition.
We wish to find a set $M$ that is both \textit{sparse} and \textit{covering}.
We require $M$ to be small, since we will support a query for $M(i)$ using \cref{lem:ds-mi}, which consumes $O(z+ |M|\log \frac{n}{z})$ space.
Since we wish to have $O(z)$ space complexity, we need to have $|M| \in O(z/\log \frac{n}{z})$.

As for the covering property, we would like to guarantee that every long enough sequence of steps $i,J(i),J^2(i),\ldots$ eventually reaches some index contained in a phrase of $M$.
Instead, we will obtain $M$ that satisfies a weaker, yet sufficient, notion of covering.
Namely, we will guarantee that every long enough sequence of steps either reaches a phrase of $M$, or reaches an index with small enough $r$-value.

Formally, we prove the following.
\begin{lemma}\label{lem:compute-marked}
    There is an algorithm that given $\LZend$, outputs a set $M$ of size $O(z/\log \frac{n}{z})$ such that for every index $i$ with $r_i\in [k..1.5k)$ for some $k\in \K$, there is $x\in [1..{\log}^2 \frac{n}{z}]$ such that either $j = J^x(i)$ is contained in a phrase of $M$, or $r_{j}< k$.

The algorithm works in $O(z \log \frac{n}{z})$ time.
\end{lemma}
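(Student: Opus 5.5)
The plan is to make the forest $T_k$ from the overview precise and then mark one sparse depth class of each forest.

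\textbf{Step 1: defining the $k$-bad parent.} Fix $k\in\K$ and a reference phrase $P=S[a..b]$ with source ending at $b'$, so $\delta_P=b-b'$. For $i\in P$ with $r_i\in[k..1.5k)$ we have $i\in(b-1.5k..b-k]$. Hence $J(i)\in(b'-1.5k..b'-k]$, a window of fewer than $k/2$ positions lying left of the boundary $b'$.

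Let $Q$ be a phrase containing some $j$ in this window with $r_j\ge k$. Then $Q$ ends at or after $j+k>b'-k/2$. Since $b'$ is itself a phrase end and $j\le b'-k$, the phrase $Q$ ends at $b'$ or earlier. So $Q$ ends in $(b'-k/2..b'-k+\ldots]$ with $Q$'s end $\ge j+k$.

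Now suppose two distinct phrases $Q_1$ (left) and $Q_2$ both contain such indices $j_1<j_2$. Then $Q_1$ ends before $j_2\le b'-k$, but it must also end at or after $j_1+k>b'-1.5k+k=b'-k/2>b'-k$. This is a contradiction, so at most one phrase $Q$ qualifies.

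Define $\pi_k(P)=Q$ if such a $Q$ exists (and $Q$ is a reference phrase); otherwise $P$ is a root. Since $Q$ ends at most at $b'<b$, the parent lies strictly to the left of $P$, so this is a forest $T_k$ on $\LZend$. By construction, for $r_i\in[k..1.5k)$ either $r_{J(i)}<k$ or $J(i)\in\pi_k(P)$.

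\textbf{Step 2: computing the forests.} To compute $\pi_k(P)$, I take the phrase containing $b'-k$ (a leftmost candidate position with large $r$). I check whether its end is at least $(b'-1.5k+1)+k$, or more precisely whether some window index has $r\ge k$. The natural choice is to test the leftmost window index against the phrase containing it. The phrase containing a given position is found by \cref{lem:find-con-phrase}, though this costs $O(\log\frac nz)$ per query.

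To meet the $O(z\log\frac nz)$ budget overall, I instead sort all $z|\K|$ query positions and merge them with the phrase boundaries. Radix sort or bucketing by $k$ works here: for fixed $k$ the positions $b'-k$ are monotone in $b'$ after sorting the sources once. This gives $O(1)$ amortized time per pair $(P,k)$, so all forests are built in $O(z\log\frac nz)$ time. The depths in each $T_k$ are then computed in $O(z)$ time per $k$ by processing phrases left to right, since parents precede children.

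\textbf{Step 3: choosing the marked set.} Let $L=\lceil\log^2\frac nz\rceil$. For each $k$, partition the nodes of $T_k$ by depth modulo $L$. By pigeonhole, some class $c_k$ has size at most $z/L$. Set $M=\bigcup_k\{P: \mathrm{depth}_{T_k}(P)\equiv c_k \pmod L\}$. Then $|M|\le |\K|z/L=O(z/\log\frac nz)$.

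\textbf{Step 4: covering.} Given $i\in P$ with $r_i\in[k..1.5k)$, iterate $J$. As long as the $r$-value stays at least $k$, it also stays below $1.5k$, because $J$ is legal (\cref{lem:j-legal}). So by Step 1 the phrases visited are $P,\pi_k(P),\pi_k^2(P),\dots$.

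Within $L$ steps one of two things happens. The walk may reach an index with $r<k$. Otherwise it reaches a phrase whose depth is $\equiv c_k$, which lies in $M$. The remaining possibility is that the walk reaches a root $R$ of $T_k$ while $r\ge k$. Then $J$ of that index has $r<k$ by the definition of a root, or $R$ is a terminal phrase, which is impossible since $r\ge k\ge1$.

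There is an off-by-one issue at $x=0$ versus $x\ge1$: $P$ itself being in $M$ does not count, since $x$ must be at least $1$. This is handled by taking $x$ up to $L+1$, or by using $L-1$ as the modulus; either keeps $x\in[1..\log^2\frac nz]$ up to constants.

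\textbf{Main obstacle.} I expect the main obstacle to be the uniqueness argument for the parent and the boundary cases, namely indices exactly at $r=k$ versus $1.5k$ and non-integral powers of $1.5$. The stated time bound of Step 2 also needs the batched, merge-based phrase lookup rather than per-query search.
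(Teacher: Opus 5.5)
Your proof is correct, and its combinatorial core matches the paper's. Your window argument is \cref{lem:bad-jumps-same-phrase} rephrased, giving a unique $k$-bad parent and hence a forest per $k$. You pick one depth class modulo $\log^2\frac{n}{z}$ by pigeonhole, and covering follows because a walk of bad jumps is an upward path.

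Where you genuinely differ is the construction of the forests. The paper builds each $T_k$ by querying the $O(1)$-time $k$-bad-parent structure of \cref{lem:kbad-parent-ds}. It explicitly notes that this structure's $O(z\log^2\frac{n}{z})$ construction time dominates, so the argument as written only gives $O(z\log^2\frac{n}{z})$. That is harmless for \cref{thm:main}, since the same structure is needed anyway for the queries in \cref{lem:naive}. Your offline approach first sorts the phrases by source index once with counting sort. Then, for each $k$, it merges the monotone positions $b'-k$ against the phrase starts in $O(z)$ time and does a constant-time check. This actually attains the stated $O(z\log\frac{n}{z})$ bound and needs neither hashing nor the critical-value machinery.

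Some small remarks:
\begin{itemize}
\item The check should be made at the leftmost window index \emph{inside} the phrase $Q=[s..e]$ containing $b'-k$, i.e.\ test $e-\max(s,\lfloor b'-1.5k\rfloor+1)\ge k$. The leftmost window index overall may lie in an earlier phrase.
\item Your parent relation is slightly more permissive than the paper's, since you do not require the preimage $j+\delta_P$ to lie in $P$. This is harmless: uniqueness still holds, and every walk of bad jumps still follows your parent pointers.
\item Using modulus $L-1$ meets the $x\ge 1$ range of the statement exactly. The paper's own argument in fact yields $x\in[0..\log^2\frac{n}{z}]$. Your other suggestion, letting $x$ run up to $L+1$, would exceed the stated range $x\le\log^2\frac{n}{z}$.
\end{itemize}
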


We introduce the concept of a bad jump, that corresponds to a step $J(i)$ that remains in the same exponential level.
\begin{definition}[Bad Jump, Good Jump]
Let $i$ be an index and let $k\in \K$ be the unique integer power of $1.5$ such that $r_i \in [k..1.5k)$. 
We say that $i$ is a $k$-bad jump if $j=J(i)$ has $r_j \ge k$.
If $i$ is not a $k$-bad jump, we say $i$ is a $k$-good jump.   
\end{definition}
When $k$ is clear from context or irrelevant, we omit it from the notion of $k$-bad jumps and $k$-good jumps, using bad jump and good jumps instead.

The existence of bad jumps is the main obstacle in finding a value $j$ such that $r_j<k$ via iterative application of $J$.
The pathological case would be where the sequence $i,J(i),J^2(i),\ldots, J^x(i)$ consists only of bad jumps for high values of $x$.
However, we show that bad jumps are very structured.
Namely, we show that the phrase $P'$ that contains $J(i)$ for some $k$-bad jump $i$ can be decided only based on the phrase $P$ containing $i$ and $k$.
In other words, all $k$-bad jumps within a phrase $P$ share some uniform behavior. 
We exploit this structure to efficiently cover every long sequence of bad jumps. 

We make the following claim regarding bad jumps.
\begin{lemma}\label{lem:bad-jumps-same-phrase}
    Let $k\in \K$.
    Let $P$ be a phrase and let $i$ and $i'$ be two $k$-bad jumps in $P$ with $r_i,r_i' \in [k..1.5k)$.
    Then, $J(i)$ and $J(i')$ are in the same phrase.
\end{lemma}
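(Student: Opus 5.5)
The plan is a direct argument about where the images can land. Let $P=S[a..b]$ have source $P''=S[a''..b'']$, and set $\delta=\delta_P=b-b''$. For any $i\in P$ we have $J(i)=i-\delta$ and $b''-J(i)=b-i=r_i$. Since both $i$ and $i'$ have $r$-values in $[k..1.5k)$, both $J(i)$ and $J(i')$ lie in the window $W=(b''-1.5k..b''-k]$.

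The argument then has four steps.

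First, I locate the phrase boundary immediately preceding $b''$. Let $Q=S[a_Q..b'']$ be the phrase ending at $b''$, i.e.\ the source phrase itself. Every index $j\in Q$ satisfies $r_j=b''-j$.

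Second, I show that a bad jump lands in $Q$. Suppose $i$ is a $k$-bad jump. Then $r_{J(i)}\ge k$. If $J(i)$ were not in $Q$, then $J(i)<a_Q$, so some phrase boundary $b^*<a_Q\le b''$ satisfies $b^*\ge J(i)$. Because $b^*$ is the right end of the phrase containing $J(i)$ (or lies further right), we get $r_{J(i)}\le b^*-J(i)$. This alone does not contradict $r_{J(i)}\ge k$. So I switch to the characterization: $J(i)$ lies in the phrase whose right end is the smallest boundary $\ge J(i)$. Since $J(i)>b''-1.5k$ and $r_{J(i)}\ge k$, that boundary lies in $[J(i)+k..b'']$, which is contained in $(b''-0.5k..b'']$.

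Third, I exploit the fact that this interval is short. The interval $(b''-0.5k..b'']$ has length $0.5k$, while the window $W$ has length $0.5k$ and ends at $b''-k$. Suppose some boundary $c$ lies in $(b''-0.5k..b'')$. Then $J(i)\le c-k<b''-1.5k$, which contradicts $J(i)\in W$. Hence the only possible boundary is $b''$ itself. So every $k$-bad jump from $P$ lands in the phrase ending at $b''$, i.e.\ in $P''$. The same reasoning applies to $i'$, so $J(i)$ and $J(i')$ lie in the same phrase.

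Fourth, I record the degenerate cases. If $P$ has no source, then $r_i=0<k$, which is impossible, so $P$ always has a source. The statement does not need $J(i)$ to lie in $W$ explicitly beyond $r_i\in[k..1.5k)$.

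The step needing the most care is the third, the interval arithmetic. I must verify the exact endpoint conventions, namely that the set $\K$ uses $[k..1.5k)$ and that $r_{J(i)}\le r_i<1.5k$. Once those conventions check out, the argument is a short pigeonhole on a window of length $k/2$: the next boundary after $J(i)$ must be at distance in $[k..1.5k)$ from $J(i)$, and $b''$ is already at distance $r_i<1.5k$. Any earlier boundary would be at distance less than $r_i$ but at least $k$, and would lie strictly inside $(J(i)+k..b'')$. I will recheck this carefully. The cleanest formulation is that the boundary is the unique boundary in $[b''-0.5k..b'']$ if there is exactly one, and otherwise no bad jumps exist from $P$. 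So bad jumps from $P$ all land in the phrase ending at the smallest boundary $c\ge b''-0.5k$, which is independent of $i$. This last formulation is the robust one, and it is the version I would write up.
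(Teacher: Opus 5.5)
Your Steps 2--3 conclude that every $k$-bad jump from $P$ lands in the source phrase $P''$. That claim is false, and it rests on an arithmetic slip. Suppose the phrase containing $J(i)$ ends at a boundary $c\in(b''-0.5k..b'')$. Then $c-k\in(b''-1.5k..b''-k)$, so $J(i)\le c-k$ is fully consistent with $J(i)\in W$ and yields no contradiction.

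Here is a concrete counterexample. Let $P''$ have length $m$ with $1\le m<0.5k$, and let the phrase preceding it start at or before $b''-1.5k$. Take any $i\in P$ with $r_i\in[k+m..1.5k)$; such $i$ exist once $|P|>k+m$. Then $J(i)=b''-r_i$ lies inside that preceding phrase, with $r_{J(i)}=r_i-m\ge k$. So $i$ is a $k$-bad jump that does not land in $P''$.

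The same configuration refutes your ``cleanest formulation'', which says bad jumps exist only if $b''$ is the unique boundary in $[b''-0.5k..b'']$: here $b''-m$ is a second boundary in that window. This is exactly why the paper's $k$-bad parent depends on $k$ and is not simply the source phrase.

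What survives is your final sentence: all $k$-bad jumps land in the phrase ending at the smallest boundary $c\ge b''-0.5k$. That statement is true, but it does not follow from the false dichotomy you derive it from. The correct justification is two inequalities. Let $i$ be a bad jump and let $c_i$ be the right end of the phrase containing $J(i)$.
\begin{enumerate}
\item From $c_i-J(i)=r_{J(i)}\ge k$ and $J(i)>b''-1.5k$ we get $c_i>b''-0.5k$, hence $c_i\ge c$.
\item Conversely, $c\ge b''-0.5k>b''-k\ge J(i)$, so $c$ is a boundary at or after $J(i)$, hence $c_i\le c$.
\end{enumerate}
So $c_i=c$ independently of $i$. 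This proves the lemma, and it even identifies the bad parent explicitly, close to the paper's later characterization via $J(b-k)$.

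The paper's own proof avoids locating the boundary at all. It uses $|J(i)-J(i')|=|r_i-r_{i'}|<0.5k$: a boundary between the two images would give the left one an $r$-value below $0.5k<k$, contradicting badness.
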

\begin{proof}
    For an illustration, see \cref{fig:badparent}.
    Assume $i<i'$.
    Denote $j = J(i)$ and $j' = J(i')$.
    Since both $r_i$ and $r_{i'}$ are in $[k..1.5k)$, and $i$ and $i'$ are in the same phrase, we have $i'-i< 0.5k$.
    Since $j = i- \delta_P$ and $j' = i'-\delta_P$, we have $ j'-j = i'-i < 0.5k$.
    Now, assume to the contrary that $j$ and $j'$ are in different phrases.
    This would mean that there is a phrase boundary between $j$ and $j'$, which  implies $r_{j} \le j'-j < 0.5k < k$, contradicting the assumption that $i$ is a bad jump.
\end{proof}

\begin{figure}[h]
    \centering
    \tikzset{every picture/.style={line width=0.75pt}} 
\usetikzlibrary{patterns}

\begin{tikzpicture}[x=1pt,y=1pt,yscale=-1,xscale=1]
\pgfmathsetmacro{\blockheight}{25}
\pgfmathsetmacro{\stringheight}{20}
\pgfmathsetmacro{\height}{20}

\pgfmathsetmacro{\width}{10}
\pgfmathsetmacro{\x}{0} 

\draw (20,0) rectangle (180,\blockheight);

\draw[densely dotted](20,50) rectangle (180, \blockheight+50);

\draw(35,0) rectangle (45,\blockheight);
\node[anchor=north,font=\small] at(40,5) {$i$};

\draw(85,0) rectangle (95,\blockheight);
\node[anchor=north,font=\small] at(90,5) {$i'$};

\draw(35,50) rectangle (45,50+\blockheight);
\node[anchor=north,font=\small] at(40,55) {$j$};

\draw(85,50) rectangle (95,50+\blockheight);
\node[anchor=north ,font=\small] at(90,55) {$j'$};

\node[anchor=north,font=\tiny] at(40,-10) {$r_i$};
\draw[->] (45,-5) -- (180,-5);

\node[anchor=north,font=\tiny] at(90,-15) {$r_{i'}$};
\draw[->] (95,-10) -- (180,-10);

\draw[dashed, line width = 0.7] (60,35) -- (60,65+\blockheight);

\draw[dashed, line width = 0.7] (180 ,-15) -- (180 ,0);

\draw[<->] (45,15) -- (85,15);
\node[anchor=north,font=\tiny] at(65,4) {$< 0.5k$};

\node[anchor=north,font=\tiny] at(40,40) {$r_j$};
\draw[->] (45,45) -- (60,45);

\node at (10,12) {$P$};

\node at (-5,62) {$\textsf{source}(P)$};

\end{tikzpicture}
    \caption{The proof of \cref{lem:bad-jumps-same-phrase}. 
    The top rectangle represents $P$, while the bottom rectangle represents the source of $P$ (In the text, the source appears to the left of $P$. In this figure it is displayed below $P$).
    The dashed line between $j$ and $j'$ represents a phrase boundary separating them.
    It can be seen that the existence of such phrase boundary leads to $r_{j}<k$.
    }\label{fig:badparent}
\end{figure}

Following \cref{lem:bad-jumps-same-phrase}, we introduce the terminology of a $k$-bad parent.
For a phrase $P$ and $k\in \K$, we call 'the $k$-bad parent' of $P$ the unique phrase $P'$ such that every bad jump $i$ in $P$ with $r_i \in [k..1.5k)$ has $J(i) \in P'$ (uniqueness follows from \cref{lem:bad-jumps-same-phrase}).
If there is no such bad jump in $P$, then $P$ does not have a $k$-bad parent.
Notice that $P'\neq P$, since $J(i)$ is necessarily in a phrase strictly to the left of the phrase containing $i$.
As implied by the parent terminology, the $k$-bad parent relation induces a forest structure over the phrases of $\LZend$.
We denote as $T_k$ the forest with vertices being the phrases of $\LZend$, and the parent of every $P$ is the $k$-bad parent of $P$.

In \cref{sec:kbad-parent-ds}, we prove the following.
\begin{lemma}\label{lem:kbad-parent-ds}
    Given $\LZend$, there is a data structure supporting the following query: Given $P\in \LZend$ and $k\in \K$, return the $k$-bad parent of $P$, or report that $P$ does not have a $k$-bad parent.
    The construction time of the data structure is $O(z \log^2 \frac{n}{z})$, the space is $O(z)$ and the query time is $O(1)$.
\end{lemma}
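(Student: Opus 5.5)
The plan is to characterize the $k$-bad parent of a phrase $P=S[a..b]$ with source $P'=S[a'..b']$ directly in terms of the shifted window. The indices $i\in P$ with $r_i\in[k..1.5k)$ form the interval $I_k(P)=(b-1.5k..b-k]\cap[a..b]$. Their images under $J$ form the interval $I_k(P)-\delta_P=(b'-1.5k..b'-k]$, intersected with $[a'+(b'-b)..]$.

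An index $i\in I_k(P)$ is a $k$-bad jump iff $r_{J(i)}\ge k$. This means the phrase containing $j=J(i)$ ends at or after $j+k$. Equivalently, no phrase boundary lies in $[j..j+k)$. Since the window $J(I_k(P))$ has length less than $0.5k$, \cref{lem:bad-jumps-same-phrase} says all bad jumps map into a single phrase. Moreover, that phrase must be the phrase containing the position $q=b'-k$ (the right end of the window) whenever a bad jump exists. Indeed, a bad $j\le b'-k$ has no boundary in $[j..j+k)\supseteq[j..b'-k]$, so $j$ and $b'-k$ share a phrase.

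Hence the candidate parent is $Q=\mathrm{phrase}(b'-k)$, and $P$ has a $k$-bad parent iff some $j$ in the window satisfies $j+k-1\le$ end of $Q$ and $j\ge$ start of $Q$. This is a constant-time interval test once $Q$ is known.

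The data structure then precomputes, for every $P$ and every $k\in\K$, the answer. Naively that is $z|\K|=O(z\log\frac nz)$ entries, too many for $O(z)$ space. So I would instead store, for each phrase, only the information needed to locate $Q$ in $O(1)$. The key observation is that $Q$ is the phrase containing $b'-k$, and $b'$ is a phrase boundary at distance at most $1.5k\le 1.5\frac nz$. Because phrases have length at most $n/z$ after \cref{lem:no-big-phrases}, the phrases ending within $\frac{3n}{2z}$ to the left of boundary $b'$ are contiguous in index order. So $Q$ is determined by the index of the source phrase minus some offset $t_k(P')\ge 0$, where $t_k(P')$ depends only on the source phrase $P'$ and on $k$, not on $P$.

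I would therefore compute, for every phrase $P_j$ and every $k\in\K$, the offset $t_k(P_j)$, meaning the number of phrases between $P_j$ and the phrase containing $b_j-k$. These offsets are monotone in $k$. Computing them takes $O(z\log\frac nz)$ time by a backward two-pointer walk per phrase, or a predecessor query via \cref{lem:find-con-phrase} in $O(z\log^2\frac nz)$ total.

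The main obstacle is storing all $|\K|$ offsets per phrase in $O(1)$ words. My first attempt is to note that $t_k(P_j)$ is nondecreasing in $k$, so the sequence can be encoded as a bit vector. Its length is $|\K|$ plus the total number of phrases within distance $\frac{3n}{2z}$ of $b_j$, and that second term is unbounded in general. To control it, I would charge: the number of distinct phrases met is at most $|\K|+1$ only if we record just the values actually attained at powers $1.5^x$, namely record for each $x$ whether $t$ increments. Since $t_{1.5^{x+1}}-t_{1.5^x}$ may exceed one, I would instead store $t$ only at sampled positions and recover $t_k(P_j)$ as $\mathrm{rank}$ in the global array of phrase starts. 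Concretely, I would use $\mathrm{phrase}(b_j-k)$ computed by a constant-time predecessor in a small universe: locally the starts within $[b_j-1.5\frac nz, b_j]$ number $O(\log\frac nz)$ after merging tiny phrases. If needed, I would argue via a packed word, with a fusion-style rank over $O(\log \frac nz)$-bit distances, that this is $O(1)$.

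Finally, at query time the algorithm reads $P'$, computes $Q$ by this constant-time local predecessor, and performs the interval test above. It returns $Q$ or reports that no $k$-bad parent exists.
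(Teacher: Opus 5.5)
Your reduction is sound, and it matches the paper's first steps. If $P=S[a..b]$ has a $k$-bad parent, that parent is the phrase containing $J(b-k)=b'-k$, and a candidate can be verified in $O(1)$ time (\cref{lem:bad-k-parent-properties}, \cref{cor:test-bad-parent}). Together with \cref{lem:find-con-phrase} this gives the $O(\log\frac{n}{z})$-time query of \cref{lem:find-kbad-slow}. Two small slips: $r_j\ge k$ means the phrase of $j$ ends at or after $j+k$, not $j+k-1$; and the window's left end is $a-\delta_P$, not $a'+(b'-b)$.

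The gap is in the one hard part of the lemma: locating that phrase in $O(1)$ time with $O(z)$ space. Your plan answers ``which phrase contains $b'-k$'' for arbitrary $b'$ and $k\in\K$. That is essentially a predecessor search among phrase starts, and none of your fixes achieves it.
\begin{itemize}
\item A window $[b_j-1.5\frac{n}{z}..b_j]$ can contain $\Theta(\frac{n}{z})$ phrase starts, not $O(\log\frac{n}{z})$.
\item You cannot ``merge tiny phrases''. The answer must be a phrase of the given $\LZend$ (the trees $T_k$ of \cref{sec:naive} live on these phrases). A merged block also generally has no earlier occurrence ending at a phrase boundary, so the result is not even an LZ-End factorization.
\item Even with $O(\log\frac{n}{z})$ local keys, storing them per phrase costs $\Theta(\log^2\frac{n}{z})$ bits. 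That is not $O(1)$ words when $\log^2\frac{n}{z}\gg\log n$ (e.g.\ $z=\sqrt n$). The same objection applies to storing the $|\K|$ monotone offsets $t_k$.
\item Replacing the local key sets by a single global structure just gives a general predecessor structure. No $O(1)$-time, $O(z)$-space solution is available for that in general, which is exactly why \cref{lem:find-kbad-slow} pays $O(\log\frac{n}{z})$.
\end{itemize}

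The missing idea is that you only need the phrase containing $b'-k$ when a $k$-bad jump exists, and then the possible answers are very restricted.
\begin{itemize}
\item The set of $k$ for which a fixed phrase is the $k$-bad parent of $P$ is contiguous in $\K$ (\cref{lem:bad-sandwitch}).
\item Whenever the parent changes from $k$ to $1.5k$ with $|P|\ge 2k$ (a critical value), the $k$-bad parent has length in $[k..1.5k)$ (\cref{lem:length-cricial-parent}).
\end{itemize}
So $O(1)$ words per phrase suffice: a one-word bit vector of critical values, plus the at most two parents for the values $k$ with $|P|\in[k..2k)$. Add a hash table keyed, for each phrase $Q$, by the smallest multiple of $K(Q)$ inside $Q$, where $|Q|\in[K(Q)..1.5K(Q))$. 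A query then finds the next critical value $k'\ge k$ with $O(1)$ word operations. It probes the $O(1)$ multiples of $k'$ in $[b'-2.5k'..b'+0.5k']$ and checks each candidate with \cref{cor:test-bad-parent}. Without a structural argument of this kind, which cuts the possible answers down to $O(1)$ hashable candidates, your proof does not reach $O(1)$ query time in $O(z)$ space.
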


We are now ready to prove \cref{lem:compute-marked}.

\begin{proof}[Proof of \cref{lem:compute-marked}]
We start by defining the set $M$, and then show how to efficiently compute $M$.

\paragraph{Existence of proper $M$.}
For every $k'\in \K$, we define a set $M_{k'}$ of phrases according to the structure of $T_{k'}$. 
If the depth of $T_{k'}$ is less than $\log^2\frac{n}{z}$, we set $M_{k'}=\emptyset$.
Otherwise, let $c\in [0..\log^2\frac{n}{z}-1]$ be an integer such that the number of phrases in $T_{k'}$ with depth that has remainder $c$ modulo $\log^2\frac{n}{z}$ is at most $z/\log^2 \frac{n}{z}$.
This value of $c$ exists due to the pigeonhole principle.
We set $M_{k'}$ to be the phrases with depth $c$ modulo $\log^2 \frac{n}{z}$ in $T_{k'}$.
We claim that $M = \cup_{k'\in \K}M_{k'}$ satisfies the condition of the lemma.

Clearly, we have $|M|\in O(z / \log \frac{n}{z})$.
Consider some $i\in [1..n]$ with $r_i \in [k..1.5k)$.
Consider $j=J^x(i)$ for some integer $x$.
If $j$ is a bad jump in phrase $P$ with $r_i \in [k..1.5k)$, than $J(j)$ is in the parent of $P$ in $T_k$.
It follows that a sequence of phrases containing the bad jumps corresponds to an upwards path in $T_k$, and therefore if all of the first $\log^2\frac{n}{z}$ steps following $i$ are bad jumps, one of these must be in a phrase $P\in M_k$.

Assume that $J^x(i)$ is not in a marked phrase for every $x\in [0..\log^2\frac{n}{z}-1]$.
It follows from the above that there is an index $x'\in [0..\log^2\frac{n}{z}-1]$ such that $i'= J^{x'}(i)$ is a good jump.
Let $j'= J(i')$.
From the definition of a good jump, we have that $r_{j'} < k$.

We have shown that if there is no $x\in [0..\log^2\frac{n}{z}-1]$ such that $J^x(i)$ is not in a phrase of $M$, then there is $x\in [0.. \log^2\frac{n}{z}]$ such that the $r$-value of $J^x(i)$ is below $k$, which concludes the proof.

\paragraph{Efficient construction.}
For every $k'\in \K$, the algorithm construct $T_{k'}$.
This is done by first constructing the data structure of \cref{lem:find-kbad-constant} in $O(z \log^2 \frac{n}{z}$) time, and then querying every $P\in \LZend$ for its $k'$ parent in $O(1)$ time.
Once $T_{k'}$ is constructed, we can straightforwardly count the number of vertices in each depth of $T_{k'}$ and select the correct value of $c$ to use for $M_{k'}$ (or decide that $M_{k'} = \emptyset$).

Since we construct each $T_{k'}$ in $O(z)$ time, and then apply $O(|T_{k'}|) = O(z)$ additional processing to find $M_{k'}$, the total running time for constructing and processing all $T_{k'}$ is $O(z |\K|) = O(z \log \frac{n}{z})$.
This is dominated by the $O(z \log^2 \frac{n}{z})$ time for constructing the data structure of \cref{lem:find-kbad-constant}.
\end{proof}

\begin{figure}[htbp]
    \centering
   \begin{subfigure}[b]{0.58\textwidth}
      \centering
        \tikzset{every picture/.style={line width=0.75pt}} 
\usetikzlibrary{patterns}

\begin{tikzpicture}[x=1pt,y=1pt,yscale=-1,xscale=1]
\pgfmathsetmacro{\blockheight}{25}
\pgfmathsetmacro{\stringheight}{20}
\pgfmathsetmacro{\height}{20}

\pgfmathsetmacro{\width}{10}
\pgfmathsetmacro{\x}{0} 

\draw (20,0) rectangle (180,\blockheight);

\draw(25,50) rectangle (170, \blockheight+50);

\draw(35,100) rectangle (160, \blockheight+100);

\draw(45,0) rectangle (55,\blockheight);
\node[anchor=north,font=\small] at(50,5) {$i_1$};

\draw(45,50) rectangle (55,50+\blockheight);
\node[anchor=north,font=\small] at(50,55) {$i_2$};

\draw(45,100) rectangle (55,100+\blockheight);
\node[anchor=north,font=\small] at(50,105) {$i_3$};

\node at (10,12) {$P_5$};

\node at (10,62) {$P_4$};

\node at (10,112) {$P_3$};

\draw[dashed] (140,-5)--(140,105+\blockheight)  node[at start, above] {$k$};

\draw[dashed] (185,-5)--(185,105+\blockheight) node[at start, above] {$1.5k$};

\draw[->,red,thick] (50,\blockheight)--(50,50) node[midway, right,black] {$J(i_1)$};

\draw[->,red,thick] (50,50+\blockheight)--(50,100) node[midway, right,black] {$J(i_2)$};

\end{tikzpicture}
        \caption{A sequence of $J$ steps.
        In each step, the containing rectangle represents the phrase containing the reached index.}
    \end{subfigure}
    \hfill
    \begin{subfigure}[b]{0.35\textwidth}
          \centering
        \begin{forest}
  for tree={
    draw, 
    circle, 
    minimum size=8mm
  }
  [$P_1$, name=root 
    [$P_2$, name=leftNode]
    [$P_3$, name=LeafC
      [$P_4$, name=leafD, edge={red, <-}
      [$P_5$, name= LeafE, edge={red, <-}]
      ] 
    ]
  ]
  \node[draw=red, dashed, inner sep=5pt, fit=(LeafC) (LeafE), label=below:Bad path] {};
\end{forest} 
        \caption{The tree $T_k$}
        
    \end{subfigure}
    \caption{A demonstration of the relationship between the tree $T_k$ and a sequence of $J$ steps where the $r$-value remains in $[k..1.5k)$.
    The phrases visited throughout the sequence correspond to an upwards path in $T_k$.}
    \label{fig:tree-comparison}
\end{figure}

\paragraph{Reaching a marked phrase fast.}
So far, we have guaranteed that if we repeatedly apply $J$ starting from some index $i$, we will either reach a marked phrase or reduce the $r$-value to the next exponential level within $O(\log^2 \frac{n}{z})$ jumps.

Now, we introduce 'shortcuts', allowing us to find $J^{\log \frac{n}{z}}(i)$ of an index $i$ in constant time rather than by computing $J$ iteratively step-by-step.
The shortcut mechanism can only be applied if the next $\log \frac{n}{z}$ steps are all 'bad'.
This will allow us to retrieve a valid output for the naive jumping part in $O(\log \frac{n}{z})$ time, as if one of the next $\log \frac{n}{z}$ steps are not bad - we can afford to make these steps one-by-one as we reach a good enough index after reaching a good step.

For a phrase $P$, $k\in \K$ and a non-negative integer $x$, we denote as $\Parent{P}{x}{k}$ the $x$'th ancestor of $P$ in $T_k$.  
In particular, $\Parent{P}{0}{k}=P$ for every phrase $P$ and $k\in \K$.
We also call $\Parent{P}{x}{k}$ the $(x,k)$-bad parent of $P$. 

We introduce the notion of $(x,k)$-bad jumps.
\begin{definition}[$(x,k)$-bad jumps]
For a phrase $P$, $k\in \K$ and non-negative integer $x$, we say that $i$ is an $(x,k)$-bad jump of $P$ if for every $y\in [0..x]$ it holds that $J^y(i)$ is contained in $\Parent{P}{x}{k}$.
\end{definition}
Notice that every index $i$ contained in a phrase $P$ is $(0,k)$-bad.
For a node $P$, $k\in \K$ and non-negative integer $x$ we denote $\delta_x(P)=\sum_{y=0}^{x} \delta_{\Parent{P}{y}{k}}$.
In words, $\delta_x(P)$ is the sum of $\delta$ values taken over the first $x+1$ ancestors of $P$ in $T_k$ (starting from $P$ itself).

We make the following simple observation.

\begin{observation}\label{obs:same-jumps-same-shift}
Let $i\in [n]$ and $a$ be a non-negative integer.
Let $P_{i_0},P_{i_1},\ldots,P_{i_a}$ be the phrases containing $i,J(i),\ldots,J^a(i)$, respectively.
We have $J^a(i)= i -\sum_{x=0}^{a-1}\delta_{P_{i_x}} $.
\end{observation}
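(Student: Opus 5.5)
The plan is a straightforward induction on $a$, using only the definition of $J$.

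For the base case $a=0$ the sum is empty, so the claim reads $J^0(i)=i$, which holds trivially.

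For the inductive step, assume $J^{a-1}(i)= i-\sum_{x=0}^{a-2}\delta_{P_{i_x}}$. By hypothesis, $J^{a-1}(i)$ lies in the phrase $P_{i_{a-1}}$. Since $J^{a}(i)$ is assumed to exist, this phrase has a source. By the definition of the jumping rule $J$,
\[
J^{a}(i)=J\bigl(J^{a-1}(i)\bigr)=J^{a-1}(i)-\delta_{P_{i_{a-1}}}.
\]
Substituting the inductive hypothesis gives $J^a(i)= i-\sum_{x=0}^{a-1}\delta_{P_{i_x}}$, which is the claim.

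There is no real obstacle here. The only point needing care is the implicit assumption that all the iterates $J(i),\ldots,J^a(i)$ are defined. This is part of the hypothesis: the phrases $P_{i_0},\ldots,P_{i_a}$ are stipulated to contain these iterates. It means every phrase $P_{i_x}$ with $x<a$ is a reference phrase, so $\delta_{P_{i_x}}$ is well defined.
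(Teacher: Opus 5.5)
Your induction is correct and is exactly the unrolling of the definition $J(i)=i-\delta_P$ that the paper has in mind; the paper states this as a simple observation with no separate proof. Your remark that the definedness of the iterates (hence every $P_{i_x}$ with $x<a$ having a source) is implicit in the hypothesis is the right reading of the statement.
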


The following follows directly from \cref{obs:same-jumps-same-shift} and from the definition of an $(x,k)$-bad jump.

\begin{corollary}\label{cor:fast-jump-formula}
    Let $P$ be a phrase, $k\in \K$, and non-negative integer $x$.
    If $i$ is an $(x,k)$-bad index of $P$, then $J^{x+1}(i) = i -\delta_x(P)$.
\end{corollary}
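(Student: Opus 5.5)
The plan is to derive the formula directly from \cref{obs:same-jumps-same-shift}, using the definition of an $(x,k)$-bad jump to identify which phrases the iterates land in. I read the definition with the ancestor index matching the step, i.e.\ $J^y(i)\in\Parent{P}{y}{k}$ for every $y\in[0..x]$. This is clearly the intended meaning, since it is the one consistent with the overview's description of $x$-bad indices. With that reading the argument is essentially a substitution.

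\textbf{Step 1: apply the observation.} Take $a=x+1$ in \cref{obs:same-jumps-same-shift}. Let $P_{i_0},\dots,P_{i_x}$ be the phrases containing $i,J(i),\dots,J^x(i)$. The observation gives
\[
J^{x+1}(i)=i-\sum_{y=0}^{x}\delta_{P_{i_y}}.
\]
I must check that every iterate $J^y(i)$ with $y\le x$ is well defined and that $J$ applies to it. This follows inductively: $J^y(i)$ lies in a phrase, and $J$ is defined there as long as that phrase has a source. For $y<x$, the phrase $\Parent{P}{y}{k}$ has a $k$-bad parent, so it contains a bad jump and hence has length greater than $1$ and a source. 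For $y=x$, a source is exactly the precondition for $J^{x+1}(i)$, and hence $\delta_x(P)$, to make sense. So the statement implicitly assumes it, and I would state this explicitly.

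\textbf{Step 2: substitute the phrases.} By the $(x,k)$-badness of $i$, the phrase containing $J^y(i)$ is $\Parent{P}{y}{k}$. Since phrases partition $[1..n]$, this gives $P_{i_y}=\Parent{P}{y}{k}$ for each $y\in[0..x]$. Substituting into the sum yields $\sum_{y=0}^{x}\delta_{\Parent{P}{y}{k}}=\delta_x(P)$, which is the claim.

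The only step needing care is this bookkeeping: the index convention (the sum runs over $x+1$ phrases, while the exponent is $x+1$) and the well-definedness of $J$ along the path. Neither involves real difficulty.
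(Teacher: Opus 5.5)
Your proof is correct and is essentially the paper's argument. The paper also applies \cref{obs:same-jumps-same-shift} with $a=x+1$ and uses the definition of an $(x,k)$-bad jump to identify the phrase containing $J^y(i)$ as $\Parent{P}{y}{k}$. You are right to read that definition with the $y$-th ancestor rather than the literal $\Parent{P}{x}{k}$, and your remarks on why $J$ is defined along the path are a welcome addition that the paper leaves implicit.
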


We show that the $(x,k)$-bad jumps of $P$ form a consecutive interval within $P$.
\begin{lemma}\label{lem:x-plus-1-bad-interval-from-x}
Let $P=S[a..b]$ be a phrase and let $k\in \K$ and $x$ be a non-negative integer.
The set of $(x,k)$-bad jumps of $P$ is an interval $[a'..b']$.
Furthermore, let $P'=[c..d]$ be the $(x+1,k)$-bad parent of $P$.
The $(x+1,k)$-bad jumps of $P$ are exactly $[a'..b']\cap [c+\delta_{x}(P) .. d + \delta_x(P)]$.
\end{lemma}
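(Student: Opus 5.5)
We will read the definition of $(x,k)$-bad jumps in the intended way: $i$ is an $(x,k)$-bad jump of $P$ if $J^y(i)$ is contained in $\Parent{P}{y}{k}$ for every $y\in[0..x]$. The proof is by induction on $x$, and the second statement of the lemma serves as the inductive step for the first. We take the empty set to be an interval. If $\Parent{P}{x+1}{k}$ does not exist, there are no $(x+1,k)$-bad jumps of $P$, so that case is trivial.

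\textbf{Base case $x=0$.} The only condition is $J^0(i)=i\in\Parent{P}{0}{k}=P$. So the $(0,k)$-bad jumps of $P$ are exactly $[a..b]$, which is an interval.

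\textbf{Inductive step.} Assume the $(x,k)$-bad jumps of $P$ form an interval $[a'..b']$. From the definition, $i$ is $(x+1,k)$-bad if and only if two conditions hold:
\begin{enumerate}
\item $i$ is $(x,k)$-bad, that is, $i\in[a'..b']$;
\item $J^{x+1}(i)\in\Parent{P}{x+1}{k}=P'=[c..d]$.
\end{enumerate}
For an $(x,k)$-bad index $i$, \cref{cor:fast-jump-formula} gives $J^{x+1}(i)=i-\delta_x(P)$. Here $\delta_x(P)$ is a single number that depends only on $P$, $k$ and $x$, and not on $i$. Hence, for $i\in[a'..b']$, the second condition is equivalent to $c\le i-\delta_x(P)\le d$, that is, $i\in[c+\delta_x(P)..d+\delta_x(P)]$. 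Therefore the $(x+1,k)$-bad jumps are exactly $[a'..b']\cap[c+\delta_x(P)..d+\delta_x(P)]$. This proves the second claim. Since an intersection of two intervals is an interval, it also completes the induction for the first claim.

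\textbf{Main obstacle.} The argument is routine once the formula of \cref{cor:fast-jump-formula} is available. The one point needing care is that this formula may only be used for indices already known to be $(x,k)$-bad. For those indices, the first $x+1$ jumps use the fixed shifts $\delta_{\Parent{P}{y}{k}}$ for $y\in[0..x]$, which is exactly where \cref{obs:same-jumps-same-shift} applies. This is why the intersection with $[a'..b']$ must be taken before translating by $\delta_x(P)$.
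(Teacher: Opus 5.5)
Your proof is correct and takes the same route as the paper. Both argue by induction on $x$ with base case $[a..b]$, and use the formula $J^{x+1}(i)=i-\delta_x(P)$, valid for $(x,k)$-bad $i$, to turn the condition $J^{x+1}(i)\in[c..d]$ into $i\in[c+\delta_x(P)..d+\delta_x(P)]$. Your single if-and-only-if chain is a compact version of the paper's two-direction check, and you rightly read the definition with $\Parent{P}{y}{k}$ in place of the typo $\Parent{P}{x}{k}$.
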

\begin{proof}
For an illustration, see \cref{fig:xbadinterval}
For $x=0$, the first statement is trivial since the $(0,k)$-bad jumps of $P=S[a..b]$ are simply $[a..b]$.

We inductively prove that the second statement is true for every $x \ge 0$.
This, in turn, leads to the first statement also being true, as it shows that the $(x,k)$-bad jumps are obtained by an intersection of intervals.

Assume that for some non-negative $x$, the $(x,k)$-bad jumps of $P$ are indeed an interval $[a'..b']$. 
Let us prove both directions of the equality between $[a'..b']\cap [c+\delta_x..d+\delta_x]$ and the set of $(x+1,k)$-bad jumps of $P$.

First, let $i \in [a'..b'] \cap [c+\delta_x..d+\delta_x]$.
Since $i\in [a'..b']$, it is an $(x,k)$-bad jump of $P$.
By \cref{cor:fast-jump-formula}, we have $J^{x+1}(i) = i - \delta_x(P)$.
Since we have $J^{x+1}(i) = i - \delta_x(P) \in [c..d]$, it holds that $J^{x+1}(i)$ is in the $(x,k)$-bad parent of $P$, making $i$ an $(x+1,k)$ bad index of $P$.

For the other direction, consider an $(x+1,k)$-bad index $i$ of $P$.
Since $i$ is, in particular, an $(x,k)$-bad index of $P$, we have $i \in [a'..b']$.
Due to the same reasoning as before, since $i$ is $(x,k)$-bad we have $J^{x+1}(i) = i - \delta_x(P)$. 
Since $i$ is $(x+1)$ bad, we know that $J^x(i) $ is in $[c..d]$, which means that $i - \delta_x(P) = J^x(i) \in [c..d]$.
This is equivalent to $i \in [c+\delta_x(P)..d+\delta_x(P)]$, which concludes the proof. 
\end{proof}

\begin{figure}[h]
    \centering
    \tikzset{every picture/.style={line width=0.75pt}} 
\usetikzlibrary{patterns}

\begin{tikzpicture}[x=1pt,y=1pt,yscale=-1,xscale=1]
\pgfmathsetmacro{\blockheight}{25}
\pgfmathsetmacro{\stringheight}{27}
\pgfmathsetmacro{\height}{20}

\pgfmathsetmacro{\width}{10}
\pgfmathsetmacro{\x}{0} 

\draw (0,0) rectangle (180,\blockheight);

\draw(60,100) rectangle (160, \blockheight+100);

\node[anchor=north,font=\small] at(60,\blockheight+100) {$c$};

\node[anchor=north,font=\small] at(160,\blockheight+100) {$d$};

\draw(20,0) rectangle (100,\blockheight);
\node[anchor=north,font=\small] at(20,-15) {$a'$};

\node[anchor=north,font=\small] at(100,-15) {$b'$};

\draw[->,dashed,line width=2] (90,\blockheight) -- (90,100) node[midway,right]{$-\delta_{x}(P)$};

\draw[dashed, line width = 0.5] (60,-10) -- (60,100) node[at start,above]{$c+\delta_{x}(P)$};

\draw[dashed, line width = 0.5] (160,-10) -- (160,100)node[at start,above]{$d+\delta_{x}(P)$};

\node at (-10,12) {$P$};

\node at (30,112) {$\pi_k^{(x+1)}(P)$};

\draw[{|}-{|}] (62,18)--(98,18) node[midway,above]{\tiny {$(x+1,k)$}};

\end{tikzpicture}
    \caption{A demonstration of \cref{lem:x-plus-1-bad-interval-from-x}.
    An $(x+1,k)$-bad index must be in the $(x,k)$-bad interval.
    Furthermore, the process of jumping $x+1$ times must map it to the $(x+1,k)$-bad parent of $P$.
    This makes $x+1$ bad jumps correspond to decreasing the value of the initial index by $\delta_{x}(P)$.
    Therefore, we must have that an $(x+1,k)$-bad index is in $[c+\delta_x(P)..d+\delta_x(P)]$, where $S[c..d]$ is the $(x+1,k)$-bad parent of $P$.
    The interval denotes as $(x+1,k)$ is the $(x+1,k)$-bad interval, obtained by intersecting $[a'..b']$ and $[c+\delta_x(P)..d+\delta_x(P)]$.
    }\label{fig:xbadinterval}
\end{figure}

We call the interval containing exactly the $(x,k)$-bad jumps of $P$ (implied by \cref{lem:x-plus-1-bad-interval-from-x}) the $(x,k)$-bad interval of $P$.
We proceed to show that the $(x,k)$-bad interval can be computed efficiently.

\begin{lemma}\label{lem:xk-bad-exist-compute}
    Let $P$ be a phrase, let $k\in \K$ and let $x$ be a non-negative integer.
    The $(x,k)$-bad interval of $P$ can be computed in $O(x)$ time, given access to $T_k$.
\end{lemma}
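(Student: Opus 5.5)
The plan is to compute the interval by induction on $x$, following \cref{lem:x-plus-1-bad-interval-from-x} step by step. Write $P=S[a..b]$. We walk up $T_k$ from $P$ and maintain three quantities: the current ancestor $\Parent{P}{y}{k}$, the current $(y,k)$-bad interval $[a'_y..b'_y]$ of $P$, and the running sum $\delta_{y}(P)$. We start with $y=0$, $\Parent{P}{0}{k}=P$ and $[a'_0..b'_0]=[a..b]$; the $(0,k)$-bad interval is all of $P$, as noted in the paper.

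For the inductive step from $y$ to $y+1$ with $y<x$, we do the following. Using access to $T_k$, we read the parent $P'=S[c..d]$ of the current ancestor, so that $P'=\Parent{P}{y+1}{k}$. If no parent exists, then no index of $P$ can be $(y+1,k)$-bad, so we report the empty interval and stop; every larger $x$ then also gives an empty interval. Otherwise we compute $\delta_y(P)=\delta_{y-1}(P)+\delta_{\Parent{P}{y}{k}}$, using $\delta_{-1}(P)=0$. Each $\delta$ value costs $O(1)$, since every phrase is stored with the index of its source. We then set
\[
[a'_{y+1}..b'_{y+1}] = [a'_y..b'_y]\cap[c+\delta_y(P)..d+\delta_y(P)],
\]
which is correct by \cref{lem:x-plus-1-bad-interval-from-x}. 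Intersecting two intervals takes $O(1)$ time. If the result is empty, we may stop early and return it.

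Each of the $x$ rounds does $O(1)$ work: one parent lookup in $T_k$, one $\delta$ addition and one intersection. The total time is therefore $O(x)$, and the output is the $(x,k)$-bad interval by induction. Correctness comes entirely from the earlier lemma. The one step needing care is the indexing of $\delta_y(P)$: the shift applied when passing from $y$ to $y+1$ must sum the $\delta$ values of $\Parent{P}{0}{k},\dots,\Parent{P}{y}{k}$ and exclude $P'$ itself. This matches \cref{cor:fast-jump-formula}, where $J^{y+1}(i)=i-\delta_y(P)$. We also need the convention that when $P$ has no $(y+1)$-st ancestor the interval is empty, since then the condition in the definition cannot hold. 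Apart from these points the argument is a direct unrolling of the recurrence, and I do not expect any real obstacle.
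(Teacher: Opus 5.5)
Your proposal is correct and follows the same route as the paper: you iterate \cref{lem:x-plus-1-bad-interval-from-x} up $T_k$ while accumulating the $\delta$ values, doing $O(1)$ work per level. Your indexing (shift by $\delta_y(P)$ and intersect with $\Parent{P}{y+1}{k}$ when passing from $y$ to $y+1$) matches that lemma more precisely than the paper's own write-up. Your empty-interval convention for a missing ancestor is a harmless addition.
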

\begin{proof}
 We use \cref{lem:x-plus-1-bad-interval-from-x} to compute the $(x,k)$-interval of a phrase $P=S[a..b]$ iteratively.

Firstly, the $(0,k)$-bad interval is trivially $[a..b]$.
Now, we show how to find the $(y,k)$-bad interval for every $y\in [1..x]$, given the $(y-1)$-bad interval $[a'..b']$.
According to \cref{lem:x-plus-1-bad-interval-from-x}, the $(y,k)$-bad interval of $P$ is $[a'..b']\cap [c'+\delta_y(P)..d'+ \delta_y(P)]$ where $\Parent{P}{y+1}{k}= S[c..d]$ is the $(y+1)$'th ancestor of $P$ in $T_k$.
We accumulate the $\delta$ values of the ancestors of $P$ as we ascend up the tree $T_k$, so we have access to $\delta_y(P)$ when computing the $(y,k)$-bad interval of $P$.
It follows that we can compute the $(y,k)$-bad interval from the $(y-1,k)$-bad interval in $O(1)$ time.
In total, the time to obtain the $(x,k)$-bad interval is $O(x)$, as required.   
\end{proof}

We are now ready to prove \cref{lem:naive}.
\begin{figure}
 \centering
    \begin{tikzpicture}[
    scale=0.9,
    tree node/.style={circle, draw=black!60, fill=white, minimum size=6mm, inner sep=0pt, font=\small},
    active node/.style={circle, draw=black, fill=white, thick, minimum size=6mm, inner sep=0pt, font=\small},
    marked layer/.style={black, dashed, thick},
    shortcut layer/.style={ dotted, thick},
    step line/.style={black, thick, ->, >=Stealth},
    jump line/.style={black, thick, ->, >=Stealth, dashed}
]

    \draw[fill=gray!5, draw=gray!40, thick] (0,9) -- (-4,1) -- (4,1) -- cycle;
    \node[gray!70] at (0,9.4) {\Large $T_k$};

    
    \draw[marked layer] (-3.8, 7) -- (3.8, 7) node[right] {$\in M_k$};
    \draw[shortcut layer] (-1.8, 5.5) -- (1.8, 5.5) node[right] {$\in M'_k$};
    \draw[shortcut layer] (-2.5, 4) -- (2.5, 4) node[right] {$\in M'_k$};
    \draw[shortcut layer] (-3.2, 2.5) -- (3.2, 2.5) node[right] {$\in M'_k$};

    \node[active node] (v0) at (0.8, 1) {$P_0$};
    
    \node[active node] (v1) at (1.5, 1.75) {$P_1$};
    \node[active node] (v2) at (0.5, 2.5) {$P_2$}; 
    
    \node[active node] (v3) at (0.3, 4) {$P_3$}; 
    
    \node[active node] (v4) at (0.2, 5.5) {$P_4$};
    \node[active node] (v5) at (-1, 6.25) {$P_5$};
    \node[active node] (v_target) at (0.1, 7.0) {$P_6$}; 

    \draw[step line] (v0) -- (v1);
    \draw[step line] (v1) -- (v2);
    
    \draw[jump line] (v2) to[bend left=10] node[left=2pt, font=\footnotesize\bfseries] {shortcut} (v3);
    
    \draw[jump line] (v3) to[bend right=10] node[left=2pt, font=\footnotesize\bfseries] {shortcut} (v4);

    \draw[step line] (v4) -- (v5);
    \draw[step line] (v5) -- (v_target);
   
    \draw[{|}-{|}] (-4.1,1)--(-4.1,2.5) node[midway,left]{$\log \frac{n}{z}>$};
    
    \draw[{|}-{|}] (-3.3,2.5)--(-3.3,4) node[midway,left]{$\log \frac{n}{z}$};

    \draw[{|}-{|}] (-2.6,4)--(-2.6,5.5) node[midway,left]{$\log \frac{n}{z}$};

    \draw[{|}-{|}] (-1.8,5.5)--(-1.8,6.95) node[midway,left]{$\log \frac{n}{z}>$};

\end{tikzpicture}
    \caption{A demonstration of the algorithm in \cref{lem:naive}.
    The figure demonstrates the phrases visited throughout the algorithm as seen in the tree $T_k$. 
    From $P_0$ to $P_2$ the algorithm advances by applying $J$, which leads the algorithm to the bad parent of the current node at every step (assuming that all steps are bad).
    Since $P_2$ is in $M'_k$, it has a pointer to its $\log \frac{n}{z}$ ancestor in $T_k$ (and auxiliary information that allows check if all next $\log \frac{n}{z}$ jumps from the current index are bad).
    Then, we keep applying shortcut jumps until we finally reach a layer that is within $\log \frac{n}{z}$ from a marked layer ($P_2,P_3,P_4$).
    When we are finally close enough to a marked layer, we return to jumping one-by-one ($P_4,P_5,P_6$).
    At every initial point throughout this process, we may terminate prematurely since some jump was good.
    In this event, the containing phrase is not the parent of the previously visited phrase in $T_k$, but the good jumps leads to an index with a small $r$-value.
    }\label{fig:naive-alg}
\end{figure}
\begin{proof}

 We start by describing the construction of the data structure.
 We construct the set $M$ of $O(z/\log \frac{n}{z})$ marked phrases using \cref{lem:compute-marked}, and then construct the data structure for calculating $M(i)$ for an index $i$ in a marked phrase using \cref{lem:ds-mi}.
We store a binary array $A_M$ of size $z$ representing $M$ with $A_M[i]=1$ if and only if $P_i\in M$.
 The data structure is constructed in $O((z+M \log \frac{n}{z}) \log ^2 \frac{n}{z}) = O(z \log^2\frac{n}{z})$ time and uses $O((z+ |M|\cdot \log\frac{n}{z}))=O(z)$ space.
 We also construct the data structure of \cref{lem:kbad-parent-ds} for finding $k$-bad parents in $O(1)$ time.
The construction of \cref{lem:kbad-parent-ds} requires $O(z \log^2 \frac{n}{z})$ time and $O(z)$ space.

For every $k\in \K$, we construct $T_k$ and do the following.
If the depth of $T_k$ is less than $\log \frac{n}{z}$, we do not store any additional information.
Otherwise, we pick some $c \in [0..\log \frac{n}{z})$ such that the number of phrases in $T_k$ of depth that has remainder $c$ modulo $\log \frac{n}{z}$ is at most $z/\log \frac{n}{z}$.
Such $c$ should exist due to the pigeonhole principle.
Let $M'_k$ be the set of vertices in $T_k$ with height $c$ modulo $\log \frac{n}{z}$.
For every phrase $P \in M'_k$, we store: 
\begin{enumerate}
    \item The $(\log \frac{n}{z}-1,k)$-bad interval of $P$
    \item $\delta_{(\log \frac{n}{z}-1)}(P)$
    \item The distance from $P$ to the nearest marked ancestor of $P$
    \item The $(\log \frac{n}{z},k)$-bad parent of $P$.
\end{enumerate}
For each phrase $P$ of $\LZend$, we attach a machine word $m_P$ such that the $k$'th bit of $m_P$ is $1$ if and only if $P \in M'_k$.
While computing $M'_k$ for all $k\in \K$, the values of $m_{P}$ can be set without asymptotically affecting the running time.

We use $O(1)$ space for each $P\in M'_k$ so for each value of $k$ we use $O(|M'_k|)= O(z/ \log \frac{n}{z})$ space for a total of $O(z)$ space over all $k\in \K$.
We also store the word $m_P$ for each $P\in \LZend$ which requires additional $O(z)$ space. 

Finding the correct modulo $c$ takes $O(|T_k|)= O(z)$ time.
Computing the $(\log \frac{n}{z},k)$-bad parent and $\delta_{(\log \frac{n}{z}-1)}(P)$ can be straightforwardly done in $O(\log \frac{n}{z})$ time for every vertex in $M'_k$  by traversing $\log \frac{n}{z}$ edges towards the root of $T_k$ and accumulating the $\delta$ values of the visited nodes.
Computing the distance to the nearest marked ancestor for every $P\in M'_k$, and the nearest strict ancestor in $M'_k$ could be implemented in $O(|T_k|) = O(z)$ time by a simple iteration over $T_k$.

Computing the $(\log \frac{n}{z}-1,k)$-bad interval is done in $O(\log \frac{n}{z})$ time per node in $M'_k$ by applying \cref{lem:xk-bad-exist-compute}.
The total running time per $k\in \K$ is $O( \log \frac{n}{z} \cdot |M'_k|) = O(z)$.
Over all values of $k\in \K$, the total running time is $O(z \log \frac{n}{z})$.

\paragraph{Query.}
For an illustration of the query algorithm, see \cref{fig:naive-alg}.
Let $i$ be some index query and let $k\in \K$ such that $r_i \in [k..1.5k)$.
By \cref{lem:compute-marked}, there is some $x\in [0..\log^2\frac{n}{z}]$ such that $J^x(i)$ is either contained in a marked phrase, or has $r$-value below $k$.

First, we compute $J(i),J^2(i),\ldots$ iteratively until we reach some $j$ in a phrase $P$ such that one of the following is satisfied.
\begin{enumerate}
    \item $r_j < k$
    \item $P \in M$
    \item $P\in M'_k$.
\end{enumerate}
Notice that each of the above conditions can be checked in $O(1)$ time given $j$ and $P$ (using the endpoints of $P$, the array representation of $M$, and $m_P$, respectively).

We find $J(i),J^2(i)\ldots$ as follows.
For the input index $i$, we use \cref{lem:find-con-phrase} to find the phrase containing $i$ in $O(\log \frac{n}{z})$ time.

Now, we show how given some $i'$ and the phrase $P$ containing $i'$, we can find $J(i')$ and the phrase $P'$ containing $J(i')$ in constant time.
To be more precise, we will either find the phrase $P'$ containing $J(i')$, or guarantee that one of our conditions are met for $J(i')$.
Assume we already have $i'$ and $P$ at hand.
We find $j = J(i')$ using $\delta_P$ in constant time, and we also find the bad parent $P'$ of $P$ in constant time using \cref{lem:find-kbad-constant}.
If $j$ is in $P'$ , we have the phrase containing $j$ at hand.
Otherwise, we to have $r_j < k$ due to the definition of a bad parent.
We can therefore return $j$ and terminate.
It follows that we can calculate $J(i),J^2(i),\ldots J^y(i)$ such that $J^y(i)$ satisfies one of our conditions in $O(\log \frac{n}{z} + y)$ time.

If we reach, at any point, some $j$ with $r_j < k$, the algorithm returns $j$ and terminates.
If we reach some $j$ in a marked phrase $P\in M$, the algorithm returns $M(j)$ and terminates.

If we end up reaching some $i' \in P$ for some $P$ in $M'_k$, we stop iteratively applying $J$ and switch to the following strategy instead.
We use the stored data of $P$ to check, in constant time, if $i'$ is in the $(\log \frac{n}{z}-1, k)$-bad interval of $P$, and if the distance from $P'$ to its nearest ancestor in $M$ is less than $\log \frac{n}{z}$.
We consider several cases depending on results of these two tests.
\paragraph{Case 1.}
The distance to the next marked ancestor is more than $\log \frac{n}{z}$, and $i'$ is in the $(\log \frac{n}{z}-1,k)$-bad interval of $P$.

In this case, we compute $j'= J^{\log \frac{n}{z}}(i')$ in constant time, using $\delta_{\log \frac{n}{z}-1}(P)$ and applying \cref{cor:fast-jump-formula}. 
We also have the phrase $P'$ containing $j'$ stored in $P$ as the $(\log \frac{n}{z},k)$-bad parent of $P$, so we can find it $P'$ in $O(1)$ time. 
Notice that $P'\in M'_k$ since it has depth $c$ modulo $\log \frac{n}{z}$.
Therefore, the algorithm will keep iteratively using shortcuts as long as we remain in Case 1.

\paragraph{Case 2.}
$i'$ is not in the $(\log \frac{n}{z}-1,k)$-bad interval of $P$.

In this case, we switch back to applying $J$ as at the start of the algorithm.
We keep applying $J$ until we reach an index $j$ with $r_j <k$.
We will later show that this process is guaranteed to terminate within $O(\log \frac{n}{z})$ steps.

\paragraph{Case 3.}
The distance to the next marked ancestor is less than $\log \frac{n}{z}$.

In this case, we switch back to applying $J$ until we reach an index $j$ contained in a marked phrase $P'\in M$ or with $r_j<k$.
When this occurs, we return $j$ if $r_j<k$ or $M(j)$ if $j$ is contained in $P'\in M$. 
We will later show that we are guarantee to find such $j$ within $O(\log \frac{n}{z})$ steps.

\paragraph{Running time.}
The first part of the algorithm runs in $O(\log \frac{n}{z} +y)$ time where $j=J^y(i)$ is the first index we meet satisfying one of the specified conditions.
Notice that if we jump from $i\in P$ to a vertex $i'= J(i)$ that is not the $k$-bad parent of $P$, the algorithm terminates because $r_{i'}<k$ by the definition of a bad parent.
Therefore, the phrases containing $i,J(i),J^2(i)...$ correspond to an upwards path in $T_k$, as long as the algorithm does not terminate.
It follows that we are guaranteed to reach a phrase $P$ that has level $c \bmod(\log \frac{n}{z})$ in $T_k$, or terminate for some other reason within $O(\log \frac{n}{z})$ steps.
We have shown that $y\in O(\log \frac{n}{z})$ and therefore the running time of the first part of the algorithm is $O(\log \frac{n}{z})$.

If the first part of the algorithm reaches some index $j$ with $r_j<k$ or in a phrase $P\in M$, the algorithm terminates in $O(1)$, returning either $j$ or $M(j)$.

Otherwise, the algorithm reaches some index $j$ in a phrase $P\in M'_k$.
In this case, the algorithm checks into which of the three cases $j$ falls and applies either Case 1, Case 2, or Case 3.

Case 1 is applied in $O(1)$ time.
We claim that it can be applied at most $\log \frac{n}{z}$ times.
After $\log \frac{n}{z}$ application, we reach some $j=J^y(i)$ for $y\ge \log^2\frac{n}{z}\ge x$.
Since we did not reach or skip over a marked phrase of $P \in M$, we have that $J^x(i) \notin M$ and therefore the $r$-value of $J^x(i)$ is less than $k$.
Since applying $J$ never increases the $r$-value (\cref{lem:j-legal}), we have $r_j \le r_{J^x(i)} < k$, which means that the algorithm terminates and returns $j$.

It remains to analyze the running times of Case 2 and Case 3.
Notice that each of them, by definition, occurs only once - when each of these two cases occur, they define the behavior of the remainder of the algorithm.

Case 2 occurs where $i'$ is not in the $(\log \frac{n}{z}-1,k)$-bad interval of $P'$. 
Therefore, there is some $y\in[0..\log \frac{n}{z})$ such that $J^y(i')$ is a good jump, which means that we will reach an index satisfying one of the desired conditions with $O(\log \frac{n}{z})$ jumps.

Case 3 occurs when the distance to the nearest marked ancestor is less than $\log \frac{n}{z}$.
Clearly, either there is $y\in [\log \frac{n}{z}]$ such that $J^y(i')$ is a good jump, leading to termination as in the previous case, or all $\log \frac{n}{z}$ jumps following $i'$ are contained in the ancestors of $P$ containing $i'$, in the corresponding order.
In the latter case, we are guaranteed to reach some $j$ contained in $P\in M$ within $\log \frac{n}{z}$ steps, as required.

In conclusion, the first part of the algorithm is executed in $O(\log \frac{n}{z})$ time, Case 1 may be applied up to $O(\log \frac{n}{z})$ times, with each application taking $O(1)$ time, and Case 2 and 3 may take $O(\log \frac{n}{z})$ time.
The total running time of the query algorithm is $O(\log \frac{n}{z})$.

\paragraph{Correctness.}
The algorithm terminates by finding some $j = J^y(i)$ that either has $r_j < k$ or that is contained in a marked phrase.
In the first case, the algorithm returns $j$ which is clearly a valid output.
In the latter case, the algorithm returns $M(j)$ which satisfies either $r_j < r_i/2 <1.5k/2 < k$ or $\ell_j \le r_i /2 <1.5k$ due to \cref{clm:m-reduce-l-or-r}.
Therefore, $M(j)$ is a valid output.
In both cases, we have $S[x]=S[i]$ and $r_x\le r_i$ (where $x$ is the output index) since $x$ is obtained by iteratively applying legal jumping rules (either $J$ or $M$) starting from $i$. 
\end{proof}

\section{The Stable Part}\label{sec:stable} 
In this section, we prove \cref{lem:stable}.

\paragraph{The jumping rule $sJ$.}
We define the jumping rule $sJ$.
For every $P=S[a..b]\in \LZend$, we denote $L(P) = S[a..a+\floor{\frac{2}{3}|P|}-1]$ the prefix of size $\floor{\frac{2}{3}|P|}$ of $P$.
Let us denote the set $\Lefts=\{ L(P) \mid P \in \LZend \}$.

We define $sJ(i)$ only for indices $i$ that are contained in some $L\in \Lefts$.
\begin{definition}[Jumping Rule $sJ$]
    For an index $i\in [1..n]$ contained in the phrase $P[a..b]$.
    If $i\in [a..a+\floor{\frac{2}{3}|P|}-1]$, we define $sJ(i) = i-\delpre(a,a+\floor{\frac{2}{3}|P_i|}-1)$.
    Otherwise, $sJ(i)$ is undefined.
\end{definition}
Although not explicitly stated as a jumping rule, the logic expressed by $sJ$ is used in \cite{KS22} as well.
It follows from \cref{obs:pre-is-legal} that $sJ$ is a legal jumping rule.

On several occasions throughout the section, we are required to prove that $sJ(i)$ is defined for some index $i$. 
The following provides a clean characterization of the event in which $sJ(i)$ is defined for an index $i$.
\begin{observation}\label{obs:sJ-defined-nice}
    $sJ(i)$ is defined for an index $i\in [n]$ if and only if $\ell_i \le 2r_i$.
\end{observation}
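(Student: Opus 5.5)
The plan is a direct computation inside the phrase containing $i$. Let $P=S[a..b]$ be that phrase and set $p=|P|=b-a+1$. By definition $\ell_i=i-a$ and $r_i=b-i$, so $\ell_i+r_i=p-1$. By the definition of $sJ$, $sJ(i)$ is defined exactly when $i\in[a..a+\floor{\frac{2}{3}p}-1]$. Since $i\ge a$ always holds, this is the single condition $\ell_i\le \floor{\frac{2}{3}p}-1$. The whole task is to rewrite this condition in terms of $\ell_i$ and $r_i$ alone.

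\textbf{Step 1: remove the floor.} Because $\ell_i$ is an integer, $\ell_i\le\floor{\frac{2}{3}p}-1$ holds if and only if $\ell_i+1\le\frac{2}{3}p$. This is the standard fact that an integer $m$ satisfies $m\le\floor{y}$ iff $m\le y$. Multiplying by $3$ turns the condition into $3\ell_i+3\le 2p$.

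\textbf{Step 2: eliminate $p$.} Substitute $p=\ell_i+r_i+1$. The inequality $3\ell_i+3\le 2\ell_i+2r_i+2$ simplifies to $\ell_i+1\le 2r_i$, that is, $\ell_i<2r_i$. Each step is an equivalence, so both directions of the observation come out of the same chain and no case split is needed. I would also note that $sJ$ can only be defined when $|P|\ge 2$, and that for a terminal phrase the condition reads $\ell_i=r_i=0$, so that case fits the same chain.

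\textbf{Main obstacle: the boundary case $\ell_i=2r_i$.} The only delicate point is the rounding in the length $\floor{\frac{2}{3}|P|}$ of $L(P)$. The chain above produces the strict form $\ell_i<2r_i$, whereas the observation is worded with $\ell_i\le 2r_i$. The two forms differ exactly when $\ell_i=2r_i$, which is the case $p=3r_i+1$.
\begin{itemize}
\item For $r_i\ge 1$, the index $i=a+2r_i$ lies just past the end of $L(P)$, since $\floor{\frac{2}{3}(3r_i+1)}=2r_i$.
\item For $r_i=0$ this is a length-one phrase, where $L(P)$ is empty.
\end{itemize}
So when writing the proof I would first check this single boundary value against the paper's convention for $L(P)$. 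If the inclusive inequality is to hold verbatim, the rounding must be taken upward, i.e. $L(P)$ must have length $\ceil{\frac{2}{3}|P|}$, or $sJ$ must be understood to apply at this single boundary point. Otherwise the observation should be read in the form $\ell_i<2r_i$, which is what the later uses of the observation need. Either way, the core of the argument is the two-line equivalence in Steps 1 and 2.
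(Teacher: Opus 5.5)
Your Steps 1 and 2 are correct, and they are the routine computation the paper leaves implicit, since the observation is stated without proof. Under the paper's convention $|L(P)|=\floor{\frac{2}{3}|P|}$, the exact criterion is $\ell_i<2r_i$. You are right that the stated form $\ell_i\le 2r_i$ fails precisely when $\ell_i=2r_i$. Two instances are $|P|=4$ with $i=a+2$, and any phrase of length one. You are also right that this does not affect the argument. The invocations in \cref{lem:stable-properties} and \cref{col:x-star-exists} both use $\ell_i\le 1.5k\le 1.5r_i$ with $r_i\ge k\ge 1$, which already gives the strict inequality.

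One of your proposed repairs is wrong, though. Taking $|L(P)|=\ceil{\frac{2}{3}|P|}$ does not make the inclusive inequality hold verbatim. Redo your Step 1 using the fact that an integer $m$ satisfies $m\le\ceil{y}$ iff $m-1<y$. This gives $\ell_i<\frac{2}{3}|P|$, which is $\ell_i<2r_i+2$, i.e.\ $\ell_i\le 2r_i+1$. Concretely, for $|P|=5$ the index $i=a+3$ lies in the ceiling-length prefix but has $\ell_i=3>2=2r_i$.

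The only prefix length that gives exactly $\ell_i\le 2r_i$ is $\floor{(2|P|+1)/3}$. This is the same as your other option: adjoin the single boundary index $a+2r_i$ when $|P|=3r_i+1$. The simplest resolution is to read the observation as $\ell_i<2r_i$, which your chain of equivalences proves in both directions.
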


As we did for $M$ in \cref{sec:naive}, we use \cref{lem:computepre} to construct the following data structure for computing $sJ$.
\begin{lemma}\label{lem:ds-sji}
Given $\LZend$, we can construct a data structure of size $O(z)$ that given $i\in [n]$ and the phrase $P$ containing $i$, computes $sJ(i)$ in $O(1)$ time.
The data structure can be constructed in $O(z \log^2 \frac{n}{z})$ time.
\end{lemma}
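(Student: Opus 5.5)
The plan mirrors the proof of \cref{lem:ds-mi}, except that now there is exactly one query interval per phrase, so the space stays linear. For every phrase $P=S[a..b]\in\LZend$ with $\floor{\frac{2}{3}|P|}\ge 1$, define the pair $(a, a+\floor{\frac{2}{3}|P|}-1)$, and let $I$ be the set of all these pairs. Then $|I|\le z$. We first apply \cref{lem:no-big-phrases}, so that every phrase has length at most $\frac{n}{z}$. Each pair $(b,e)\in I$ then satisfies $e\in[b..b+\frac{n}{z}]$, which makes $I$ a legal input for \cref{lem:computepre}. That lemma computes $\pre(b,e)$ for all pairs in $O((z+|I|)\log^2\frac{n}{z}) = O(z\log^2\frac{n}{z})$ time.

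For each phrase $P$ we store, in the array entry associated with $P$ (next to its endpoints and source index), two numbers: the value $\delpre(a,a+\floor{\frac{2}{3}|P|}-1)=a-\pre(a,a+\floor{\frac{2}{3}|P|}-1)$, and the boundary $a+\floor{\frac{2}{3}|P|}-1$. For phrases whose prefix $L(P)$ is empty we store a null marker. This uses $O(1)$ words per phrase, so the total space is $O(z)$. Computing these values from the output of \cref{lem:computepre} takes $O(z)$ additional time.

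To answer a query with input $i$ and its containing phrase $P=S[a..b]$, we check in $O(1)$ time whether $i\le a+\floor{\frac{2}{3}|P|}-1$. If not, $sJ(i)$ is undefined and we report that. Otherwise we return $i-\delpre(a,a+\floor{\frac{2}{3}|P|}-1)$, read directly from the stored value. By definition this equals $sJ(i)$, and it is a legal jump by \cref{obs:pre-is-legal}.

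No step here is a real obstacle; all the heavy lifting is delegated to \cref{lem:computepre}. The only point to verify is that its length precondition holds, which is exactly what the bound $|P|\le\frac{n}{z}$ from \cref{lem:no-big-phrases} guarantees.
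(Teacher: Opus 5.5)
Your proposal is correct and matches the paper's proof step for step: you batch one $\pre$ query per phrase for $L(P)$ through \cref{lem:computepre}, whose precondition holds because \cref{lem:no-big-phrases} gives $|P|\le\frac{n}{z}$, and you store the resulting $\delpre$ values in an $O(z)$-size array read in $O(1)$ time. Your explicit test $i\le a+\floor{\frac{2}{3}|P|}-1$, together with the null marker for empty $L(P)$, is a detail the paper leaves implicit but which is needed to report when $sJ(i)$ is undefined.
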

\begin{proof}
We apply \cref{lem:computepre} to compute for every $P=S[a..b]\in \LZend$ the value $P_a = \pre(a,a+\floor{\frac{2}{3}|P|}-1)$ in $O(z \log^2 \frac{n}{z})$ time.
Notice that every pair $(b,e)$ in the query set satisfies $[b..e] \subseteq [a_i..b_i]$ for some $P_i$, and we can assume due to \cref{lem:no-big-phrases} that $|P_i| \le \frac{n}{z}$.
Therefore, we indeed have $e \in [b..b+\frac{n}{z}]$ for every query pair, so this is a valid input for \cref{lem:computepre}.

We then compute for each phrase $P$ the value $\delpre(P)=\delpre(a,a+\floor{\frac{2}{3}|P|}-1) = a-P_a$.
We store an array of length $z$, where the $i$'th entry stores $\delpre(P_i)$.
Now, given an index $i$ and the phrase $P$ containing $i$, we can simply inspect the array to obtain $\delpre(i)$ and return $sJ(i) = i-\delpre(i)$.
The construction time is dominated by $O(z \log^2 \frac{n}{z})$, the query time is $O(1)$, and the size is $O(z)$ for storing all $\delpre(P)$ values.
\end{proof}

We assume that for every phrase $P$ with $L(P) = S[b..e]$, we have access to $\delpre(b,e)$.
This is achieved as a consequence of applying \cref{lem:computepre} with the query set $\Lefts$.

\paragraph{High-level idea for proving \cref{lem:stable}}
Let us fix some index $i_0$ as the input for the stable part, and denote as $k\in \K$ the unique integer power of $1.5$ such that $r_{i_0} \in [k..1.5k)$.
We are interested in finding the first integer $x^*$ such that $sJ^{x^*}(i_0)$ has $r$-value less than $k$.
As in \cref{sec:naive}, we will show that there is a tree structure capturing the bad sequences of applications of $sJ$.
Unlike in \cref{sec:naive}, where we had a separate tree $T_k$ for every possible $k\in \K$ capturing the bad sequences with $r$-values roughly $k$, here we will have a single tree $T_{\Bad}$ capturing all bad sequences.

This allows us to store $T_{\Bad}$ and preprocess it for level-ancestor queries.
We provide a characterization of the sought value $x^*$ such that $sJ^{x^*}(i_0)$ has $r$-values less than $k$.
The characterization allows us to assign a numerical values $\delta_\ell(P)$ and $\delta_r(P)$ to every phrase $P$.
Then, when given $i_0$ contained in some phrase $P_0$, we can identify $x^*$ as a level in $T_{\Bad}$ where the aggregated sums of $\delta_{\ell}(P)$ and $\delta_r(P)$ on the path from $P_0$ to its ancestor in this level reach some threshold.
We compute this sum in logarithmic time using level-ancestor queries.
Given $x^*$, and auxiliary information stored in $T_{\Bad}$, we can obtain $sJ^{x^*}(i_0)$ in constant time.
 
\paragraph{Existence of $x^*$}

Recall that $sJ(i)$ is not defined for every index in $i\in [n]$.
Therefore, it is not immediate that $x^*$ exists, as it may be the case that when repeatedly applying $sJ$, we reach an index $i'=sJ^{x'}(i_0)$ for which $sJ(i')$ is undefined.
In the following lemma, we prove that if this process reaches such an index, we have $r_{i'}<k$, which means that $x' =x^*$.
In \cite{KS22}, the following properties are proven, but they are hidden within the proof of another claim (See Lemma 4.4 in \cite{KS22}).
We repeat their proof for the sake of self-containment.

\begin{lemma}[Stable Part Properties]\label{lem:stable-properties}
    Let $i$ be an index with $r_i \in [k..1.5k)$ and $\ell_i < 1.5k$.
    Let $j = sJ(i)$.
    Either $r_j < k$, or $\ell_j \le \ell_i$
\end{lemma}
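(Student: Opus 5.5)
Let $P=S[a..b]$ be the phrase containing $i$, and put $L(P)=S[a..e]$ with $e=a+\floor{\frac{2}{3}|P|}-1$. First I will confirm that $sJ(i)$ is defined, using \cref{obs:sJ-defined-nice}: we have $\ell_i<1.5k\le 1.5r_i\le 2r_i$. Then I write $j=i-\delta$ with $\delta=\delpre(a,e)$, and $[a'..e']=[a-\delta..e-\delta]$ is the occurrence of $L(P)$ reached by the $\pre$ process. The key structural fact comes from the definition of $\pre$, which stops only for two reasons. Either $[a'..e']$ contains a phrase boundary, meaning some phrase ends at a position $p\in[a'..e')$. Or $[a'..e']$ lies in a sourceless phrase, which forces $a'=e'$. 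In the latter case $j=a'$ is a terminal phrase, so $r_j=0<k$.

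Main case: let $p$ be the \emph{rightmost} position in $[a'..e')$ at which a phrase ends, so $[p+1..e']$ lies inside a single phrase. I split according to where $j$ falls relative to $p$.

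\textbf{Case $j\le p$.} Then $r_j\le p-j\le e'-1-j=e-1-i$. Since $i\le e$ and $r_i=b-i$, we get $e-i=r_i-(b-e)$. Here $b-e=|P|-\floor{\frac{2}{3}|P|}\ge \frac{1}{3}|P|$. The plan is to bound $|P|$ from below by $r_i+\ell_i+1$, which gives $r_j<r_i-\frac{1}{3}(r_i+\ell_i)\le \frac{2}{3}r_i<k$. This uses $r_i<1.5k$, so $\frac{2}{3}r_i<k$. The arithmetic is tight, so this is the step where I must check the floors and off-by-one carefully. I will use $b-e=\ceil{|P|/3}$ and $|P|=\ell_i+r_i+1$.

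\textbf{Case $j>p$.} Then $j$ lies in the phrase starting at $p+1$, so $\ell_j\le j-(p+1)<j-a'=i-a=\ell_i$. Hence $\ell_j\le\ell_i$.

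Combining the two cases gives the lemma. The only delicate point is the first case, where I need the strict inequality $r_j<k$ from the $\frac{2}{3}$-prefix choice. The bound is $r_j\le e-1-i=r_i-\ceil{|P|/3}-1<r_i-\frac{1}{3}(\ell_i+r_i+1)$. Since $\ell_i\ge0$, this is at most $\frac{2}{3}r_i<k$, which should close the argument without needing the hypothesis on $\ell_i$ beyond establishing definedness.
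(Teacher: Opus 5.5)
Your proof is correct and follows the same route as the paper: both take a phrase boundary inside the $\pre$-occurrence $[a'..e']$ of $L(P)$ and split on which side of it $j$ falls, getting $r_j \le r_i - \ceil{|P|/3} < k$ on the left and $\ell_j \le \ell_i$ on the right. Your version is slightly more careful than the paper's, since it treats explicitly the terminal-phrase halting case of $\pre$ (where $a'=e'$) and the off-by-one convention for boundary positions, which the paper's $j<\hat b$ versus $j>\hat b$ split glosses over.
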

\begin{proof}
For an illustration, see \cref{fig:stable-properties}.

Since $\ell_i \le 1.5k \le 1.5 r_i<2r_i$, we have that $sJ(i)$ is well defined by \cref{obs:sJ-defined-nice}.
Let $P=S[a..b]$ be the phrase containing $i$.
Let $\pre(a,a+\floor{\frac{2}{3}|P|}-1) = a'$ and $b' = a'+ \floor{\frac{2}{3}|P|}-1$.
By definition of $j=sJ(i)$, we have that $j\in [a'..b']$, and $b'-j= a+\floor{\frac{2}{3}|P|} - 1 -  i $, and $a'-j = a-i$.
By the definition of $\pre$ there is a phrase boundary $\hat b \in [a'..b']$.
If $j< \hat b$, we have that $r_j \le \hat b - j \le b' - j$, which implies 
$$r_{j} \le  b' - j = a+\floor{\frac{2}{3}|P|} -1-i  = b - i -\ceil{\frac{|P|}{3}} = r_i - \ceil{\frac{|P|}{3}} \le \floor{\frac{2}{3}r_i} < 2/3 \cdot 1.5k =  k$$.
And we have $r_j <k$, as required.

Otherwise, we have $j>\hat b$ which means that $ \ell_{j} \le j-\hat b \le  j - a' = i-a =\ell_i$, as required.
\end{proof}

\begin{figure}[htbp]
    \centering
   \begin{subfigure}[b]{0.45\textwidth}
      \centering
        \tikzset{every picture/.style={line width=0.75pt}} 
\usetikzlibrary{patterns}

\begin{tikzpicture}[x=1pt,y=1pt,yscale=-1,xscale=1]
\pgfmathsetmacro{\blockheight}{25}
\pgfmathsetmacro{\stringheight}{20}
\pgfmathsetmacro{\height}{20}

\pgfmathsetmacro{\width}{10}
\pgfmathsetmacro{\x}{0} 

\draw (20,0) rectangle (180,\blockheight);

\draw[dotted] (22,2) rectangle (130,\blockheight-2);

\draw[densely dotted](20,50) rectangle (130, \blockheight+50);

\draw(35,0) rectangle (45,\blockheight);
\node[anchor=north,font=\small] at(40,5) {$i$};

\draw(35,50) rectangle (45,50+\blockheight);
\node[anchor=north,font=\small] at(40,55) {$j$};

\node[anchor=north,font=\tiny] at(40,-10) {$r_i$};
\draw[->] (45,-5) -- (180,-5);

\draw[dashed, line width = 0.7] (60,35) -- (60,65+\blockheight);

\draw[dashed, line width = 0.7] (180 ,-15) -- (180 ,55+\blockheight);

\draw[{|}-{|}] (132,18)--(178,18) node[midway,above]{$|P|/3$};

\node[anchor=north,font=\tiny] at(40,40) {$r_j$};
\draw[->] (45,45) -- (60,45);
\draw[->,dotted] (60,45) -- (180,45);

\node[anchor = east] at (17,12) {$P$};

\node[anchor = east] at (17,62) {$\pre$};

\end{tikzpicture}
        \caption{ If $j$ is to the right of the phrase boundary, its $r$-value is reduced by at least $|P|/3$, which results in its total value decreasing by a factor of $2/3$. 
        }
    \end{subfigure}
    \hfill
    \begin{subfigure}[b]{0.45\textwidth}
          \centering
        \tikzset{every picture/.style={line width=0.75pt}} 
\usetikzlibrary{patterns}

\begin{tikzpicture}[x=1pt,y=1pt,yscale=-1,xscale=1]
\pgfmathsetmacro{\blockheight}{25}
\pgfmathsetmacro{\stringheight}{20}
\pgfmathsetmacro{\height}{20}

\pgfmathsetmacro{\width}{10}
\pgfmathsetmacro{\x}{0} 

\draw (20,0) rectangle (180,\blockheight);

\draw[dotted] (22,2) rectangle (130,\blockheight-2);

\draw[densely dotted](20,50) rectangle (130, \blockheight+50);

\draw(75,0) rectangle (85,\blockheight);
\node[anchor=north,font=\small] at(80,5) {$i$};

\draw(75,50) rectangle (85,50+\blockheight);
\node[anchor=north,font=\small] at(80,55) {$j$};

\node[anchor=north,font=\tiny] at(80,-12) {$\ell_i$};
\draw[->] (75,-5) -- (20,-5);

\draw[dashed, line width = 0.7] (60,35) -- (60,65+\blockheight);

\node[anchor=north,font=\tiny] at(80,38) {$\ell_j$};
\draw[->] (75,45) -- (60,45);
\draw[->,dotted] (75,45) -- (20,45);

\node[anchor = east] at (17,12) {$P$};

\node[anchor = east] at (17,62) {$\pre$};

\end{tikzpicture} 
        \caption{If $j$ is to the left of the phrase boundary, its $\ell$ value is clearly smaller than that of $i$.}

    \end{subfigure}
    \caption{Illustration of the proof of \cref{lem:stable-properties}.
    The dotted upper rectangle represents $L(P)$ and the bottom doted rectangle represents the occurrence of $L(P)$ mapped by $\pre(a,a+\frac{2}{3}|P|)$.
        By definition of $\pre$, the lower rectangle contains a phrase boundary.}\label{fig:stable-properties}
\end{figure}
The following is an immediate corollary of \cref{lem:stable-properties}.
\begin{corollary}\label{col:x-star-exists}
    Let $x$ be a non-negative integer.
    If $j = sJ^x(i_0)$ is defined, either $r_j < k$ or $sJ(j)$ is also defined.
\end{corollary}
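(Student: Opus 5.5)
The plan is to prove \cref{col:x-star-exists} by induction on $x$, carrying a slightly stronger invariant than the statement: if $j=sJ^x(i_0)$ is defined and $r_j\ge k$, then $r_j\in[k..1.5k)$ and $\ell_j\le \ell_{i_0}\le 1.5k$. With this invariant the conclusion is immediate. If $r_j<k$ we are done. Otherwise $\ell_j\le 1.5k\le 1.5r_j<2r_j$, so $sJ(j)$ is defined by \cref{obs:sJ-defined-nice}.

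For the base case $x=0$ we have $j=i_0$, and the invariant is exactly the hypothesis of \cref{lem:stable}: $r_{i_0}\in[k..1.5k)$ and $\ell_{i_0}\le 1.5k$.

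For the inductive step, suppose $j=sJ^{x}(i_0)$ satisfies the invariant and $r_j\ge k$, and let $j'=sJ(j)$, which is defined by the argument above. Since $sJ$ is a legal jumping rule (\cref{obs:pre-is-legal}), we have $r_{j'}\le r_j<1.5k$. If $r_{j'}<k$ there is nothing to prove. Otherwise we want to apply \cref{lem:stable-properties} to $j$, which gives $r_{j'}<k$ or $\ell_{j'}\le \ell_j$. In the second case $\ell_{j'}\le \ell_j\le\ell_{i_0}\le 1.5k$ and $r_{j'}\in[k..1.5k)$, so the invariant holds for $j'$. If at some step $r_j<k$, the claim holds trivially at that step, so we never need the invariant beyond that point.

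The main obstacle is a boundary mismatch. \cref{lem:stable-properties} is stated with $\ell_i<1.5k$, but our invariant only gives $\ell_j\le 1.5k$. I will handle this by checking that the proof of \cref{lem:stable-properties} only uses $\ell_i$ to show that $sJ(i)$ is defined, via $\ell_i\le 1.5k\le 1.5r_i<2r_i$. That chain works with the non-strict bound $\ell_i\le 1.5k$, so the lemma applies as written in its proof. Equivalently, I can redo its two-case argument directly. If $j'$ lies left of the guaranteed phrase boundary $\hat b$, then $r_{j'}\le r_j-\lceil|P|/3\rceil<k$. If it lies right of $\hat b$, then $\ell_{j'}\le\ell_j$.
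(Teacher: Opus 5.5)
Your proof is correct and follows the paper's argument. Because $sJ$ is legal, $r$-values are non-increasing, so as long as $r\ge k$ they stay in $[k..1.5k)$; \cref{lem:stable-properties} then keeps $\ell$ non-increasing, and $\ell_j\le 1.5k\le 1.5r_j<2r_j$ gives definedness of $sJ(j)$ via \cref{obs:sJ-defined-nice}. Your handling of the $\ell_i<1.5k$ versus $\ell_i\le 1.5k$ mismatch is a worthwhile tightening of a point the paper glosses over, and the lemma's proof does indeed only use the non-strict bound.
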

\begin{proof}

Initially have $\ell_{i_0} \le 1.5k$.
Assume that $r_j \ge k$.
Since application of $sJ$ can not decrease the $r$-value (\cref{cor:M-sj-legal}), we have that $r_{sJ^y(i_0)} \ge k$ for every $y\in [0..x]$. 
It follows from \cref{lem:stable-properties} that if $\ell_{sJ^y} \le \ell_{sJ^{y-1}}$ for every $y\in [1..x]$ and in particular $\ell_j \le \ell_0 \le 1.5k$. 
We therefore have $\ell_j \le1.5k \le 1.5 r_{j}\le 2r_{j}$.
By \cref{obs:sJ-defined-nice}, we have that $sJ(j)$ is well defined.
\end{proof}
With \cref{col:x-star-exists}, we have shown that $x^*$ exists.
For the rest of the section, we show how to efficiently find $x^*$ and $sJ^{x^*}(i_0)$.

\paragraph{Tree structure.}
We override the definition of bad jumps from \cref{sec:naive} to match our current context.
First, we define the bad phrase and the bad suffix of $L\in \Lefts$ as follows.
\begin{definition}[Bad Suffix ,Bad Phrase]
    For $L=S[a..b]\in \Lefts$, the bad phrase $P'$ of $L$ is the phrase containing $sJ(b) = b-\delpre(a,b)$.
    The bad suffix of $L$ is the suffix $[c..b]$ containing exactly the indices $i$ such that $sJ(i)$ is in $P'$.
    We lift this terminology to phrases.
    The bad suffix (resp. bad phrase) of a phrase $P$ is the bad suffix (resp. bad phrase) of $L(P)$ (notice that the bad suffix of $P$ is typically not a suffix of $P$).
\end{definition}
Notice that unlike in the previous section, where we associated each phrase $P$ with possibly $|\K| \in O(\log \frac{n}{z})$ 'bad' phrases, here we only define one bad phrase for $P$.

In the following, we establish the importance of bad phrases and bad suffixes, linking them to the terminal step that finally leads to an index with sufficiently small $r$-value.
\begin{lemma}\label{lem:stable-bad-jumps-same-phrase}
    Let $P=S[a..b]$ be a phrase.
    Let $i$ be an index in $L(P)$ with $r_i \in [k..1.5k)$ for some $k\in \K$ and $i'=sJ(i)$.
    If $r_{i'} \ge k$, then $i$ is in the bad suffix of $L$.
\end{lemma}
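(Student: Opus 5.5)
The plan is to show that every index of $L=L(P)=S[a..e]$ with $e=a+\floor{\frac{2}{3}|P|}-1$ is mapped by $sJ$ using the same shift $\Delta=\delpre(a,e)$. Hence $sJ$ translates $L$ rigidly onto the occurrence $S[a'..e']$ with $a'=\pre(a,e)$ and $e'=e-\Delta=sJ(e)$. The bad phrase $P'$ of $L$ is the phrase containing $e'$. The bad suffix consists of those $i\in L$ with $i-\Delta\in P'$; since $P'$ is an interval that contains $e'$, this set is a suffix $[c..e]$ of $L$. So it suffices to prove the following: if $i\in L$ with $r_i\in[k..1.5k)$ and $r_{i'}\ge k$ for $i'=sJ(i)$, then $i'$ lies in the same phrase as $e'$.

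I will argue by contradiction, reusing the computation from the proof of \cref{lem:stable-properties}. Suppose $i'$ and $e'$ lie in different phrases. Since $i'\le e'$, there is a phrase boundary $\hat b$ with $i'\le \hat b< e'$, and therefore $r_{i'}\le \hat b-i' \le e'-i' = e-i$.

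Next I bound $e-i$ using the definition of $L(P)$. Write $P=S[a..b]$, so $r_i=b-i$. Then
\[
e-i = (b-i) - (b-e) = r_i - \left(|P| - \floor{\tfrac{2}{3}|P|}\right) = r_i - \ceil{\tfrac{|P|}{3}}.
\]
Because $i\in P$, we have $|P|> r_i$. This gives $\ceil{|P|/3}\ge r_i/3$, and so $e-i\le \frac{2}{3}r_i<\frac{2}{3}\cdot 1.5k=k$. Combining, $r_{i'}<k$, which contradicts the assumption. Hence $i'\in P'$, i.e.\ $i$ lies in the bad suffix of $L$.

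The only delicate point is the first step: one must make sure the definition of the bad suffix really means ``$sJ(i)\in P'$'', with one uniform shift for all of $L$. This holds by the definition of $sJ$, since for every $i\in L(P)$ the shift is $\delpre$ of the whole interval $L(P)$. The rest is the same rounding estimate already used in \cref{lem:stable-properties}. Note also that no assumption on $\ell_i$ is needed here, because $sJ(i)$ is defined simply because $i\in L(P)$.
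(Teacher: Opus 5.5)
Your proposal is correct and follows essentially the same route as the paper. You use the uniform shift by $\delpre(a,e)$, so that $e'-i'=e-i$; a phrase boundary between $i'$ and $e'$ then forces $r_{i'}\le e-i=r_i-\lceil |P|/3\rceil<\tfrac{2}{3}r_i<k$. Your explicit check that the set $\{i\in L : sJ(i)\in P'\}$ really is a suffix of $L$ is a small addition that the paper takes for granted.
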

\begin{proof}
    Denote $L = L(P)=S[a..c]$ and notice that $c=a+ \floor{\frac{2}{3}|P|}-1$.
    Notice that $r_i= (c-i) + \ceil{\frac{1}{3}|P|}$.
    Also, since $i$ is in the phrase $P$, we have $r_i < |P|$.
    It follows that $c-i < \frac{2}{3}r_i$.
    Let $\pre(a,c) = S[a'..c']$.
    Clearly, $c'$ is in the rightmost phrase that intersects $[a'..c']$.
    We also have that $sJ(c) = c'$ and $c-i= c'-i'$ since $sJ$ subtracts  $\delpre(a,c)$ from both $i$ and $c$.
    We will now show that if  $i'$ is not in the same phrase as $c'$, then $r_{i'} < k$.
    If $i'$ and $c'$ are not in the same phrase, there is a boundary between $i'$ and $c'$.
    This would imply that $r_{i'} \le |i'-c'| = |i-c|< \frac{2}{3}r_i<k$, as required.
\end{proof}

We define a tree structure over the phrases.
Let $T_{\Bad}$ be a tree with vertices being the phrases of $\LZend$.
For every phrase $P$, the parent of $P$ in $T_{\Bad}$ is the bad phrase of $P$.
We denote the $x$'th ancestor of $P$ in $T_{\Bad}$ as $\Parent{P}{x}{\Bad}$.

Similarly to \cref{sec:naive}, we have that if applying $sJ(i)$ does not result in an $r$-value below $k$, the index $i$ is in the bad suffix of $P$.
Therefore, we focus on analyzing the structure of a sequence of $sJ(i)$ applications in which every step falls within the bad suffix of the phrase containing it.
We make the following simple observation, showing that the difference in $\ell$ and $r$ values as a result of applying $sJ(i)$ is identical across all $i$ in the bad suffix of a phrase $P$.
\begin{lemma}\label{lem:bad-jump-same-delta}
    Let $i_1$ and $i_2$ be two indices in the bad suffix of a phrase $P$.
    Let $j_1=sJ(i_1)$ and $j_2 = sJ(i_2)$.
    It holds that $\ell_{i_1} - \ell_{j_1} = \ell_{i_2}-\ell_{j_2}$ and $r_{i_1} - r_{j_1} = r_{i_2}-r_{j_2}$.
\end{lemma}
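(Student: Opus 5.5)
The plan is to compute both differences directly from the definition of $sJ$. Let $P=S[a..b]$, let $L(P)=S[a..c]$ with $c=a+\floor{\frac{2}{3}|P|}-1$, and set $\Delta=\delpre(a,c)$. Any index $i$ in the bad suffix lies in $L(P)$, so $sJ(i)=i-\Delta$. The shift $\Delta$ is the same for $i_1$ and $i_2$, since it depends only on $P$. Hence $j_1=i_1-\Delta$ and $j_2=i_2-\Delta$.

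The key step is to use the definition of the bad suffix: both $j_1$ and $j_2$ lie in the bad phrase $P'=S[a'..b']$, the phrase containing $sJ(c)$. For an index $i$ in the phrase $S[a..b]$, $\ell_i=i-a$ and $r_i=b-i$. Applying this to $i_t\in P$ and $j_t\in P'$ for $t\in\{1,2\}$ gives
\[
\ell_{i_t}-\ell_{j_t}=(i_t-a)-(i_t-\Delta-a')=a'-a+\Delta,
\qquad
r_{i_t}-r_{j_t}=(b-i_t)-(b'-i_t+\Delta)=b-b'-\Delta .
\]
Neither expression depends on $t$, so the two equalities of the lemma follow.

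I do not expect a real obstacle. The only point needing care is that the bad suffix is defined exactly as the set of indices of $L(P)$ whose image under $sJ$ lies in $P'$. This ensures that $\ell$ and $r$ of the images are measured against the same phrase endpoints $a'$ and $b'$. It also guarantees that $sJ$ is defined on these indices, since it is defined on all of $L(P)$.
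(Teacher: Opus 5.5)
Your proof is correct and follows the paper's argument: both use that $sJ$ shifts every index of $L(P)$ by the same amount (the $\delpre$ value of $L(P)$), and that by the definition of the bad suffix both images lie in the same bad phrase $P'=S[a'..b']$. Hence the $\ell$- and $r$-differences are fixed by the endpoints of $P$ and $P'$. The only difference is presentational: you write out the common constants $a'-a+\Delta$ and $b-b'-\Delta$, whereas the paper only observes $i_1-i_2=j_1-j_2$ and leaves the rest implicit.
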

\begin{proof}
Let $L=S[x..y]=L(P)$.
We have $j_1 = i_1 - \delpre(x,y)$ and $j_2 = i_2 - \delpre(x,y)$.
Therefore, $i_1 - i_2 = j_1 - j_2$.
Denote $P=S[a..b]$ and let $P'= S[a'..b']$ be the bad phrase of $P$.
By definition, $r_{i_1} = b-i_1$,  $r_{j_1} = b'-j_1$, $r_{i_2} = b-i_2$,  and $r_{j_2} = b'-j_2$.
It immediately follows that  $r_{i_1} - r_{j_1} = r_{i_2}-r_{j_2}$.
The claim follows for $\ell$-values in a similar manner.
\end{proof}

Following \cref{lem:bad-jump-same-delta}, we introduce additional notation to capture the common change in $\ell$-values and in $r$-values among indices in the bad suffix.
For a phrase $P=S[a..b]$, with $L(P)= L=[a..b']$, we define $\delta_{\ell}(P) = \ell_{b'}-\ell_{sJ(b')}$ and $\delta_r(P)=r_{b'}-r_{sJ(b')}$.
Notice that $b'$, as the last index of $L$, is in the bad suffix of $L$.

The following is directly implied by \cref{lem:bad-jump-same-delta}.
\begin{corollary}\label{cor:del-is-del}
    Let $i$ be an index in the bad suffix of a phrase $P$ and let $j=sJ(i)$.
    It holds that $r_j = r_i - \delta_r(P)$ and $\ell_j = \ell_i - \delta_{\ell}(P)$.
\end{corollary}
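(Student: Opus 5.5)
The corollary follows by instantiating \cref{lem:bad-jump-same-delta} with $i_1 = i$ and $i_2 = b'$, where $L(P) = S[a..b']$. The plan is to check that $b'$ is a legitimate choice, apply the lemma, and then unfold the definitions of $\delta_r(P)$ and $\delta_\ell(P)$.

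First, I will verify that $b'$ lies in the bad suffix of $P$. By definition, the bad phrase $P'$ of $P$ is the phrase containing $sJ(b')$. The bad suffix consists exactly of the indices of $L(P)$ whose $sJ$-image lies in $P'$, so $b'$ belongs to it trivially. This is the remark the paper makes right after defining $\delta_\ell$ and $\delta_r$.

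Next, let $j = sJ(i)$. Both $i$ and $b'$ are in the bad suffix of $P$, so \cref{lem:bad-jump-same-delta} gives
\[
r_i - r_j = r_{b'} - r_{sJ(b')} \quad\text{and}\quad \ell_i - \ell_j = \ell_{b'} - \ell_{sJ(b')}.
\]
The right-hand sides are $\delta_r(P)$ and $\delta_\ell(P)$ by definition. Rearranging yields $r_j = r_i - \delta_r(P)$ and $\ell_j = \ell_i - \delta_\ell(P)$.

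The only point needing care is that $r$- and $\ell$-values are measured with respect to the phrase containing each index. This is already handled in the proof of \cref{lem:bad-jump-same-delta}: every index in the bad suffix lies in $P$, and its $sJ$-image lies in $P'$. So no additional work is expected, and I do not anticipate any real obstacle.
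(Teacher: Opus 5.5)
Your proposal is correct and is exactly the paper's argument. The paper derives the corollary directly from \cref{lem:bad-jump-same-delta} by comparing $i$ with the last index of $L(P)$, which it notes lies in the bad suffix, and then unfolds the definitions of $\delta_r(P)$ and $\delta_\ell(P)$.
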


We notice that for every $P$ we have that $\delta_{\ell}(P)$ and $\delta_r(P)$ are always non-negative.
\begin{lemma}\label{lem:delt-l-non-neg}
    For every phrase $P\in \LZend$, it holds that $\delta_r(P)\ge 0$ and $\delta_\ell(P) \ge 0$. 
\end{lemma}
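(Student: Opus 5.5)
The plan is to work with the single index $b'$, the last index of $L(P)=S[a..b']$, and its image $j=sJ(b')=b'-\delpre(a,b')$. By definition, $\delta_r(P)=r_{b'}-r_j$ and $\delta_\ell(P)=\ell_{b'}-\ell_j$. So it suffices to show $r_j\le r_{b'}$ and $\ell_j\le \ell_{b'}$. (If $\pre(a,b')=a$, then $\delpre=0$, $j=b'$, and both quantities are $0$.)

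For the $r$-inequality, I would invoke \cref{obs:pre-is-legal} with the interval $[a..b']$ and the index $b'$. It says that $b'-\delpre(a,b')$ is a legal jump for $b'$, and legality gives $r_j\le r_{b'}$ directly. Hence $\delta_r(P)\ge 0$.

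For the $\ell$-inequality, set $a'=\pre(a,b')$, so that $j=a'+(b'-a)$ and the interval $[a'..j]$ is an occurrence of $L(P)$. Then $\ell_{b'}=b'-a$, since $a$ is the start of the phrase $P$ containing $b'$. The key point is that the phrase containing $j$ starts at some position $\ge a'$; this gives $\ell_j\le j-a'=b'-a=\ell_{b'}$. I would prove this by induction along the recursion defining $\pre$, showing that the pair $(a,b')$ passes through pairs $(x,y)$ with $y-x=b'-a$ fixed. At termination, either $x$ and $y$ lie in different phrases, so some phrase starts in $(x..y]$, or $x$ and $y$ lie in a single sourceless phrase, which forces $x=y$. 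In both cases the phrase containing $y=j$ starts at a position $\ge x=a'$, as required.

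The main obstacle is only in stating the termination cases of $\pre$ precisely, in particular the degenerate case of a length-one phrase. Otherwise the argument is a direct unwinding of the definitions.
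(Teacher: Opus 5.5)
Your proposal is correct and follows essentially the same route as the paper: legality of $sJ$ (\cref{obs:pre-is-legal}) gives $\delta_r(P)\ge 0$, and the phrase boundary inside the $\pre$-occurrence $[a'..j]$ of $L(P)$ gives $\ell_j\le j-a'=\ell_{b'}$, hence $\delta_\ell(P)\ge 0$. Your explicit treatment of the two termination cases of $\pre$ (endpoints in different phrases, or both in a sourceless length-one phrase forcing $x=y$) just makes precise what the paper compresses into ``from the definition of $\pre$, there is a phrase boundary in $[a'..b']$.''
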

\begin{proof}
Let $P=S[a..c]$ with $L(P) = S[a..b]$.
Let $a' = \pre(a,b)$ and $b' = a' + |L(P)|-1$.
By definition, we have $sJ(b) = b- \delpre(a,b) =b'$.
From the definition of $\pre$, we have some phrase boundary $\hat b \in [a'..b']$.
Therefore, $\ell_{b'} \le b'-\hat b \le b'-a' = b-a = \ell_b$.
It follows that $\delta_{\ell}(P) = \ell_b-\ell_{b'} \ge 0$, as required. 

The fact that $\delta_r(P)\ge 0$ follows from $sJ$ being a legal jumping step (\cref{obs:pre-is-legal}), which implies $r_{b'}\le r_b$.
\end{proof}

Let us define the set of $(x)$-bad indices of a phrase.
\begin{definition}
    For an integer $x\ge 0$, an index $i$ in phrase $P$ is $(x)$-bad if for every $y\in [0..x]$ it holds that $sJ^y(i)$ in the bad suffix of $P^{(y)}(i)$.
\end{definition}

We also define, for every phrase $P$ with $L(P)=S[a'..b']$ the value $\delta_{\Bad}(P) = \delpre(a',b')$.
Recall that for an index $i$ in $L(P)$, it holds that $sJ(i) = i - \delpre(a',b')$.
The following follows directly from the definition of an $(x)$-bad jump.

\begin{observation}\label{obs:sum-del-direct-jump}
If $i$ is a $(x)$-bad in a phrase $P$ for some non-negative integer $x$, it holds that $J^{x+1}(i) = i- \sum_{y=0}^{x} \delta_{\Bad}(\Parent{P}{y}{\Bad})$. 
\end{observation}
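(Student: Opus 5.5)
Observation~\ref{obs:sum-del-direct-jump} says that if $i$ is $(x)$-bad in $P$, then iterating $sJ$ (the statement writes $J^{x+1}$ for $sJ^{x+1}$) gives $i-\sum_{y=0}^{x}\delta_{\Bad}(\Parent{P}{y}{\Bad})$. I will prove it by induction on $x$, showing a slightly stronger statement: for every $y\in[0..x+1]$, the index $sJ^{y}(i)$ lies in $\Parent{P}{y}{\Bad}$ and equals $i-\sum_{t=0}^{y-1}\delta_{\Bad}(\Parent{P}{t}{\Bad})$. The only ingredients are the definition of $sJ$ and the definitions of the bad phrase and bad suffix. Throughout, I read $P^{(y)}(i)$ in the definition of an $(x)$-bad index as $\Parent{P}{y}{\Bad}$.

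\textbf{Base case.} Let $i$ be $(0)$-bad in $P$. Then $i$ lies in the bad suffix of $P$, and this suffix is contained in $L(P)=S[a'..b']$. So $sJ(i)$ is defined and equals $i-\delpre(a',b')=i-\delta_{\Bad}(P)$. By the definition of the bad suffix, $sJ(i)$ lies in the bad phrase of $P$, which is $\Parent{P}{1}{\Bad}$. Both parts of the stronger statement hold for $y=1$, and $y=0$ is trivial.

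\textbf{Inductive step.} Suppose the claim holds for $x-1$, and let $i$ be $(x)$-bad. Being $(x)$-bad implies being $(x-1)$-bad, so the hypothesis gives that $j=sJ^{x}(i)$ lies in $\Parent{P}{x}{\Bad}$ and equals $i-\sum_{t=0}^{x-1}\delta_{\Bad}(\Parent{P}{t}{\Bad})$. The $(x)$-badness condition at $y=x$ says that $j$ lies in the bad suffix of $\Parent{P}{x}{\Bad}$, and hence in $L(\Parent{P}{x}{\Bad})$. Applying the base-case argument to $j$ and the phrase $\Parent{P}{x}{\Bad}$ gives
\[
sJ(j)=j-\delta_{\Bad}(\Parent{P}{x}{\Bad}),
\]
and shows that $sJ(j)$ lies in the bad phrase of $\Parent{P}{x}{\Bad}$, which is $\Parent{P}{x+1}{\Bad}$. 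Substituting the expression for $j$ gives the required formula for $sJ^{x+1}(i)$.

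\textbf{Where care is needed.} The one point to check is that $sJ$ is evaluated using the right phrase at each step. The offset $\delta_{\Bad}$ depends on the phrase containing the current index. The stronger induction hypothesis, that $sJ^{y}(i)\in\Parent{P}{y}{\Bad}$, is exactly what guarantees this phrase is $\Parent{P}{y}{\Bad}$. That membership in turn comes from the bad suffix of a phrase being defined as the indices whose $sJ$-image lands in its bad phrase, which is its parent in $T_{\Bad}$. Once this is tracked, the rest is routine summation, as in Observation~\ref{obs:same-jumps-same-shift}.
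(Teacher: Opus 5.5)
Your proof is correct and takes essentially the paper's route. The paper only says the observation ``follows directly from the definition of an $(x)$-bad jump,'' and your induction is that unrolling made explicit: each step subtracts $\delta_{\Bad}$ of the current ancestor in $T_{\Bad}$, the $sJ$-analogue of Observation~\ref{obs:same-jumps-same-shift}. You are also right to read $J^{x+1}$ as $sJ^{x+1}$, which is how the observation is used in Lemma~\ref{lem:algorithm-jump-bad-quickly}.
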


We exploit \cref{obs:same-jumps-same-shift} to show that we can quickly find $sJ^{x+1}(i)$, provided that $i$ is $(x)$-bad.

\begin{lemma}\label{lem:algorithm-jump-bad-quickly}
    Given $T_{\Bad}$, we can construct in $O(z)$ time a data structure of size $O(z)$ supporting the following query.
    Given an index $i\in [1..n]$ , a non-negative $x$ such that $i$ is $(x)$-bad, the phrase $P$ containing $i$, and $P'= \Parent{P}{x}{\Bad}$, report $sJ^{x+1}(i)$.
    The query time is $O(1)$. 
\end{lemma}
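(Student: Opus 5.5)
The plan is to use prefix sums along root-to-node paths in $T_{\Bad}$, exactly as one would for weighted depths. By \cref{obs:sum-del-direct-jump} (read with $sJ$ in place of $J$), if $i$ is $(x)$-bad in $P$ then
\[
sJ^{x+1}(i) = i - \sum_{y=0}^{x} \delta_{\Bad}\bigl(\Parent{P}{y}{\Bad}\bigr).
\]
So the whole task reduces to evaluating, in $O(1)$ time, the sum of $\delta_{\Bad}$ over the vertical path from $P$ up to its ancestor $P'=\Parent{P}{x}{\Bad}$, both endpoints included. Since $P'$ is supplied with the query, no level-ancestor query is needed.

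For each phrase $Q$, define
\[
D(Q) = \sum_{y \ge 0} \delta_{\Bad}\bigl(\Parent{Q}{y}{\Bad}\bigr),
\]
the sum of $\delta_{\Bad}$ over all vertices on the path from $Q$ to the root of its tree in $T_{\Bad}$, including $Q$ itself. If $Q$ is a root, then $D(Q)=\delta_{\Bad}(Q)$; otherwise $D(Q)=\delta_{\Bad}(Q)+D(\text{parent}(Q))$. We compute all $D$ values with one top-down traversal of $T_{\Bad}$ in $O(z)$ time and store them in an array indexed by phrase, which takes $O(z)$ space.

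The values $\delta_{\Bad}(Q)=\delpre(L(Q))$ are available: as assumed after \cref{lem:ds-sji}, they are already stored for every phrase. One technical point is that $T_{\Bad}$ is a forest. A parent always lies strictly to the left, so there are no cycles, and the traversal can simply process phrases in left-to-right order. In that order each parent has index smaller than its child, so $D$ of the parent is already known when the child is processed.

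To answer a query, return
\[
i - \bigl(D(P) - D(P') + \delta_{\Bad}(P')\bigr).
\]
Here $D(P)-D(P')$ is the sum over $\Parent{P}{0}{\Bad},\dots,\Parent{P}{x-1}{\Bad}$, and adding $\delta_{\Bad}(P')$ includes the $y=x$ term. This uses $O(1)$ array lookups.

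The only step needing care is justifying the displayed formula itself. I would do this by induction on $y$, using the definition of $(x)$-bad. If $sJ^y(i)$ lies in the bad suffix of $\Parent{P}{y}{\Bad}$, then $sJ^{y+1}(i) = sJ^y(i) - \delta_{\Bad}(\Parent{P}{y}{\Bad})$. This holds because the bad suffix is contained in $L(\Parent{P}{y}{\Bad})$, so $sJ$ subtracts exactly that phrase's $\delpre$ value. Moreover, by the definition of the bad suffix, the result lands in the bad phrase, which is $\Parent{P}{y+1}{\Bad}$. This keeps the sequence of visited phrases aligned with the ancestor path, and that alignment is what makes the telescoping difference of $D$ values correct.
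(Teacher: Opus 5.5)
Your proposal is correct and is essentially the paper's proof. You store, for each phrase, the sum of $\delta_{\Bad}$ over its path to the root of $T_{\Bad}$ (computable in $O(z)$ time since parents lie strictly to the left), and answer via \cref{obs:sum-del-direct-jump} read with $sJ$ using a telescoping difference of these sums; your only deviation is combining $D(P)-D(P')+\delta_{\Bad}(P')$ rather than subtracting the prefix sum at $\Parent{P}{x+1}{\Bad}$, which saves a parent lookup and also avoids the sign slip in the paper's displayed formula.
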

\begin{proof}
We attach to every node of $T_{\Bad}$ the number $\Delta'_{\Bad}(P)$ that is equal to the sum of $\delta_{\Bad}$ values on the path from $P$ to the root.
Formally, $\Delta'_{\Bad}(P) = \sum_{y=0}^{d}\delta_{\Bad}(\Parent{P}{y}{\Bad})$ where $d$ is the depth of $P$ in $T_{\Bad}$.
This concludes the construction of the data structure.

Given a query $i$,$x$, $P$, and $P'=\Parent{P}{x}{\Bad}$, we retrieve $\Delta_{\Bad}(P)$ and $\Delta_{\Bad}(\Parent{P}{x+1}{\Bad})$ from $T_{\Bad}$.
Notice that we can find $\Parent{P}{x+1}{\Bad}$ in constant time, as the parent of $P'$.
We compute $\delta = \Delta'_{\Bad}(\Parent{P}{x+1}{\Bad})-\Delta'_{\Bad}(P) + \delta_{\Bad}(P) = \sum_{y=0}^{x} \delta_{\Bad}(\Parent{P}{y}{\Bad})$.
We output $sJ^{x+1}(i) = i - \delta$, which is correct by \cref{obs:sum-del-direct-jump}.

The preprocessing time consists of finding all $\Delta'_{\Bad}$ values, which can be done in $O(|T_{\Bad}|)$ via a straightforward iteration on $T_{\Bad}$ (given the values $\delta_{\Bad}(P)$ of all phrases).
The total construction time is $O(|T_{\Bad}|) = O(z)$, as required.

The query consists of retrieving data stored in the tree.
Assuming that every phrase stores a link to the corresponding tree node, the stored data can be retrieved in constant time, as required.
\end{proof}

\begin{figure}
\begin{center}
    \tikzset{every picture/.style={line width=0.75pt}} 
\usetikzlibrary{patterns}

\begin{tikzpicture}[x=1pt,y=1pt,yscale=-1,xscale=1]
\pgfmathsetmacro{\blockheight}{25}
\pgfmathsetmacro{\stringheight}{20}
\pgfmathsetmacro{\height}{20}

\pgfmathsetmacro{\width}{10}
\pgfmathsetmacro{\x}{0} 

\draw[{|}-{|}] (20,-5) -- (40,-5) node[midway,above]{\footnotesize $\delta_{\ell}(P_0)$};

\draw[{|}-{|},dotted] (42,-5) -- (130,-5);
\draw (20,0) rectangle (130,\blockheight);

\draw[dashed] (40,-5) -- (40, 50 );
\draw[{|}-{|}] (40,45) -- (70,45) node[midway,above]{\footnotesize $\delta_{\ell}(P_1)$};

\draw[{|}-{|},dotted] (72,45) -- (140,45);

\draw(40,50) rectangle (140, \blockheight+50);

\draw[dashed] (70,45) -- (70, 100 );
\draw(70,100) rectangle (170, \blockheight+100);

\draw[{|}-{|}] (70,95) -- (110,95) node[midway,above]{\footnotesize $\delta_{\ell}(P_2)$};
\draw[{|}-{|},dotted] (112,95) -- (170,95);
\draw[dashed] (110,95) -- (110, 150 );

\draw(110,150) rectangle (190, \blockheight+150);

\draw(85,0) rectangle (95,\blockheight);
\node[anchor=north,font=\small] at(90,5) {$i_0$};

\draw(85,50) rectangle (95,50+\blockheight);
\node[anchor=north,font=\small] at(90,55) {$i_1$};

\draw(85,100) rectangle (95,100+\blockheight);
\node[anchor=north,font=\small] at(90,105) {$i_2$};

\draw[{|}-{|}] (85,105+\blockheight) -- (20,105+\blockheight) node[midway,below]{$\ell_{i_0}$};
\draw[dashed,thin] (20,\blockheight) -- (20,105+\blockheight);

\node at (10,12) {$P_0$};

\node at (30,62) {$P_1$};

\node at (60,112) {$P_2$};

\node at (100,162) {$P_3$};

\node[red] at(120,32) {\tiny $sJ(i_0)$};
\draw[->,red,dashed,thick] (95,20) to[out=0,in=0] (95,60);

\node[red] at(120,82) {\tiny $sJ(i_1)$};
\draw[->,red,dashed,thick] (95,70) to[out=0,in=0] (95,110);

\end{tikzpicture}
    \caption{The relationship between the sum of $\delta_\ell$ values and membership in the bad suffix.
    In this figure, the top phrase is the phrase $P_0$ containing $i_0$. 
    Below each phrase, the figure displays its bad phrase.
    The bad suffix of each phrase is denoted above the phrase as a dotted line.
    For the sake of clear presentation, the figure displays each bad suffix as a suffix of the phrase (in reality, the bad suffix of $P$ is a suffix of $L(P)$.) 
    Notice that $\delta_{\ell}(P)$ is always equal to the offset between the occurrence of $L(P)$ mapped by $\pre$ and the bad phrase of $P$.
    The index $i_0$ is in the bad suffix of $P_0$ since $\ell_{i_0}\le \delta_{P_0}$.
    It is therefore mapped by $sJ(i_0)$ to an index $i_1$  that is in the bad parent of $P_0$, which is $P_1$.
    Since $\ell_{i_0} > \delta_{\ell}(P_0)+\delta_{\ell}(P_1)$, we also have that $i_1$ is in the bad suffix of $P_1$, so it is mapped to $i_2$ in $P_2$.
    Since $\ell_{i_0}$ is not larger than the sum $\delta_{\ell}(P_0)+\delta_{\ell}(P_1)+\delta_{\ell}(P_2)$, we have that $i_2$ is not in the bad suffix of $P_2$ and is not mapped to $P_3$.
    }\label{fig:sum-left-bad}
\end{center}
\end{figure}
\paragraph{Characterizing $x^*$.}
We will use the $\delta_r$ and $\delta_\ell$ values of the phrases to provide a characterization of the first integer $x^*$ such that $j = sJ^{x^*}(i_0)$ has $r_j < k$.
Before formally providing the characterization (see \cref{lem:connecting_everything}), let us provide some intuition.

Consider the process of iteratively applying $sJ$, starting from $i_0$ until $j$ is reached.
Before $j$ is reached, it should be clear that all visited indices are bad, and it follows from \cref{lem:stable-bad-jumps-same-phrase} that every visited index is in the bad suffix of the phrase containing it.
Therefore, the phrases visited correspond to an upwards path in $T_{\Bad}$.
From \cref{cor:del-is-del}, it follows that the decrease in $\ell$ and $r$ values when we jump from $i$ to $j$ is exactly the $\delta_{\ell}$ and $\delta_r$ values of the phrase containing $i$, respectively.

In particular, if $i_D = sJ^{D+1}(i_0)$ for some integer $D$, it holds that $\ell_{i_D} = \ell_{i_0} -\sum_{y = 0}^D\delta_{\ell}(P^{(y)})$ and similarly, $r_{i_D} = r_{i_0} -\sum_{y = 0}^D\delta_{r}(P^{(y)})$.
The two sums representing the total decrease in $\ell$ and $r$ correspond to the sum  $\delta_\ell$ values and $\delta_r$ values of an upwards path from $P$ in $T_{\Bad}$, respectively.
We can use level ancestor queries to quickly identify the lowest ancestor of $P$ in which the sum crosses a given threshold.

A natural threshold to consider for the sum of $\delta_r$ is $k-r_{i_0}$, as once this threshold is surpassed, the $r$-value drops below $k$, which means that we have reached an index with sufficiently small $r$-value.
However, this bound alone is not sufficiently precise.
It may be the case that at some point, before the $\delta_r$ values accumulate to $k-r_{i_0}$, we visit some index $j$ that is not in the bad suffix of the phrase containing it.
If that is the case, our claim that the $r$ value decreases according to the $\delta_r$ values no longer holds (as its jumps no longer correspond to an upward path).
Fortunately, according to \cref{lem:stable-bad-jumps-same-phrase}, if such $j'$ is reached, either $j'$ or its successor $sJ(j')$ have $r$-value below $k$.
Furthermore, it turns out that the first such $j$ can be characterized using the accumulated $\delta_\ell$ values.

To see this relationship between the $\delta_\ell$ values in the first $j$ that is not in a bad suffix, consider first $\delta_\ell(P)$ (See \cref{fig:sum-left-bad}).
Let $L=S[a' ..b']$ be the bad suffix of $P=S[a..b]$.
It should be relatively easy to see that $\delta_{\ell}(P)=a'-a$.
Now, it is clear that $i$ is in the bad suffix of $P$ if and only if $\ell_i =i-a \ge a'-a=\delta_{\ell_P}$.
Assume that $i$ is indeed in the bad suffix of $P$ and consider the bad phrase $P'$ of $P$ that contains $j=sJ(i)$.
We can apply the same reasoning and show that $j$ is in the bad suffix of $P'$ if and only if $\ell_j \ge \delta_{\ell}(P')$.
But when we reached $P'$, we already decreased the $\ell$ value of $i$ by $\delta_{\ell}(P)$.
So, the condition $\ell_j \ge \delta_{\ell}(P')$ translates to $\ell_i \ge \delta_{\ell}(P)+\delta_{\ell}(P')$.

This pattern persists inductively, and we have that $sJ^x(i)$ is in the bad suffix of the phrase containing it 'if and only if' the sum of $\delta_{\ell}$ values does not exceed $\ell_i$.
Again, 'if and only if' is not completely precise here, because it may be the case that the $r$-value drops below $k$ while visiting a bad suffix.
Recall that we are only guaranteed to have the value of $sJ$ well defined for the indices we meet in this process as long as their $r$-value exceeds $k$ (\cref{col:x-star-exists}).

In conclusion, there are two events that may lead to the $r$-value dropping below $k$ for the first time.
One event corresponds to passing a certain threshold of accumulated $\delta_r$ values and the other event corresponds to passing a certain threshold of accumulated $\delta_{\ell}$ values.
We need to specifically find the earlier of the two events, as the characterization of the $r$-values and the $\ell$-values on which we rely persists only as long as none of these two events occur.

We formalize the above intuition as follows.
For any phrase $P\in \LZend$ and integer $B$, we define $\Delta_L(P,B) = \sum_{x=0}^{B} \delta_{\ell}(\Parent{P}{x}{\Bad})$.
Similarly, we define $\Delta_R(P,B) = \sum_{x=0}^{B} \delta_{r}(\Parent{P}{x}{\Bad})$.

We prove the following fact.
\begin{lemma}\label{lem:connecting_everything}
   
   Let $P$ be the phrase containing $i_0$ with $\ell_{i_0} = L$ and $r_{i_0} = R$.
   Let $B$ be a non-negative integer.
    \begin{enumerate}
        \item If $i_0$ is a $(B)$-bad index, than $\Delta_L(P,B) \le L$.
        \item If $\Delta_{L}(P,B) \le L$ and $\Delta_R(P,B-1) < R-k$, then $i_0$ is a $(B)$-bad index.
    \end{enumerate}

\end{lemma}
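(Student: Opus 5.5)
Both parts follow from one characterization of bad suffixes, proved first and then used in an induction on $B$.

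\emph{Key fact.} For a phrase $Q=S[a..b]$ with $L(Q)=S[a..c]$, an index $j\in Q$ lies in the bad suffix of $Q$ if and only if $\delta_\ell(Q)\le \ell_j\le c-a$. To see this, write $a'=\pre(a,c)$, $c'=sJ(c)$, and let $Q'=S[\hat a..\hat b]$ be the bad phrase containing $c'$. Since $c'-a'=c-a$ and $\hat a$ is the phrase boundary inside $[a'..c']$ that is nearest to $c'$ from the left, we get $\delta_\ell(Q)=\ell_c-\ell_{c'}=\hat a-a'$. For $j\in L(Q)$ we have $sJ(j)=j-(a-a')$. This point lies in $Q'$ exactly when $sJ(j)\ge \hat a$, because $sJ(j)\le c'\le \hat b$. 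The condition $sJ(j)\ge\hat a$ is equivalent to $\ell_j=j-a\ge \hat a-a'=\delta_\ell(Q)$. Hence the bad suffix of $Q$ is exactly the set of indices $j$ with $\delta_\ell(Q)\le\ell_j\le c-a$.

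\emph{Part 1.} Let $i_y=sJ^y(i_0)$ and $P^{(y)}=\Parent{P}{y}{\Bad}$. Suppose $i_0$ is $(B)$-bad. By induction on $y$, each $i_y$ lies in $P^{(y)}$ and in its bad suffix. Applying \cref{cor:del-is-del} repeatedly gives
\[
\ell_{i_y}=L-\Delta_L(P,y-1), \qquad r_{i_y}=R-\Delta_R(P,y-1).
\]
Applying the key fact at $y=B$ gives $\delta_\ell(P^{(B)})\le \ell_{i_B}$, which rearranges to $\Delta_L(P,B)\le L$.

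\emph{Part 2.} This is an induction on $y\le B$ showing that $i_y$ is in the bad suffix of $P^{(y)}$. Assume the claim holds for all $y'<y$, so that $i_y\in P^{(y)}$ and the displayed formulas hold.
\begin{itemize}
\item \emph{Lower bound on $\ell$.} Since $\Delta_L(P,B)\le L$ and all $\delta_\ell\ge0$ by \cref{lem:delt-l-non-neg}, we have $\ell_{i_y}=L-\Delta_L(P,y-1)\ge\delta_\ell(P^{(y)})$.
\item \emph{Upper bound on $\ell$ (membership in $L(P^{(y)})$).} We need $\ell_{i_y}\le 2r_{i_y}$ (\cref{obs:sJ-defined-nice}). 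For $y\le B$, the hypothesis $\Delta_R(P,B-1)<R-k$ gives $r_{i_y}\ge R-\Delta_R(P,y-1)>k$. Also $\ell_{i_y}\le \ell_{i_0}\le 1.5k$, using the stable-part hypothesis $\ell_{i_0}\le1.5k$ (which I assume holds in context, as in \cref{lem:stable}). Together these give $\ell_{i_y}<2r_{i_y}$.
\end{itemize}
With both bounds in hand, the key fact places $i_y$ in the bad suffix. If the context instead requires $r_{i_y}\ge k$ via \cref{lem:stable-properties}, the same bound $r_{i_y}>k$ supplies it.

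\emph{Main obstacle.} The delicate step is the key fact, in particular checking that the bad phrase's start $\hat a$ lies inside $[a'..c']$ and is the boundary nearest to $c'$. This needs care with the degenerate case where $\pre$ stops because the phrase has no source: then $|L|=1$, and membership in $L(P)$ holds trivially. The other point that needs attention is the off-by-one between $\Delta_R(P,B-1)$ and $r_{i_B}$.
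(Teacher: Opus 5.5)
Your proof is correct and follows essentially the paper's argument. Your key fact is the one the paper uses in both its base case and its inductive step: the bad suffix of $Q$ is exactly $\{j\in L(Q) : \ell_j\ge \delta_\ell(Q)\}$, with $\delta_\ell(Q)=\hat a-a'$. Likewise, your membership bound $\ell_{i_y}\le \ell_{i_0}\le 1.5k<2r_{i_y}$ is the content of \cref{col:x-star-exists}, and it relies on the same implicit stable-part hypothesis $\ell_{i_0}\le 1.5k$.

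The differences are cosmetic. You induct on $y$ for a fixed $B$ rather than on $B$. You also derive Part 1 from the key fact at $y=B$ instead of from $\ell_{sJ^{B+1}(i_0)}\ge 0$, but these are the same inequality.
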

\begin{proof}
For the sake of clear presentation, we omit $P$ from the notations $\Delta_L(P,B)$ and $\Delta_R(P,B)$, and write $\Delta_L(B)$ and $\Delta_R(B)$ instead, respectively.

Let us first prove the first statement.
Assume that $i_0$ is $B$-bad.
Let $j= sJ^{B+1}(i_0)$
Since for every $x\in [0..B]$ we have that $sJ^x(i_0)$ is in the bad suffix of $\Parent{P}{x}{\Bad}$, \cref{cor:del-is-del} yields that $\ell_j = L - \Delta_L(B)$.
Since $\ell_j$ is never negative, we have $\Delta_{L}(B) \le L$ as required.

We proceed to prove the second claim by induction on $B$.

\paragraph{Base case.}
 Let $P=S[a..b]$ and let the bad interval of $L(P)$ be $S[a'..c]$.
For $B=0$, we have that $\Delta_L(0)= \delta_{\ell}(P)$.
By definition, $\ell_{a'} = a'-a$.
Notice that $j'= sJ(a')$ must be the first index of the bad parent of $P$, so $\ell_{j'}=0$.
By \cref{cor:del-is-del}, we have $\delta_{\ell}(P) = a'- a$.

Recall that $L=i_0-a$, and observe that $i_0$ is in the bad suffix of $P$ if and only if $a'-a \le i-a$ which is equivalent to $\Delta_{\ell} \le L$, as required.

\paragraph{Induction step.}
Assume that the claim is true for $B \ge 1$.
Assume that $\Delta_{L}(B+1) \le L$ and $\Delta_R(B) \le R-k$.
We have that $\Delta_{L}(B) = \Delta_L(B+1) - \delta_{\ell}(\Parent{P}{B}{B})$ and $\Delta_R(B) =  \Delta_R(B) - \delta_{r}(\Parent{P}{B-1}{\Bad})$.
Since $\delta_\ell$ and $\delta_r$ value of a phrase are never negative (\cref{lem:delt-l-non-neg}), we have that $\Delta_{L}(B)\le L$ and $\Delta_R(B-1) \le R-k$.
Therefore, from the induction hypothesis we have that $i_0$ is $B$-bad.
We will show that $j = sJ^{B+1}(i_0)$ is in the bad suffix of $\Parent{P}{B+1}{\Bad}$. 
 
It follows from \cref{cor:del-is-del} that $\ell_j = L - \Delta_L(B)$ and that $r_j = R - \Delta_R(B)$.
Since $r_j = R - \Delta_R(B) \ge k$, we have that $sJ(j)$ is well defined (i.e., is in $j\in L(P)$) due to \cref{col:x-star-exists}.
Since $j$ is the result of $B+1$ jumps from bad suffixes, we have that $j$ is in $\Parent{P}{B+1}{\Bad}= S[a'..b']$.
So $\ell_j = j- a'$.
Recall that $\delta_{\ell}(\Parent{P}{B+1}{\Bad})= \hat a - a'$ such that $\hat a$ is the leftmost index in the bad suffix of $\Parent{P}{B+1}{\Bad}$.

We have $j = a' + \ell_j = a' +L - \Delta_L(B) = a' + L - \Delta_L(B+1) +\delta_{\ell}(P^{(B+1)}) = \hat a + L - \Delta_L(B+1)$.
Since $\Delta_L(B+1) \le L$, this implies $j \ge \hat a$ which implies that $j$ is in the bad suffix of $P^{(B+1)}$, as required.
\end{proof}

We proceed to use \cref{lem:connecting_everything} to characterize $x^*$.
Recall that $i_0$ is the input index with $r_{i_0}=R \in [k..1.5k)$ for $k\in \K$
We also denote $L=\ell_{i_0}$.
Let $P$ be the phrase containing $i_0$.
For an integer $D$, we denote $\Delta_R(D)=\sum_{x=0}^{D}\delta_r(\Parent{P}{x}{\Bad})$ and $\Delta_L(D)=\sum_{x=0}^{D}\delta_r(\Parent{P}{x}{\Bad})$.
In words, $\Delta_R(D)$ (resp. $\delta_L(D)$) is the sum of $\delta_r$ values (resp. $\delta_\ell$ values) on the path of length $D$ from $P$ towards the root of $T_{\Bad}$.

We denote as $D_R$ the maximal integer such that $\Delta_R(D_R) \le R-k$.
Similarly, we denote $D_L$ as the maximal integer satisfying $\Delta_L(D_L) \le L $.
Notice that $\Delta_L(-1) = \Delta_R(-1) = 0$, and that $L \ge 0$ and $R \ge k$, so $D_L$ and $D_R$ are well defined.

We prove the following fact.
\begin{lemma}\label{lem:dl-dr-is-what-we-need}
Let $D = \max(0, \min (D_L,D_R+1))$.
We have that for every $x\in [0..D]$, the $r$-values of $sJ^x(i_0)$ is at least $k$, and for $j = sJ^{D+1}(i_0)$ we have $r_j<k$.  
\end{lemma}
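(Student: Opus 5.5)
Write $i_x = sJ^x(i_0)$ and $P^{(x)}=\Parent{P}{x}{\Bad}$. The proof splits into two halves. The first half shows that the walk stays in bad suffixes up to step $D$, so the $r$-values can be tracked exactly by $\Delta_R$. The second half shows that step $D+1$ drops below $k$, for one of two reasons: the $\delta_r$ threshold is crossed, or the walk leaves a bad suffix.

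\emph{Upper half: $r_{i_x}\ge k$ for $x\in[0..D]$.} The case $D=0$ is trivial since $r_{i_0}=R\ge k$. Otherwise $D\le D_L$ and $D-1\le D_R$. For each $x\in[1..D]$ I apply \cref{lem:connecting_everything}(2) with $B=x-1$:
\begin{itemize}
\item $\Delta_L(x-1)\le\Delta_L(D_L)\le L$, using monotonicity of the partial sums, which holds because $\delta_\ell\ge 0$ by \cref{lem:delt-l-non-neg};
\item $\Delta_R(x-2)\le\Delta_R(D_R)\le R-k$, by the same monotonicity for $\delta_r$.
\end{itemize}
A small care point: \cref{lem:connecting_everything} asks for a strict inequality. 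I will either rederive it with $\le$ (its induction only uses $r\ge k$, so $\le$ suffices) or shift indices by one.

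This gives that $i_0$ is $(x-1)$-bad. By \cref{cor:del-is-del}, iterated along the upward path in $T_{\Bad}$, we get $r_{i_x}=R-\Delta_R(x-1)$. Since $x-1\le D_R$, this is at least $R-\Delta_R(D_R)\ge k$.

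\emph{Lower half: $r_{i_{D+1}}<k$.} First note that $sJ$ is defined at each $i_x$ for $x\le D$, by \cref{col:x-star-exists}. There are two cases.
\begin{itemize}
\item \textbf{Case $D=D_R+1\le D_L$.} Here $i_0$ is $(D_R)$-bad by the argument above. Suppose $i_D$ lies in the bad suffix of $P^{(D)}$. Then $r_{i_{D+1}}=R-\Delta_R(D)=R-\Delta_R(D_R+1)<k$ by maximality of $D_R$. If instead $i_D$ is not in that bad suffix, \cref{lem:stable-bad-jumps-same-phrase} directly gives $r_{i_{D+1}}<k$.
\item \textbf{Case $D=D_L\le D_R+1$, or $D=0$.} I would like to show that $i_D$ is not in the bad suffix of $P^{(D)}$; then \cref{lem:stable-bad-jumps-same-phrase}, applied at $i_D$ (which has $r_{i_D}\in[k..1.5k)$), gives $r_{i_{D+1}}<k$. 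Suppose instead that $i_D$ is in that bad suffix. Together with $i_0$ being $(D-1)$-bad, this makes $i_0$ $(D)$-bad. Then \cref{lem:connecting_everything}(1) gives $\Delta_L(D)\le L$, contradicting the maximality of $D_L=D$.
\end{itemize}

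\emph{Edge case $D=0$ with $D_L=-1$.} Here $\delta_\ell(P)>L$, so $i_0$ is not in the bad suffix (the base case of \cref{lem:connecting_everything} shows that $i_0$ is in the bad suffix iff $\delta_\ell(P)\le L$). The same argument then applies.

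\emph{Edge case $D_R=-1$.} Then $D=0$ and $\delta_r(P)>R-k$. If $i_0$ is in the bad suffix, then $r_{i_1}=R-\delta_r(P)<k$. Otherwise we are done by \cref{lem:stable-bad-jumps-same-phrase}.

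The main obstacle is the bookkeeping of off-by-one indices between $(x)$-badness, $\Delta(x)$ versus $\Delta(x-1)$, and the strict versus non-strict hypothesis in \cref{lem:connecting_everything}(2). I would first fix the indexing convention that ``$i_0$ is $(x)$-bad'' means $i_0,\dots,i_x$ all lie in bad suffixes, hence $\ell_{i_{x+1}}=L-\Delta_L(x)$ and $r_{i_{x+1}}=R-\Delta_R(x)$. Then I would check every inequality against that convention.
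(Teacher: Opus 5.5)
Your upper half, your first case ($D=D_R+1\le D_L$), and the edge cases $D_L=-1$ and $D_R=-1$ are correct. The gap is in the second case when $0\le D_L\le D_R$, so that $D=D_L$.

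There you assume $i_D$ lies in the bad suffix of $P^{(D)}$, deduce that $i_0$ is $(D)$-bad, and obtain $\Delta_L(D)\le L$ from \cref{lem:connecting_everything}(1). But $\Delta_L(D_L)\le L$ is just the defining property of $D_L$. Maximality only says $\Delta_L(D_L+1)>L$, so nothing is contradicted. Worse, the assumption is forced. \cref{lem:connecting_everything}(2), in its non-strict form, applies with $B=D_L$, because $\Delta_R(D_L-1)\le\Delta_R(D_R)\le R-k$. So $i_0$ is $(D_L)$-bad, and $r_{i_{D+1}}=R-\Delta_R(D_L)\ge R-\Delta_R(D_R)\ge k$. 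The statement as written is therefore false in this regime, and no argument can close the case.

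A concrete instance: take $S[1..10]=x_1\cdots x_{10}$ parsed into ten terminal phrases, followed by $P_{11}=S[11..15]$, $P_{12}=S[16..23]$ and $P_{13}=S[24..34]$ with sources $P_{10},P_{11},P_{12}$. Let $i_0=28$ and $k=1.5^4$; append further phrases if you want $k\in\K$, which changes nothing to the left. Here $L=4$, $R=6$, $\delta_\ell(P_{13})=\delta_\ell(P_{12})=3$, $\delta_r(P_{13})=\delta_r(P_{12})=0$ and $\delta_r(P_{11})=2$. Hence $D_L=0$, $D_R=1$ and $D=0$. Yet $sJ(28)=17$ has $r_{17}=6\ge k$, and it is only $sJ^2(28)=9$ that drops below $k$.

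The paper's own proof makes the same slip: it asserts that $sJ^D(i_0)$ lying in the bad suffix of $P^{(D)}$ makes $i_0$ $(D+1)$-bad, whereas it only makes it $(D)$-bad. The correct choice is $D=\min(D_L,D_R)+1$, and your argument then goes through with the index shifted by one.
\begin{itemize}
\item If $D_R<D_L$, this is exactly your first case.
\item If $D_L\le D_R$, then $i_0$ is $(D_L)$-bad, and every $i_x$ with $x\le D_L+1$ has $r$-value at least $R-\Delta_R(D_R)\ge k$. If $i_{D_L+1}$ were in the bad suffix of $P^{(D_L+1)}$, then $i_0$ would be $(D_L+1)$-bad. \cref{lem:connecting_everything}(1) would then give $\Delta_L(D_L+1)\le L$, which now genuinely contradicts maximality. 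Hence \cref{lem:stable-bad-jumps-same-phrase} yields $r_{i_{D_L+2}}<k$.
\end{itemize}
The query in the proof of \cref{lem:stable}, which outputs $sJ^{D+1}(i_0)$, must use this corrected $D$ as well.
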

\begin{proof}
We start by considering the case in which either $D_L= - 1$.
Notice that if this occurs, we have $D=0$ and the claim for $x\in [0..D] = \{0 \}$ holds since $sJ^0(i_0) = i_0$ and $r_{i_0} \ge k$.
If $D_L = -1$, we have that $\Delta_L(0) = \delta_\ell(P) >L$.
According to \cref{lem:connecting_everything}, it means that $i_0$ is not $(0)$-bad.
According to \cref{lem:stable-bad-jumps-same-phrase} we have that $j = sJ^{D+1}(i_0) =sJ(i_0)$ has $r_j < k$ as required.

From now on, we assume that $D_L \ge -1$ which implies that $D= \min(D_L, D_R+1)$.

From the definition of $D$, we have that $\sum_{x=0}^{D}\delta_{\ell}(\Parent{P}{x}{\Bad}) \le L$ and  $\sum_{x=0}^{D-1}\delta_{r}(\Parent{P}{x}{\Bad}) \le R-k$.
According to \cref{lem:connecting_everything} we have that $i_0$ is $(D)$-bad, and therefore $j'=sJ^D(i_0)$ is in the phrase $\Parent{P}{D}{B}$.

It follows from \cref{cor:del-is-del} and the fact that $i_0$ is $(D)$-bad that $r_{j'} = R - \Delta_R(D-1) \ge  k$, so from the monotonicity of $r$-values when applying $sJ$ we have that for every $x\in [0..D]$ it holds that $sJ^x(i_0)$ has $r$-values at least $k$.

Since $r_{j'} \ge k$, it follows from \cref{col:x-star-exists} that $j = sJ(j') = sJ^{D+1}(i_0)$ is well-defined.

We consider two cases.
If $j'$ is not in the bad suffix of $\Parent{P}{D}{\Bad}$, and $r_{j'} \in [k..1.5k)$, we have $r_j <k$ by \cref{lem:stable-bad-jumps-same-phrase}.

Let us now consider the case where $j'$ is in the bad suffix of $\Parent{P}{D}{\Bad}$.
We have that $i_0$ is $(D+1)$-bad, and in particular $\Parent{P}{D}{\Bad}$ has a parent $\Parent{P}{D+1}{\Bad}$ in $T_{\Bad}$.
According to \cref{lem:connecting_everything} this implies that $\Delta_{L}(D+1) \le L$, which means that $D_L > D$.

Therefore, we have $D= D_R + 1$ and in particular $D > D_R$. 
It follows that $\Delta_R(D) > R-k$.
Since all the jumps from $i_0$ to $j$ were from the bad suffix of the block containing them, \cref{cor:del-is-del} suggests that $r_{j} = R - \Delta_R(D) < R - (R-k) = k$.
We have shown $r_j < k$, as required.
\end{proof}

Having established the connection between $D_L$,$D_R$, and the sought value $x^*$, we are now interested in constructing a data structure for efficiently computing $D_L$ and $D_R$.
We prove the following.
\begin{lemma}\label{lem:algorithm-dl-dr}
    Given $T_{\Bad}$, we can construct in $O(z)$ time a data structure taking $O(z)$ space such that given $i_0$, we can retrieve $D_L$ and $D_R$ in $O(\log \frac{n}{z})$ time.
    Within the same query time, the data structure also returns the corresponding ancestors $\Parent{P}{x}{\Bad}$ for the phrase containing $P$ and $x\in \{D_L,D_R\}$.
\end{lemma}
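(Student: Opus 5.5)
We would store, for every phrase $P$, prefix sums along the root path of $T_{\Bad}$: $A_L(P)=\sum_{y=0}^{d}\delta_\ell(\Parent{P}{y}{\Bad})$ and $A_R(P)=\sum_{y=0}^{d}\delta_r(\Parent{P}{y}{\Bad})$, where $d$ is the depth of $P$. A single top-down traversal of $T_{\Bad}$ computes them in $O(z)$ time, given the $\delta_\ell,\delta_r$ values. These values are available in $O(1)$ per phrase from the stored $\delpre$ of $L(P)$ (\cref{lem:ds-sji}) and the endpoints of the bad phrase. Then
\[ \Delta_L(P,B)=A_L(P)-A_L(\Parent{P}{B+1}{\Bad}), \]
with the convention that $A_L$ of a missing ancestor is $0$, and similarly for $\Delta_R$. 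Hence any $\Delta_L(P,B)$ or $\Delta_R(P,B)$ can be evaluated in $O(1)$ time once $\Parent{P}{B+1}{\Bad}$ is known. We also preprocess $T_{\Bad}$ for level-ancestor queries with $O(z)$ time and space and $O(1)$ query time, using a standard structure (e.g., jump-pointer/ladder or Dietz / Bender–Farach-Colton).

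At query time we first find $P$ containing $i_0$ using \cref{lem:find-con-phrase} in $O(\log\frac nz)$ time, and we compute $L=\ell_{i_0}$ and $R=r_{i_0}$. By \cref{lem:delt-l-non-neg}, $\Delta_L(P,B)$ and $\Delta_R(P,B)$ are nondecreasing in $B$, so $D_L$ and $D_R$ can be found by binary search over $B\in[-1..d]$. Each probe is one level-ancestor query plus $O(1)$ arithmetic. The naive binary search over depth costs $O(\log z)$, not $O(\log\frac nz)$, and reducing this is the main obstacle.

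To get $O(\log\frac nz)$ we would bound the answer range. Every phrase on the relevant path with $\delta_\ell>0$ or $\delta_r>0$ contributes at least $1$, but zero contributions are possible, so the bound cannot simply come from $L,R\le\frac nz$. Instead we would use exponential (galloping) search, probing $B=0,1,3,7,\dots$ until the threshold is exceeded, and then binary search. This costs $O(\log D)$, which is too weak if $D$ is large with many zero-contribution phrases. To handle this, we would compress $T_{\Bad}$: contract every maximal chain of phrases with $\delta_\ell=\delta_r=0$ into its top, keeping an offset so depths can be recovered. On the contracted path each remaining step increases $\Delta_L+\Delta_R$ by at least $1$, and both thresholds are at most $\frac nz$ by \cref{lem:no-big-phrases}. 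Hence $D_L$ and $D_R$ lie within $O(\frac nz)$ contracted levels, and galloping followed by binary search on level ancestors in the contracted tree finishes in $O(\log\frac nz)$ probes. Mapping back to the original depth and ancestor is $O(1)$ using the stored chain offsets: the maximal $B$ inside a zero chain is its bottom-most extension, namely the last phrase of the chain below the next positive contributor. The query then returns $D_L$, $D_R$ and the ancestors $\Parent{P}{D_L}{\Bad}$ and $\Parent{P}{D_R}{\Bad}$ from the final level-ancestor queries. Checking that chain contraction interacts correctly with the "maximal $B$" definitions is the detail we expect to need the most care.
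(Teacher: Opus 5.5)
\textbf{Verdict: there is a gap.} Your setup matches the paper's: root-path prefix sums of $\delta_\ell$ and $\delta_r$, constant-time level-ancestor queries, and a monotone search justified by \cref{lem:delt-l-non-neg}. You also correctly identify that the difficulty is replacing $O(\log z)$ by $O(\log\frac nz)$. The gap is in the step that bounds the search range after contraction.

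You contract only nodes with $\delta_\ell=\delta_r=0$, so every remaining step increases $\Delta_L+\Delta_R$ by at least $1$. That only bounds the contracted level at which the \emph{first} of the two thresholds is crossed. After $O(\frac nz)$ contracted levels you have $\Delta_L>L$ or $\Delta_R>R-k$, which bounds $\min(D_L,D_R)$ and says nothing about the other value. On a stretch of the path where $\delta_\ell=0<\delta_r$, nothing is contracted and $\Delta_L$ does not grow. So $D_L$ may lie arbitrarily many contracted levels above $P$, and symmetrically for $D_R$ when $\delta_r=0<\delta_\ell$. Your galloping search then costs $O(\log D_L)$, which you have not bounded. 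Hence ``$D_L$ and $D_R$ lie within $O(\frac nz)$ contracted levels'' does not follow. (For \cref{lem:stable} only $\min(D_L,D_R+1)$ is needed, which a joint search can deliver, but the lemma as stated asks for both values.)

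\textbf{The paper's fix} is to contract separately for each weight. It keeps two level-ancestor structures over $T_{\Bad}$, one for the $\delta_\ell$-sums and one for the $\delta_r$-sums, and in each it contracts parent--child pairs of equal weight. Every remaining edge of the first structure then adds at least $1$ to $\Delta_L$. So $D_L$ lies within $L+1\le\frac nz+1$ levels, and binary search over that range costs $O(\log\frac nz)$. The same holds for $D_R$ with threshold $R-k$. With this change your argument goes through.

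\textbf{A simpler alternative:} contraction is not actually needed. Write $L(P)=S[a..b]$ and $c=sJ(b)$, and assume $|L(P)|\ge 2$ (the case $|L(P)|=1$ is trivial). The occurrence of $L(P)$ found by $\pre$ then has its endpoints in different phrases, so $\ell_c\le|L(P)|-2$. Also $r_c\le r_b=|P|-|L(P)|$. Therefore the bad phrase of $P$ has length at most $|P|-1$. By \cref{lem:no-big-phrases}, every root path of $T_{\Bad}$ has fewer than $\frac nz$ edges. A plain binary search over $B$ therefore already uses $O(\log\frac nz)$ probes.
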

\begin{proof}
We will describe a data structure with $O(\log n)$ query time.
We will later describe how the running time can be improved to $O(\log \frac{n}{z})$.

We build two level-ancestor data structure over the same forest $T_{\Bad}$.
In the first data structures, every phrase $P$ is assigned $\Delta'_L(P)$ which is the sum of $\delta_\ell$ values of phrases on the path from $P$ to the root (excluding $P$).
Formally, $\Delta'_L(P) = \sum_{i=1}^{d}\delta_\ell(\Parent{P}{i}{B})$ where $d$ is the depth of $P$ in $T_{\Bad}$.

Similarly, we build another level ancestor data structure in which each phrase $P$ is associated with the sum of $\delta_r$ values on the path to the root (excluding $P$).
Formally, $\Delta'_R(P) = \sum_{i=1}^{d}\delta_r(\Parent{P}{i}{\Bad})$.

Given $i_0$ with $\ell_{i_0} = L$ and $r_{i_0} = R$ in phrase $P$,  we query the level-ancestor data structure for the lowest ancestor $A_L$ of $P$ with $\Delta'_L(A_L) \ge \delta_\ell(P) +  \Delta'_L(P) - L$.
Similarly, we query for the lowest ancestor $A_R$ with $\Delta'_R(A_P) \ge \delta_r(P)+ \Delta'_R(P) - (R-k)$.
A more natural interpretation of $A_R$ (and $A_L$) is the lowest ancestor of $P$ such that the path from $P$ to $A_P$ (including both endpoints) has total $\delta_r$ values at least $R-k$ (resp. the path from $P$ to $A_L$ has total $\delta_{\ell}$ values at least $L$).
Clearly, the depth difference between $P$ and $A_R$ is exactly $D_R$, and the depth difference between $P$ and $A_L$ is exactly $D_L$ (and if either of $A_L,A_R$ two does not exist, the corresponding $D_L$ or $D_R$ value is $-1$).

We have found each of $D_L$ and $D_R$ and the corresponding ancestors using a single weighted ancestor query, so the query running time is $O(\log n)$.

The construction time consists of finding all $\Delta'_L(P)$ and $\Delta'_R(P)$ values and constructing two level-ancestor data structures for $T_{\Bad}$ equipped with these values.
We can compute $\delta_{\ell}(P)$ and $\delta_r(P)$ for every $P$ in $O(1)$ straightforwardly by taking index $b$ such that $L(P)=S[a..b]$, compute with $b' = sJ(b)$ and subtract the $\ell$ and $r$ values of $b$ and $b'$ (to find the $r$ and $\ell$ values of $b'$, we need access to the phrase containing $b'$ which we have as it is the parent of $P$ in $T_{\Bad}$).

Computing $\Delta'_L(P)$ and $\Delta'_R(P)$ for all $P$ can be implemented in $O(|T_{\Bad}|) = O(z)$ time by an iteration on $T_{\Bad}$, keeping track of cost of root-to-node path at every step.
The total construction time is therefore $O(z)$, as required.

\paragraph{Improving the query time to $O(\log \frac{n}{z})$.}
In the above approach, the query boils down to performing weighted ancestor queries.
We notice that the weighted ancestor queries required by the algorithm have the following property:
When we query for the weighted ancestor of $P$ with weight $X$, the weight of $P$ is at most $X+\frac{n}{z}$.

Let us justify this claim.
The algorithm queries for the first ancestor of $P$ such that the path from $P$ to $P'$ has weight at most $R-k$ or $L$, where $L$ and $R$ are $\ell$ and $r$ values of the query index $i_0$.
Due to \cref{lem:no-big-phrases}, we can assume that the phrase containing $i_0$ has length at most $\frac{n}{z}$.
It follows that $L,R \le \frac{n}{z}$ and the property we stated holds.

Assume that the weights assigned to the tree $T_{\Bad}$ never result in a child having the same weight as its parent.
Under this assumption, (and noticing that all weights are integers) we have that each child has weight at least 1 more than its parent.
It follows that the answer for our query is at unweighted hight at most $\frac{n}{z}$ above $P$.
We can find this answer by preprocessing $T_{\Bad}$ for constant time unweighted level ancestor queries~\cite{DBLP:journals/tcs/BenderF04}, and binary searching the range $[0..\frac{n}{z}]$ for the first parent with weight below a certain threshold in $O(\log \frac{n}{z})$ time. 

We can enforce the assumption that a child never has the same weight as its parent by contracting edges between same weight parent-child paris.
\end{proof}

\begin{figure}
\centering
\begin{tikzpicture}[
    scale=0.9,
    tree node/.style={circle, draw=black!60, fill=white, minimum size=6mm, inner sep=0pt, font=\small},
    active node/.style={circle, draw=black, fill=white, thick, minimum size=6mm, inner sep=0pt, font=\small},
    marked layer/.style={black, dashed, thick},
    shortcut layer/.style={ dotted, thick},
    step line/.style={black, thick, ->, >=Stealth},
    jump line/.style={black, thick, ->, >=Stealth, dashed}
]

    \draw[fill=gray!5, draw=gray!40, thick] (0,9) -- (-4,1) -- (4,1) -- cycle;
    \node[gray!70] at (0,9.4) {\Large $T_{\Bad}$};


    \node[active node] (v0) at (1, 1) {$P_0$};
    
    \node[active node] (v1) at (1, 2) {$P_1$};
    \node[active node] (v2) at (1, 3) {$P_2$}; 
    
    \node[active node] (v3) at (1, 4) {$P_3$}; 
    
    \node[active node,draw=red,very thick] (v4) at (1, 5) {$P_4$};
    \draw[->] (4,6)--(v4) node[at start,above,xshift=10]{First with total $\delta_r$ above $R-k$};

    \node[active node] (v5) at (0.8, 6) {$P_5$};
    \node[active node,draw=blue,very thick] (v_target) at (0.6, 7.0) {$P_6$};
    \draw[->] (4,8)--(v_target) node[at start,above,xshift=10]{First with total $\delta_{\ell}$ above $L$};

    \draw[step line] (v0) -- (v1);
    \draw[step line] (v1) -- (v2);
    
    \draw[step line] (v2) -- (v3);
    
    \draw[step line] (v3) -- (v4);

    \draw[step line] (v4) -- (v5);
    \draw[step line] (v5) -- (v_target);

\end{tikzpicture}
\caption{A demonstration of the stable part of the epoch (\cref{lem:stable}).
The algorithm uses ancestor queries to find the lowest ancestor of $P_0$ such that the sum of $\delta_{ell}$ values on the path $P_0$ to that ancestor is above $L$.
Similarly, the algorithm finds the lowest ancestor such that the sub om $\delta_r$ values is at least $k-R$.
The algorithm specifically takes $D$ the lowest of these two located ancestors.
According to \cref{lem:dl-dr-is-what-we-need}, the sequence of $sJ$ steps starting at $i_0$ traverse through the phrases in the path from $P_0$ to $D$, and the following step leads to an index $j$ with $r$-value below $k$.
The auxiliary information stored in $D$ and in $P_0$ allows us to find $j$ in constant time.}\label{fig:stable}
\end{figure}
We are finally ready to prove \cref{lem:stable}
\begin{proof}[Proof of \cref{lem:stable} (See \cref{fig:stable})]
We construct $T_{\Bad}$ in $O(z \log^2 \frac{n}{z})$.
We can find the bad parent of each $P\in \LZend$ in $O(\log \frac{n}{z})$ time by first constructing the data structures of \cref{lem:ds-sji} and \cref{lem:find-con-phrase} in $O(z \log^2 \frac{n}{z})$.
Then, for every $P=S[a..b]$, the bad parent of $P$ is the phrase containing $sJ(a+\floor{\frac{2}{3}|P|})$, which can be found in $O(\log\frac{n}{z})$ time.

Given $T_{\Bad}$, we construct the data structures of \cref{lem:algorithm-jump-bad-quickly} and of \cref{lem:algorithm-dl-dr}.

Upon query for $i_0$, we use \cref{lem:find-con-phrase} to find the phrase $P$ containing $i_0$ in $O(\log \frac{n}{z})$ time.
Knowing $P$, we can compute $L=\ell_{i_0}$, $R=r_{i_0}$ and the integer $k\in \K$ such that $r_{i_0} \in [k..1.5k)$.

We use the data structure of \cref{lem:algorithm-dl-dr} to find $D= \max (0, \min (D_L, D_R+1))$ in $O(\log \frac{n}{z})$ time, alongside the corresponding ancestors.
According to \cref{lem:dl-dr-is-what-we-need}, we have $r_{sJ^x(i_0)}\ge k$ for every $x\in[0..D]$ and $j= sJ^{D+1}$ has $r_j <k$.
If $D=0$, we simply return $j = sJ(i_0)$ using \cref{lem:ds-sji}.
Otherwise, we have $D = \min(D_L, D_R+1)$ and therefore $\Delta_L(D) \le L$ and $\Delta_R(D-1)\le R-k$.
It follows from \cref{lem:connecting_everything} that $i_0$ is $D$-bad.
We can therefore use the data structure of \cref{lem:algorithm-jump-bad-quickly} to obtain $j=sJ^{D+1}(i_0)$ in $O(1)$ time, and return it as a valid output.

The construction time and the space of the data structure consist of the construction and space complexities of the data structures of \cref{lem:ds-sji,lem:find-con-phrase,lem:algorithm-dl-dr,lem:algorithm-jump-bad-quickly}, which are all bounded by $O(z)$ space and $O(z \log^2 \frac{n}{z})$ time.
The query time is dominated by one query to each of the data structures of \cref{lem:algorithm-dl-dr,lem:algorithm-jump-bad-quickly}, which is dominated by $O(\log \frac{n}{z})$, as required.

\end{proof}

\section{Batched pre Computation}\label{sec:computepre}
In this section, we prove \cref{lem:computepre}.

We use the following data structure that has been designed (in a slightly more restricted variant) by Farach and Thorup~\cite{DBLP:journals/algorithmica/FarachT98}. For completeness, we provide a complete proof in \Cref{sec:shift-merge}.

\begin{restatable}{lemma}{shiftstructure}
\label{lem:split-merge-shift}
    There is a data structure storing a set $X$ of balanced search trees where each tree in $X$ has elements in $[1..n]$.
    The data structure supports the following updates:
    \begin{enumerate}
        \item Initialize $X$ as a single empty balanced search tree.
        \item Apply an insertion/deletion operation to a tree in $X$.
        \item $\Split(T,a)$ : Remove $T_1$ from $X$, and add $T_1$ over the set $T\cap [1..a]$ and $T_2$ over the set $T\cap [a+1..n]$ to $X$.
        \item $\Merge(T_1,T_2)$: Remove $T_1$ and $T_2$ from $X$, and add a tree $T$ to $X$ over the set $T_1 \cup T_2$.
        \item $\Shift(T,\delta)$: Add $\delta$ to every item in the tree $T$. 
    \end{enumerate}
    The initialization takes $O(1)$ time.
    Each update is implemented in amortized time  $O(\log n  \log N)$ where $N$ is an upper bound on the size of each tree in $X$.
\end{restatable}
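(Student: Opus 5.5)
We will realize each tree in $X$ as a balanced search tree (say a weight-balanced or red-black tree) in which every node stores a \emph{lazy offset} rather than an absolute key. The key of an element is the sum of the offsets on the path from the root to its node. With this representation, $\Shift(T,\delta)$ costs $O(1)$: we add $\delta$ to the root's offset. Every rotation, split or join pushes offsets down before changing pointers. Concretely, before touching a node we add its offset to both children and reset it to zero. This costs $O(1)$ per node visited, so lazy propagation never changes the asymptotic cost of the standard operations.

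Initialization creates an empty root and takes $O(1)$. Insertion and deletion are the usual root-to-leaf walks with pushdown, costing $O(\log N)\subseteq O(\log n\log N)$.

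$\Split(T,a)$ is the standard split of a balanced search tree at key $a$, with pushdown along the search path, in $O(\log N)$ time. This works because after pushdown the keys along the search path are explicit.

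The main obstacle is $\Merge(T_1,T_2)$, because the key ranges of the two trees may interleave arbitrarily, so a concatenation-based join is not available. I would follow Farach and Thorup and implement merge by \emph{segment interleaving}. Decompose $T_1\cup T_2$ into maximal runs of consecutive elements (in sorted order of the union) that come from the same tree; let $s$ be the number of runs. Repeatedly find the next run boundary by a search in the other tree, split off the run, and join it onto the output tree. Each step costs $O(\log N)$, so the merge costs $O(s\log N)$.

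To bound $s$ in amortized terms, use the potential
\[
\Phi=\sum_{T\in X}\sum_{x,y}\log(y-x),
\]
where the inner sum ranges over pairs $x<y$ of consecutive elements of $T$. Since all keys lie in $[1..n]$, each term is at most $\log n$. The effect of each operation on $\Phi$ is as follows.
\begin{itemize}
\item \textbf{Shift} leaves all gaps unchanged, so $\Phi$ does not change.
\item \textbf{Split} only removes one gap, so $\Phi$ does not increase.
\item \textbf{Insert and delete} change $\Phi$ by at most $O(\log n)$, since an insertion adds at most two gaps and a deletion merges two gaps into one, each of size at most $n$.
\item \textbf{Merge} lowers $\Phi$ by $\Omega(s)$, by the Farach--Thorup argument, which rests on concavity of $\log$. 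The interleaving inserts, inside gaps of each tree, elements of the other tree. A standard calculation shows that for each interleaved run boundary some gap is cut into two pieces, one of which is at most half its length. That gap's term therefore loses at least $1$. Summing over boundaries gives a drop of $\Omega(s)-O(1)$.
\end{itemize}

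Charging $O(\log N)$ time per unit of potential, merge costs amortized $O(\log N)$. Insert and delete then cost amortized $O(\log n\log N)$, which is the stated bound; split and shift remain $O(\log N)$. The step to double-check is the merge potential argument, in particular that each run boundary halves some gap and that a gap is not charged twice.
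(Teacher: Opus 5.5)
Your construction is the same as the paper's: lazy offsets on a balanced tree, a merge that repeatedly splits off a run and joins it, and a potential built from $\log(\text{gap})$ terms. The gap is that your potential has no term paying for keys that occur in both trees, and without one your merge analysis fails. The statement does not require $T_1$ and $T_2$ to be disjoint, and in the paper's use of the lemma (shifting a block of queries onto the source of a phrase and merging it back) equal keys do arise. Take $T_1=T_2=\{1,\dots,m\}$, for instance obtained from $\{1,\dots,2m\}$ by a split, a shift by $-m$, and then the merge. Every gap has length $1$, so $\Phi=0$ before and after, yet the interleaving merge advances one key per split/join step. That is $\Theta(m)$ steps, so the merge's amortized cost under $\Phi$ is $\Omega(m)$.

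The paper's potential is $\sum_i\bigl(1+\log(x_{i+1}-x_i)\bigr)$ over the distinct keys of each tree. The extra $1$ per distinct key is unchanged by split and shift, grows by $1$ on insertion, and drops by $|T_1\cap T_2|$ on merge. This drop is what absorbs the steps caused by shared keys. You can then reduce to the disjoint case: removing a key $x$ with neighbours $u<x<w$ changes the logarithmic part by exactly $\log\bigl(\frac{1}{x-u}+\frac{1}{w-x}\bigr)\le 1$.

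There are also two smaller errors in the disjoint case.

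First, the drop is $\Omega(s)-O(\log n)$, not $\Omega(s)-O(1)$. The two outermost cross gaps of the merged tree are new, and nothing pays for them: merging $\{1\}$ with $\{n\}$ raises $\Phi$ by $\log(n-1)$. So merge is amortized $O(\log n\log N)$, not $O(\log N)$, which still meets the lemma.

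Second, ``that gap's term loses at least $1$'' ignores the new term contributed by the other piece. A clean accounting goes as follows. Suppose the gap $G_t$ of one tree spans the run $R_{t+1}$ of the other, and $c_t,c_{t+1}$ are the cross gaps on either side of $R_{t+1}$. Then $G_t\ge c_t+c_{t+1}\ge 2\sqrt{c_tc_{t+1}}$, so $\log G_t\ge 1+\frac12(\log c_t+\log c_{t+1})$. Summing over the $s-2$ spanning gaps charges each inner cross gap half to each of its two neighbouring spanning gaps. This settles your double-charging worry and gives a drop of at least $s-2-\log n$.

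The paper gets the same effect differently. It writes the potential via each key's distance to its successor, or equivalently to its predecessor. Whenever a run is inserted into a gap of the other tree, one of these two distances at an endpoint of that gap is at least halved, and no other such distance grows.
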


\begin{figure}[htbp]
    \centering
    \tikzset{every picture/.style={line width=0.75pt}} 
\usetikzlibrary{patterns}

\begin{tikzpicture}[x=1pt,y=1pt,yscale=-1,xscale=1]
\pgfmathsetmacro{\blockheight}{25}
\pgfmathsetmacro{\stringheight}{20}
\pgfmathsetmacro{\height}{20}

\pgfmathsetmacro{\width}{10}
\pgfmathsetmacro{\x}{0} 

\draw (0,0) rectangle (320,\blockheight);
\draw(250,5) rectangle node{$P_t$} (320,\blockheight-5);

\draw(70,5) rectangle node{$P_{t'}$} (140,\blockheight-5);

\draw[dashed] (320,-15) -- (320,15+\blockheight);

\draw[{Circle}-{|}] (10,-5) -- (30,-5);

\draw[{Circle}-{|}] (35,-5) -- (60,-5);

\draw[{Circle}-{|}] (90,-5) -- (130,-5);

\draw[{Circle}-{|}] (140,-5) -- (165,-5);

\draw[{Circle}-{|}] (190,-5) -- (220,-5);

\draw[{Circle}-{|}] (235,-5) -- (280,-5);

\draw[{Circle}-{|}] (290,-5) -- (310,-5);

\draw[{Circle}-{|}] (35,-10) -- (60,-10);

\draw[{Circle}-{|}] (70,-10) -- (110,-10);

\draw[{Circle}-{|}] (150,-10) -- (200,-10);

\draw[{Circle}-{|}] (255,-10) -- (270,-10);

\end{tikzpicture}
    \label{fig:top_tikz}
    
    \vspace{0.8cm} 
    \tikzset{every picture/.style={line width=0.75pt}} 
\usetikzlibrary{patterns}

\begin{tikzpicture}[x=1pt,y=1pt,yscale=-1,xscale=1]
\pgfmathsetmacro{\blockheight}{25}
\pgfmathsetmacro{\stringheight}{20}
\pgfmathsetmacro{\height}{20}

\pgfmathsetmacro{\width}{10}
\pgfmathsetmacro{\x}{0} 

\draw (0,0) rectangle (250,\blockheight);
\draw(250,50) rectangle node{$P_t$} (320,40+\blockheight);

\draw (0,0) rectangle (250,\blockheight);
\draw[dotted] (70,50) rectangle (140,40+\blockheight);

\draw[{Circle}-{|},dotted] (75,40) -- (90,40);
\draw[{Circle}-{|},dotted] (110,40) -- (130,40);
\draw[->](285,40+\blockheight) to[out=90,in=90,looseness=0.4] (105,40+\blockheight);

\draw(70,5)[dotted] rectangle node{$P_{t'}$} (140,\blockheight-5);

\draw[dashed] (250,-15) -- (250,15+\blockheight);

\draw[{Circle}-{|}] (10,-5) -- (30,-5);

\draw[{Circle}-{|}] (35,-5) -- (60,-5);

\draw[{Circle}-{|}] (90,-5) -- (130,-5);

\draw[{Circle}-{|}] (140,-5) -- (165,-5);

\draw[{Circle}-{|}] (190,-5) -- (220,-5);

\draw[{Circle}-{|}] (235,-5) -- (280,-5);

\draw[{Circle}-{|}] (290,40) -- (310,40);

\draw[{Circle}-{|}] (35,-10) -- (60,-10);

\draw[{Circle}-{|}] (70,-10) -- (110,-10);

\draw[{Circle}-{|}] (150,-10) -- (200,-10);

\draw[{Circle}-{|}] (255,40) -- (270,40);

\end{tikzpicture}
    \caption{An illustration of the algorithm of \cref{lem:computepre}.
    Every vertex $v$ in the tree is represented as an interval with a dot end representing $\val(v)$ and a line end representing $E(v)$.
    At iteration $t$, we take all the intervals with dots contained in $P_t$ and shift them to the source of $P_t$ (displayed as $P_{t'}$).
    Then, every remaining vertex with $E$ value that exceeds into $P_t$ is removed and reported (in the figure, there is only one such vertex).
    }
    \label{fig:pre-batch}
\end{figure}

We first present an algorithm with $O(z\log^2n)$ running time.
We will later show how to replace the $\log n$ factors with $\log \frac{n}{z}$.
Denote $I = (b_1,e_1),(b_2,e_2),\ldots , (b_{|I|},e_{|I|})$.
Denote $B = \{ b_i \mid i \in [|I|] \}$.
Let $Z = \{ a_i \mid i\in [z] \}$ be the starting indices of the phrases of the Lempel-Ziv End partition $\LZend$ of $S$.
We define a node for every query in $I$ as follows.
For every $(b_i,e_i)\in I$, we define a node $v_i$ with initial value $\val(v_i) =b_i$.
The node $v_i$ contains as auxiliary information $E(v_i)= e_i$ and $I(v_i)=i$.
We initialize a tree data of \cref{lem:split-merge-shift} and add to $X$ a balanced tree $T$ containing all the vertices $\{v_i \mid i\in [|I|]\}$. 
The tree is sorted in increasing order of $\val(v)$.

For every node $v\in T$, we store as auxiliary information the value $M(v)$ which is the maximal $E(v)$ value of a node in the subtree rooted at $v$. 

We maintain the auxiliary information as the tree is manipulated by the operations of the data structure.
Throughout the algorithm, when a node $v$ is in a tree $T'$ that is applied a $\Shift(T,\delta)$ operation, we consider $E(v)$ to be shifted by the same amount.
However, $I(v)$ does not change throughout the running time of the algorithm.

\paragraph{The algorithm (See \cref{fig:pre-batch}) }The algorithm initializes an array $A$ of size $|I|$ intended to store the answers to each query.
Initially, $A[i] = \nil$ for every $i\in [|I|]$.
We will show that when the algorithm terminates, $A[i]=\pre(b_i,e_i)$.
Throughout the algorithm, we will remove nodes from $T$.
When we remove a node $v$ from $T$, we assign $A[I(v)] = \val(v)$.

The algorithm consists of $z$ iterations indexed in decreasing order from $z$ to $1$.
We maintain the invariant that at the start of every iteration, $X$ contains a single balanced search tree.
Initially, $T= T_z$ is the tree described above.
For every $t\in [1..z]$, in decreasing order, the algorithm applies the following.
\begin{enumerate}
    \item Let $T$ be the single tree in $X$ at the start of iteration $t$. Let $P_t = S[a_t..b_t]$ be the $t$-th phrase in $\LZend$ with source $S[a'_t..b'_t]$, and let $\delta_t = b_t - b'_t$. 
    If $P_t$ does not have a source, let $\delta_t = 0$.
    \item If $\delta_t \neq 0$:
    \begin{enumerate}
        \item \textbf{Split:} split $T$ into two trees $T^1= T \cap [1..a_t-1]$ and $T^2= T \cap [a_t..b_t]$ using a $\Split(T,a_{t-1})$ operation.
        \item \textbf{Shift:} Apply $ \Shift(T^2, -\delta_t)$ to obtain $T^2_\delta$.
        \item \textbf{Merge:} merge $T^1$ and $T^2_\delta$ into $T$ using $\Merge(T^1,T^2_{\delta})$  
    \end{enumerate}
    \item \textbf{Deletion:} as long as there is a node $v\in T$ with $E(v)\ge a_t$, find $v$ and remove it from $T$. 
\end{enumerate}

After applying the subroutine for $t=1$, output $A$.

In each iteration, the algorithm applies $O(1)$ operations of type $\Split$, $\Shift$, and $\Merge$ to the data structure.
Additionally, the subroutine deletes nodes, with each node deleted at most once throughout the running time of the algorithm.
Navigating to nodes with maximal $E$ value can be done in $O(\log n)$ using the auxiliary $M$ information. 
Since we apply $|I|$ deletions and $z$ operations of types $\Split,\Shift,\Merge$, the total running time is $O((z+|I|)\log^2n)$.

We proceed to show that the algorithm correctly computes all $\pre$ values.
For every $t\in [z]$, denote as $T_t$ the single tree $T$ stored in $X$ at the start of the $t$-th iteration of the algorithm (where the $z$-indexed operation is the first one and the $1$-indexed iteration is the last one)

We start by proving the following invariant.
\begin{lemma}\label{lem:compute-pre-all-below-b}
    Let $P_t = S[a_t..b_t]$ be the $t$-th phrase in $\LZend$.
    At the start of the $t$'th operation, every node in $T_t$ has $1\le \val(v) \le E(v)\le b_t$. 
\end{lemma}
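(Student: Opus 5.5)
We argue by downward induction on $t$, from $t=z$ to $t=1$. Throughout we use the fact that a $\Shift$ moves $\val(v)$ and $E(v)$ by the same amount, so $E(v)-\val(v)=e_{I(v)}-b_{I(v)}\ge 0$ holds at all times. Hence $\val(v)\le E(v)$ is automatic for every node at every moment, and the work is to show $\val(v)\ge 1$ and $E(v)\le b_t$.

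\textbf{Base case $t=z$.} Every node satisfies $\val(v)=b_i\ge 1$ and $E(v)=e_i\le n=b_z$, because all query pairs lie in $[1..n]^2$.

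\textbf{Step from $t$ to $t-1$.} Assume the invariant holds at the start of iteration $t$, and fix a node $v$ that survives to the start of iteration $t-1$. Since $v$ survived the deletion step of iteration $t$, its $E$-value after that step satisfies $E(v)<a_t$, so $E(v)\le a_t-1=b_{t-1}$. It remains to show $\val(v)\ge 1$ after the possible shift. If $v$ was not in the shifted part $T^2$, its value did not change, so $\val(v)\ge 1$ holds by the hypothesis. If $v$ was shifted, then before the shift we had $a_t\le \val(v)\le E(v)\le b_t$, where the upper bound comes from the hypothesis. After subtracting $\delta_t=b_t-b'_t$ we get $b'_t-|P_t|+1\le \val(v)\le E(v)\le b'_t$. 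The occurrence $S(b'_t-|P_t|..b'_t]$ lies inside $[1..n]$ by the LZ-End definition, so $\val(v)\ge b'_t-|P_t|+1\ge 1$.

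The only delicate point is the split. The shifted tree must contain exactly the nodes whose $\val$ lies in $[a_t..b_t]$, so the split must be taken at $a_t-1$ (as $T^1=T\cap[1..a_t-1]$ indicates), and no node may have $\val>b_t$. The second condition is exactly the invariant at time $t$. Thus every shifted node has its whole interval inside $P_t$, which is what lets the bound transfer to the source.
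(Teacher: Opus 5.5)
Your proof is correct and follows essentially the same route as the paper: the base case comes from the input range, $E(v)\le b_t$ comes from the deletion step of the previous iteration, and $\val(v)\ge 1$ holds because shifted nodes lie in $[a_t..b_t]$ and therefore land inside the source occurrence of $P_t$. You also make explicit two points the paper leaves implicit: that $\val(v)\le E(v)$ is preserved because shifts move both values equally, and that the split must be taken at $a_t-1$ (the paper's $\Split(T,a_{t-1})$ is a typo).
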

\begin{proof}
    The claim is correct for $t=z$, as $b_z = n$ and in the initial tree $T_z$ every node $v$ has $[\val(v)..E(v)] \in I$.
    Since $I$ contains only intervals in $[1..n]$, the claim follows.
    For $t<z$, the $(t+1)$'th iteration concludes by deleting all vertices $v$ with $E(v) \ge a_{t+1} = b_{t}+1$.
    It is therefore clearly holds that all vertices at the start of the $t$'th iteration have $E(v)\le b_t$.

    It remains to show that we never apply a shift that results in a node $v'$ with $\val(v')<1$.
    Consider a node $v$ that is shifted in iteration $t$, and denote as $v'$ the resulting node.
    As we have shown, it holds that $E(v) \le b_t$.
    It follows form the fact that $v$ was shifted that it is placed in $T^2$ when the $t$-th iteration splits $T$ into $T^1$ and $T^2$.
    This can only happen if $\val(v) \ge a_t$.
    In conclusion, we have shown $a_t \le \val(v) \le E(v) \le b_t$.
    Then, $\val(v)$ and $E(v)$ are reduced by $\delta_t = b_t - b'_t$ where $S[a'_t..b'_t]$ is the source of $P=S[a_t..b_t]$.
    the vertex $v'$ is the version of $v$ after the shift.
    It follows from the above that $\val(v') = \val(v) - \delta_t\ge a_t - (b_t - b'_t) = b'_t - (b_t-a_t)$.
    Since $S[b'_t - (b_t-a_t) ..b'_t]$ is an occurrence of $P_t$ in $S$, we have that $val(v') \ge b'_t - (b_t-a_t) \ge 1$ as it is a proper index in $S$. 
\end{proof}

We now prove the following invariant, intuitively stating that when we change the value of a node throughout the algorithm, the newly created node represents an equivalent query.

\begin{lemma}\label{lem:compute-pre-equivelent-query}
    Let $v$ be a vertex in $T$ at any time throughout the algorithm.
    It holds that $\pre(b_{I(v)},e_{I(v)}) = \pre(\val(v),E(v))$.
\end{lemma}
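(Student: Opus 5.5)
The plan is an induction on the sequence of operations that the algorithm applies to $T$. The quantities $\val(v)$ and $E(v)$ of a node change only through a $\Shift$. $\Split$ and $\Merge$ leave them untouched, and a deleted node is no longer in $T$. So it suffices to show two things. First, the invariant holds at initialization. Second, every $\Shift$ preserves it. At initialization, $\val(v_i)=b_i$ and $E(v_i)=e_i$ with $I(v_i)=i$, so the equality $\pre(b_{I(v)},e_{I(v)})=\pre(\val(v),E(v))$ is trivial.

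For the inductive step, consider iteration $t$ with $\delta_t\neq 0$, and a node $v$ that lies in $T^2$ and is therefore shifted. By \cref{lem:compute-pre-all-below-b}, at the start of iteration $t$ we have $E(v)\le b_t$. Membership in $T^2$ means $\val(v)\ge a_t$. Hence $[\val(v)..E(v)]\subseteq[a_t..b_t]$, so both endpoints lie in the same phrase $P_t$. Since $\delta_t\neq 0$, $P_t$ has a source; a terminal phrase gets $\delta_t=0$. The definition of $\pre$ therefore gives
\[\pre(\val(v),E(v))=\pre(J(\val(v)),J(E(v)))=\pre(\val(v)-\delta_t,E(v)-\delta_t),\]
which is precisely $\pre$ evaluated at the shifted node. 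Combining this with the inductive hypothesis $\pre(b_{I(v)},e_{I(v)})=\pre(\val(v),E(v))$ yields the claim for the updated node. Nodes in $T^1$ are not shifted, so the invariant persists for them trivially.

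The main obstacle is making sure that a shift is only ever applied to a node whose interval is entirely inside the current phrase. Otherwise the recursion in the definition of $\pre$ would not apply, since $\pre(b,e)=b$ once $b$ and $e$ are separated by a boundary. Containment is exactly what \cref{lem:compute-pre-all-below-b} gives us on the right side. We rely on the deletions in iteration $t+1$ having removed every node with $E(v)\ge a_{t+1}=b_t+1$. On the left side we rely on the split at $a_t$. Care is needed on one point: the split operation as written in the algorithm must separate at $a_t$, that is $T\cap[1..a_t-1]$ versus $T\cap[a_t..b_t]$. I would take this as the intended semantics.

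The case $\delta_t=0$ needs no argument, because no shift occurs.
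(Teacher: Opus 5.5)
Your proof is correct and follows essentially the same route as the paper. Both argue by induction over the algorithm's operations, and the only nontrivial step is that a shifted node satisfies $a_t \le \val(v) \le E(v) \le b_t$: the right end comes from \cref{lem:compute-pre-all-below-b} and the left end from membership in $T^2$. Hence the recursive clause of $\pre$ applies with shift $\delta_t$. Your extra remarks are also correct clarifications of details the paper leaves implicit: $\delta_t \neq 0$ forces $P_t$ to have a source, and the split must be taken at $a_t-1$ rather than at the literally written $\Split(T,a_{t-1})$.
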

\begin{proof} 
    At the start of the algorithm, the claim immediately follows, as in the initial tree for every $i\in [|I|]$ we have $B(v_i)  =b_i$ and $E(v_i) = e_i$.

    Consider an iteration $t \in[2..z]$ in which the $\val$ and $E$ values of a vertex $v$ with $I(v) = i$ change.
    Let $v'$ be the modified $v$ with the new $\val$ and $E$ values.
    Denote $b = var(v)$, $e=E(b)$, $b'= var(v')$ and $e'= E(v')$.
    Let $P_t = S[a_t..b_t]$ be the $t$'th phrase in $\LZend$ and let $S[a'_t..b'_t]$ be the source of $P_t$.
    Notice that all vertices that have their value change in the $t$-th iteration have both their $\val$ and $E$ decreased by $\delta_t = b_t - b'_t$.
    So in particular, we have $b' = b -\delta_t$ and $e' = e-\delta_t$.
    
    Due to \cref{lem:compute-pre-all-below-b}, we have that $e \le b_t$.
    Since the value of $v$ was changed by the subroutine, it must be the case that $v$ is in the tree $T^2$ created in the $t$'th iteration, which contains only vertices with $\val$ value at least $a_t$.
    In conclusion, we have shown that $a_t \le b \le e\le b_t$.
    This means that $b$ and $e$ are both contained in the same phrase $P_t$.
    Therefore, by the definition of $\pre$, we have $\pre(b,e) = \pre(b - \delta_t,e-\delta_t) = \pre(b',e')$.
    By the induction hypothesis, we have $\pre(b,e) = \pre(b_i,e_i)$.
    It follows that $\pre(b',e') = \pre(b_i,e_i)$, as required.
\end{proof}

The last invariant we need is that when the algorithm removes a node, the value assigned in $A$ is the correct $\pre$ value for the query represented by this node.

\begin{lemma}\label{lem:compute-pre-answer-query}
    When a node $v$ is removed from $T$, it holds that $\pre(\val(v),E(v))= \val(v)$.
\end{lemma}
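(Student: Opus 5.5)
The plan is to identify the moment at which $v$ is removed, and then check directly from the definition of $\pre$ that $\pre(\val(v),E(v))$ stops at $\val(v)$. Removals occur only in the Deletion step of some iteration $t$. There the removed node satisfies $E(v)\ge a_t$, and the Split/Shift/Merge of iteration $t$ have already been applied.

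First I will fix the relevant bounds. By \cref{lem:compute-pre-all-below-b}, at the start of iteration $t$ every node has $\val(v)\le E(v)\le b_t$. Now look at the split performed at $a_t$ (when $\delta_t\neq 0$):
\begin{itemize}
\item nodes with $\val\ge a_t$ lie in $P_t$ entirely, and are shifted by $-\delta_t$;
\item after the shift, their $E$ value is at most $b'_t=b_t-\delta_t<a_t$, since the source ends strictly before $P_t$ starts.
\end{itemize}
Hence a node removed in iteration $t$ must be an unshifted node of $T^1$, so $\val(v)\le a_t-1$. Deletions do not change values, so this holds for every node removed during the Deletion loop.

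Next I treat the degenerate case $\delta_t=0$, i.e.\ $P_t$ is a single character with no source. A node may then have $a_t\le\val(v)\le E(v)\le b_t=a_t$, so $\val(v)=E(v)=a_t$. By definition of $\pre$, because the containing phrase has no source, $\pre(\val(v),E(v))=\val(v)$, as required.

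In the main case we have $\val(v)\le a_t-1<a_t\le E(v)$. So $\val(v)$ and $E(v)$ lie in different phrases, separated by the boundary $a_t$. The definition of $\pre$ then gives $\pre(\val(v),E(v))=\val(v)$ directly.

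The main obstacle is bookkeeping at iteration boundaries. We must be sure that $E(v)\le b_t$ really holds at the moment of deletion, which is exactly \cref{lem:compute-pre-all-below-b}. We must also confirm that shifted nodes can never satisfy the deletion condition in the same iteration; I will argue this through $b'_t<a_t$, using that sources are earlier phrases. There is a minor typo in the split parameter ($a_{t-1}$ versus $a_t-1$), and I will read it as splitting at $a_t-1$.
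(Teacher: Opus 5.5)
Your proof is correct and follows the paper's approach. You locate the iteration $t$ in which $v$ is deleted. You then show that either $\val(v)=E(v)=a_t$ lies in a sourceless single-character phrase, or $\val(v)\le a_t-1<a_t\le E(v)$, so the two endpoints are separated by the phrase boundary at $a_t$. The definition of $\pre$ then finishes the proof.

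Your version is actually more complete than the paper's written proof. The paper only treats the case where $P_t$ has no source, then asserts that ``both cases'' are covered. Your observation that every shifted node ends with $E\le b_t-\delta_t=b'_t<a_t$ means only unshifted nodes, which have $\val\le a_t-1$, can be deleted. That is exactly the argument needed for the case where $P_t$ has a source. Your reading of the split point as $a_t-1$ is also the right one.
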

\begin{proof}
    Let $t$ be the iteration in which $v$ was removed.
    Let $P_t= S[a_t..b_t]$ be the $t$'th phrase in $\LZend$.
    If $P_t$ does not have a source, we have $P=[a_t]$.
    In this case, the fact that $v$ is removed implies $E(v) \ge (a_t)$.
    From \cref{lem:compute-pre-all-below-b} we have that $E(v)<b_t=a_t$, which implies $E(v) = a_t$.
    We also have from \cref{lem:compute-pre-all-below-b} that $\val(v) \le E(v)$.

    If $\val(v) = E(v)$, then we have that $[\val(v)..E(v)] = [a_t..a_t]$ is completely contained in the phrase $P_t$ which has no source.
    
    Otherwise, we have $\val(v) < E(v) =a_t$, which means that $\val(v)$ is not in the phrase $P_t$, while $E(v)$ is in the phrase $P_t$.  
    It follows that $\val(v)$ and $E(v)$ are in different phrases.
    
    We have shown that in both cases, $\val(v)$ and $E(v)$ are not contained in the same phrase which has a source. 
    Therefore, $\pre(\val(v),E(v))=\val(v)$ from the definition of $\pre$.
\end{proof}

Let us conclude the correctness of the algorithm.
In every iteration, we either delete vertices or change their value.
When the algorithm concludes, all vertices are deleted since the first phrase in $\LZend$ must be $S[1..1]$.
Therefore, at iteration $1$ we will remove all vertices $v$ with $\val(v) \ge 1$, which is satisfied for all remaining vertices by \cref{lem:compute-pre-all-below-b}.

When the vertex $v$ with $I(v)=i$ is deleted, we set $A[i] = \val(v)$.
By \cref{lem:compute-pre-answer-query}, it holds that $\val(v) = \pre(\val(v),E(v))$.
By \cref{lem:compute-pre-equivelent-query}, it holds that $\val(v) = \pre(b_i,e_i)$, as required.

\paragraph{Substituting $\log n$ factors with $\log \frac{n}{z}$.}
We now describe how to implement the above approach, replacing the $\log n$ factors with $\log \frac{n}{z}$ factors.
We uniformly partition the domain $[1..n]$ into $z$ intervals $U_i = (i \cdot \frac{n}{z}..(i+1)\frac{n}{z}]$, each of size $\frac{n}{z}$.
Instead of maintaining one tree containing all queries, we maintain a separate tree $T_i$ maintaining queries $(b,e)$ with $b\in U_i$.
The queries are sorted according to the primary order $b$, and the secondary order $e$.

When we implement a split, the split interval corresponds to some phrase $P_i = S[a_i..b_i]$.
Due to \cref{lem:no-big-phrases}, we have that $|P_i| \le \frac{n}{z}$, which means that $[a_i..b_i]$ spans at most two intervals $U_x$,$U_{x+1}$.
We partition the range $[a_i.,b_i]$ into at most two sub-ranges, and remove each of them from the corresponding $U_i$.
Similarly, when we merge the shifted version of $[a_i..b_i]$ back intro a tree, we first find in constant time the at most two intervals $U_x,U_{x+1}$ intersecting the (shifted) range, split $[a_i..b_i]$ accordingly and merge each piece into the corresponding $U_x$,$U_{x+1}$.
Finding the $U_x$ intersecting $[a_i..b_i]$ in both of these cases can be done in constant time.

We would maintain the invariant that every tree is of size $O((\frac{n}{z})^2)$, so each search tree operation costs $O(\log \frac{n}{z})$.
Since $T_i$ contains only queries $(b,e)$ with $b\in U_i$, the number of distinct $b$ values in $T_i$ is bounded by $\frac{n}{z}$.
Notice that the distance between $b$ and $e$ never changes throughout the lifetime of a node.
Since initially we have $e-b \le \frac{n}{z}$ for every query, we have that this inequality holds for every query throughout the running time of the algorithm.
Therefore, for every $b$ there are at most $\frac{n}{z}$ possible values of $e$, and the possible distinct $(b,e)$ pairs that can be in a tree $T_i$ is bounded by $(\frac{n}{z})^2$.
In $T_i$, every node with values $(b,e)$ stores, as auxiliary information, a linked list of all $i$ such that the query $q_i$ in the initial input currently corresponds to this node.
Initially, each node is attached with a single $i$, but we may combine nodes with identical $b$ and $e$ values throughout the running time of the algorithm, connecting their corresponding lists.

Whenever the size of a tree $T_i$ exceeds $2 (\frac{n}{z})^2$, we iterate $T_i$ and merge every duplication we find in $O(1)$ time.
Since $|T_i| \ge (\frac{n}{z})^2$ and we have less than $(\frac{n}{z})^2$ distinct values of $T_i$ that may belong in $T_i$, we are guaranteed to find at least $(\frac{n}{z})^2$ duplications in this process.
It follows that we find a duplication in amortized constant time.
After applying the above, the size of $T_i$ is again bounded by $(\frac{n}{z})^2$.

Since every merge increase the total number of queries across all trees by 1, we will have at most $z$ merges throughout the running time of the algorithm, each found and implemented in $O(1)$ time.

Another change that the partition into intervals introduces to the algorithm is in the tests for the maximal value $e$ across all queries in our data structures.
We would not like to check all $T_i$'s individually, and even storing a heap over all of $T_i$'s maximal elements is too costly (This will introduce an extra cost of $O(\log z)$).
Instead, recall that every $(b,e)$ pair has $e\in [b..b+\frac{n}{z}]$.
We also recall that at every point throughout the running time of the algorithm, we have some prefix of the domain still containing queries, so $U_1,U_2,\ldots U_x$ are active, and every $U_y$ with $y>x$ is empty.
Together, these two properties imply that the maximal $e$ values is either in $T_x$ or in $T_{x-1}$, so it is sufficient to query both $T_i$ and $T_{i-1}$ for their maximal value in $O(1)$ time.

\section{Data Structure for Finding a Bad Parent}\label{sec:kbad-parent-ds}
In this section, we prove \cref{lem:kbad-parent-ds}.
We will describe a data structure with running time $O(z (\log^2 \frac{n}{z} + (\log \log n)^2))$.
Then, we will discuss how to replace the $\log\log n$ factor with $\log\log \frac{n}{z}$.
In this section, $k$-bad parent corresponds to the definition presented in \cref{sec:naive} (following \cref{lem:bad-jumps-same-phrase}), i.e., the $k$-bad parent of $P$ is the parent of $P$ in $T_k$.

Our data structure consist of two main component.
The primary ingredient is a mechanism that allows us, given a query phrase $P$ and $k\in \K$, to retrieve a set of $O(1)$ candidates for being the $k$-bad parent of $P$ in $O(1)$ time.
The second, complementary component is a verification algorithm, allowing us to check if a given phrase is the $k$-bad parent of $P$ in constant time.

We start by presenting the verification algorithm.
The main part of the verification algorithm is the following characterization of the $k$-bad parent.

\begin{figure}
    \centering
    \tikzset{every picture/.style={line width=0.75pt}} 
\usetikzlibrary{patterns}

\begin{tikzpicture}[x=1pt,y=1pt,yscale=-1,xscale=1]
\pgfmathsetmacro{\blockheight}{25}
\pgfmathsetmacro{\stringheight}{20}
\pgfmathsetmacro{\height}{20}

\pgfmathsetmacro{\width}{10}
\pgfmathsetmacro{\x}{0} 

\draw (20,0) rectangle (180,\blockheight);

\draw[densely dotted](20,50) rectangle (180, \blockheight+50);

\draw(35,0) rectangle (45,\blockheight);
\node[anchor=north,font=\small] at(40,5) {$i'$};

\draw(85,0) rectangle (95,\blockheight);
\node[anchor=north,font=\tiny] at(90,-15) {$b-k$};

\draw[->,red] (90,\blockheight) -- (90, 50) node[midway,right]{J};

\draw(85,50) rectangle (95,50+\blockheight);
\node[anchor=north,font=\tiny] at(90,50+\blockheight) {$J(b-k)$};

\draw(35,50) rectangle (45,50+\blockheight);
\node[anchor=north,font=\small] at(40,55) {$j'$};

\node[anchor=north,font=\tiny] at(25,-12) {$r_{i'}<1.5k$};
\draw[->] (45,-5) -- (180,-5);

\draw[<->] (95,15) -- (180,15) node[midway,above]{$k$};

\draw[dashed, line width = 0.7] (60,35) -- (60,65+\blockheight);

\draw[dashed, line width = 0.7] (180 ,-15) -- (180 ,0);

\draw[<->] (45,15) -- (85,15);
\node[anchor=north,font=\tiny] at(65,4) {$< 0.5k$};

\node[anchor=north,font=\tiny] at(40,40) {$r_{j'}$};
\draw[->] (45,45) -- (60,45);

\node at (10,12) {$P$};

\node at (-5,62) {$\textsf{source}(P)$};

\end{tikzpicture}
    \caption{An illustration of the first part in the proof of \cref{lem:bad-k-parent-properties}.
    Assuming to the contrary that the phrase containing $J(b-k)$ is not bad, and that $i'<b-k$ is $k$-bad, we have that there is a phrase boundary separating $j'= J(i')$ and $J(b-k)$. 
    Therefore, the $r$-value of $i'$ is less than $k$, a contradiction.
    }\label{fig:FindBadParent1}
\end{figure}

\begin{lemma}\label{lem:bad-k-parent-properties}
    Let $P=S[a..b]$ be a phrase and let $k\in \K$.
    Let $P'=S[a'..b']$ be the $k$-bad parent of $P$ and let $i^* = \max(a,b-1.5k+1,a'+\delta_P)$.
    The following properties hold.
    \begin{enumerate}
        \item $P'$ contains the index $J(b-k)$.
        \item $r_{i^*} \ge k$ and $r_{J(i^*)} \ge k$.
    \end{enumerate}
\end{lemma}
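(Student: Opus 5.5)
Both parts rest on one geometric fact, the same one used in \cref{lem:bad-jumps-same-phrase}. Indices of $P$ with $r$-value in $[k..1.5k)$ form the window $W=[\max(a,b-1.5k+1)..b-k]$, whose length is less than $0.5k$. Since $J$ translates every index of $P$ by the same $\delta_P$, the image $J(W)$ is a window of length less than $0.5k$. Its right end $J(b-k)$ satisfies $r_{J(b-k)}\le b'_{\mathrm{src}}-J(b-k)=k$, where $b'_{\mathrm{src}}$ is the right end of the source of $P$. Throughout, the assumption is that $P$ has a $k$-bad parent $P'$, i.e.\ there is some $k$-bad jump $i\in W$ with $J(i)\in P'$ and $r_{J(i)}\ge k$.

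\textbf{Part 1.} Suppose $J(b-k)\notin P'$. We have $i\le b-k$, and $J(i)\in P'$ while $J(b-k)$ lies in a different phrase. Hence there is a phrase boundary strictly between them, with $J(i)$ on its left. Then $r_{J(i)}\le J(b-k)-J(i)=(b-k)-i<0.5k$, which contradicts $r_{J(i)}\ge k$. If $i=b-k$, the claim is immediate. This is the argument of \cref{fig:FindBadParent1}.

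\textbf{Part 2.} First check that $i^*\in W$. The lower bound $i^*\ge\max(a,b-1.5k+1)$ holds by definition. For the upper bound, $a'+\delta_P=J^{-1}(a')$ is the preimage of the left end of $P'$. Since $J(i)\in P'$ for the bad jump $i\in W$, we get $a'+\delta_P\le i\le b-k$. Therefore $r_{i^*}=b-i^*\ge k$.

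Next, $J(i^*)\ge a'$ by the choice of $i^*$, and $J(i^*)\le J(b-k)\in P'$ by Part 1. So $J(i^*)\in[a'..J(b-k)]\subseteq P'$, and $r_{J(i^*)}=b'-J(i^*)\ge b'-J(b-k)=r_{J(b-k)}$.

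\textbf{Main obstacle.} It remains to show $r_{J(b-k)}\ge k$. Let $i$ be a bad jump; then $i\le b-k$, and $J(i)$ and $J(b-k)$ lie in the same phrase $P'$. Hence $r_{J(b-k)}=r_{J(i)}-(b-k-i)\ge k-(b-k-i)$, which by itself only gives a bound of about $k-0.5k$. I would close the gap using the upper bound $r_{J(b-k)}\le k$ together with the fact that $P'$ ends at or before $b'_{\mathrm{src}}$. Since $J(i)\le J(b-k)\le b'\le b'_{\mathrm{src}}$, we get $r_{J(i)}=b'-J(i)\le b'_{\mathrm{src}}-J(i)=r_i$, with equality exactly when $b'=b'_{\mathrm{src}}$. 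I have not yet established that this forces $b'-J(b-k)\ge k$, i.e.\ that $b'=b'_{\mathrm{src}}$.

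If $b'=b'_{\mathrm{src}}$ does not follow, I would argue monotonically instead. Within a single phrase, $r$ decreases as the index increases. Since $J(i^*)\le J(i)$ for the bad jump $i$, we have $r_{J(i^*)}\ge r_{J(i)}\ge k$ directly, without passing through $J(b-k)$. This requires $i^*\le i$, which holds because $i^*$ is the smallest index of $W$ whose image lies in $P'$, and every bad jump is such an index. This fallback is the route I expect to succeed.
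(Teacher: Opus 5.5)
Your Part 1 and your fallback argument for Part 2 are correct and match the paper's proof: every $k$-bad jump $i$ satisfies the three constraints defining $i^*$, so $i^*\le i$ and $J(i^*)\in[a'..J(i)]\subseteq P'$, which gives $r_{i^*}\ge r_i\ge k$ and $r_{J(i^*)}\ge r_{J(i)}\ge k$ because $r$ decreases monotonically inside a phrase. Delete the ``main obstacle'' route rather than leaving it open, since $r_{J(b-k)}\ge k$ is false in general: $P'$ need not end at $b'_{\mathrm{src}}$ (for example, the source phrase of $P$ may be a single character while the bad jumps land in a long phrase just before it), and whenever $b'<b'_{\mathrm{src}}$ you get $r_{J(b-k)}=b'-(b'_{\mathrm{src}}-k)<k$.
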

\begin{proof}
    We start by proving the first statement (See \cref{fig:FindBadParent1}).
    Assume to the contrary that $P'=S[a'..b']$ exists and $J(b-k)\notin [a'..b']$.
    Notice that $J(b-k) = b-k -\delta_P$.
    Let $i'\in [a..b]$ be a bad jump with $r_{i'}\in [k..1.5k)$.
    In other words, $i'= b-\hat k$ for some $\hat k \in [k..1.5k)$.
    Denote $ j' = J(i')=b- \hat k -\delta_P$.
    Since $i'$ is a bad jump with $r_{i'} \in [k..1.5k)$, we have that $j'$ is in $P'$(\cref{lem:bad-jumps-same-phrase}).
    Since $ j'$ and $J(b-k)$ are not in the same phrase, there is a phrase boundary between them.
    This guarantees that $r_{j'} \le J(b-k) -  j' = b- k -\delta - (b'-\hat k-\delta) = \hat k - k < k$.
    A contradiction to $i'$ being a bad jump.

    We now prove that the second statement (See \cref{fig:FindBadParent2}).
    Notice that $i^* = \max(a,b-1.5k+1,a'+\delta_P )$ is exactly the leftmost index satisfying:
    \begin{enumerate}
        \item $i^* \in [a..b]$,
        \item $J(i^*) \in [a'..b']$, and
        \item $r_{i^*} \in [k..1.5k)$.
    \end{enumerate}
    Let $i'$ be a bad jump in $P$ with $r_{i'}\in [k..1.5k)$.
    We show that $i'$ also satisfies the above three conditions.
    By definition, $i'$ is in $[a..b]$ and $r_{i'} \in [k..1.5k]$.
    Since it is a bad jump, we also have $j' = J(i') \in [a'..b']$ due to \cref{lem:stable-bad-jumps-same-phrase}.

    This leads to $i^* \le i'$, and since they are in the same phrase, it leads to $r_{i^*} \ge r_{i'}\ge k$.
    Since $i'$ is a $k$-bad jump, we have that $j'= J(i')$ is in $P'$.
    From the definition of $i^*$, we have that $i^*$ is also in $P'$, and that $i^*\le i'$.
    Since $J(i^*) = j^*$ and $j'$ are also in the same phrase, we have $r_{j^*}\ge k$, as required.
\end{proof}

\begin{figure}
    \centering
    \tikzset{every picture/.style={line width=0.75pt}} 
\usetikzlibrary{patterns}

\begin{tikzpicture}[x=1pt,y=1pt,yscale=-1,xscale=1]
\pgfmathsetmacro{\blockheight}{25}
\pgfmathsetmacro{\stringheight}{20}
\pgfmathsetmacro{\height}{20}

\pgfmathsetmacro{\width}{10}
\pgfmathsetmacro{\x}{0} 

\draw (10,0) rectangle (180,\blockheight);

\draw(40,50) rectangle (160, \blockheight+50);

\node at (45,-10) {$a' + \delta_P$};
\draw(40,0) rectangle (50,\blockheight);

\draw[->,red] (45,\blockheight) -- (45, 50) node[midway,right]{$J$};

\draw(10,0) rectangle (20,\blockheight);
\node at (15,15) {$a$};

\draw(70,0) rectangle (80,\blockheight);

\draw(40,50) rectangle (50,50+\blockheight);
\node[anchor=north,font=\small] at(45,55) {$a'$};

\draw[<->] (80,-5) -- (180,-5)  node[midway,above]{$1.5k-1$};

\draw[dashed, line width = 0.7] (180 ,-15) -- (180 ,0);

\node at (0,12) {$P$};

\node at (30,62) {$P'$};

\end{tikzpicture}
    \caption{An illustration of the second part in the proof of \cref{lem:bad-k-parent-properties}.
   The three indices from the definition of $i^*$ are presented.
   Clearly, an index $i$ smaller than any of these three cannot be a $k$-bad index of $P$: If $i< a$, then $i$ is not in $P$. If $i<a'+\delta_P$, then $J(i)$ is not in $P'$ (which contains all $J$ values of $k$-bad indices). If $i< b-1.5k$ (and $i$ is in $P$) then $r_i>1.5k$.
   It follows that there are no $k$-bad indices to the left of $i^*$.
   Further, $i^*$ has $r$-values below $1.5k$, so if it is not $k$-bad it must be either due to its $r$-value being to small or due to the $r$-value of $J(i^*)$ being too small.
   Any other candidate for being $k$-bad to the right of $i^*$ would have even smaller such values, so if $i^*$ is not $k$-bad, no index is $k$-bad in $P$.
    }\label{fig:FindBadParent2}
\end{figure}

\cref{lem:bad-k-parent-properties} immediately implies an $O(1)$ verification algorithm, given below.
\begin{corollary}\label{cor:test-bad-parent}
    Given a phrase $P = S[a..b]$ in $\LZend$, $\delta_P$, another phrase $P'=S[a'..b']$ and some $k\in \K$, we can check in $O(1)$ time if $P'$ is the $k$-bad parent of $P$.
\end{corollary}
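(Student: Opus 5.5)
The verification can be read off \cref{lem:bad-k-parent-properties}: $P'$ is the $k$-bad parent of $P$ exactly when $P$ has some $k$-bad jump $i$ with $r_i\in[k..1.5k)$ and $J(i)\in P'$. I plan to show that this reduces to two constant-time checks on a single candidate index.

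First, given $P=S[a..b]$, $\delta_P$ and $P'=S[a'..b']$, I compute $J(b-k)=b-k-\delta_P$ and check that it lies in $[a'..b']$. If $b-k<a$, or if $J(b-k)\notin[a'..b']$, the first property of \cref{lem:bad-k-parent-properties} says $P'$ cannot be the $k$-bad parent, so I reject.

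Otherwise, I compute $i^*=\max(a,b-1.5k+1,a'+\delta_P)$ and accept if and only if three conditions hold: $i^*\le b$, $J(i^*)=i^*-\delta_P\in[a'..b']$, and both $r_{i^*}=b-i^*\ge k$ and $r_{J(i^*)}=b'-J(i^*)\ge k$. For the last condition I use that $J(i^*)$ lies in $P'$, so its $r$-value is measured from $b'$. All of these quantities are $O(1)$ arithmetic on the given endpoints.

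For correctness, I argue the two directions separately.

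\emph{Soundness of rejection.} If $P'$ is the $k$-bad parent, then the second property of \cref{lem:bad-k-parent-properties} gives $r_{i^*}\ge k$ and $r_{J(i^*)}\ge k$. The remaining containment conditions hold because $i^*\le i'$ for any bad jump $i'$ of $P$, as shown in that proof.

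\emph{Soundness of acceptance.} Suppose all checks pass. Then $i^*\in[a..b]$. Since $i^*\ge b-1.5k+1$ and $r_{i^*}\ge k$, we get $r_{i^*}\in[k..1.5k)$. Since $J(i^*)\in P'$ with $r$-value at least $k$, the index $i^*$ is a $k$-bad jump whose image lies in $P'$. By \cref{lem:bad-jumps-same-phrase}, $P'$ is therefore the unique $k$-bad parent.

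I expect the main subtlety to be the acceptance direction. In fact it needs only the direct definitional check on $i^*$, so the first test, on $J(b-k)$, is merely a fast filter.
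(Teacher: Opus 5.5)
Your proposal is correct and follows the paper's own proof. Like the paper, you check that $J(b-k)\in P'$, then check that $i^*=\max(a,b-1.5k+1,a'+\delta_P)$ is a $k$-bad jump landing in $P'$, relying on \cref{lem:bad-k-parent-properties} and the uniqueness from \cref{lem:bad-jumps-same-phrase}. Your explicit two-direction argument is sound, and so is your remark that the $J(b-k)$ test is redundant: the check on $i^*$ alone decides the question.
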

\begin{proof}
We straightforwardly check if $J(b-k) = b-k-\delta_P$ is in $[a'..b']$ and if $\max(a,b-1.5k+1,a'+\delta_P)$ is a bad jump from $P$ to $P'$.
\end{proof}

We proceed to show that a set of $O(1)$ candidates can be retrieved efficiently.
As a means to that end, we show how to use \cref{lem:bad-k-parent-properties} to find the $k$-bad parent of a given $P$ in $O(\log \frac{n}{z})$ time.
This is not fast enough for our final goal - which is supporting such queries in constant time.
However, it will be useful in the construction of the data structure.

\begin{lemma}\label{lem:find-kbad-slow}
There is a data structure that given a phrase $P$ and $k\in \K$, outputs the $k$-bad phrase of $P$ or reports that it does not exist in $O(\log \frac{n}{z})$ time.   
The data structure can be constructed in $O(z\log \frac{n}{z})$ time and uses $O(z)$ space. 
\end{lemma}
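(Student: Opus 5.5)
The data structure is just the phrase-location structure of \cref{lem:find-con-phrase}, together with the array of $\delta_P$ values, which is available in constant time per phrase from the input. It is built in $O(z\log\frac{n}{z})$ time and uses $O(z)$ space.

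Given a query $P=S[a..b]$ and $k\in\K$, we rely on the first property of \cref{lem:bad-k-parent-properties}: if a $k$-bad parent $P'$ exists, it is the phrase containing $J(b-k)=b-k-\delta_P$. First, if $b-k<a$, then $P$ has no index with $r$-value in $[k..1.5k)$, so we report that there is no $k$-bad parent. Otherwise, we compute $j=b-k-\delta_P$ in $O(1)$ time. We then use \cref{lem:find-con-phrase} to find the phrase $P'=S[a'..b']$ containing $j$ in $O(\log\frac{n}{z})$ time. This $P'$ is the unique candidate.

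Next we verify the candidate exactly as in \cref{cor:test-bad-parent}. We set $i^*=\max(a,b-1.5k+1,a'+\delta_P)$ and check three things: that $i^*\le b$, that $J(i^*)=i^*-\delta_P\in[a'..b']$, and that $r_{i^*}=b-i^*\ge k$ and $r_{J(i^*)}=b'-J(i^*)\ge k$. If all checks pass, then $i^*$ is a $k$-bad jump with $r_{i^*}\in[k..1.5k)$ whose image lies in $P'$. By \cref{lem:bad-jumps-same-phrase}, $P'$ is therefore the $k$-bad parent, and we output it. If some check fails, the contrapositive of \cref{lem:bad-k-parent-properties} shows that no $k$-bad parent exists. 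Indeed, if one existed, it would equal the phrase containing $J(b-k)$, which is our candidate, and $i^*$ computed from it would satisfy the second property. So we report nonexistence.

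The only delicate point is this two-sided correctness argument. \cref{lem:bad-k-parent-properties} is stated under the assumption that the bad parent exists, so we must argue both directions: success of the checks certifies existence, and failure certifies nonexistence. Both follow directly as explained above. The query time is dominated by the single call to \cref{lem:find-con-phrase}, giving $O(\log\frac{n}{z})$.
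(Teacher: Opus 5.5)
Your proposal is correct and matches the paper's proof. Both build the structure of \cref{lem:find-con-phrase}, take the phrase containing $J(b-k)=b-k-\delta_P$ as the unique candidate via the first property of \cref{lem:bad-k-parent-properties}, and verify it in $O(1)$ time as in \cref{cor:test-bad-parent}. Your two-sided verification is more explicit than the paper's; the extra checks $i^*\le b$ and $J(i^*)\in[a'..b']$ in the nonexistence direction follow from the proof of that lemma rather than its stated second property, since any genuine $k$-bad jump $i'$ satisfies $i^*\le i'$.
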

\begin{proof}
We construct the data structure of \cref{lem:find-con-phrase} in $O(n \log \frac{n}{z})$ time and $O(z)$ space.
Now, given a query phrase $P=[a..b]$ and $k\in \K$, we can find $J(b-k)$ (If $b-k$ is not in $P$, then $P$ clearly do not have a $k$-bad parent) in $O(1)$ time. 
We can then use the data structure of \cref{lem:find-con-phrase} to find the phrase $P'$ containing $J(b-k)$ in $O(\log \frac{n}{z})$ time.

Due to \cref{lem:bad-k-parent-properties}, we know that if $P$ has a $k$-bad parent, it must be $P'$.
We check if $P'$ is the bad parent of $P$ in $O(1)$ time using \cref{cor:test-bad-parent}.
\end{proof}

We are now ready to describe the main part of the data structure for finding the $k$-bad parent of a given $P$.
We define the notion of critical bad values.
\begin{definition}[Critical Bad Values]
    For a phrase $P$ in $\LZend$ and $k\in \K$ such that $|P| \ge 2k$. 
    Let $k'= 1.5k$ be the successor of $k$ in $\K$. 
    We say that $k$ is a critical bad value for $P$ if $P$ has a $k$-bad parent $P'$, and $P'$ is not the $k'$-bad parent of $P$.
\end{definition}

Intuitively, one can consider the process of fixing a phrase $P$ and increasing $k$ from $\min (\K)$ to $\max(\K)$.
The critical bad values of $P$ are the values of $k$ where the $k$-bad parent changes in this process (excluding very large $k$ values with $k>|P|/2$).
We prove several useful properties of the $k$-critical bad values.
We start by showing that for $k$-critical values, the $k$-bad parents of $P$ is of size roughly $k$. 

\begin{lemma}\label{lem:length-cricial-parent}
    Let $P=S[a..b]$ be a phrase in $\LZend$ with source $S[a'..b']$.
    Let $k\in \K$ be a critical value of $P$.
    Let $P_k$ be the $k$-bad parent of $P$.
    It holds that $|P_k| \in [k..1.5k)$
\end{lemma}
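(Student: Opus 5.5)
We prove the two inequalities $|P_k|\ge k$ and $|P_k|<1.5k$ separately, in both cases using the characterization of bad parents from \cref{lem:bad-k-parent-properties}. Write $P_k=S[c..d]$, set $k'=1.5k$, and set $\delta=\delta_P$.

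For the lower bound, we exhibit a $k$-bad jump and look at where it lands. Since $P_k$ exists, $P$ contains a $k$-bad jump $i$ with $r_i\in[k..1.5k)$. Its image $j=J(i)$ lies in $P_k$ and satisfies $r_j\ge k$. Hence $d-j=r_j\ge k$ and $j\ge c$, so $|P_k|=d-c+1\ge k+1>k$.

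For the upper bound, we argue by contradiction: assume $|P_k|\ge 1.5k$ and show that $P_k$ is then also the $k'$-bad parent of $P$, contradicting criticality.

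\begin{itemize}
\item \emph{Candidate index.} Let $i=b-k'$. This is an index of $P$ because $|P|\ge 2k$. It has $r_i=k'\in[k'..1.5k')$, so it is the natural candidate for a $k'$-bad jump.
\item \emph{Locating $J(i)$.} By \cref{lem:bad-k-parent-properties}, $J(b-k)=b-k-\delta\in P_k$. Moreover $b-k-\delta$ equals $b'-k$, where $b'=b-\delta$ is the right end of the copy of $P$ ending at the source boundary. Every $k$-bad jump $x$ has $J(x)=x-\delta$ with $r_{J(x)}\ge k$ inside $P_k$, and $b'-J(x)=r_x<1.5k$. From this, $d$ is at least $b'-(1.5k-1)$ plus something; more precisely, since $b'$ is a phrase boundary and $r_{J(x)}=d-J(x)\le b'-J(x)$, we get $d\le b'$.
\item \emph{Reducing to $d=b'$.} Since $J(b-k)=b'-k$ lies in $P_k$ and $d\le b'$ is a boundary, we get $d\in[b'-k..b']$. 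I expect $d=b'$. The reason is that the $k$-bad jump has $r_{J(x)}=b'-J(x)$ exactly when no boundary lies between $J(x)$ and $b'$. A boundary $d<b'$ would have to satisfy $d-J(x)\ge k$ while $b'-J(x)<1.5k$. This is possible in principle, so this case needs care.
\item \emph{Finishing when $d=b'$.} In that case $J(b-k')=b'-k'=d-1.5k$, which lies in $P_k$ because $|P_k|\ge 1.5k$. It also has $r$-value $1.5k=k'\ge k'$. So $b-k'$ is a $k'$-bad jump landing in $P_k$, and $P_k$ is the $k'$-bad parent of $P$, a contradiction.
\end{itemize}

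The main obstacle is the case $d<b'$. There I would instead use the fact that $J(b-k')=(b-k)-\delta-0.5k$ sits exactly $0.5k$ to the left of $J(b-k)\in P_k$. This gives $r_{J(b-k')}=r_{J(b-k)}+0.5k$. To show this is $\ge k'$ and that $J(b-k')\ge c$, I would use $|P_k|\ge1.5k$ together with the existing $k$-bad jump, whose image has $r\ge k$ and lies at most $0.5k$ to the left of $J(b-k)$. The risk is that the bound only yields $r_{J(b-k')}\ge k$ rather than $\ge k'$. If so, I would try a different $k'$-range index $b-\hat k$ with $\hat k$ chosen so that its image sits at distance $\ge 1.5k$ from $d$ inside $P_k$. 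Such an index exists once $|P_k|\ge1.5k$ and $J(b-k)$ lies within $P_k$ at distance $<k$ from where the $k$-bad images start, which again contradicts criticality.
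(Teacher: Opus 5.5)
\textbf{There is a gap.} Your lower bound and your treatment of the case $d=b'$ are fine, but the upper bound is not established, because the case $d<b'$ is left open. That case is not exotic. It occurs whenever the source phrase $S[a'..b']$ has length at most $k$: the source phrase then contains no index with $r$-value at least $k$, so $P_k$ lies strictly before it. In that case $d-J(b-k')=1.5k-(b'-d)<k'$, so $b-k'$ does not witness that $P_k$ is the $k'$-bad parent.

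Your fallback is only announced. The condition you cite for the existence of the replacement index (distance $<k$ from where the $k$-bad images start) is not the one that makes it work. The missing fact is $b'-d<0.5k$, and both halves of it are already in your third bullet. For a $k$-bad jump $x$ landing in $P_k$ you have $d-J(x)=r_{J(x)}\ge k$ and $b'-J(x)=r_x<1.5k$; subtracting gives $b'-d<0.5k$.

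With this, take $i'=d+\delta_P-1.5k$, i.e.\ $\hat k=1.5k+(b'-d)$, and check three things:
\begin{enumerate}
\item $r_{i'}=b'-J(i')=\hat k\in[1.5k..2k)\subseteq[k'..1.5k')$.
\item $i'\in P$, because $\hat k<2k\le|P|$. This is where $|P|\ge 2k$ from the definition of a critical value is used.
\item $J(i')=d-1.5k$ lies in $P_k$ with $r$-value exactly $k'$. Strictly this needs $|P_k|>1.5k$; the paper's proof has the same off-by-one.
\end{enumerate}
Hence $i'$ is a $k'$-bad jump into $P_k$. By \cref{lem:bad-jumps-same-phrase}, $P_k$ is then the $k'$-bad parent of $P$, contradicting criticality. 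This single choice of $i'$ also covers $d=b'$, so no case split is needed.

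This is exactly the paper's argument in different clothing. The paper takes the rightmost $k$-bad jump $i$ into $P_k$ and notes that maximality forces $r_{J(i)}=k$, i.e.\ $J(i)=d-k$. It then shifts $i$ left by $0.5k$, which lands on your $i'$. The bound $r_i<1.5k$ used there is precisely your $b'-d<0.5k$, since $r_i=b'-d+k$. So you have the right ingredients, but as written the decisive verification (that the replacement index lies in $P$ and in the $k'$-range) is missing.
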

\begin{proof}
    It should be clear that $|P_k| \ge k$, as $P_k$ contains an index $j$ with $r_j \ge k$.
    Assume to the contrary that $|P_k|\ge 1.5k$.
    Since $P_k=S[a_k..b_k]$ is a $k$-bad parent, there is an index $i\in [a..b]$ with $j= J(i) \in [a_k..b_k]$ and $r_i,r_j \in[k..1.5k)$.
    Let us assume that $i$ is maximal among the indices satisfying these conditions.
    This means that either $r_i = k$ or $r_j=k$, as otherwise $i+1$ satisfies these conditions as well.
    Notice that in both cases, it must hold that $r_j=k$, since $r_j \in [k..r_i]$.

    Now consider the index $i'=i - 0.5k$ and let $k'=1.5k$.
    We will show that $i'$ has $r_{i'},r_{j'} \in [k'..1.5k')$, which implies that $P_k$ is the $k'$-bad parent of $P$.
    This is a contradiction to $k$ being a critical bad value.

    Notice that $r_{i'} = b- r_i - 0.5k \ge b- 2k$.
    Since $|P| \ge 2k$, we have that $i'$ is in $P$.
    We have that $r_{i'} =r_i + 0.5k \in [1.5k.. 2k) \subseteq [k'..1.5k')$.
    Additionally, since $i'\in P$ we have that $j'=j - 0.5k= b_k - r_j - 0.5k = b_k - 1.5k$.
    Since we assume $|P_k|\ge 1.5k$, it holds that $j'$ is in $P_k$ with $r_{j'} = 1.5k = k'$.
\end{proof}

Next, we show that the set of values $k$ such that a phrase $P'$ is the $k$-bad parent of $P$ form a consecutive interval.

\begin{lemma}\label{lem:bad-sandwitch}
    Let $P$ be a phrase and let $k_1<k_2<k_3\in \K$.
    If $P'$ is the $k_1$-bad parent of $P$ and the $k_3$-bad parent of $P$, then $P'$ is also the $k_2$-bad parent of $P$.
\end{lemma}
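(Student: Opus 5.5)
We will exhibit an explicit index in $P$ that is a $k_2$-bad jump landing in $P'$. Write $P=S[a..b]$ and $P'=S[a'..b']$. Since $k_1$-bad and $k_3$-bad jumps of $P$ both land in $P'$, all of them lie in the window $W=[a'+\delta_P..b'+\delta_P]\cap[a..b]$. This is the set of indices $i\in P$ with $J(i)\in P'$, and it is an interval.

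For $i\in W$ we have $J(i)\in P'$, hence $r_{J(i)}=b'-(i-\delta_P)=(b'+\delta_P)-i$. Since also $r_i=b-i$, the difference $c:=r_i-r_{J(i)}=b-b'-\delta_P\ge 0$ is the same for every $i\in W$. (It is nonnegative because $J$ is legal, \cref{lem:j-legal}.)

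Restricted to $W$, being a $k$-bad jump for $P'$ therefore means $r_i\in[k..1.5k)$ and $r_i-c\ge k$, i.e.\ $r_i\in[k+c..1.5k)$. The set of $r$-values attained on $W$ is the integer interval $[r_{\max(W)}..r_{\min(W)}]$, and $W$ is an interval, so all intermediate values are attained. Let $i_1\in W$ be a $k_1$-bad jump and $i_3\in W$ a $k_3$-bad jump.

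We need an $r$-value $\rho$ in $[r_{i_1}..r_{i_3}]$ with $\rho\in[k_2+c..1.5k_2)$; the corresponding index is then a $k_2$-bad jump landing in $P'$. By \cref{lem:bad-jumps-same-phrase}, this makes $P'$ the $k_2$-bad parent.

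Because $k_2>k_1$ are powers of $1.5$, we get $k_2\ge 1.5k_1>r_{i_1}$. Likewise $k_3\ge 1.5k_2$ gives $r_{i_3}\ge k_3\ge 1.5k_2$. So $[r_{i_1}..r_{i_3}]\supseteq[k_1..1.5k_2]$.

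It remains to check that $[k_2+c..1.5k_2)$ is nonempty. Since $i_1$ is $k_1$-bad, $c\le r_{i_1}-k_1<0.5k_1$. We need $k_2+c<1.5k_2$, i.e.\ $c<0.5k_2$, which holds because $c<0.5k_1<0.5k_2$. Any integer $\rho$ in this range, for instance $\rho=k_2+c$, works.

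The main obstacle is treating $\K$ as integers: if the elements are rounded powers of $1.5$, the inequalities $k_2\ge 1.5k_1$ and $1.5k_2\le k_3$ hold only up to rounding. I would handle this by assuming exact arithmetic as the paper implicitly does, or by noting that $[k_2+c..1.5k_2)$ still lies within the covered range $[r_{i_1}..r_{i_3}]$ after rounding.
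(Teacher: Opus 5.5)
Your argument is correct and is essentially the paper's proof. The paper likewise uses that $r_i-r_{J(i)}$ is constant over the indices of $P$ mapped into $P'$, takes $i_2=b-k_2-c$ (your $\rho=k_2+c$), and places it between $i_3$ and $i_1$ using $r_{i_1}<1.5k_1\le k_2$, $r_{i_3}\ge k_3\ge 1.5k_2$ and $c<0.5k_1$. One slip: since $r_{i_1}\ge k_1$, the containment your inequalities give (and the one you need) is $[r_{i_1}..r_{i_3}]\supseteq[k_2..1.5k_2]$, not $[k_1..1.5k_2]$.
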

\begin{proof}
    We will prove that under the specified assumptions, $P'$ is also the $1.5k_1$ bad parent of $P$.
    This can be applied inductively until $k_2 = 1.5k_1 $ to obtain the claim.

    Let $P = S[a..b]$ and let $P'=S[\hat a..\hat b]$.
    Let the source of $P$ be $S[a'..b']$.
    For every index $i\in [a..b]$ such that $J(i) \in [\hat a..\hat b]$, the difference $r_{J(i)} - r_i$ is the same.
    Denote this difference as $\delta$.
    Since $P'$ is both the $k_1$ and the $k_3$-bad parent of $P$, we have from \cref{lem:bad-k-parent-properties} that $b' - k_1 , b'-k_3 \in [\hat a .. \hat b]$.
    Since $P'$ is the $k_1$-bad parent, there is an index $i_1\in [a..b]$ with $j_1= J(i_1) \in [\hat a..\hat b]$ such that $r_{i_1} ,r_{j_1} \in [k_1..1.5k_1)$.
    Similarly, since $P'$ is the $k_3$-bad parent we have some $i_3\in [a..b]$ and $j_3=J(i_3)\in [\hat a..\hat b]$  with $r_{i_3},r_{j_3}\in [k_3..1.5k_3]$
    In particular, since $r_{j_1} = r_{i_1} - \delta$, we must have $\delta < 0.5k_1< 0.5k_2$.
    Now, consider the index $i_2 = b-k_2-\delta$.
    
    We claim that $i_2 \in [i_3..i_1]$.
    It follows from $i_3 = b-r_{i_3} \le b- k_3 \le b - 1.5k_2 \le b - k_2 - \delta = i_2$, and from $i_1 = b-r_1 \ge b-1.5k_1 \ge b-k_2 \ge b-k_2 - \delta = i_2$.
    Therefore, $i_2 \in [a..b]$ is an index within $P$, and due to the same reasoning, we have that $j_2 = J(i_2) \in [\hat a..\hat b]$.
    Now notice that $r_{i_2} = k_2 + \delta \in [k_2..1.5k_2)$ and that $r_{j_2} = r_{i_2}-\delta = k_2$.
    We have shown that $i_2$ is an index in $P$ with $j_2$ in $P'$ such that $r_{i_2},r_{j_2}\in [k_2..1.5k_2]$, which means that $P'$ is the $k_2$ bad parent of $P$.
\end{proof}

We are ready to present our data structure for efficiently finding the $k$-bad parent.
For every phrase $P=S[a..b]\in\LZend$ we store the following information.
\begin{enumerate}
    \item $k(P) \subseteq \K$: the at most two unique values $k\in \K$ such that $|P| \in [k..2k)$
    \item $P'(P) \subseteq \LZend$: for every $k\in k(P)$, the $k$-bad phrase of $P$, if it exists.
    \item $C(P)$: a bit vector of size $\log_{1.5}|P| \in O(\log n)$. If $k$ is a critical bad value of $P$, then $C(P)[k] = 1$, otherwise $C(P)[k]=0$.
    \item $K(P)\in \K$: the unique power of $1.5$ satisfying $|P| \in [K(P)..1.5 K(P))$.
    \item $h(P) \in [1..n]$: the minimal integer multiple of $K(P)$ contained in $P$.
    Formally, $h_P = \min(\{ i \cdot K(P) \mid i\in \LZend \} \cap [a..b])$
\end{enumerate}

Additionally, we build the hash table $H$ in which for every $P$, we have $H(h(P)) = P$ ($P$ is represented as its index in $\LZend$).
The construction of $H$ requires $O(z(\log \log z)^{2})$ time with \cref{lem:hashing}.
The stored information for each phrase can be computed in $O(\log^2 \frac{n}{z})$ time for every phrase as follows.
The values $k(P)$, $K(P)$ and $h(P)$ can be obtained in $O(1)$ time via simple arithmetic operations.
$P'(P)$ can be obtained in $O(\log \frac{n}{k})$ time using the data structure of \cref{lem:find-kbad-slow}.
To construct $C(P)$, we iterate all values in $\K$.
For each such value, we check if it is a critical bad value directly according to the definition.
To do that, we need to be able to retrieve the $k$-bad parent of $P$ for a given $k\in \K$, which we do in $O(\log \frac{n}{k})$ time using \cref{lem:find-kbad-slow}.
The total running time for finding $C(P)$ is $O(|\K| \cdot \log \frac{n}{z})=O(\log^2\frac{n}{z})$. 

Let us prove that the above information is sufficient to retrieve the $k$-bad parent of $P$ in $O(1)$ time.

\begin{lemma}\label{lem:find-kbad-constant}
    Given a phrase $P=S[a..b]$ of $\LZend$ with its source $S[a'..b']$ and the values $k(P), P'(P),C(P)$.
    Given the hash $H$, we can find for a given $k$ the $k$-bad parent of $P$ (or report that it does not exist) in $O(1)$ time.
\end{lemma}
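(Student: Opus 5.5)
We answer the query by generating $O(1)$ candidate phrases and testing each with the $O(1)$ verifier of \cref{cor:test-bad-parent}. The split is on how $k$ compares with $|P|$. If $b-k<a$, i.e.\ $k>|P|$, then no index of $P$ has $r$-value $\ge k$, so we report that no $k$-bad parent exists. If $k\in k(P)$, i.e.\ $|P|\in[k..2k)$, the parent is stored in $P'(P)$ and we return it directly.

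The remaining case is $2k\le|P|$, and here the critical bad values do the work. Let $k_c$ be the smallest critical bad value of $P$ with $k_c\ge k$. By \cref{lem:bad-sandwitch}, the set of $k$ sharing a given bad parent is contiguous in $\K$. So if $P$ has a $k$-bad parent at all, then following $k,1.5k,\dots$ the parent stays unchanged until it changes for the first time. Hence the $k$-bad parent equals the $k_c$-bad parent.

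If no critical value exists in $[k..|P|/2]$, the parent is the same as that for the largest $k''$ with $2k''\le |P|$. Its successor lies in $k(P)$, so we read that parent from $P'(P)$ and verify it. Some care is needed here when the parent ceases to exist; the verifier catches this.

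Finding $k_c$ takes $O(1)$ time: mask $C(P)$ below the bit for $k$ and take the least significant set bit. By \cref{lem:length-cricial-parent}, the $k_c$-bad parent $P_{k_c}$ satisfies $|P_{k_c}|\in[k_c..1.5k_c)$, so $K(P_{k_c})=k_c$. By \cref{lem:bad-k-parent-properties}, $P_{k_c}$ contains $J(b-k_c)=b-k_c-\delta_P$. An interval of length in $[k_c..1.5k_c)$ contains at least one and at most two multiples of $k_c$. We therefore compute the at most two multiples of $k_c$ in the window $[j-1.5k_c..j+1.5k_c]$ around $j=J(b-k_c)$ and look each up in $H$. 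A hit $H(h)=Q$ is a candidate only if $K(Q)=k_c$ and $Q$ contains $j$. Since $P_{k_c}$ contains $j$, its minimal multiple $h(P_{k_c})$ is one of the looked-up values, so it appears among the candidates.

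Each surviving candidate is tested with \cref{cor:test-bad-parent} for the original $k$. If $P$ has a $k$-bad parent, then it is $P_{k_c}$, which is in the candidate set and is accepted. If no candidate passes, no $k$-bad parent exists.

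The main obstacle is the step that identifies the $k$-bad parent with the next critical value's parent. We must be sure that monotonic "change points" are exactly the critical values. The trouble is that the parent may not exist for some intermediate $k$, and \cref{lem:bad-sandwitch} only covers the case of identical existing parents. I would handle this by defining critical values so that a nonexistent $1.5k$-parent also counts as a change (the definition already does this). In every uncertain case I would rely on the final verification, so that correctness never depends on the candidate logic beyond completeness.
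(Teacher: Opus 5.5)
Your proposal is correct and follows essentially the same route as the paper. The candidates are the stored parents in $P'(P)$ plus the hash hits $H(x)$ for multiples $x$ of the smallest critical value $k_c\ge k$ in the window around $J(b-k_c)=b'-k_c$; your window $[j-1.5k_c..j+1.5k_c]$ is exactly the paper's $[b'-2.5k'..b'+0.5k']$. Completeness rests on the same use of \cref{lem:length-cricial-parent} and \cref{lem:bad-k-parent-properties}, with \cref{cor:test-bad-parent} as the exact filter.

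Two slips, both harmless. First, that window may hold up to four multiples of $k_c$ rather than two, which is still $O(1)$. Second, the hedging in your last paragraph is unnecessary: by the definition of a critical value, an existing $k$-bad parent is carried unchanged through every non-critical value up to $k_c$, so it is the $k_c$-bad parent.
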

\begin{proof}
We describe an algorithm for finding the $k$-bad parent of $P$ or reporting that it does not exist.
The algorithm constructs a set of candidate phrases $\mathcal{P}$ for being the $k$-bad parent of $P$.
Then, the algorithm tests each candidate for being the $k$-bad parent of $P$ using \cref{cor:test-bad-parent}.

The set of candidates is constructed as follows.
First, we add the phrases in $P'(P)$ to $\mathcal{P}$.
Then, we find the smallest critical bad value $k'$ of $P$ that is at least $k$ using $C(P)$.
If there is indeed such value $k'$, we iterate each integer $x$ that is an integer multiple of $k'$ contained in $[b'-2.5k'..b'+0.5k']$.
For each such $x$, we add $P_x = H(x)$ to $\mathcal{P}$, if such $P_x$ it exists (i.e. if $x$ is a key in $H$).

This concludes the construction of $\mathcal{P}$.
Clearly, $|\mathcal{P}|\in O(1)$.
This follows from $|P'(P)| \le 2$ and from the fact that there are at most $4$ integer multiples of $k'$ in $[b'-2.5k'..b'+0.5k']$.
It should be clear that each element in $\mathcal{P}$ is retrieved in $O(1)$ time.
In order to obtain $k'$, we need to find the minimal bit that is at least $k$ and is equal to $1$ in $C(B)$.
This can be implemented by first computing the bitwise AND of $C(B)$ and its negation, which leaves
only the rightmost bit set to $1$. Then, we can lookup the position of the unique bit set to $1$ by
using a preprocessed perfect hash table that can be constructed in $O(\log n\cdot (\log\log n)^{2})=O(z\log(n/z))$ time
with \cref{lem:hashing}.
Then, we verify each of the candidates in $O(1)$ time using \cref{cor:test-bad-parent}.
It should be clear that the total running time is $O(1)$.

To prove the correctness of the algorithm, we claim that if there is a $k$-bad parent $P'$ for $P$, we indeed include $P$ in $\mathcal{P}$.
Assume that there is a $k$-bad parent $P'$.
Let $k'\ge k$ be the maximal value in $\K$ such that $P'$ is a $k'$-bad parent of $P$.
If $k'\in k(P)$, then $P'\in C(P)\subseteq \mathcal{P}$, as required.
If $k'\notin k(P)$, it follows that $|P| \ge 2k'$.
We claim that in this case, $k'$ is exactly the minimal critical bad value of $P$ that is at least $k$.

\begin{claim}\label{clm:next-critical-is-max-bad}
    Let $P'$ be the $k$-bad parent of $P$ and let $k'$ be the maximal $k$ such that $P'$ is the $k'$-bad parent of $P$.
    Let $c$ be the minimal critical bad value of $P$ that is at least $k$.
    If $|P| \ge 2k'$, then $k'=c$. 
\end{claim}
\begin{proof}
First, we prove that $c\le k'$ by showing that $P'$ is a $c$-bad parent of $P$.
Assume to the contrary that $P'$ is not the $c$-bad parent of $P$.
Let $c'$ be the minimal value in $\K$ such that $P'$ is not the $c'$-bad parent of $P$.
Notice that $c'\in [1.5k..k']\cap \K$.
For this values, we have that $P'$ is the $\frac{c'}{1.5}$-bad parent of $P$ but not the $c'$-bad parent of $P$. 
We also have $2\frac{c'}{1.5} \le 2c' \le 2k'\le 2|P|$, which together indicates that $c'$ is a critical bad value of $P$ that is at least $k$ and is smaller than $k'$ , a contradiction.

Now, let us show that for any $c \ge k'$.
Assume to the contrary that $k'> c$.
We have already shown that $P'$ is the $c$-bad parent of $P$.
Since $k',c\in \K$, we have $k' \ge 1.5c$.
From $P'$ being both the $k'$-bad parent and the $c$-bad parent of $P$, \cref{lem:bad-sandwitch} implies that $P'$ is also the $1.5c$-bad parent of $P$.
In particular, $P'$ is both the $c$ and the $1.5c$ bad parent of $P$ with $2c \le 2k'\le |P|$, a contradiction to $c$ being a critical bad value.
\end{proof}

It follows from \cref{clm:next-critical-is-max-bad} that $P'$ is indeed the $k'$-bad parent of $P$ for the value $k'$ we find using $C(P)$.
According to \cref{lem:length-cricial-parent}, we have that $K(P) = k'$, and according to \cref{lem:bad-k-parent-properties} we have that $P'$ contains the index $b'-k'$.
Therefore, since $|P'| \le 1.5k'$ we have that $P'$ is contained in $[b'- 2.5k'..b'+0.5k']$, and in particular $h(P') \in [b'-2.5k'..b'+0.5k']$.
Since $h(P')$ is a multiple of $k'$, and we check every multiple of $k'$ in this range, we have found $H(h(P')) = P'$ and added it to $\mathcal{P}$, as required. 
\end{proof}

\paragraph{Substituting $(\log \log z)^{2}$ with $\log \frac n z$.}
We notice that the $(\log \log z)^{2}$ factor arises from computing the deterministic hash of~\cite{DBLP:conf/icalp/Ruzic08} over $z$ elements from domain of size $n$.
We can instead partition the domain $n$ into $z$ uniform intervals $U_1,U_2,\ldots U_z$ of size $\frac{n}{z}$.
Then, we construct a separate hash $H_i$ for every $U_i$, only mapping the values in $U_i$ to their respective elements in $[z]$.
When constructing the hash of $U_i = (i\cdot \frac{n}{z}.. (i+1) \cdot \frac{n}{z}]$, we use relative values instead of absolute values (i.e., $x \in U_i$ is encoded as $x- i\cdot \frac{n}{z}$).
Therefore, the hash is over a domain of size $\frac{n}{z}$.
Now, when we wish to find $H(x)$, we first find $U_i$ containing $x$ in constant time, and query $H_i[x']$ with $x'-i\cdot \frac{n}{z}$ in constant time.
The time for constructing $H_i$ is $z_i (\log\log z_{i})^{2} = O(z_{i} \log \frac{n}{z})$, where $z_i$ is the number of hashed values within $U_i$.
Since $\sum_{i=1}^z(z_i)= z$, the total running time for constructing all hashes is $O(z \log \frac{n}{z})$ as required.
The query time remains $O(1)$.

\bibliographystyle{alpha}
\bibliography{bib}

\appendix

\section{Shift-Split-Merge Structure}
\label{sec:shift-merge}

This section is dedicated to proving the following lemma. As mentioned earlier, it has been already proven
by Farach and Thorup~\cite{DBLP:journals/algorithmica/FarachT98}, and we only present our proof for completeness.
Also, we make it more clear that the time complexity depends on $n$, the maximum number of distinct elements in a tree.
We briefly comment that Iacono and {\"{O}}zkan~\cite{DBLP:conf/icalp/IaconoO10} claimed a faster data structure, but
a complete description of the shift operation isn't present in the published version, and it is not immediately clear
if their complexity could be made dependant on the number of distinct elements.

\shiftstructure

\newcommand{\distinct}{\mathsf{distinct}}

We implement each balanced search tree with e.g. red-black trees~\cite{DBLP:conf/focs/GuibasS78}. Recall that a red-black tree $T$
has the property that, given a node $u\in T$, we can split $T$ into a tree containing all elements strictly smaller or equal to $u$ and a tree
containing all elements larger than $u$ in $O(\log N)$ time. Similarly, given two trees $T_{1}$ and $T_{2}$ such that
every element stored in $T_{1}$ is smaller or equal to every element stored in $T_{2}$, we can obtain a tree storing all elements
in $T_{1}$ and $T_{2}$ in $O(\log N)$ time. We call the former operation splitting and the latter joining (to distinguish from
merging, which is defined as in the statement of the lemma). Of course, because the tree is balanced, we are able to find
the (strict) successor/predecessor of any element in $O(\log N)$ time, and insert/delete new elements in the same time complexity.

Each node $v$ of the tree
is decorated with a shift $\delta_{v}$, and the invariant is that the, a node $u$ physically storing an element $x$ actually corresponds
to $x$ increased by the sum of $\delta_{v}$, over every $v$ that is an ancestor of $u$. It i straightforward to maintain the invariant
during a rotation. Initialize, insertion/deletion, split, and shift are immediate to implement, with the first working in worst-case
constant time, and the remaining operations working in $O(\log N)$ worst-case time. The non-trivial step is implementing a merge.

For a tree $T$, let $\distinct(T)$ denote the number of distinct elements in $T$. We first observe that, given two trees $T_{1}$ and $T_{2}$,
we can compute their merge in $O(\min\{\distinct(T_{1}),\distinct(T_{2})\}\log N)$ time. This is done as follows.
We first retrieve the smallest elements of $T_{1}$ and $T_{2}$, denoted $x_{1}$ and $y_{1}$, respectively.
By symmetry, let us assume that $x_{1} \leq y_{1}$. We split $T_{1}$ at $y_{1}$ to obtain $T'_{1}$ and $T'_{1}$
and we similarly split $T_{2}$ at $y_{1}$ to obtain $T'_{2}$ (containing only elements equal to $y_{1}$) and $T''_{2}$.
We join $T'_{1}$, $T'_{2}$, and the result of repeating the procedure on $T''_{1}$ and $T''_{2}$.
The time complexity is $O(\min\{\distinct(T_{1}),\distinct(T_{2})\}\log N)$, and more precisely it is upper bounded by the interleave
between the distinct elements of $T_{1}$ and $T_{2}$ times $O(\log N)$. However, we will establish that the amortized complexity
is actually much better.

To analyze the amortized complexity of a merge, we use the following potential function, where to avoid clutter
each unit of potential suffices to pay for $O(\log N)$ time.
Consider a tree $T$, and let its elements (ignoring the duplicates) be $x_{1} < x_{2} < \ldots < x_{k}$. We define
potential of $T$ to be $\sum_{i} 1+\log (x_{i+1}-x_{i})$, and then the potential of the whole structure is the sum of the potentials
of its constituent trees.
It is immediate that the amortized cost of initialization is $O(1)$, the amortized complexity of insertion
is $O(\log n \log N)$ as we must account for the increase in the potential, while the amortized complexity of deletion,
split and shift is $O(\log n \log N)$ as the potential cannot increase. To analyze the change in the potential after
a merge operation, we first observe that the additional $1$ in the definition allows us to absorb the cost for the elements
that occur in both trees. Thus, we will assume that the elements stored in both trees are distinct.
To analyze the change in the potential, it is enough to analyze the situation after inserting a range of $k$ elements
$x_{i}<x_{i+1}<\ldots x_{i+k-1}$ from one tree between two consecutive elements $y_{j} < y_{j+1}$ of the other tree,
which corresponds to a single step of the merging procedure.
Then, the distance of $y_{j}$ to its successor is decreased by at least half while the distance of every other
element to its successor does not increase,
or the distance of $y_{j}$ to its predecessor is decreased by at least half while the distance of every other
element to its predecessor does not increase. 
The potential can be equivalently defined by considering the distance of each element to its successor or the distance
of each element to its predecessor, so we obtain that for at least half steps of the merging procedure the potential
decreases by at least $1$, which allows us to pay for all the steps.

\section{Eliminating Large Phrases}\label{apx:no-big-phrases}

This section is dedicated to proving the following lemma.
\nobigphrases*

We start by showing that a large phrase can be replaced by a small number of smaller phrases.
We then show how to apply this replacement iteratively and efficiently.
\begin{claim}\label{clm:replace-big-with-small}
    Let $\LZend$ be an LZ-End partition with $z$ phrases of a string $S$ such that all the phrases have length at most $\frac{n}{z}$ except for the last phrase $P_i$.
    We can partition $P_i$ into at most $\frac{2z|P_i|}{n}+1$ substrings $Q_1,Q_2,\ldots$ such that for every $Q_j=S[c_1..d_1]$, there is a phrase $P_{i_j}=S[a_{i_j}..b_{i_j}]$ for some $i_j<i$ such that $Q_j = S[b_{i_j} - |Q_j|+1 ..b_{i_j}]$.
    Equivalently, $P_i$ can be replaced with $Q_1,Q_2,\ldots $ in $\LZend$ with respective sources $P_{i_1},P_{i_2},\ldots $, resulting in a valid LZ-End partition of $S$.
\end{claim}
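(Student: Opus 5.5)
We will build the pieces $Q_1,Q_2,\ldots$ greedily from the right end of $P_i$ backwards, using the LZ-End structure of $P_i$ itself. Write $P_i=S[a_i..b_i]$. If $P_i$ is terminal, then $|P_i|=1$ and it is already its own partition. Otherwise $P_i$ has a source $P_j=S[a_j..b_j]$ with $j<i$, and $P_i=S(b_j-|P_i|..b_j]$. This previous occurrence $S(b_j-|P_i|..b_j]$ lies entirely inside $S[1..b_{i-1}]$. By assumption, every phrase there has length at most $\frac{n}{z}$. So the occurrence is covered by the phrases $P_{j'},P_{j'+1},\ldots,P_j$, where $P_{j'}$ is the phrase containing $b_j-|P_i|+1$.

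Next we transfer this covering to $P_i$, aligning from the right. For each $t\in[j'..j]$, let $Q$ be the substring of $P_i$ aligned with $P_t\cap(b_j-|P_i|..b_j]$. The copy of each such $Q$ for $t>j'$ is exactly the whole phrase $P_t$, so it ends at the phrase boundary $b_t$. For $t=j'$, the copy is a suffix of $P_{j'}$, which also ends at the boundary $b_{j'}$. Every piece therefore occurs ending at a phrase boundary $b_t$ with $t\le j<i$, which is exactly the property required in the statement. Replacing $P_i$ by these pieces with sources $P_t$ yields a valid LZ-End partition, since each piece is a suffix of $S[1..b_t]$ and $b_t$ precedes the piece.

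The hard part is bounding the number of pieces. With the naive covering, the number of pieces equals the number of phrases intersecting the occurrence, and that can be as large as $|P_i|$ if those phrases are tiny. To fix this, we merge consecutive covering phrases. Any contiguous run $P_t,\ldots,P_{t'}$ can be replaced by the single piece aligned with their union, whose copy ends at $b_{t'}$. A suffix of $S[1..b_{t'}]$ still ends at a phrase boundary, so the merged piece is still valid. The only constraint on merging comes from the target length bound $\frac{n}{z}$, which is presumably the point of the lemma. We therefore merge greedily from the right. Each piece is extended leftwards over whole phrases as long as its length stays at most $\frac{n}{z}$. When the next phrase would overflow the bound, we cut: the piece we are building stops, and the next piece starts at that phrase. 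Because every phrase has length at most $\frac{n}{z}$, the next piece always begins with at least one whole phrase, and the procedure never gets stuck.

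For the count, take any two consecutive pieces produced by this greedy rule. Their combined length exceeds $\frac{n}{z}$, since otherwise the greedy rule would have merged them. Pairing up the pieces then gives at most $2|P_i|/(n/z)+1=\frac{2z|P_i|}{n}+1$ pieces. Each resulting piece has length at most $\frac{n}{z}$, the leftmost partial piece included.
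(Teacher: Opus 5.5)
Your proof is correct and is essentially the paper's argument. Both transfer the phrase boundaries of the source occurrence $S(b_j-|P_i|..b_j]$, which lies among the short phrases $P_1,\ldots,P_{i-1}$, onto $P_i$. Both then merge runs of consecutive phrases into pieces of length at most $\frac{n}{z}$ whose copies end at phrase boundaries, and bound the count using the fact that consecutive pieces together cover at least $\frac{n}{z}$ characters. The only difference is the grouping rule. The paper repeatedly peels off the run $P_{j'+1},\ldots,P_j$ inside the last $\frac{n}{z}$ positions, together with the phrase $P_{j'}$ containing $b_j-\frac{n}{z}$ (found via $cut$); this gives $O(1)$ time per piece in \cref{lem:no-big-phrases}. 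Your greedy maximal runs reach the same bound through the pairing count.
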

\begin{proof}
For $i\in [z]$, denote $cut(i) = j$ as the index of the unique phrase $P_j$ containing $b_i - \frac{n}{z}$.
We will show how to replace $P_i$ with (at most) three substrings $Q_1,Q_2,Q_3$ such that $|Q_2|,|Q_3|\le \frac{n}{z}$ and $|Q_2|+|Q_3| \ge \frac{n}{z}$.
As a consequence, we have that $|Q_1| \le |P_i| - \frac{n}{z}$.
This routine may partition $P_i$ with just $Q_2$ and $Q_3$ satisfying the above conditions, omitting $Q_1$.
To simplify presentation, we will denote that $|Q_1|=0$ if the phrase $Q_1$ was not created.

If $|Q_1| > \frac{n}{z}$, we will apply the same splitting routine on $Q_1$, replacing it with three new phrases and reducing the length of the leftmost remaining phrase by $\frac{n}{z}$ again.
After $\frac{z|P_i|}{n}$ iterations of the above approach, we must have that $|Q_1| \le \frac{n}{z}$, and we have replaced $P_i$ with a total of $\frac{2z|P_i|}{n}+1$ phrases.

We now describe the construction of $Q_1$,$Q_2$, and $Q_3$.
Let $P_j = S[a_j..b_j]$ be the source of $P_i$. 
Let $j' = cut(j)$, so the phrase $P_{j'}=[a_{j'}..b_{j'}]$ contains $b_{j}-\frac{n}{z}$.
Let $c_1 = b_i - (b_{j}-a_{j'})$ and $c_2 = b_i -(b_{j}-b_{j'})$

We define $Q_{3} = S[c_2..b_i]$ with source $P_{j}$.
Notice that $|Q_3| =b_i - c_2+1=b_j - b_{j'}+1 \le \frac{n}{z}$.
And in particular, $|Q_3| \le |P_i|$, so $Q_3$ is always a substring of $P_i$.

For $Q_2$ and $Q_1$, we have two cases.
If $c_1 \ge a_i$, we define $Q_1=S[a_i..c_1-1]$ with source $P_{j'}$, and $Q_2 = S[c_1..c_2-1]$ with source $P_{j'-1}$.
Otherwise, if $c_1> a_i$ we only define $Q_2= S[a_i..c_2-1]$ with source $P_{j'}$.

It can be easily verified that the sources we define for $Q_1$,$Q_2$, and $Q_3$ indeed satisfy the required equality, as they correspond to the correctly aligned substrings of $P_i = S[b_j- |P_i|+1..b_j]$.
Since the phrase $P_{j'}$ contains the index $b_j -\frac{n}{z}$, we have that $|Q_3| = b_j - b'_j - 1 \le \frac{n}{z}$.
Since $P_{j'}$ is a phrase with $j '<i$, we have that $|P_{j'}| \le \frac{n}{z}$.
Notice that in both cases we have $|Q_2| \le |P_{j'}| \le \frac{n}{z}$, as required.

On the other hand, notice that $|Q_2| + |Q_3| = \min (b_i - c_1, |P_i|)$.
We know that $|P_i| \ge \frac{n}{z}$. 
Since $P_{j'}=S[a_{j'}..b_{j'}]$ contains $b_j - \frac{n}{z}$, we have that $a_{j'} \le b_j - \frac{n}{z}$ which leads to $b_i - c_1 = b_j - a_{j'} \ge \frac{n}{z}$.
We have shown that both components in $\min (b_i - c_1, |P_i|)$ are at least $\frac{n}{z}$, so $|Q_2| + |Q_3| \ge \frac{n}{z}$, as required.
\end{proof}

We are now ready to prove \cref{lem:no-big-phrases}.

\begin{proof}
We process $\LZend$ from left to right.
As long as we encounter phrases smaller than $\frac{n}{z}$, we simply append them to $\LZend'$ (which is initially empty).
When we reach a phrase $P_i$ with $|P_i| \ge \frac{n}{z}$ with $O(\frac{2z|P_i|}{n}+1)$ phrases by applying \cref{clm:replace-big-with-small} on $\LZend', P_i$ (notice that $\LZend ',P_i$ is an LZ-End partition of $S[1..b_i]$ with only the last phrase larger than $\frac{n}{z}$), and add the obtained pieces $Q_1,Q_2,\ldots$ to $\LZend'$ from left to right (instead of adding $P_i$).

In order to apply the decomposition of $P_i$ into smaller phrases described in \cref{clm:replace-big-with-small}, we need to be able to access $cut(j)$, the phrase containing $b_{j}-\frac{n}{z}$ in constant time.
To support this, we compute and store $cut(j)$ for every phrase $P'_i$ that is added to $\LZend'$ as we process $\LZend$ from left to right.
With access to $cut(j)$, a straightforward implementation of \cref{lem:bad-jumps-same-phrase} will takes $O(n_i)$ where $n_i$ is the number of newly created phrases.

When we append $P'_{x}=[a'_x..b'_x]$ to $\LZend'$, we first check if $P'_{j'}$ with $j'= cut(x' -1)$ contains $b'_{x}-\frac{n}{z}$.
If it does, we set $cut(x) = j'$.
Otherwise, it must be the case that $b'_{x} - \frac{n}{z}$ is to the right of $P_{j'}$.
We check $P_{j'+1}, P_{j'+2}, \ldots$ until finally reaching the phrase $P'_c$ containing $b'_{x}-\frac{n}{z}$ and set $cut(x) = c$.
Since every phrase is only eliminated once as $cut(x)$, the overall evaluation of all $cut(x)$ values requires $O(z') = O(z)$ time (we justify $z'= O(z)$ below).

The number of new pieces created is bounded by $\sum_{P}(\frac{2z\cdot |P|}{n}+1)$ where the sum is taken over the phrases we split.
Since we only split phrases with length at least $\frac{n}{z}$ and the sum of the lengths of all phrases is $n$, the number of components in this sum is $O(z)$.
We therefore have that the total number of added phrases is bounded by $ \frac{2zn}{n} + z = 3z$.
It follows that the newly created partition has at most $4z$ phrases.

Since we create every new phrase in constant time, the time complexity is proportional to the number of phrases, which is $O(z)$.
\end{proof}

\section{Data Structure for Substring Extraction}\label{sec:extraction}

In this section, we present an efficient compressed data structure for substring extraction, proving \cref{thm:extraction}.
That is, given $[i..j]$, return $S[i..j]$.
We will present a data structure that uses $O(z)$ space and supports queries in $O(\log^2 \frac{n}{z}+ j-i)$ time.
That is, the best conceivable query time without improving upon the query time for random access presented in \cref{thm:main}.

We will assume that $j-i \le \log^2\frac{n}{z}$, and show an algorithm with running time $O(\log^2 \frac{n}{z})$.
In the general case where $j-i$ is large, we can break $S[i..j]$ into pieces of size $\log^2\frac{n}{z}$ and extract each piece separately for a total running time of $O(j-i)$.

\textbf{Intuition and Overview.}
As a primary tool, we will first present an optimal data structure with $O(z)$ space and $O(j-i)$ query time for the special case where $j$ is the ending index of a phrase $P$ (also given at query time)\footnote{Kreft and Navarro \cite{DBLP:journals/tcs/KreftN13} also introduced an algorithm for this case, but their definition of LZ-End is slightly different from the one we use, which is the common definition in recent works on LZ-End. }.
We call this data structure a \textit{suffix extraction} data structure, and we will make extensive use of this data structure in our general substring extraction data structure.

The existence of the suffix extraction data structure naturally defined an 'advantageous'  setting where general substring extraction can be done efficiently.
For instance, if $S[i..j]$ contains a phrase boundary $b$, we can extract $S[i..b]$ in $O(b-i+\log \frac{n}{z})$ time (the $O(\log \frac{n}{z})$ factor is required to find $b$ via \cref{lem:find-con-phrase}).
This allows us to assume that our input does not contain a phrase boundary

With this assumption in place, another convenient case for extraction arises.
If $r_{i} \in O(j-i)$ we can use suffix extraction to extract $S[i..i+r_i]$ in $O(j-i)$ time.
Due to our assumption that $[i..j]$ has no phrase boundaries, we have $i+r_i > j$ and we therefore have $S[i..j]$ at hand as a prefix of $S[i..i+r_i]$.

Our goal now shifts to reducing a general input $i,j$ into an input with $r_i \in O(j-i)$.
To this end, we will employ the epoch approach of the random access algorithm.
Intuitively, we would like to say that this goal is obtainable by simply running the random access algorithm on $i$, and halting the algorithm prematurely when a sufficiently small $r$-value is obtained.
Recall that the random access algorithm consists of epochs, each reducing the $r$-value to the next exponential level compared to the value seen at the start of the epoch.
Therefore, one may expect that when running a sequence of epochs on index $i$, we will encounter a sequence of indices $i = i_1,i_2,\ldots i_x$ such that if $r_{i_x}\in [k..1.5k)$, then  $r_{i_{x+1}} \in [k/1.5..k)$ (i.e., in the next smallest exponential level).
Combined with the observation that as long as $r_{i_x}>j-i$, it holds that $S[i_x..i_x+j-i] = S[i..j]$, this structure immediately leads to an extraction algorithm.
We can run the random access algorithm on $i$, and instead of halting upon reaching $i'$ with $r_{i'}=0$, we halt upon reaching $r_{i'}\in [j-i..1.5 (j-i)]$
This should occur within $O(\log \frac{n}{z})$ epochs, leading to $O(\log^2\frac{n}{z})$ running time, with an additional $O(j-i)$ running time to apply suffix extraction on the reduced instance.

Unfortunately, the above structure does not necessarily occur in our random access algorithm.
It may be the case that the $r$-value of $i_x$ is arbitrarily smaller than $r_{i_{x-1}}$, leading to an $r$-value smaller than $j-i$.
This means that following the source of the phrase containing $i_x$ will not necessarily lead to an occurrence of $S[i..j]$, making any further step following $i_x$ potentially lose some of the information we are required to extract. 

To bypass this problem, we define a modified variant of the epoch procedure.
Our epoch procedure takes indices $i,j$ as input with $r_i \in [\max(10(j-i),k)..1.5k)$ for some $k\in \K$.
The implied assumption that $r_i >10(j-i)$ can be enforced as we established in the previous discussion.

The epoch procedure has two outputs.
First, it outputs the prefix $S[i..i')$ for some index $i'\in [i..j+1]$ (notice that if $i'=i$, this component of the output is trivial).
Additionally, the epoch outputs an index $i^*$ on which we can recourse, and that makes progress towards a termination condition.
That is, the index $i^*$ has $S[i^*..j^*] = S[i'..j]$ for $j^* = i^*-i'+j$, which means that the remaining suffix to be extracted from $S[i..j]$ occurs at $i^*$.
We also demand that the index $i^*$ has $r_{i^*} \in [j-i..k)$, bringing us closer towards the termination condition of the algorithm.
The running time is $O(i'-i + \log \frac{n}{z})$.

For the special case where $i'=j+1$, we simply have that the epoch procedure extracted $S[i..j]$ in its entirety, and completing the execution of the extraction query.
In this special case, we allow the epoch to consume $O(j-i+ \log^2\frac{n}{z})$ time.
For the formal presentation of the epoch procedure, see \cref{lem:extraction-epoch}.

We implement the epoch by essentially running the epoch of the extraction procedure on $i$.
If it happen to result in a gradual decrease in $r$-value, we can simply return its output.
Otherwise, we carefully analyze the cases that arise from an $r$-value below $j-i$, and exploit the structure that arises in each of these cases to extract some prefix of $S[i..j]$ while locating an alternative $i^*$ with a sufficiently small $r$-value. 

\subsection{Suffix Extraction}.
We start by solving the special case where $j$ is an ending index of a phrase.
We assume that the query specifies the phrase $P$ that ends in index $j$.
We will show a data structure with $O(z)$ space, $O(z \log^2 \frac{n}{z})$ construction time, and $O(j-i)$ query time.

\paragraph{Preprocessing and tree structure}
In preprocessing time, we compute $\delpre(a,b)$ for every phrase $P=S[a..b]$ in $O(z\log^2\frac{n}{z})$ using \cref{lem:computepre}.
We defined for each $P=S[a..b]$ a parent phrase $P'$ as follows.
Let $b'= b - \delpre(a,b)$.
The phrase containing $b'$ is $P'$.
This defines a forest structure over the phrases of $\LZend$, we compute and store this forest $T$.
Finally, we construct an unweighted ancestor data structure for $T$.

\paragraph{Query}
Our query procedure is recursive.
Given a query $S[i..j]$ with the phrase ending at index $j$, we simply return $S[j]$ if $i=j$.

Otherwise, we recurse on the LZ-End phrases strictly contained within $S[i..j]$.
We find the phrases intersecting $S[i..j]$ in time proportional to their number by simply starting with the phrase $P=P_y$ ending in $j$ and traversing $P_{y-1},P_{y-2}\ldots$ until we reach some $P_x=[a_x..b_x]$ such that $a_x\le i$.
For every $p\in[x..y]$ seen in this process, we recursively extract $S[a_p..b_p]$ by recursing on $S[a'_p..b'_p]$ where $a'_p = a_p- \delpre(a_p,b_p)$ and $b'_p = b_p-\delpre(a_p,b_p)$.

The leftmost phrase $P_x= S[a_x..b_x]$ intersecting $S[i..j]$ requires special care 
We use an ancestor query to find the root of the tree containing $P_x$ in the forest $T$.
Notice that this must be a phrase of length $1$, so it is equal to a suffix of $S[a_x..b_x]$ (of length 1) contained within $S[i..j]$.
Further notice that every ancestor of $P_x$ is an occurrence of some suffix $S[b_x-\ell..b_x]$ of $P_x$.
Let $P^1$ be the root of the tree in $T$ containing $P_x$.
We generate the sequence $P^1,P^2,\ldots P^{x'}$ of phrases as follows.
For $k \le x$, if $|P_k| < b_x-i$ we set $P^{k+1}$ to be the child of $P^{k}$ in $T$ towards $P_x$.
Otherwise, if $|P^k|\ge b_x-i$, we set $P^{k-1}=P^{x'}$ as the last element of the sequence and terminate.
In words, this is the prefix of the path from the root to $P_x$ that is truncated before the first phrase with length at least $b_x-i$.
Notice that the lengths $|P^1|,|P^2|,\ldots$ form an increasing sequence.

Denote for every $k\in[1..x']$ the endpoints $P^k=S[a^k..b^k]$.
The algorithm recurses on $S[b^k-|P^{k+1}|..b^k - |P^k|]$.
Finally for $P^{x'}$ the algorithm recurses on $S[b^{x'}-(b_x-i)..b^{x'}-|P^{x'}|]$

\paragraph{Correctness.}
Clearly, $P_{x+1},P_{x+2},\ldots,P_y$ form a suffix of $P[i..j]$, so extracting each of them leads to the extraction of this suffix.

As for the prefix covered by $P_x$, we claim that the recursive calls induced by $P^1,P^2,\ldots P^{x'}$ are for strings that form a partition of $S[i..b_x]$.
More precisely, we claim that for every $k\in [1..x')$ it holds that $S[b_x-|P^{k+1}|..b_x-|P^k|]=S[b^k-|P^{k+1}|..b^k-|P^k|]$, and that $S[i..b_x- |P^{x'}]=S[b^{x'}-(j-b_x)..b^{x'}-|P^{x'}|]$.

This follows immediately from the following stronger property
\begin{lemma}\label{lem:pre-tree-suffix-equality}
Let $P=S[a..b]$ be a phrase in $\LZend$. 
Let $P''$ be an ancestor of $P$ in $T$ and let $P''=S[a''..b'']$ be the parent of $P'$ in $T$.
It holds that $S[b-(b'-a')..b]=S[b''-(b'-a')..b'']$.
\end{lemma}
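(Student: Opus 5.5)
Throughout, I read the statement with the evident typo fixed: $P'=S[a'..b']$ is an ancestor of $P=S[a..b]$ in $T$ (possibly $P'=P$), and $P''=S[a''..b'']$ is the parent of $P'$ in $T$. The claim is then that the length-$|P'|$ window ending at $b$ equals the length-$|P'|$ window ending at $b''$. I would prove two facts and combine them.

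\textbf{(A)} For every phrase $Q=S[a_Q..b_Q]$ that has a parent $Q''=S[a_{Q''}..b_{Q''}]$ in $T$, we have $S[a_Q..b_Q]=S[b_{Q''}-|Q|+1..b_{Q''}]$ and $|Q''|<|Q|$.

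\textbf{(B)} For every ancestor $Q$ of $P$, we have $S[a_Q..b_Q]=S[b-|Q|+1..b]$.

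Given (A) and (B), the lemma follows by chaining equalities. By (B) applied to $Q=P'$, we get $S[b-(b'-a')..b]=S[a'..b']$. By (A) applied to $Q=P'$, we get $S[a'..b']=S[b''-(b'-a')..b'']$.

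\textbf{Proof of (A).} I unfold the definition of $\pre(a_Q,b_Q)$ as a sequence of intervals $[x_0..y_0]=[a_Q..b_Q],[x_1..y_1],\ldots,[x_t..y_t]$. Each interval is obtained from the previous one by applying $J$ to both endpoints, which is allowed as long as the previous interval lies inside one phrase that has a source. I then show by induction that every $y_s$ is a phrase end.

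\begin{itemize}
\item For $s=0$ this holds because $y_0=b_Q$.
\item Suppose $y_s$ is a phrase end and $[x_s..y_s]$ lies inside a single phrase $R$. Then $R$ must end exactly at $y_s$. By the definition of $\delta_R$, the point $J(y_s)=y_s-\delta_R$ is the right end of the source of $R$, which is again a phrase end.
\end{itemize}

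Hence $y_t$ is a phrase end. By definition of the parent in $T$, $y_t=b-\delpre(a_Q,b_Q)$ lies in $Q''$, so $y_t=b_{Q''}$. Each $J$-step preserves the underlying substring by \cref{lem:j-legal}, so $S[x_t..y_t]=S[a_Q..b_Q]$. This gives the first half of (A).

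For the length bound, $[x_t..y_t]$ is not contained in a single phrase with a source. In the generic case it therefore contains a phrase boundary strictly before its right end. That boundary is at least $x_t$, so $a_{Q''}>x_t$, and hence $|Q''|<y_t-x_t+1=|Q|$. The degenerate case is a length-one phrase without a source: then $[x_t..y_t]$ is that single phrase, $Q''$ is a root, and the bound is immediate or vacuous. I also need to check that $t\ge 1$, i.e.\ that $Q''\neq Q$ whenever $Q$ has a source.

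\textbf{Proof of (B).} I induct along the path from $P$ up to $Q$. The base case $Q=P$ is trivial. For the step, suppose $Q$ satisfies (B) and let $Q''$ be its parent. By (A), $Q''$ is shorter than $Q$ and equals the suffix of $Q$ of length $|Q''|$. Composing this with the occurrence of $Q$ ending at $b$ shows that $Q''$ also occurs ending at $b$.

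\textbf{Main obstacle.} The delicate step is proving in (A) that the $\pre$-occurrence ends exactly at $b_{Q''}$, and not merely somewhere inside $Q''$. This is what the phrase-end invariant is for. It must be checked carefully against the LZ-End definition: the source of a phrase $R$ must end at the right end of a phrase, so $J$ applied to the right end of $R$ lands on a phrase end.
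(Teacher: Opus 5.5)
Your proof is correct and follows essentially the paper's route. The paper inducts on the distance from $P$ to $P'$: its induction hypothesis plays the role of your (B), and each step applies the one-level $\pre$ property you isolate as (A). Your phrase-end invariant justifies the identity $b''=b-\delpre(a,b)$, and your boundary argument justifies the length decrease; the paper asserts both without proof.

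One small correction: when $Q$ is a length-one reference phrase, its parent is a terminal phrase with $|Q''|=|Q|=1$, so the strict bound in (A) fails there. Your induction for (B) only uses $|Q''|\le|Q|$, so nothing breaks.
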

\begin{proof}
We prove by induction on the distance from $P$ to $P'$.
For the case of distance 0, we have that $P=P'=S[a..b]$ and $P''=S[a''..b'']$ such that $b'' = b-\delpre(a,b)$.
It is immediately implies by the definition of $\delpre$ that $S[a..b]=S[a''..b'']$, as required.

For the inductive vase, we have that $P'$ is at distance $d>0$ from $P$.
Let $\hat P = S[\hat a .. \hat b]$ be the child of $P'$ towards $P$.
From the induction hypothesis, we have that $S[b-(\hat b-\hat a)..b]=S[b'-(\hat b-\hat a)..b']$.
Since $\hat b - \hat a > b'-a'$, we have in particular $S[a'..b'] = S[b'-(b'-a')..b']=S[b-(b'-a')..b]$.
By definition, $P''=S[a''..b'']$ such that $b'' = b'-\delpre(a',b')$.
It is immediately implies by the definition of $\delpre$ that $S[b''-(b'-a')..b'']=S[a'..b']=S[b-(b'-a')..b]$, as required.
\end{proof}

\cref{lem:pre-tree-suffix-equality} shows that the equality $S[b_x-|P^{k+1}|..b_x-|P^k|]=S[b^k-|P^{k+1}|..b^k-|P^k|]$ indeed holds.
Further notice that the substring $S[b^k-|P^{k+1}|..b^k-|P^k|]$ ends in a phrase boundary, as $b^k-|P^k|+1$ is the first index of $P^k$, so $.b^k-|P^k|$ is the last index of the phrase preceding $P^k$.

In conclusion, we have shown that the recursive calls we made are for strings that together form a partition of $S[i..j]$, and that each of them is smaller than $S[i..j]$.
It follows that the algorithm returns the correct answer, and that it terminates.

\paragraph{Complexity.}
During the running time of the query, we are either in a termination case of the recursion - and return $S[j]$ in constant time, or we make $R$ recursive calls, and work $O(R)$ time in total to make these calls for some integer $R>1$.

This process can be described as a tree, where the children of each recursive call are the recursive calls made by it.
As every recursive call partition the substring it is responsible for, this tree has $j-i$ leaves, and each leaf correspond to $O(1)$ time of the algorithm.
Every external node of the tree with degree $R$ corresponds to a recursive calls which is executed in $O(R)$ time.

It follows that the total running time is bounded by the number of edges in a tree with $j-i$ leafs such that all internal degrees are at least 2, which is $O(j-i)$ as required.

We conclude with the following.
\begin{lemma}\label{lem:suffix-extraction}
    Given an LZ-End factorization $\LZend$ with $z$ phrases of a string $S$ with length $n$, we can construct in $O(z)$ space and $O(z\log^2\frac{n}{z})$ time a data structure that given $i,j$, and a phrase $P$ that ends in index $j$, returns $S[i..j]$ in $O(j-i)$ time.
\end{lemma}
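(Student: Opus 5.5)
The plan is to assemble the preprocessing and query procedures already described in this subsection and verify the three claimed bounds: space, construction time and query time.

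\textbf{Preprocessing.} We apply \cref{lem:computepre} with the query set consisting of one pair $(a,b)$ per phrase $P=S[a..b]$. By \cref{lem:no-big-phrases} we may assume every phrase has length at most $\frac{n}{z}$, so each pair satisfies $e\in[b..b+\frac nz]$ and the input is legal. With $|I|=z$ this costs $O(z\log^2\frac nz)$ time and yields $\delpre(a,b)$ for all phrases. Next, for each phrase we locate the phrase containing $b-\delpre(a,b)$ using \cref{lem:find-con-phrase}, in $O(\log\frac nz)$ time per phrase; this builds the forest $T$. We store $T$ together with an $O(z)$-space level ancestor structure~\cite{DBLP:journals/tcs/BenderF04}, and we keep the array of phrase endpoints and lengths. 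Everything then uses $O(z)$ space, and the construction time is dominated by the $O(z\log^2\frac nz)$ term.

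\textbf{Correctness.} This reduces to two facts. First, the recursive subproblems partition $S[i..j]$. Second, each subproblem is again a suffix-extraction instance, meaning its right end is a phrase boundary whose phrase we know.

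For the full phrases $P_{x+1},\dots,P_y$, the subproblem is $S[a'_p..b'_p]$, which equals $S[a_p..b_p]$ by the definition of $\delpre$. It ends at $b_p-\delpre(a_p,b_p)$. One point needs care here: $\pre(a,b)$ stops once the copy contains a boundary, so the recursive copy is generally not aligned to end at a boundary. I would fix this by letting $T$ record the parent phrase and recursing only on the suffix of the copy ending at a boundary. More cleanly, I would use the chain of \cref{lem:pre-tree-suffix-equality}, exactly as is done for $P_x$: extract $P_p$ by walking its root-to-node path.

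For the leftmost phrase $P_x$, \cref{lem:pre-tree-suffix-equality} gives that every ancestor of $P_x$ in $T$ ends with a suffix of $P_x$ of matching length. Since the lengths along the root-to-$P_x$ path increase, the pieces $S[b^k-|P^{k+1}|..b^k-|P^k|]$ tile $S[i..b_x]$. Each piece ends at $b^k-|P^k|$, which is the end of the phrase preceding $P^k$, so it is a valid suffix query. To enumerate the path from the root downwards, we use level ancestor queries at successive depths. Termination follows because every piece is strictly shorter than $j-i+1$, given that $R\ge2$ pieces are produced whenever $i<j$.

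\textbf{Complexity.} I expect this to be the main obstacle. The recursion tree has $j-i+1$ leaves and every internal node has degree at least $2$, so it has $O(j-i)$ nodes. Each internal node must spend $O(\text{degree})$ time. This is immediate for the walk over $P_{x+1},\dots,P_y$. For the chain $P^1,\dots,P^{x'}$, each level ancestor query produces one child in $O(1)$ time, and the single extra query that detects the stopping phrase is charged to the parent. The delicate case is a node that produces only one child, which would break the degree-$\ge2$ argument. I would exclude it by showing that a nontrivial query with no internal boundary is impossible. If $[i..j]$ lies inside one phrase, then the chain for $P_x$ starts at a root of length $1$, and that root yields a singleton piece alongside another nonempty piece. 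Whenever that fails and a node has exactly one child, I would contract the node, charging its $O(1)$ work to its child. The total is then $O(j-i)$, which matches \cref{lem:suffix-extraction}.
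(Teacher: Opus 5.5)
Your proposal is correct and follows the paper's proof. It uses the same forest $T$ built from $\delpre$ of whole phrases and the same recursion: whole phrases $P_{x+1},\dots,P_y$, plus the truncated root-to-$P_x$ path for the leftmost phrase. The counting argument is also the same: a recursion tree with $j-i+1$ leaves, internal degree at least $2$, and $O(\deg)$ work per node. Your justification that $R\ge 2$ (the length-$1$ root always gives a singleton piece next to a nonempty remainder) usefully fills in a step the paper only asserts.

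One claim needs correcting. Your worry that $S[a'_p..b'_p]$ need not end at a phrase boundary is unfounded. $J$ maps the last position of a phrase to the last position of its source. So, by induction along the $\pre$ recursion for a whole phrase, the right endpoint always remains a phrase end. Consequently $b_p-\delpre(a_p,b_p)$ is exactly the last position of the parent of $P_p$ in $T$, and the paper's direct recursion on $S[a'_p..b'_p]$ is already a valid suffix query.

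Neither of your remedies is needed. The first, recursing only on a boundary-ending suffix of the copy, would under your premise drop characters. Walking the root-to-$P_p$ path works, but only because of this same alignment fact: it is what makes the statement of \cref{lem:pre-tree-suffix-equality} meaningful, since the ancestors' right ends are precisely these $\delpre$-shifted phrase ends. Likewise, the fallback of contracting unary nodes is vacuous given $R\ge 2$, and would not by itself bound the number of recursion nodes.
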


\subsection{The General Case}

As in the random access algorithm of \cref{thm:main}, our algorithm is partition into epochs, where each epoch may extract some prefix $S[i..i')$ of $S[i..j]$, and provide an index $i^*$ that is an occurrence of $S[i'..j]$, and $i^*$ has a significantly lower $r$-value than $i$.
An epoch is described as the following procedure.
\begin{lemma}\label{lem:extraction-epoch}
For a string $S$ with LZ-End factorization $\LZend$ of size $z$, there is a data structure supporting the following query.

\textbf{Input:} $i\le j\in [n]$ such that $r_i \in [k..1.5k)$ for $k\in \K$, and $r_i > 10(j-i)$.

\textbf{Output:} For some $i'\in [i..j]$, return $S[i..i')$ and an index $i^*\in [n]$ such that $S[i^*..i^*-i'+j] = S[i'..j]$ and $r_{i^*} \in [j-i'.. k)$.

The data structure can be constructed from $\LZend$ in $O(z\log^2(\frac{n}{z}))$.
The query time is $O(j-i + \log^2\frac{n}{z})$ if the output is $S[i..j]$.
Otherwise, the query time is $O(i'-i+\log \frac{n}{z})$.

\end{lemma}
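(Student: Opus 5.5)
The plan is to run one epoch of the random access algorithm of \cref{thm:main} (the naive part of \cref{lem:naive}, then if needed the stable part of \cref{lem:stable}). All jumps are legal jumping rules ($J$, $M$, $sJ$), so each maps $i$ to $i-\Delta$ for a shift $\Delta$ obtained by following source copies (via $\delta_P$ or $\delpre$) of a substring containing the current index. I will watch, along the computed sequence of indices $i=i_0,i_1,\dots$, for the first index whose $r$-value is below $j-i$. As long as $r_{i_t}\ge j-i$, the window $[i_t..i_t+j-i]$ lies inside the phrase of $i_t$. Since every jump copies an aligned suffix of the phrase (for $J$) or an occurrence of an interval (for $\pre$), we get $S[i_t..i_t+j-i]=S[i..j]$ whenever $r_{i_{t}}\ge j-i$. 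If the epoch ends at $i^{\mathrm{end}}$ with $r\in[j-i..k)$, I return $i'=i$ and $i^*=i^{\mathrm{end}}$, in time $O(\log\frac nz)$. The procedures of \cref{lem:naive,lem:stable} never materialise all intermediate indices, only the endpoints of shortcuts or of ancestor jumps. However, $r$-values are non-increasing along the sequence, and the $r$-values inside a shortcut block or along a bad path decrease predictably (constantly within $T_k$, by $\delta_r$ along $T_{\Bad}$). So I can also locate the \emph{last} index with $r\ge j-i$ inside the $O(\log\frac nz)$ steps actually performed, with the same level-ancestor/binary-search machinery.

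The difficult case is when a single step jumps from $u$ with $r_u\ge j-i$ (indeed $r_u\ge k>10(j-i)$ initially) to $v$ with $r_v<j-i$. I split according to the type of step. For a $J$ step from phrase $P$ with source ending at $b'$, the copy $S[u..u+j-i]$ lands at $S[v..v+j-i]$ inside the occurrence $S(b'-|P|..b']$, and this occurrence crosses phrase boundaries. Let $b^\circ=v+r_v$ be the first boundary at or after $v$. Then $S[v..b^\circ]$ is a prefix of $S[i..j]$ ending at a phrase boundary. I extract it in $O(b^\circ-v)$ time with the suffix extraction of \cref{lem:suffix-extraction}, set $i'=i+(b^\circ-v)+1$, and the rest $S[i'..j]$ occurs at $b^\circ+1$. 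This index starts a phrase, so its $r$-value is the length of that phrase. That phrase may be long, which is what happens when $v$ is not helpful. If it is at least $j-i'$ but below $k$, I return $i^*=b^\circ+1$. If it is $\ge k$, I instead use $u$'s original copy: $S[i'..j]$ also occurs at $u+(i'-i)$, which has $r$-value $r_u-(i'-i)\ge j-i'$. For $M$ and $sJ$ steps the same reasoning applies, because the target interval $[x'..y']$ of $\pre$ contains a boundary and $v$'s window is aligned with $u$'s.

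I expect the main obstacle to be guaranteeing progress, namely $r_{i^*}<k$, while keeping $i'-i$ charged to the extracted prefix. My first attempt: when the extracted boundary-starting phrase is too long, fall back to $u'=u+(i'-i)$ and restart the epoch from $u'$. Its $r$-value is less than $r_u\le r_i$, still at level $k$, but the remaining length $j-i'$ is strictly smaller. Each restart removes at least one extracted character, so the total cost is $O(i'-i)$ per round plus $O(\log\frac nz)$ per round. To avoid paying $\log\frac nz$ per character, I would argue that a restart happens only after extracting $\Omega(j-i)$-proportional progress or reaching $i'=j+1$. This is where the relaxed budget $O(j-i+\log^2\frac nz)$ for full extraction is used: I allow up to $O(\log\frac nz)$ restarts in that case.
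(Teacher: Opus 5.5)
\textbf{There is a genuine gap.} It is the step you yourself flag as the main obstacle: the restart scheme does not produce $r_{i^*}<k$ within the budget.

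Nothing bounds the number of restarts except $j-i$. A round can extract a single character (e.g.\ when $r_v=0$) and then land again on a phrase start whose $r$-value is still at least $k$, so you may pay $\Theta(\log\frac{n}{z})$ per extracted character. The hoped-for ``$\Omega(j-i)$ progress per restart'' is not argued. The budget $O(i'-i+\log\frac{n}{z})$ for a partial output allows only one additive logarithmic term. Nothing shows that the restarts number $O(\log\frac{n}{z})$, or that they end in a full output, which is the only case where the $\log^2\frac{n}{z}$ budget applies.

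For a $J$ step the restart is even circular. Since $u+(i'-i)$ lies in $u$'s phrase, $J(u+(i'-i))=v+(i'-i)=b^\circ+1$, which is the very index you rejected.

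The missing idea is the paper's \cref{lem:extraction-stable}. First, the phrase starting at $b^\circ+1$ ends no later than $v+r_u$, because the end of the copy of $u$'s phrase is a boundary. So its length is at most $r_u<1.5k$; \cref{cor:phraseboundary-M,cor:phraseboundary-pre} give the same bound for $M$ and $sJ$ steps. Its start may still have $r\ge k$, so instead of restarting from the left end, follow the \emph{right} endpoint $j$ under $sJ$ along $T_{\Bad}$:
\begin{itemize}
\item By \cref{lem:stable-propreties-extraction}, $\ell$ strictly decreases and phrase lengths do not increase along this path.
\item The part of the window falling left of the bad parent is a suffix of the preceding phrase, so \cref{lem:suffix-extraction} extracts it. Hence every $O(1)$-time step is paid for by at least one newly extracted character.
\item Once $j_X$ leaves the bad suffix, the $sJ$-image of the left end of the remaining window is not in the bad parent. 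By \cref{lem:stable-propreties-extraction} its $r$-value is below $\frac23|P_X|<k$, which gives $i^*$ after a single $O(\log\frac{n}{z})$ overhead.
\end{itemize}

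\textbf{A second problem: window preservation fails for $M$ and $sJ$.} $M(u)$ copies only the halved canonical interval $[x..y]$ containing $u$. Its length is about $\min(\ell_u,r_u)/2$, which can be far below $j-i$ when $\ell_u$ is small. Likewise, $sJ$ copies only $L(P)$. So $S[M(u)..M(u)+j-i]$ need not equal $S[i..j]$ even when $r$ stays large, and your ``easy'' output $i^*=i^{\mathrm{end}}$ is not justified either.

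The paper treats this explicitly:
\begin{itemize}
\item Case 3.C.II extracts up to the boundary inside the image of $[x..y]$ and continues with the next canonical interval.
\item \cref{lem:stable-extraction-inleft} precomputes $\pre$ for $L(P)$ extended by $\log^2\frac{n}{z}$ positions, using the reduction to $j-i\le\log^2\frac{n}{z}$.
\end{itemize}
That single suffix extraction of length $O(j-i+\log^2\frac{n}{z})$ is where the relaxed full-output budget is actually spent. It is not used to pay for repeated epochs.
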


Given \cref{lem:extraction-epoch,lem:extraction-epoch-naive}, we can prove \cref{thm:extraction}.

\begin{proof}[Proof of \cref{thm:extraction}]
For the preprocessing, we construct the data structure of \cref{lem:extraction-epoch}.
For a query $i,j$, we repeatedly apply the following.

First, we would like to enforce $r_i > j-i$.
To this end, we find the starting index $i'$ of the phrase containing $j$ using \cref{lem:find-con-phrase}.
If $i' \le i$, we already have $r_i \ge j-i$.
Otherwise, the prefix $S[i..i']$ can be retrieved in $O(i'-i)$ using \cref{lem:suffix-extraction}, and the remaining task is to retrieve $S[i'..j]$, where $r_{i'}\ge j-i'$. 
To avoid clutter, let us simply assume that $r_{i} \ge j-i$ (implicitly using $i$ in reference to $i'$ if the above reduction was required).

If $r_i \le 10(j-i)$ we can use \cref{lem:suffix-extraction} to retrieve $S[i..i+r_i]$ in $O(r_i)=O(j-i)$ time.
Otherwise, we have that $r_i > 10(j-i)$ and therefore we can apply \cref{lem:extraction-epoch}.
If \cref{lem:extraction-epoch} outputs $S[i..j]$, we are done.
Otherwise, let $k\in \K$ such that $r_i \in [k..1.5k)$.
The output of \cref{lem:extraction-epoch} is $S[i..i')$ and an index $i^*$ such that $S[i'..j]=S[i^*..i^*-i+j]$, and $r_{i^*}<k$.
Our task is therefore reduced to extracting $S[i^*..i^*-i+j]$.
We obtain this reduced instance in total $O(i'-i + \log \frac{n}{z})$ time.

When iteratively applying the above, every call for \cref{lem:extraction-epoch} results in $i'$ with  $r_{i'}\in [k..1.5k)$ for a $k\in K$ strictly smaller than the one seen in the previous call.
Therefore, we will have a total of at most $O(\log \frac{n}{z})$ calls before we finally reach $r_i \in [j-i..10(j-i)]$.
At this point, we can extract $S[i..j]$ as previously described.

Since every call to \cref{lem:extraction-epoch} retrieves a disjoint fragment of the initial $S[i..j]$, the total time contributed by the $O(i'-i)$ component of all calls to \cref{lem:extraction-epoch} is $O(j-i)$.
Since we make a total of $O(\log \frac{n}{z})$ calls, the total time contributed by the $O(\log \frac{n}{z})$ components of the calls to \cref{lem:extraction-epoch} is $O(\log^2\frac{n}{z})$.
In total, the running time is $O(j-i +\log ^2 \frac{n}{z})$, as required. 

It may also be the case that the last call we make for \cref{lem:extraction-epoch} returned the remaining suffix. 
In this case, the running time of this call has an $O(\log^2 \frac{n}{z})$ factor instead of $O(\log \frac{n}{z})$.
Since we immediately terminate when this happens, the total running time would not exceed $(j-i + \log^2 \frac{n}{z})$ 
\end{proof}

For the rest of this section, we prove \cref{lem:extraction-epoch}.
As in the random access algorithm, we split the epoch into a naive part and a stable part.
We will reuse a lot of concepts and terminology from the random access algorithm.
Let us prove some helpful properties of the $\pre$ function.

\begin{lemma}\label{lem:pre-phraeboundary-in-front}
Let $i\in [a..b]$ with $r_i \ge b-i$, and let $i'= i - \delpre(a,b)$. 
There is a phrase boundary in $[i'+b-i..i'+r_i]$.
\end{lemma}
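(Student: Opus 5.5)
The plan is to unroll the recursive definition of $\pre$ and follow where the phrase end of $i$ is mapped. Write $(a_0,b_0)=(a,b)$ and $i_0=i$. Let $(a_t,b_t)$, $t=0,1,\ldots,T$, be the pairs visited when evaluating $\pre(a,b)$, so that $(a_{t+1},b_{t+1})=(J(a_t),J(b_t))$ as long as $a_t,b_t$ lie in a common phrase with a source. Let $i_t=i-(a-a_t)$ be the index aligned with $i$. Then $\pre(a,b)=a_T$, $i'=i_T$, and $i'+b-i=b_T$. So the goal is a phrase boundary (a right endpoint of some phrase) in $[b_T..i_T+r_i]$.

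\textbf{Case $T=0$.} Then $i'=i$ and $b_T=b$. The index $i+r_i$ is the last index of the phrase containing $i$, hence a phrase boundary. By the hypothesis $r_i\ge b-i$ it lies in $[b..i+r_i]$, and we are done.

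\textbf{Case $T\ge 1$.} Look only at the last step $T-1\to T$. At that step $a_{T-1}$ and $b_{T-1}$ lie in a common phrase $P=S[a_P..b_P]$ with a source, and $i_{T-1}\in[a_{T-1}..b_{T-1}]\subseteq P$. The step shifts every index of $P$ by the same $\delta_P$. In particular $b_P$ is sent to $b_P-\delta_P$, the right end of the source of $P$, which is a phrase boundary. I then check that this boundary lies in the required range:
\begin{itemize}
\item Since $b_{T-1}\le b_P$, we get $b_T=b_{T-1}-\delta_P\le b_P-\delta_P$.
\item Since $i_{T-1}\in P$, we have $b_P=i_{T-1}+r_{i_{T-1}}$, so $b_P-\delta_P=i_T+r_{i_{T-1}}$.
\item By \cref{lem:j-legal}, $r$ never increases along $J$, so $r_{i_{T-1}}\le r_{i_0}=r_i$ by induction on $t$. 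This gives $b_P-\delta_P\le i_T+r_i$.
\end{itemize}
Together these show $b_P-\delta_P\in[b_T..i_T+r_i]$, which proves the lemma.

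One subtlety is that the chain $r_{i_t}$ non-increasing requires each $i_t$ to lie in the same phrase as $a_t,b_t$ before it is shifted. This holds because $[a_t..b_t]$ is contained in that phrase at every step $t<T$, and $i_t\in[a_t..b_t]$.

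The main thing to get right is not to attempt an invariant at the final index $i_T$ itself. After the last shift there may be a boundary strictly between $i_T$ and $b_T$, so the phrase containing $i_T$ need not extend to $b_T$. Using the image of the previous phrase end $b_P$ avoids this problem.
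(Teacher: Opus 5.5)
Your proof is correct, and it uses the same two facts as the paper. These are that $J$ maps the right end of a phrase to the right end of its source, and that $r$ never increases along $J$ (\cref{lem:j-legal}); what differs is how you organize them.

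The paper argues by induction on the $\pre$-depth of $(a,b)$. After one application of $J$ it splits on whether $r_{J(i)}\ge J(b)-J(i)$:
\begin{itemize}
\item If so, it invokes the induction hypothesis on $(J(a),J(b))$ and then uses $r_{J(i)}\le r_i$.
\item If not, $\pre$ halts at $J(a)$, and the boundary is $J(i+r_i)$, the image of the current phrase end.
\end{itemize}

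You unroll the recursion and apply the paper's second-case argument once, uniformly, at the final step $T-1\to T$. There you take the image $b_P-\delta_P$ of the last containing phrase's end. This point is at least $b_T$ and equals $i_T+r_{i_{T-1}}$, no matter where boundaries fall near $i_T$. Monotonicity of $r$ along the whole chain then finishes the argument.

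What your route buys: it removes the case analysis and the need to carry the hypothesis $r\ge b-i$ through an induction. It also shows that this hypothesis is only used when $\pre(a,b)=a$. For $T\ge 1$ it holds automatically, since $a$ and $b$ share a phrase and so $b\le i+r_i$.

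What the paper's route buys: its induction mirrors the recursive definition of $\pre$ more literally. Nothing in the statement, or in its uses in \cref{cor:phraseboundary-M,cor:phraseboundary-pre}, needs that extra structure.
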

\begin{proof}
Let us define the $\pre$-depth of a pair $(a,b)$, denoted as $d(a,b)$.
If $\pre(a,b)=a$ then the $d(a,b)=0$.
Otherwise, the $d(a,b) = d(J(a),J(b))+1$.
In words, the $\pre$-depth of $(a,b)$ is the number of recursive calls of $\pre()$ before the value of $\pre(a,b)$ is reached.

We prove the claim by induction on the $\pre$-depth of $(a,b)$.
If the $d(a,b)=0$, then $i=i'$.
If $r_{i}<b-a$ then $a$ and $b$ are not in the same phrase, and therefore $\delpre(a,b) = 0$, we have $i'=i$ and the claim holds since $i'+r_i = i+r_i$ is a phrase boundary.

If  $d=d(a,b)>0$, we have that $a$ and $b$ are contained in the same phrase $P$, and $\pre(a,b) = \pre(a^*,b^*)$ such that the $\pre$-depth of $(a^*,b^*)=(J(a),J(b))$ is $d-1$.
Let $i^* = J(i) = i - \delta_P \in [a^*..b^*]$. 
It holds due the induction hypothesis that if $r_{i^*} \ge b^*-i^*$, then there is a phrase boundary in $[i'..b^*-i^*..i'+r_{i^*}]$.
If it indeed hold that $r_{i^*}\ge b^*-i^*$, the induction hypothesis implies that there is a phrase boundary in $[i'..b^*-i^*..i'+r_{i^*}]$.
The $r$-monotonicity of $J()$ implies $r_{i^*} < r_{i}$, and it should be clear that $b-i= b^*-i^*$.
Together, the three imply that there is a phrase boundary in $[i'..b-i..i'+r_{i}]$

In the other case, if $r_{i^*} < b^*-i^*$, there is a phrase boundary in $[a^*..b^*]$, so $\pre(a,b)=a^*$ and $i^* = i'$.
Since $i'$ is obtained by following the source link of the phrase ending in $i+{r_i}$, we have that there is a phrase ending in $i'+r_{i}$, which is in the required interval $[i'+b-i..i'+r_{i}]$ if $r_i \ge b-i$.

\end{proof}

The following two are immediate corollaries of \cref{lem:pre-phraeboundary-in-front}

\begin{corollary}\label{cor:phraseboundary-M}
   Let $P\in M$ be a marked phrase.
   Let $I=[a..b]$ be an interval in the halved canonical partition of $P$, and let $i\le j$ be two indices in $[a..b]$ and let $M(i)=i'$ and $M(j)=j'$.
   If $r_{i'}<j-i$, then the phrase containing $j'$ is of length at most $r_i$. 
\end{corollary}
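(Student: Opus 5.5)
We apply \cref{lem:pre-phraeboundary-in-front} to the interval $I=[a..b]$ of the halved canonical partition, with the index $i$, and separately compare $i'$ with $j'$. By definition of $M$, both $i'=i-\delpre(a,b)$ and $j'=j-\delpre(a,b)$ are obtained with the same shift. Hence $j'-i'=j-i$, and $S[a'..b']$ with $a'=\pre(a,b)$ is an occurrence of $S[a..b]$ that contains a phrase boundary.

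The first step is to check the hypothesis $r_i\ge b-i$ of \cref{lem:pre-phraeboundary-in-front}. Since $I$ is strictly contained in $P$, and the halved canonical partition splits $P$ into pieces lying inside $P$, the right end $b$ of $I$ is at most the right end of $P$. This gives $r_i\ge b-i$. Applying the lemma, we get a phrase boundary $\hat b\in[i'+b-i..i'+r_i]$. Since $j\le b$, we have $\hat b\ge i'+(b-i)\ge i'+(j-i)=j'$. So $\hat b$ lies at or to the right of $j'$, and therefore $r_{j'}\le \hat b-j'\le i'+r_i-j'$.

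Next we use the assumption $r_{i'}<j-i=j'-i'$. It says that a phrase boundary $e$ satisfies $i'\le e<j'$; concretely, $e=i'+r_{i'}$ is the right end of the phrase containing $i'$. Thus the phrase $Q$ containing $j'$ begins after $e$, so its left end is at least $e+1>i'$. Its right end is at most $\hat b\le i'+r_i$, because the right end of $Q$ is the first boundary at or after $j'$, and $\hat b$ is such a boundary. Consequently $|Q|\le (i'+r_i)-(e+1)+1\le r_i$, giving the required bound $|Q|\le r_i$.

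The main obstacle is the precise application of \cref{lem:pre-phraeboundary-in-front}. We must confirm that its hypothesis $r_i\ge b-i$ holds for an interval of the halved canonical partition, which needs $b\le$ the phrase end. We must also check that the boundary it produces lies at or after $j'$ rather than merely after $i'$. If the lemma's interval endpoints turn out to be off by one, the fallback is to argue directly. Every $J$-step that $\pre$ takes preserves the offset from $i$ to the right end of the current phrase containing $[a..b]$, and this offset is at most $r_i$ by $r$-monotonicity. So the right end of the phrase containing the final copy of $b$, namely $b'=j'+(b-j)$, sits within $r_i$ to the right of $i'$. From there the same length bound for $Q$ follows.
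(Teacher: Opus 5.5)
Your proof is correct and is essentially the paper's argument. The paper applies \cref{lem:pre-phraeboundary-in-front} to $j$ instead of $i$, but since $i'+r_i=j'+r_j$ and $i'+(b-i)=j'+(b-j)$, this gives exactly the same interval. The paper then combines it with the boundary in $[i'..j')$ coming from $r_{i'}<j-i$, just as you do, so your fallback is not needed.
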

\begin{proof}
Since $i,j$ are contained in the same phrase, we have $r_i = j-i+r_j$.
Clearly, we have $r_j \ge b-j$ since $j+r_j$ is the right boundary of $P$, while $[a..b]$ is an interval contained in $P$. 
By definition, $j'= j - \delpre(a,b)$, so it follows from \cref{lem:pre-phraeboundary-in-front} that there is a phrase boundary in $[j'..j'+r_{j}]$.
Since $i' = i - \delpre(a,b)$, we also have $i-j=i'-j'$.

Due to our assumption that $r_{i'}<j-i$, we have a phrase boundary in $[i'..i'+(j-i)) = [i'..j')$.
We have shown that there is a phrase boundary at most $(j-i)$ indices to the left of $j'$, and a phrase boundary at most $r_{j}$ indices to the right of $j'$.
Therefore, the length of the phrase containing $j'$ is at most $r_{j} + j-i = r_i$, as required.
\end{proof}

\begin{corollary}\label{cor:phraseboundary-pre}
   Let $P= S[a..c]\in \LZend$ be phrase and let $L(P)=S[a..b]$.
   $i\le j$ be two indices in $[a..b]$ and let $sJ(i)=i'$ and $sJ(j)=j'$.
   If $r_{i'}<j-i$, then the phrase containing $j'$ is of length at most $r_i$. 
\end{corollary}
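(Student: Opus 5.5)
This corollary follows the argument of \cref{cor:phraseboundary-M} almost verbatim, with the halved canonical interval replaced by $L(P)=S[a..b]$ and the map $M$ replaced by $sJ$. The first step is to record how $sJ$ acts on indices of $L(P)$. By definition, every index $t\in L(P)$ satisfies $sJ(t)=t-\delpre(a,b)$. Hence $i'=i-\delpre(a,b)$, $j'=j-\delpre(a,b)$, and so $j'-i'=j-i$. Since $i\le j$ both lie in the phrase $P$, we also have $r_i=(j-i)+r_j$.

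The second step is to find a phrase boundary to the right of $j'$. We want to apply \cref{lem:pre-phraeboundary-in-front} to the index $j$ and the interval $[a..b]$. Its hypothesis $r_j\ge b-j$ holds because $b\le c$ and $j+r_j=c$ is the right end of $P$; in fact $r_j=c-j\ge b-j$. The lemma then yields a phrase boundary in $[j'+b-j..\,j'+r_j]$. In particular, there is a phrase boundary at some position at most $j'+r_j$, and at or after $j'$. So the phrase containing $j'$ ends no later than $j'+r_j$.

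The third step is to find a phrase boundary to the left of $j'$. By assumption $r_{i'}<j-i$, so the phrase containing $i'$ ends at $i'+r_{i'}<i'+(j-i)=j'$. Consequently the phrase containing $j'$ starts after $i'+r_{i'}\ge i'$, that is, at position at least $i'+1$.

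Combining the two steps, the phrase containing $j'$ is contained in $(i'..j'+r_j]$. Its length is therefore at most $j'+r_j-i'=(j-i)+r_j=r_i$, as required.

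The only point needing care is the off-by-one bookkeeping in \cref{lem:pre-phraeboundary-in-front}: which boundary convention it uses, and whether its hypothesis $r_j\ge b-j$ is exactly what we verified. Everything else is direct substitution.
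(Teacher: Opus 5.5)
Your proof is correct and follows the paper's own argument. Both verify $r_j\ge b-j$ and apply \cref{lem:pre-phraeboundary-in-front} to $j$ to get a phrase end at most $r_j$ to the right of $j'$. Both then use $r_{i'}<j-i$ to place a phrase end strictly left of $j'=i'+(j-i)$, and bound the length by $r_j+(j-i)=r_i$. You are slightly more explicit than the paper about the boundary convention (a phrase boundary is the last index of a phrase), and that convention matches how the lemma is proved and used.
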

\begin{proof}
Since $i,j$ are contained in the same phrase, we have $r_i = j-i+r_j$.
Clearly, we have $r_j \ge b-j$ since $j+r_j$ is the right boundary of $P$, while $S[a..b]$ is prefix of $P$. 
By definition, $j'= j - \delpre(a,b)$, so it follows from \cref{lem:pre-phraeboundary-in-front} that there is a phrase boundary in $[j'..j'+r_{j}]$.
Since $i' = i - \delpre(a,b)$, we also have $i-j=i'-j'$.

Due to our assumption that $r_{i'}<j-i$, we have a phrase boundary in $[i'..i'+(j-i)) = [i'..j')$.
We have shown that there is a phrase boundary at most $(j-i)$ indices to the left of $j'$, and a phrase boundary at most $r_{j}$ indices to the right of $j'$.
Therefore, the length of the phrase containing $j'$ is at most $r_{j} + j-i = r_i$, as required.
\end{proof}

We now present an adjusted version of the naive part of the epoch from the random access algorithm (\cref{lem:naive}). 
\begin{lemma}\label{lem:extraction-epoch-naive}
Given an LZ-End factorization $\LZend$ with $z$ phrases of a string $S$ with length $n$, we can construct in $O(z \log^2 \frac{n}{z})$ time a data structure taking $O(z)$ space that supports the following query.

Given $i,j\in [n]^2$ with $r_{i} \in [\max(10(j-i),k)..1.5k)$ for some $k\in \K$, return an occurrence $S[i'..j']$ of $S[i..j]$ (specified by its endpoints) satisfying one of the following.
\begin{enumerate}
    \item $r_{i'} \in [j-i..k)$,
    \item There is a phrase boundary in $S[i'..j']$, and $\ell_{j'}  + r_{j'} + 1< 1.5k$, or
    \item $S[i'..j']$ is contained some a marked phrase and $r_{i'} < 1.5k$.
\end{enumerate}

The running time of the query is $O(\log \frac{n}{z})$.
\end{lemma}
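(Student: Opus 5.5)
We reuse the data structure of \cref{lem:naive} essentially verbatim: the marked set $M$ from \cref{lem:compute-marked}, the $k$-bad parent structure of \cref{lem:kbad-parent-ds}, and the shortcut layers $M'_k$ with their stored $(\log\frac{n}{z}-1,k)$-bad intervals and $\delta$ sums. This gives the same $O(z\log^2\frac{n}{z})$ construction time and $O(z)$ space. The query will run the same walk $i, J(i), J^2(i),\ldots$, using shortcuts, but with two changes. First, we apply each jump to the pair $(i,j)$ together. Second, we replace the old stopping condition ``$r<k$'' by one adapted to the interval.

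The invariant is that every jump is taken from a phrase containing the whole current interval $[i'..j']$. Then $J$ shifts both endpoints by the same $\delta_P$, and the new interval is again an occurrence of $S[i..j]$. Initially this holds, since $r_i>10(j-i)$ places $[i..j]$ inside one phrase.

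The walk proceeds as follows. From the current $(i',j')$ in phrase $P$ (located once via \cref{lem:find-con-phrase} and afterwards through the bad-parent pointers), we set $(i'',j'')=(J(i'),J(j'))$ and test three things. If $r_{i''}\in[j-i..k)$, we stop with case 1. If $r_{i''}\ge k$, then $i'$ was a bad jump, so $i''$ lies in the $k$-bad parent $P'$. Moreover $j''$ also lies in $P'$, because $r_{i''}\ge k>j-i$, so $[i''..j'']$ contains no boundary and we may continue. If $P'\in M$, we stop with case 3: the interval lies in a marked phrase and $r_{i''}\le r_i<1.5k$. The remaining possibility is $r_{i''}<j-i$, so a phrase boundary falls inside $[i''..j'']$. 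This is exactly case 2, provided we can bound the length of the phrase containing $j''$. Here I would argue as in \cref{cor:phraseboundary-pre}. Some boundary lies at most $j-i$ positions left of $j''$. Since $J$ maps the right end of $P$ to a phrase boundary, another boundary lies at most $r_{j'}$ positions to its right. Hence $\ell_{j''}+r_{j''}+1\le r_{j'}+(j-i)+1\le r_{i'}+1\le r_i+1$, and we need this to be $<1.5k$.

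The shortcut steps require the same care. By \cref{lem:x-plus-1-bad-interval-from-x}, the check that $i'$ lies in the $(\log\frac{n}{z}-1,k)$-bad interval certifies that all the intermediate $r$-values of the left endpoint stay $\ge k>j-i$. Then $j'$ travels through the same phrases, and a single shift by $\delta_{\log\frac{n}{z}-1}(P)$ moves the whole interval. When we fall back to single steps (Cases 2 and 3 of the proof of \cref{lem:naive}), the three tests above are applied at every step. The running-time analysis carries over unchanged, giving $O(\log\frac{n}{z})$, because each step stops no later than the corresponding step of the random-access walk would: ``$r<k$'' there implies one of our three outcomes here.

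The main obstacle is the strictness in case 2. The bound above yields only $\ell_{j''}+r_{j''}+1\le r_{i'}+1$, and $r_{i'}$ can be as large as $1.5k-1$. I would first try to tighten the count. The boundary to the left of $j''$ is at distance at most $j-i-r_{i''}-1$ rather than $j-i$; combined with the strict inequality $r_{i''}<j-i$, this should save the missing unit. If that does not suffice, a fallback is to use $r_{i'}<1.5k$ together with $j-i<r_i/10$ and a small slack argument. Failing both, we can weaken the threshold in case 2 to $\le 1.5k$, which is harmless for the later stable part.
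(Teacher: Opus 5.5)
\textbf{There is a genuine gap in your shortcut step.} Your overall plan is the paper's, though: reuse the structure of \cref{lem:naive}, carry the whole interval along, and stop at the first index whose $r$-value drops below $\max(k,j-i)=k$ or which lies in a marked phrase. Your worry about case~2 is resolved by your first idea, so no fallback is needed, and weakening the statement would not be acceptable anyway. The phrase containing $i''$ ends at $i''+r_{i''}<j''$, so $\ell_{j''}\le (j-i)-r_{i''}-1$. Combine this with $r_{j''}\le r_{j'}$ and with $r_{i'}=(j-i)+r_{j'}$ (since $i',j'$ share a phrase). This gives $\ell_{j''}+r_{j''}+1\le r_{i'}-r_{i''}\le r_i<1.5k$. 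Your use of the source-phrase end as the right boundary is actually more careful than the paper's own argument for case~2.

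The gap: you claim that membership of $i'$ in the $(\log\frac{n}{z}-1,k)$-bad interval certifies that the left endpoint's intermediate $r$-values stay $\ge k$. That is false. The bad interval only certifies that $J^y(i')$ lies in the $y$-th ancestor of $P$ in $T_k$; it says nothing about $r$-values. For every $t\in P$ with $J(t)$ inside the $k$-bad parent $P'=S[a'..b']$, good jumps included, one has $r_{J(t)}=r_t-c$ for the fixed offset $c=(b-\delta_P)-b'\ge 0$. So the accumulated offsets along the chain can push the left endpoint's $r$-value below $k$, and even below $j-i$, while it stays inside the ancestor phrases. From that point the right endpoint sits in the next phrase and is shifted by a different $\delta$. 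The interval translated by $\delta_{\log\frac{n}{z}-1}(P)$ is then in general not an occurrence of $S[i..j]$, and your subsequent tests would report a bogus case-1 or case-2 answer.

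\textbf{The repair, which is what the paper does, uses monotonicity of $r$ under $J$.}
\begin{itemize}
\item After a shortcut, if the landing index has $r$-value $\ge k$, then all intermediate $r$-values were $\ge k>j-i$. Each intermediate interval therefore lay inside one phrase, and the interval moved rigidly.
\item Otherwise, discard the shortcut and replay its at most $\log\frac{n}{z}$ steps one at a time from the pre-shortcut index, stopping at the first index with $r$-value $<k$. The interval just before that index lay inside a single phrase, so the stopped interval is a genuine occurrence, and case~1 or case~2 applies with your length bound.
\end{itemize}
The replay happens at most once, so the query time stays $O(\log\frac{n}{z})$.
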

\begin{proof}
In preprocessing time, we construct the data structure of \cref{lem:naive}.

In query time,we run the query algorithm of \cref{lem:naive} with the following modification.
Recall that the query of \cref{lem:naive} applies steps of the form $J(i)$, $J^{\log \frac{n}{z}}(i)$ until the first time an index $i'$ is reached that either has $r_i < k$ or is contained within a marked phrase.

We apply exactly the same steps, but we stop when we either reach $r_{i'}<\max(k,j-i)$ or then we reach an $i'$ contained in a marked phrase.
Since our stopping condition is strictly stronger than that of \cref{lem:naive}, the running time until one of our conditions is met is bounded by the query time of \cref{lem:naive}, which is $O(\log \frac{n}{z})$ (notice that checking that these conditions are satisfied is done in constant time, as we have access to $r_{i'}$ for every index $i'$ visited throughout the query).

Let $i'$ be the first index seen throughout the query satisfying $r_{i'}<\max(k,j-i)$ or $i'$ is in a marked phrase.
Firstly, we wish to ensure that $i'$ is an occurrence of $S[i..j]$.
Let $i''$ be the last index seen before $i'$ in the execution of the query.
It must hold that $r_{i''} > j-i$, and since $r$-values are monotonic, it holds that the $r$-values of all seen indices preceding $i''$ is also at least $j-i$.
This implies that $S[i''..i'' + j-i] = S[i..j]$.
If $i' = J(i'')$, it follows from the definition of $J$ that $S[i'..i'+j-i] = S[i'' ..i'' +j-i] = S[i..j]$.

Otherwise, if $i'= J^{\log \frac{n}{z}}(i'')$, it is not necessarily the case that $S[i'..i'+ j-i] = S[i..j]$.
However, if we iteratively apply $J(i''),J^2(i''),\ldots$ until we finally reach an index $i^*$ with $r$-value less than $\max(k,j-i)$, then $i^*$ will be an occurrence of $S[i..j]$ due to the same argument as before.
Notice that this can be implemented in $O(\log \frac{n}{z})$ time, since we are guaranteed to meet this condition at $J^{\log \frac{n}{z}}(i'')$ (and all of those steps correspond to an upward path in $T_{\Bad}$, since $i'$ was in the $(\log \frac{n}{z})$-bad interval of its containing phrase).

So far, we have shown a procedure that finds an occurrence $S[i'..j']$ of $S[i..j]$ with either $r_{i'} < \max(j-i,k)$ or $i'$ in a marked phrase. 
Let us show that this $S[i'..j']$ is a valid output, satisfying one of the three conditions specified in the statement of the lemma.

If $r_{i'} \in [j-i..k)$, then it is a valid output (satisfying the first condition in the statement of the lemma).
If $r_{i'}<j-i$, then $[i'..j']$ contains a phrase boundary.
Furthermore, recall that $r_{i'}\le r_i < 1.5k$.
Therefore, there is a phrase boundary at most $1.5k$ indices to the right of $i'$, and since $j'$ is trapped between the phrase boundary in $[i'..j']$ and the phrase boundary in $[i'..i'+1.5k)$, the phrase containing $j'$ is of size less than $1.5k$, and we have that $S[i'..j']$ satisfies the second condition statement of the lemma.

If $i'$ is contained in a marked phrase, we have two cases.
If $[i'..j']$ is fully contained in the marked phrase, we have that the last condition of the lemma is satisfied.
Otherwise, we have a phrase boundary in $[i'..j']$, and due to the same argument as before, we have that the phrase containing $j'$ is of size less than $1.5k$.
Therefore, $S[i'..j']$ satisfies the second condition of the lemma in this case.

The running time consists of the $O(\log \frac{n}{z})$ running time of the query of \cref{lem:naive}, plus possibly $O(\log \frac{n}{z})$ time to compute $J^{\log \frac{n}{z}}(i'')$ (or some lower number of $J$ applications) step-by-step.
The total running time is $O(\log \frac{n}{z})$, as required.
\end{proof}

Similarly to the random access algorithm, we have that the naive part, specified above as \cref{lem:extraction-epoch-naive}, may already return a satisfactory output as $S[i'..j'] = S[i..j]$ with $r_{i}\in [j-i..k)$.
We prove the following useful properties of $sJ()$.

\begin{lemma}\label{lem:stable-propreties-extraction}
    Let $P=S[a..b]$ be a phrase with bad parent $P'=S[a'..b']$.
    Let $i \in [a..b]$ such that $i'= sJ(i)$ is well-defined.
    It holds that $|P'| \le |P|$.
    Additionally, if $i'\notin [a'..b']$, it holds that $r_{i'} < 2/3|P|$, and if $i' \in [a'..b']$, then $\ell_{i'}< \ell_i$.
\end{lemma}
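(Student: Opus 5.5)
The plan is to work directly with the occurrence of $L(P)$ produced by $\pre$ and to read all three claims off the positions of the phrase boundaries around it. Write $L(P)=S[a..c]$ with $c=a+\floor{\frac23|P|}-1$, so that $|L(P)|=\floor{\frac23|P|}$ and $b-c=\ceil{\frac13|P|}$. Let $\hat a=\pre(a,c)$ and $\hat c=\hat a+(c-a)=sJ(c)$. Every index of $L(P)$ is shifted by the same amount $\delpre(a,c)$, so $i'=i-\delpre(a,c)$, $i'-\hat a=i-a=\ell_i$ and $\hat c-i'=c-i$. By definition of the bad phrase, $P'=S[a'..b']$ is the phrase containing $\hat c$.

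The key structural fact comes from the definition of $\pre$: the recursion stopped at $[\hat a..\hat c]$ either because $\hat a$ and $\hat c$ lie in different phrases, or because they lie in a sourceless phrase, which has length one. In the main case $|L(P)|\ge 2$, i.e.\ $|P|\ge 3$, the second alternative is impossible, since a length-one phrase cannot contain $\hat a<\hat c$. Hence $\hat a$ and $\hat c$ lie in different phrases, and the phrase $P'$ containing $\hat c$ starts strictly after $\hat a$, that is, $\hat a<a'\le\hat c$. The degenerate case $|P|\le 2$ forces $L(P)=S[a..a]$ and $i=a$, so $\ell_i=0$. There $\pre(a,a)$ ends in a terminal phrase, $P'$ has length one, and $i'=\hat a=a'$. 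I will treat this case separately: $|P'|=1\le|P|$ holds trivially and $i'\in[a'..b']$. Strictness of the $\ell$-inequality cannot hold there because both values are $0$, so I will remark that the statement is meant for $|L(P)|\ge 2$, which I expect the paper's convention implicitly assumes.

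With $\hat a<a'$ established, the three claims go as follows.
\begin{enumerate}
\item \emph{If $i'\in[a'..b']$:} then $\ell_{i'}=i'-a'<i'-\hat a=\ell_i$, which is the strict inequality.
\item \emph{If $i'\notin[a'..b']$:} since $i'\le\hat c\in P'$ and $i'<a'$, there is a phrase boundary in $[i'..\hat c)$. Hence $r_{i'}<\hat c-i'=c-i\le c-a=|L(P)|-1<\frac23|P|$.
\item \emph{$|P'|\le|P|$:} write $|P'|=\ell_{\hat c}+r_{\hat c}+1$. First, $\ell_{\hat c}=\hat c-a'<\hat c-\hat a=|L(P)|-1$, so $\ell_{\hat c}\le|L(P)|-2$. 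Second, $sJ$ is legal (\cref{obs:pre-is-legal}), so $r_{\hat c}\le r_c=\ceil{\frac13|P|}$. Therefore $|P'|\le \floor{\frac23|P|}-2+\ceil{\frac13|P|}+1<|P|$.
\end{enumerate}

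The only real obstacle I expect is the boundary bookkeeping in the first step: I must make sure the separating boundary really lies strictly inside $[\hat a..\hat c]$, so that $a'>\hat a$. This is exactly what yields strictness, and it is also where the degenerate case $|L(P)|=1$ has to be split off. Everything after that is direct arithmetic using the identity $\floor{\frac23|P|}+\ceil{\frac13|P|}=|P|$.
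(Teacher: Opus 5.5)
Your proof is correct and follows the paper's argument. The paper likewise uses three facts: the $\pre$-occurrence $[\hat a..\hat c]$ of $L(P)$ has a phrase boundary strictly after $\hat a$, so $a'>\hat a$ and $\ell$ strictly decreases; legality of $sJ$ gives $r_{\hat c}\le\ceil{|P|/3}$, which bounds $|P'|$; and the boundary separating $i'$ from $\hat c$ gives the $r$-bound. Your remark on the degenerate case $|L(P)|=1$ (that is, $|P|=2$) is also right, and the paper's proof overlooks it: it asserts that the endpoints of the $\pre$-occurrence lie in different phrases, but there $\pre(a,a)$ ends in a terminal phrase and $\ell_{i'}=\ell_i=0$, so strictness really needs $|P|\ge 3$. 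For $|P|=1$, $L(P)$ is empty rather than $S[a..a]$, so the statement is vacuous.
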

\begin{proof}
Denote $a_p = sJ(a)$ and $b_p = sJ(a+\floor{\frac{2}{3}|P|}-1)$.
Recall that $a'$ is the rightmost phrase boundary in $[a_p..b_p]$.
Also recall that applying $sJ$ never increases the $r$-value, so $r_{b_p} \le r_{a+\floor{\frac{2}{3}|P|}-1} = \ceil{|P|/3}$.
We have shown that there is a phrase boundary in $[b_p..b_p+\ceil{|P|/3}] = [a_p + \floor{\frac{2}{3}|P|}-1 .. a_p + |P|]$, and that $a'\in [a_p..b_p]$.
Since $b'$ is the first phrase boundary to the right of $a'$, this necessarily means $b'< a_p+|P|$ which in turn leads to $b'< a'+|P|$.
It follows that $|P'| \le |P|$.
Furthermore, notice that $a'>a_p$, as $a_p$ and $b_p$ are not in the same phrase (by the definition of $\pre$).
Therefore, we have $\ell_{i'}<\ell_i$ if $i' \in [a'..b']$. 

If $i'$ is not in $P'$, than the $r$-value of $i'$ is decreased by at least $\ceil{|P|/3}$ compared to $i$ which had $r_i \le |P|$.
Therefore, $r_{i'} \le r_i - |P|/3 \le \frac{2}{3}|P|$.
\end{proof}

We proceed to show that in each of the second and the third cases for the output of \cref{lem:extraction-epoch-naive}, we can apply an additional procedure to obtain a valid output for \cref{lem:extraction-epoch}.
We start by providing the following subroutine.
Intuitively, we show that if $[i..j]$ contains a phrase boundary, and $j$ is contained in phrase of length less than $1.5k$, we can efficiently return a valid output for the epoch.

\begin{lemma}\label{lem:extraction-stable}
There is a data structure with $O(z)$ space for the following query.
Given $i,j$ satisfying:
\begin{enumerate}
\item There is a phrase boundary in $S[i..j]$, and 
\item  $\ell_j+r_{j}-1< 1.5k$ for $k\in \K$, and $k>2(j-i)$,
\end{enumerate}
return for some $i'\in [i..j+1]$ the string $S[i..i')$ and $i^*,j^*$ such that $S[i'..j] = S[i^*..j^*]$ and $r_{i^*} < k$.

The query time is $O(i'- i + \log \frac{n}{z})$
\end{lemma}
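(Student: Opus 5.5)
The plan is to split $S[i..j]$ at its phrase boundaries. The rightmost boundary gives a prefix that ends at a phrase boundary, so it can be output with suffix extraction (\cref{lem:suffix-extraction}). The remaining suffix lies inside the short phrase containing $j$, and we relocate it with a single application of a legal jumping rule.

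\textbf{Extracting the prefix.} Using \cref{lem:find-con-phrase}, find in $O(\log \frac{n}{z})$ time the phrase $Q=S[a..b]$ containing $j$. Since $S[i..j]$ contains a phrase boundary, $a>i$ (if $j=b$ we may simply output everything and set $i'=j+1$). Set $i'=a$ and output $S[i..a)$ with \cref{lem:suffix-extraction}, applied to the phrase ending at $a-1$. This costs $O(i'-i)$ time, and the remaining task is to relocate $S[a..j]$. Here $|S[a..j]|\le j-i<k/2$ and $|Q|=\ell_j+r_j+1\le 1.5k$ (up to the off-by-one in the statement).

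\textbf{Relocating the suffix.} The intended $i^*$ is an occurrence of $S[a..j]$ that starts at distance less than $k$ from the next phrase boundary. Note that $r_a=|Q|-1$. If $r_a<k$ we return $i^*=a$ directly. Otherwise $r_a\in[k..1.5k)$ and $\ell_a=0$, so $sJ(a)$ is defined by \cref{obs:sJ-defined-nice}, and $L(Q)=S[a..a+\floor{\frac23|Q|}-1]$ contains $[a..j]$ whenever $j-a<\floor{\frac23|Q|}$. This holds because $j-a<k/2\le \frac23 k-1\le\floor{\frac23|Q|}$ for $k$ not too small; tiny $k$ are handled by brute force. By the definition of $\pre$, $S[a..j]$ reappears at $sJ(a)$ with the same offset, so we set $i^*=sJ(a)$, computed in $O(1)$ time by \cref{lem:ds-sji}. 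Then \cref{lem:stable-properties} with $\ell_a=0<1.5k$ gives either $r_{i^*}<k$, which finishes, or $\ell_{i^*}\le 0$. In the latter case $i^*$ is the first index of a phrase, and \cref{lem:stable-propreties-extraction} gives that this phrase, the bad parent $Q'$, has $|Q'|\le|Q|$.

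\textbf{Main obstacle.} The difficult case is $i^*$ landing at the start of a phrase $Q'$ with $|Q'|$ still in $[k+1..1.5k]$. We then cannot guarantee $r<k$ in one step, and iterating $sJ$ along bad parents is exactly the stable-part chain. My plan here is to reuse \cref{lem:stable}: the index $a$ satisfies its hypothesis ($r_a\in[k..1.5k)$, $\ell_a=0\le1.5k$), so it returns $x^*$ with $r_{sJ^{x^*}(a)}<k$ in $O(\log \frac{n}{z})$ time. Each intermediate step starts at $\ell=0$ (by \cref{lem:delt-l-non-neg} the $\ell$-values stay $0$), and $[a..j]$ always sits inside the left $\frac23$-prefix, so every $sJ$ step transports the whole block $S[a..j]$ rigidly. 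The point to verify carefully is the final step, which jumps out of the bad suffix: it still carries the block because the block is contained in $L(\cdot)$ of the current phrase. This holds since the current phrase length is at least $k>2(j-i)$ while its $r$-value remains $\ge k$. I expect this invariant, that the block fits in $L$ of every phrase along the chain, to be the delicate part. The total query time is $O(i'-i+\log \frac{n}{z})$ and the space is $O(z)$ from the reused structures.
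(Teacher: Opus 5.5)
Your proof is correct, and the branch you treat as the main obstacle cannot occur. Write $L(Q)=S[a..c]$ with $c=a+\floor{\frac23|Q|}-1$.

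\begin{itemize}
\item Because $a$ is the \emph{first} position of $L(Q)$, the index $sJ(a)=\pre(a,c)=:a_p$ is the left end of the occurrence $S[a_p..c_p]$ at which the recursion of $\pre$ stops. At that point $a_p$ and $c_p$ lie in different phrases, or $a_p=c_p$ is a sourceless one-character phrase.
\item Hence the phrase containing $a_p$ ends strictly before $c_p$, and $r_{sJ(a)}\le c-a-1=\floor{\frac23|Q|}-2<k$. This survives the off-by-one in the hypothesis, since $|Q|<1.5k+2$.
\item In the proof of \cref{lem:stable-properties}, the alternative $\ell_{sJ(i)}\le\ell_i$ arises only when $sJ(i)$ lies strictly to the right of the boundary inside the occurrence. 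That is impossible at the occurrence's left endpoint.
\end{itemize}

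So $sJ(a)$ is never a phrase start with $r$-value at least $k$. Your identification of its phrase with the bad parent $Q'$ is also mistaken, since $Q'$ is the phrase containing $c_p$.

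The fallback to \cref{lem:stable}, and the worry about the block fitting into $L(\cdot)$ along a chain, can simply be deleted. One $sJ$ step, computed in $O(1)$ time by \cref{lem:ds-sji}, proves the lemma. The brute force for small $k$ is also unnecessary: once $r_a\ge k$ we have $j-a\le j-i-1<\frac k2-1<\frac23k\le\floor{\frac23|Q|}$.

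Your route is genuinely different from the paper's, and simpler. Both start by extracting $S[i..a)$. The paper then extracts the whole phrase when $|Q|\le 3(j-i)$. Otherwise it starts the $sJ$ chain at $j$ itself and walks up $T_{\Bad}$ one step at a time:
\begin{itemize}
\item Since $\ell$ strictly decreases along these steps (\cref{lem:stable-propreties-extraction}), the part of the block falling left of each new phrase start ends at a phrase boundary.
\item That part is peeled off with \cref{lem:suffix-extraction}, so the steps are paid for by extracted characters.
\item The chain stops when it leaves a bad suffix, followed by a final case analysis.
\end{itemize}

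What the paper gains is that its returned remainder $S[i^*..j^*]$ always lies inside a single phrase, so additionally $r_{i^*}\ge j^*-i^*$. The statement does not record this. It is, however, what \cref{lem:extraction-epoch} relies on when it forwards this output with the guarantee $r_{i^*}\in[j-i'..k)$.

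Your single jump from $a$ may leave the block straddling a phrase boundary, so it gives exactly the lemma as stated and nothing more. In exchange, it avoids $T_{\Bad}$ and the chain entirely, and costs $O(1)$ beyond locating $Q$ and extracting $S[i..a)$.
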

\begin{proof}
At preprocessing, we apply exactly the preprocessing of \cref{lem:stable} to obtain the tree $T_{\Bad}$ such that the parent of each $P$ is its bad parent $P'$, and the $\pre()$ values of $L(P)$ for every phrase $P \in \LZend$.

Given a query $i,j$ with $k\in \K$ satisfying the required properties, we  first find the phrase $P = S[a..b]$ containing $j$ using \cref{lem:find-con-phrase}, and extract $S[i..a-1]$ using \cref{lem:suffix-extraction} in $O(a-i)$ time.

Notice that $|P| = \ell_j + r_j -1< 1.5k$.
If $|P| \le 3(j-i)$, we can extract $P= S[a..b]$ in $O(j-i)$ time using \cref{lem:suffix-extraction} and obtain $S[a..j]$ as a prefix of $S[a..b]$, completing the extraction of all $S[i..j]$ (formally, we also return an arbitrary phrase boundary as $i^*$ and $j^* = i^*-1$. This is valid since $i'= j+1$ and $S[i'..j]=\varepsilon$).

Otherwise, we have $r_j = b-j = (b-a)- (j-a) \ge 3(j-i) - (j-i) = 2(j-i)$, and $\ell_j = j-a \le j-i \le 2r_j$.
It follows from \cref{obs:sJ-defined-nice} that $sJ(j)$ is defined.
We repeatedly apply $sJ()$ to obtain the sequence $j_0 = j = sJ^0(j), j_1 = sJ(j),j_2=sJ^2(j),\ldots j_x = sJ^X(j)$ until we reach some $j_X$ that is not in the bad suffix of the phrase containing it.
For every $x\in [0..X]$, denote as $P_x$ the phrase containing $j_x$.
Notice that we have access to the bad parent $P_{x+1}$ of $P_x$ via the tree $T_{\Bad}$, so each of those steps is carried in $O(1)$ time.
As we compute the sequence sequence $j_0,j_1,\ldots$, we will show that the following invariants are preserved.
\begin{enumerate}
    \item $\ell_{x+1}<\ell_x$ and $|P_{x+1}|\le|P_x|$ for every $x \in [0..X-1]$.
    \item $S[j_{x} - \ell_{j_x} .. j_{x}] = S[j_{x+1}-\ell_{j_x} .. j_{x+1}]$ for every $x \in [0..X-1]$.
    \item After reaching $j_x$, we have already extracted $S[i..j-\ell_{j_x})$ for every $x \in [0..X-1]$.  
\end{enumerate}

Initially, the invariant is satisfied because $j_0 = j$, and $j-\ell_j = a-1$, and we already extracted $S[i..a-1]$.
It follows directly from \cref{lem:stable-propreties-extraction} that $|P_{x+1}| \le |P_{x}|$ and $\ell_{x+1} < \ell_{x}$ for every $x\in [0..X-1]$. 
The second invariant follows from the definition of $sJ(j_x)$.
For every $x\in [0..X-1]$, the value $sJ(j_x)$ is defined which implies that $j_x$ is in $L(P_X)$, the leftmost 2/3 of $P_x$.
Denote $L(P_x)=S[a_x..b_x]$, and notice that $a_x = j_x - \ell_{j_x}$.

The function $sJ(j_x) =j_x - \delpre(P_x)$ maps $j_x$ to an occurrence of $L(P_x)$ such that $j_{x+1}$ is aligned with $j_x$ within this occurrence.
It directly follows that  $S[j_{x} - \ell_{j_x} .. j_{x}] = S[j_{x+1}-\ell_{j_x} .. j_{x+1}]$.
Notice that the second invariant combined with the first yield $S[j_{x+1}-\ell_{x} ..j_{x+1}] = S[j-\ell_{x}..j]$, since the $\ell$ values are decreasing.

It remains to show how we maintain the extraction invariant.
When we traverse from $j_x$ to $j_{x+1}$, we notice that $S[j_{x+1}-\ell_{j_x}..j_{x+1}-\ell_{j_{x+1}}]$ is a suffix of the phrase ending right before $P_{x+1}$.
We  extract $S[j_{x+1}-\ell_{j_x}..j_{x+1}-\ell_{j_{x+1}}] = S[j-\ell_{j_x}..j-\ell_{j_{x+1}}]$ in $O(\ell_{x} - \ell_{x+1})$ time using \cref{lem:suffix-extraction}.
This concludes the maintenance of all three invariants.

When we finally reach $P_X$, we have that $j_X$ is not contained within the bad suffix of $P_X$, and that $|P_X| \le P_{X-1} \le\ldots \le P_0 = |P| <1.5k$.
If $j_X$ is not in $L(P_X)$, then $r_{j_X}<|P_X|/3 < 1.5k/3 = 0.5k$, and it also holds that $r_{j_X- \ell_{j_X}} < 0.5k + \ell_{j_X} \le 0.5k +\ell_0 \le 0.5k + (j-i)\le k$.
It is therefore valid to output $j_X - \ell_X$ as $i^*$ and $j_X$ as $j^*$.

Otherwise, we have that $j_X$ is to the left of the bad suffix, and therefore $j_{X+1} = sJ(j_X)$ is well defined.
Let $i_X = j_X- \min(\ell_{j_X},\ell_{j_{X+1}})$.
It holds that $i_X$ is also not in the bad suffix of $P_X$ (as it is still within $P_X= S[j_X-\ell_{j_X}..j_X+r_{j_X}]$, and to the left of $j_X$).
It follows that $sJ(i_X) = i'$ is well defined, and due to \cref{lem:stable-propreties-extraction} we have $r_{i'} \le \frac{2}{3}|P_x| < k$.
In particular it holds that $i' =j_{X+1} - \min(\ell_X,\ell_{X+1})$. 
If $i' = j_{X+1} - \ell_{j_X}$, we have already extracted $S[i..j-\ell_X)$, and we have $S[i'..j_{X+1}] = S[j-\ell_X..j]$ and $r_{i'}<k$, so $i'$ and $j_{X+1}$ are valid outputs for $i^*$ and $j^*$, respectively.

Otherwise, if $i' = j_{X+1} - \ell_{j_{X+1}}$, we can extract $S[j-\ell_{j_X}..j-\ell_{j_{X+1}}]$ as in the previous cases, and then $i'$ and $j_{X+1}$ are valid outputs. 

\paragraph{Query Time.}
Initially, we spend $O(\log \frac{n}{z})$ time to find the phrase $P$ containing $j$ and extract the prefix of $S[i..j]$ to the left of $P$.
When processing $j_x$, we retrieve a non empty substring of $S[i..i')$ disjoint from all previously extracted substrings in time proportional to the length of the substring, so the total time spent on extractions is $O(i'-i)$ where $S[i..i')$ is the prefix we end up extracting.
We may also spend $O(\log \frac{n}{z})$ time when the last $j_X$ is reached to find the phrase containing $j_{X+1}$, so we can compute $\ell_{X+1}$, $i_X$, and $i'$ (for the case in which this is required).
The total running time is $O(i'- i + \log \frac{n}{z})$, as required.
\end{proof}

We now show another case where we can efficiently find a valid output for the epoch.
Namely, we show that if we have an occurrence of $S[i..j]$ such that $i\in L(P)$, i.e. in the leftmost 2/3 indices of the phrase $P$ containing $i$, then we can efficiently find a valid output for the epoch.
More precisely: we either find a valid output for the epoch in $O(i'-i + \log \frac{n}{z})$, or we completely extract all of $S[i..j]$ in time $O(\log^2 \frac{n}{z})$.
The latter case correspond to the looser bound on the running time allowed in the statement of \cref{lem:extraction-epoch} for the case in which the epoch extracts all of $S[i..j]$.

\begin{lemma}\label{lem:stable-extraction-inleft}
    Given an LZ-End factorization $\LZend$ with $z$ phrases of a string $S$ with length $n$, we can construct in $O(z\log^2\frac{n}{z})$ time a data structure with $O(z)$ space supporting the following query.
    
    Given $i,j$ such that $i$ and $j$ are in the same phrase $P$, $j-i\in[0..\log^2\frac{n}{z})$,  $i$ is in $L(P)$, and $r_i \in [\max(k,10(j-i))..1.5k)$ for some $k\in \K$, output one of the following.
    \begin{enumerate}
        \item $S[i..j]$
        \item $S[i..i')$ and indices $i^*,j^*$ such that $S[i^*..j^*] = S[i'..j]$ and $r_{i^*} <k$. 
    \end{enumerate}
    If the first is returned, the running time is $(j-i + \log^2\frac{n}{z})$.
    If the latter is returned, the running time is $O(i'-i + \log \frac{n}{z})$.
\end{lemma}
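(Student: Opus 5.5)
The plan is to run the stable-part machinery of \cref{lem:stable} on $i$, but to track the whole window $[i..j]$ rather than the single index $i$. Since $i\in L(P)$ and $i$, $j$ lie in the same phrase, $r_j = r_i-(j-i) \ge 0.9 r_i$. Hence $sJ$ applied to $L(P)$ moves $[i..j]$ rigidly to $[i-\delpre(L(P))..j-\delpre(L(P))]$, which is an occurrence of $S[i..j]$ whenever $j\in L(P)$. If $j\notin L(P)$, then $r_j<|P|/3$, so $r_i< |P|/3+(j-i)$, which contradicts $i\in L(P)$ unless $|P|$ is small compared with $j-i$. In that small case $|P|\le O(j-i)=O(\log^2\frac{n}{z})$, and I extract $S[i..b]$ directly with \cref{lem:suffix-extraction}, returning the first output option.

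So assume $[i..j]\subseteq L(P)$. I use $T_{\Bad}$ and \cref{lem:algorithm-dl-dr} to compute $D=\max(0,\min(D_L,D_R+1))$ for the start index $i$ in $O(\log\frac{n}{z})$ time. \cref{lem:dl-dr-is-what-we-need} then shows that $i_x=sJ^x(i)$ has $r$-value at least $k$ for every $x\le D$, and that $r_{i_{D+1}}<k$. By \cref{lem:bad-jump-same-delta} and \cref{obs:sum-del-direct-jump}, every index of the window lying in the bad suffix is shifted by the same amount, so as long as $j_x$ (the image of $j$) stays in the bad suffix and in $L(\cdot)$, the window moves rigidly along the same upward path. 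The key check is this: $i_x$ is in the bad suffix, $j_x\ge i_x$, and $r_{j_x}=r_{i_x}-(j-i)\ge k-(j-i)\ge 0.9k$. The only way rigidity can break is therefore that $j_x$ leaves $L(P_x)$. That forces $r_{j_x}<|P_x|/3$, while $r_{i_x}\ge k$ and $|P_x|$ is at most $|P|<1.5k$ by \cref{lem:stable-propreties-extraction}. This gives $k\le r_{i_x}<0.5k+(j-i)$, which is impossible because $j-i<0.1k$. Consequently, after $D$ jumps, computed in $O(1)$ time through \cref{lem:algorithm-jump-bad-quickly}, the window $[i_D..i_D+(j-i)]$ is still an occurrence of $S[i..j]$ inside $L(P_D)$, and one further rigid $sJ$ step yields $i^*=i_{D+1}$ with $r_{i^*}<k$ and $S[i^*..i^*+j-i]=S[i..j]$. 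This is the second output option with $i'=i$, obtained in $O(\log\frac{n}{z})$ time.

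The main obstacle is the last step: the window $[i_{D+1}..i_{D+1}+j-i]$ may straddle a phrase boundary, since $r_{i_{D+1}}$ could drop below $j-i$. In that case the occurrence still holds, because the last step is a rigid shift of $L(P_D)$, but the statement's requirement $r_{i^*}\ge j-i'$ (implicit in \cref{lem:extraction-epoch}) fails. My plan for this case is to fall back to \cref{lem:extraction-stable}. The window contains a boundary, and the phrase containing its right end has length at most $r_{i_D}<1.5k$ by \cref{cor:phraseboundary-pre}, so the hypotheses of \cref{lem:extraction-stable} hold with $k>2(j-i)$. That lemma returns some prefix $S[i..i')$ together with a valid $i^*$ in $O(i'-i+\log\frac{n}{z})$ time. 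If its output has $i'=j+1$, the whole string has been extracted; this is the first output option, within the permitted $O(j-i+\log^2\frac{n}{z})$ bound.
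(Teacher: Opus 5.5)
Your argument rests on one claim: the window $[i_x..j_x]$ stays inside $L(P_x)$ for every $x\le D$, so $sJ$ moves it rigidly. Both inequalities you use to justify this claim are false.

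First, $i\in L(P)$ only gives $r_i\ge\lceil |P|/3\rceil$, which is compatible with $r_i<|P|/3+(j-i)$ however long $P$ is. Take $k\ge 10$, $|P|=3k$, $\ell_i=2k-1$, $r_i=k$ and $j=i+1$. All hypotheses of the lemma hold and $j\notin L(P)$, yet $|P|\gg j-i$. Your fallback of extracting $S[i..b]$ with \cref{lem:suffix-extraction} then costs $\Theta(k)$, not $O(j-i+\log^2\frac{n}{z})$.

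Second, nothing gives $|P|<1.5k$. The hypotheses only yield $|P|\le 3r_i<4.5k$, and even with $\ell_i\le 1.5k$ one can have $|P|$ close to $3k$. In that case $r_{i_x}<|P_x|/3+(j-i)$ does not contradict $r_{i_x}\ge k$. (Also, the hypotheses give $j-i<0.15k$, not $0.1k$.)

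What is missing is a bound on $\ell$-values rather than on phrase lengths. You already need $\ell_i\le 1.5k$ implicitly, because \cref{lem:dl-dr-is-what-we-need} rests on \cref{col:x-star-exists}; the paper's own invocation of \cref{lem:stable} tacitly needs it too. This bound holds at the lemma's only call site, Case 3.A of \cref{lem:extraction-epoch}, where \cref{clm:m-reduce-l-or-r} gives $\ell_{\hat i}<0.75k$.

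Assuming it, the repair goes as follows:
\begin{itemize}
\item \cref{lem:stable-properties} gives $\ell_{i_x}\le \ell_i$ for all $x\le D$, so $\ell_{j_x}\le 1.5k+(j-i)<1.65k$.
\item If $j_x\notin L(P_x)$, then $\ell_{j_x}\ge 2r_{j_x}\ge 2(k-(j-i))>1.7k$, a contradiction.
\item So the window never leaves $L(P_x)$. The rest of your plan then goes through: a rigid jump to $i_{D+1}$, followed by \cref{lem:extraction-stable} via \cref{cor:phraseboundary-pre} when $r_{i_{D+1}}<j-i$.
\end{itemize}

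This repaired route is genuinely different from the paper's, and simpler. The paper reads $i_{X-1},i_X$ off \cref{lem:stable}. It then separately handles a window protruding past $L(P')$ at the last step (its Case 2.B), using extra precomputed values $\delpre(a,a+\floor{\frac{2}{3}|P|}+\log^2\frac{n}{z})$ and suffix extraction. That case is where the $O(j-i+\log^2\frac{n}{z})$ branch comes from, and under the $\ell$-bound the protrusion cannot occur. Without such a bound, however, protrusion can happen at any step, and your proposal has no mechanism for handling it.
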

\begin{proof}
    As preprocessing, we construct the data structure of \cref{lem:stable}.
    Additionally, for every phrase $P=S[a..b] \in \LZend$, we compute $\delpre(a,a+\floor{\frac{2}{3}|P|} + \log^2\frac{n}{z})$. 
    This can be achieved in $O(z \log^2 \frac{n}{z})$ time using \cref{lem:computepre}.

    At query time, we apply the query of \cref{lem:stable} on $i$ with $r_i \in [\max(k,10(j-i))..1.5k)$.
    Notice that since $[\max(k,10(j-i))..1.5k)$ is not empty, it must be the case that $1.5k > 10(j-i)$ which implies $k>j-i$. 
    Let us recall the output of the query of \cref{lem:stable}.
    Consider the sequence $i_0,i_1, \ldots$ of indices with $i_0 = i$ and $i_x = sJ(i_x)$, which is terminated when we reach $i_X$ with $r_{i_X}< k$.
    The query retrieves this $i_X$ and also $i_{X-1}$.
    Denote $j_{X-1} = i_{X-1}-i+j$ and $j_X = i_X - i+j$.
    Since $r_{i_{X-1}} \ge k > (j-i)$, we have that $S[i_{X-1}..j_{X-1}] = S[i..j]$.
    Furthermore, since $i_X = sJ(i_{X-1})$ is well defined, we have that $i_{X-1}$ is in $L(P')$ where $P'$ is the phrase containing $i_{X-1}$, and that $i_{X}$ is within an occurrence of $L(P')$.
    Denote $L(P) = S[a..b]$ and let $S[a'..b']$ be the occurrence of $S[a..b]$ containing $i_X$.
    We consider three cases depending on the $r$-value of $i_X$ and on whether or not $j_X \in L(P')$.

\paragraph{Case 1: $r_{i_X} \in [j-i..k)$}
    In this case, it holds that $S[i_X..i_X +j-i]=S[i_{X-1} .. i_{X-1} + j-i]= S[i..j]$.
    We can therefore return $i_X$ as a valid $i^*$ output and extract nothing.
\paragraph{Case 2.A:  $r_{i_X} < j-i$, and $j_{X-1}<b$}
    In this case, $j_{X-1}$ is also contained in $L(P')=S[a'..b']$, then it holds that $S[i_X..j_X] = S[i_{X-1}..j_{X-1}] = S[i..j]$.
    Since $r_{i_X}<j-i$, there is a phrase boundary $\hat a = i_X+r_{i_X}$ such that $\hat a \in S[i_X..j_X)$.
    It follows from \cref{cor:phraseboundary-pre} that the phrase containing $j_X$ is of length at most $r_i$ which is less than $1.5k$, and we can apply \cref{lem:extraction-stable}.
This yields some prefix $S[i_X..i'_X) = S[i..i')$ and indices $i^*,j^*$ such that $S[i^*..j^*] = S[i'_X..j_X]= S[i'..j]$ in $O(i'-i + \log\frac{n}{z})$ time, which is a valid output.

\paragraph{Case 2.B: $r_{i_X}<j-i$ and $j_{X-1}>b$}
Recall our assumption that $j- i \le \log^2\frac{n}{z}$.
Notice that $j_{X-1} - i_{X-1} = j-i$, so we have $j_{X-1} - i_{X-1}\le \log^2\frac{n}{z}$.
It follows from $i_{x-1}\le b$ that we have that $j_{X-1} < b+ \log^2\frac{n}{z}$.
Recall that $\pre(a,b)$ is defined by a sequence $(a,b)= (a_0,b_0),(a_1,b_1),\ldots ,(a_d,b_d)$ such that if $a_y$ and $b_y$ are not in the same phrase, then $(a_y,b_y) = (a_d,b_d)$ and otherwise $a_{y+1},b_{y+1} = (J(a_y),J(b_y))$.
It should be clear that $\pre(a,b+\log^2\frac{n}{z})$ will be defined by a prefix of this sequence, where it would only be a proper prefix if for some $y \in [0..d)$, it holds that $b_y$ and $b_y+\log^2 \frac{n}{z}$ are in different phrases.
If it happens to be the same sequence, then $S[i_X..j_X]=S[i_{X-1}..j_{X-1}]=S[i..j]$ and the same argument as before follows.
Otherwise, we have some $i'_X = i_{X-1} - \delpre(a,b+\log^2\frac{n}{z})$ such that $S[i'_X..j'_X]=S[i..j]$, and there is a phrase boundary at most $\log^2\frac{n}{z}$ to the right of $j'_X$. 
If this phrase boundary is to the left of $j'_X$, we are in the same situation as in Case 2.A.
Otherwise, we can retrieve $S[i..j]$ in $O(j-i + \log^2\frac{n}{z})$ time using \cref{lem:suffix-extraction}.
\end{proof}

We are finally ready to put it all together and prove \cref{lem:extraction-epoch}.
\begin{proof}[Proof of \cref{lem:extraction-epoch}]
In preprocessing time, we construct the data structures of \cref{lem:extraction-epoch-naive,lem:extraction-stable}.

Given a query $i,j$ with $r_i \in [\max(10(j-i),k)..1.5k)$, we apply a query of \cref{lem:extraction-epoch-naive} to obtain an occurrence $S[i'..j']$ of $S[i..j]$ satisfying one of the three conditions specified in \cref{lem:extraction-epoch-naive}.
We proceed as follows based on the satisfied condition.

\paragraph{Case 1: $r_{i'} \in [j-i..k)$.}
In this case, we can simply output $i'$.
\paragraph{Case 2:  There is a phrase boundary in $S[i'..j']$, and $\ell_{j'}+r_{j'}+1 < 1.5k$.}
In this case, we apply \cref{lem:extraction-stable} to obtain $S[i'..i'')$ for some $i'' \in [i'..j'+1]$ and an index $i^*$ with $S[i^*..j^*] = S[i'' .. j']$ in $O(i'' - i' + \log \frac{n}{z})$ time.
Notice that this is a valid call, since $r_i \in [10(j-i)..1.5k)$ and therefore $1.5k > 10(j-i)$ which in particular implies $k > 2(j-i)$.
The extracted prefix and the index $i^*$ are a valid output, and the running time is as required.

\paragraph{Case 3:  $S[i'..j']$ is contained in a marked phrase and $r_{i'} < 1.5k$.}
Let $S[x..y]$ be the canonical interval containing $i'$, and let $M(i') = \hat i$.
Notice that $S[\hat i..\hat i+r_{\hat i}] = S[i..i+r_{\hat i}]$.
We consider three sub-cases.
\begin{enumerate}
    \item \textbf{Case 3.A : $r_{\hat i} \in [k..1.5k)$.} In particular, we have that $S[\hat i .. \hat j] = S[i..j]$ for $\hat j = \hat i+j-i$ (since $k > j-i$). 
    It follows from $r_{\hat i}\ge k$ and from \cref{clm:m-reduce-l-or-r} that $\ell_j <k$, which leads to $\hat i \in L(\hat P)$ where $\hat P$ is the phrase containing $\hat i$. 
    We can apply \cref{lem:stable-extraction-inleft} to $\hat i,\hat j$ to obtain $S[\hat i..\hat j] = S[i..j]$ in $O(j-i+ \log^2 \frac{n}{z})$ time or to obtain $S[i..i')$ and $i^*,j^*$ that are valid outputs in $O(i'-i + \log \frac{n}{z})$ time.
    \item \textbf{Case 3.B : $r_{\hat i} \in [j-i..k)$.} In this case we can simply return $\hat i$ and extract nothing.
    \item \textbf{Case 3.C.I : $r_{\hat i} <j-i$, and $j'\in[x..y]$}. In this case, we  have $S[\hat i ..\hat j]=S[i..j]$ and $M(j) = \hat j$.
    This follows from the fact that $M()$ maps $i'$ to an occurrence $S[\hat x..\hat y]$ of $S[x..y]$, which contains an occurrence of $S[i..j]$.
    Since $M(i') = \hat i$ and since $r_{i'}<1.5k$, it follows from \cref{cor:phraseboundary-M} that the phrase containing $\hat j$ has size less than $1.5k$ and we can apply \cref{lem:extraction-stable} to obtain a valid output. 
    \item \textbf{Case 3.C.II : $r_{\hat i} <j-i$, and $j'\notin[x..y]$}.
    Here we find the rightmost phrase boundary $a'$ in $[x'..y']$, and check what is $r_{a'}$.
    If $a'+ r_{a'}-\hat i <j-i$, we extract $S[\hat i..a' + r_{a'}] = S[i..i+r_{a'} + a'-\hat i]$ in time linear to the length of this prefix using \cref{lem:suffix-extraction}. 
    After that, we proceed to extract the remaining suffix using the next interval $[x^*..y^*]$ that contains non-extracted indices. 
    If $a'+ r_{a'} - \hat i > j-i$, we are in the same case as 3.C.I as the phrase containing $a'$ must have length less than $1.5k$ due \cref{cor:phraseboundary-M}.
\end{enumerate}

In each of the cases above, we either return a valid output directly, or delegate to \cref{lem:extraction-stable,lem:stable-extraction-inleft} to return a valid output in $O(\log\frac{n}{z})$ plus time proportional to the length of an extracted prefix (or possibly $O(\log^2\frac{n}{z})$ if the entire string is extracted).

The only exception is Case 3.C.II where we extract some prefix in time proportional to its length and then proceed to extract the remaining suffix.
In this case, notice that we will remain in Case 3.C.II, and proceed to extract non-empty prefixes in time proportional to their length, or eventually land in another one of the subbases of Case 3.
As long as we remain in Case 3.C.II, the running time can be charged on the lengths of the extracted prefixes.
\end{proof}

\end{document}